\pdfoutput=1
\documentclass[11pt]{article}
\usepackage[T1]{fontenc}
\usepackage{geometry} \geometry{margin=1in}
\usepackage[english]{babel}
\usepackage{cite}
\usepackage{amsmath}
\usepackage{amsthm}
\usepackage{amsfonts} 
\usepackage{amssymb} 
\usepackage{xspace}
\usepackage{mathtools} 
\usepackage{braket} 
\usepackage[linesnumbered,vlined]{algorithm2e}
\usepackage{xcolor} 
\usepackage{footnote} 
\usepackage{environ} 
\usepackage{suffix} 
\usepackage{apptools}
\usepackage[hidelinks]{hyperref} 
\usepackage{float} 
\usepackage[capitalise]{cleveref} 
\usepackage{comment}
\usepackage[disable,colorinlistoftodos,prependcaption,textsize=tiny]{todonotes}
\usepackage{enumerate}
\usepackage{algorithmicx}
\usepackage[noend]{algpseudocode}
\usepackage[utf8]{inputenc}
\usepackage{graphicx}
\usepackage{svg}
\usepackage{booktabs} 
\usepackage{tabularx} 
\usepackage[flushleft]{threeparttable} 
\usepackage{multirow}

\newcommand{\phil}[1]{\todo[backgroundcolor=green!25]{Phil: #1}}
\newcommand{\yijun}[1]{\todo[backgroundcolor=blue!25]{Yijun: #1}}
\newcommand{\oren}[1]{\todo[backgroundcolor=red!25]{Oren: #1}}

\makesavenoteenv{tabular}
\makesavenoteenv{table}

\makeatletter
\newtheorem*{rep@theorem}{\rep@title}
\newcommand{\newreptheorem}[2]{%
	\newenvironment{rep#1}[1]{%
		\def\rep@title{#2 \ref{##1}}%
		\begin{rep@theorem}}%
		{\end{rep@theorem}}}
\makeatother

\usepackage{thmtools} 
\usepackage{thm-restate} 

\def\eps{\varepsilon}

\newcommand{\size}[1]{\ensuremath{\left|#1\right|}}

\DeclareMathOperator{\dist}{dist}

\DeclareMathOperator{\poly}{poly}

\crefname{claim}{Claim}{Claims}
\crefname{property}{Property}{Properties}
\crefname{algocf}{Algorithm}{Algorithms}
\Crefname{algocf}{Algorithm}{Algorithms}

\newtheorem{theorem}{Theorem}
\newreptheorem{theorem}{Theorem}
\newtheorem{lemma}{Lemma}[section]

\newtheorem{definition}[lemma]{Definition}

\newtheorem{corollary}[lemma]{Corollary}
\newtheorem{observation}[lemma]{Observation}

\newtheorem{remark}[lemma]{Remark}
\newtheorem*{remark*}{Remark}

\theoremstyle{remark}

\theoremstyle{definition}

\newcommand{\E}{\mathbf{E}}
\newcommand{\Prob}{\mathbf{P}}

\DeclarePairedDelimiter{\ceil}{\lceil}{\rceil}
\newcommand{\ip}[1]{\left}

\newcommand{\hybrid}{\ensuremath{\mathsf{Hybrid}}\xspace}

\newcommand{\nccshort}{\ensuremath{\mathsf{NCC}}\xspace}
\newcommand{\ncczero}{\ensuremath{\nccshort_0}\xspace}
\newcommand{\congest}{\ensuremath{\mathsf{CONGEST}}\xspace}
\newcommand{\local}{\ensuremath{\mathsf{LOCAL}}\xspace}
\newcommand{\clique}{\ensuremath{\mathsf{Congested\ Clique}}\xspace}

\newcommand{\bccshort}{\ensuremath{\mathsf{BCC}}\xspace}

\newcommand{\cwc}{\ensuremath{\mathsf{CWC}}\xspace}
\newcommand{\hybridzero}{\ensuremath{\mathsf{Hybrid}_0}\xspace}

\newcommand{\kdis}{$k$\textsc{-dissemination}\xspace}
\newcommand{\klrout}[1][k,\ell]{$(#1)$\textsc{-routing}\xspace}
\newcommand{\kagg}{$k$\textsc{-aggregation}\xspace}

\newcommand\bigOa{\ensuremath{{O}}}
\newcommand\bigO[1]{\ensuremath{{O}(#1)}}
\WithSuffix\newcommand\bigO*[1]{\ensuremath{{O}\left(#1\right)}}
\newcommand\tildeBigO[1]{\ensuremath{{\widetilde{{O}}}(#1)}}
\WithSuffix\newcommand\tildeBigO*[1]{\ensuremath{{\widetilde{{O}}}\left(#1\right)}}
\newcommand\littleO[1]{\ensuremath{{o}(#1)}}
\WithSuffix\newcommand\littleO*[1]{\ensuremath{{o}\left(#1\right)}}
\newcommand\tildeLittleO[1]{\ensuremath{{\widetilde{{o}}}(#1)}}
\WithSuffix\newcommand\tildeLittleO*[1]{\ensuremath{{\widetilde{{o}}}\left(#1\right)}}
\newcommand\bigOmega[1]{\ensuremath{{\Omega}(#1)}}
\WithSuffix\newcommand\bigOmega*[1]{\ensuremath{{\Omega}\left(#1\right)}}
\newcommand\tildeBigOmega[1]{\ensuremath{{\widetilde{{\Omega}}}(#1)}}
\WithSuffix\newcommand\tildeBigOmega*[1]{\ensuremath{{\widetilde{{\Omega}}}\left(#1\right)}}
\newcommand\littleOmega[1]{\ensuremath{{\omega}(#1)}}
\WithSuffix\newcommand\littleOmega*[1]{\ensuremath{{\omega}\left(#1\right)}}
\newcommand\tildeLittleOmega[1]{\ensuremath{{\widetilde{{\omega}}}(#1)}}
\WithSuffix\newcommand\tildeLittleOmega*[1]{\ensuremath{{\widetilde{{\omega}}{\left(#1\right)}}}}
\newcommand\bigTheta[1]{\ensuremath{{\Theta}(#1)}}
\WithSuffix\newcommand\bigTheta*[1]{\ensuremath{{\Theta}\left(#1\right)}}
\newcommand\tildeTheta[1]{\ensuremath{{\widetilde{{\Theta}}}(#1)}}
\WithSuffix\newcommand\tildeTheta*[1]{\ensuremath{{\widetilde{{\Theta}}\left(#1\right)}}}

\makeatletter
\newcommand*{\whp}{%
    \@ifnextchar{.}%
        {w.h.p}%
        {w.h.p.\@\xspace}%
}
\makeatother

\newcommand{\polylog}{\poly\log}

\RestyleAlgo{ruled} 
\LinesNumbered
\setlength{\intextsep}{4pt}

\let\oldnl\nl
\newcommand{\nonl}{\renewcommand{\nl}{\let\nl\oldnl}}

\SetKwFunction{AssignHelpers}{Assign-Helpers}
\SetKwFunction{TransmitSkeleton}{Transmit-Skeleton}
\SetKwFunction{LearnNeighborhood}{Learn-Neighborhood}
\SetKwFunction{LearnHelpers}{Learn-Helpers}
\SetKwFunction{ReassignSkeletons}{Reassign-Skeletons}

\NewEnviron{killcontents}{}


\newcommand{\tk}[1][k]{\mathcal{NQ}_{#1}}
\newcommand{\tn}[1][n]{\mathcal{NQ}_{#1}}

\newcommand{\hop}{\operatorname{hop}}
\DeclareMathOperator*{\argmax}{arg\,max}
\DeclareMathOperator*{\argmin}{arg\,min}

\newcommand{\tilO}{\smash{\ensuremath{\widetilde{O}}}}
\newcommand{\tilOm}{\smash{\ensuremath{\widetilde{\Omega}}}}
\newcommand{\tilT}{\smash{\ensuremath{\widetilde{\Theta}}}}
\newcommand{\tild}{\smash{\ensuremath{\widetilde{d}}}}

\newcommand{\hybridpar}[2]{\ensuremath{\mathsf{Hybrid}(#1,#2)}}

\newcommand{\LOCAL}{\ensuremath{\mathsf{LOCAL}}\xspace}
\newcommand{\CONGEST}{\ensuremath{\mathsf{CONGEST}}\xspace}

\newcommand{\NCC}{\ensuremath{\mathsf{NCC}}\xspace}

\newcommand{\CC}{\ensuremath{\mathsf{Congested\, Clique}}\xspace}

\newcommand{\MPC}{\ensuremath{\mathsf{MPC}}\xspace}

\newcommand{\calA}{\mathcal{A}}
\newcommand{\calB}{\mathcal{B}}

\newcommand{\calH}{\mathcal{H}}

\newcommand{\calS}{\mathcal{S}}
\newcommand{\calT}{\mathcal{T}}

\newcommand{\euler}{$\mathsf{Eulerian}$-$\mathsf{Orientation}$\xspace}
\newcommand{\oracle}{$\mathcal{O}^{\operatorname{Euler}}$\xspace}
\newcommand{\minor}{$\mathsf{Minor}$-$\mathsf{Aggregation}$\xspace}

\newcommand{\NQ}[1][k]{\ensuremath{\mathcal{NQ}_{#1}}\xspace}

\newcommand{\ID}{\ensuremath{\text{\normalfont ID}}\xspace}

\newcommand{\p}{\ensuremath{\!+\!}}
\newcommand{\m}{\ensuremath{\!-\!}}

\title{Universally Optimal Information Dissemination and Shortest Paths in the HYBRID Distributed Model\footnote{This work combines the three preprints \href{https://arxiv.org/abs/2304.06317}{arXiv:2304.06317}, \href{https://arxiv.org/abs/2304.07107}{arXiv:2304.07107}, and \href{https://arxiv.org/abs/2306.05977}{arXiv:2306.05977} together with some improved results.}}


\author{
Yi-Jun Chang 
\\
\small{National University of Singapore}
\\
\small{\texttt{cyijun@nus.edu.sg}}
\and
Oren Hecht
\\
\small{Technion}
\\
\small{\texttt{hecht.oren@campus.technion.ac.il}}
\and
Dean Leitersdorf
\\
\small{National University of Singapore}
\\
\small{\texttt{dean.leitersdorf@gmail.com}}
\and
Philipp Schneider
\\
\small{University of Bern}
\\
\small{\texttt{philipp.schneider2@unibe.ch}}
}

\date{}

\begin{document}
\begin{titlepage}
\maketitle
\thispagestyle{empty}

\begin{abstract}
    Distributed computing deals with the challenge of solving a problem whose input is distributed among the computational nodes of a network and each node has to compute its part of the problem output. The main complexity measure of a distributed algorithm is the number of required communication rounds, which depends highly on the communication model that is considered.

    
    In most modern networks, nodes have access to various modes of communication each with different characteristics.
    In this work we consider the \hybrid model of distributed computing, introduced recently by Augustine, Hinnenthal, Kuhn, Scheideler, and Schneider (SODA 2020), where nodes have access to two different communication modes: high-bandwidth local communication along the edges of the graph and low-bandwidth all-to-all communication,  
    capturing the non-uniform nature of modern communication networks. It is noteworthy that the \hybrid model in its most general form covers most of the classical distributed models as marginal cases.

    Prior work in \hybrid has focused on showing \emph{existentially} optimal algorithms, meaning there exists a pathological family of instances on which no algorithm can do better. This neglects the fact that such worst-case instances often do not appear or can be actively avoided in practice. In this work, we focus on the notion of \emph{universal optimality}, first raised by Garay, Kutten, and Peleg~(FOCS 1993).
    Roughly speaking, a universally optimal algorithm is one that, given any input graph, runs as fast as the best algorithm designed specifically for that graph.    
    
    We show the \emph{first} universally optimal algorithms in \hybrid, up to polylogarithmic factors. We present universally optimal solutions for fundamental tools that solve information dissemination problems, such as broadcasting and unicasting multiple messages in a \hybrid network.    
    Furthermore, we demonstrate these tools for information dissemination can be used to obtain universally optimal solutions for various shortest paths problems in \hybrid. 
    
    A major conceptual contribution of this work is the conception of a new graph parameter called \emph{neighborhood quality} that captures the inherent complexity of many fundamental graph problems in the \hybrid model.

    We also develop new existentially optimal shortest paths algorithms in \hybrid, which are utilized as key subroutines in our universally optimal algorithms and are of independent interest. Our new approximation algorithms for $k$-source shortest paths
  match the existing $\tildeBigOmega{\sqrt{k}}$ lower bound \emph{for all} $k$. Previously, the lower bound was only known to be tight when \smash{$k \in \tildeBigOmega{n^{2/3}}$}.
\end{abstract}

\end{titlepage}

\thispagestyle{empty}
{\small
\tableofcontents
}
\thispagestyle{empty}

\newpage

\setcounter{page}{1}

\section{Introduction}

In most contemporary networks, nodes can interface with multiple, diverse communication infrastructures and use different \emph{modes of communication} to improve communication characteristics like bandwidth and latency. Such \emph{hybrid} networks appear in many real-world applications. For instance, cell phones typically combine high-bandwidth short-ranged communication (e.g.,~communication between nearby smartphones using Bluetooth, WiFi Direct, or LTE Direct) with global communication through a lower-bandwidth cellular infrastructure~\cite{Kar2018}. As another example, organizations augment their networks with communication over the Internet using virtual private networks (VPNs)~\cite{Rossberg2011}. Furthermore, data centers combine limited wireless communication with high-bandwidth short-ranged wired communication~\cite{Halperin2011, Farrington2010}.

Theoretic research on distributed problems concentrated mostly on distributed models where nodes can only use a single, uniform {mode of communication} to exchange data among each other. These classic models come in two different ``flavors''. The first is based on  \textit{local communication} in a graph in synchronous rounds, where adjacent nodes may exchange a message each round (for instance \LOCAL, \CONGEST).
The second flavor captures \textit{global communication}, where any pair of nodes can exchange messages but only at a very limited bandwidth (e.g.\ \CC, or \MPC). 
In a sense, theoretical research lags behind industry-led efforts and applied research into so-called heterogeneous networks, which aim to realize gains in capacity and coverage offered by combining global communication (e.g., via different tiers of cellular networks) with local links \cite{Yeh2011}.

This began to change relatively recently with the introduction of the so-called \hybrid model \cite{augustine2020shortest}, which combines two modes of communication with fundamentally different characteristics. First, a graph-based local mode that allows neighbors to communicate by exchanging relatively large messages, representing high bandwidth, short-range communication where the main issue is data \emph{locality}, for instance, WiFi. Second, a global mode where any pair of nodes can communicate in principle but only at a very limited bandwidth, which reflects the issue of \emph{congestion} of communication via shared infrastructure, like cell towers. From a practical standpoint, this model aims to reflect better the heterogeneous nature of contemporary networks. From a theoretical point of view, the model asks the natural question of what can be achieved if limited global communication is added to a high bandwidth local communication mode. 

The dissemination of information in the network and the computation of shortest paths of a given input graph are fundamental tasks in real-world networks. Information dissemination, such as broadcasting or unicasting multiple messages or aggregating values distributed across the network, is interesting on its own as an end goal. In particular, such algorithms can be applied to announcing a failure, a change of policy, or other control messages in a data center. Furthermore, algorithms for information dissemination often serve as subroutines for solving other problems. Computing shortest paths allows us to gather information about the topology of the network. It is often used as a subroutine for related tasks like computing or updating tables for IP routing, which forms the backbone of the modern-day Internet.
A series of works on general graphs in  \hybrid \cite{augustine2020shortest, Kuhn2020, CensorHillel2021} have narrowed down the computational complexity of these problems (see Tables \ref{tab:info_dissem},\ref{tab:apsp},\ref{tab:klsp},\ref{tab:sssp} and \cref{fig:kssp_complexity_overview}). 

However, all prior work concentrated on \emph{existentially} optimal algorithms, meaning that the associated algorithms are essentially optimized to deal with pathological families of worst-case instances on which no algorithm can do better!
The focus on {existential} optimality comes with a significant drawback, since such worst-case graphs are often artificial constructions that might not appear or can be actively avoided in practical contexts.
In fact, previous work has shown that up to exponentially faster solutions for information dissemination and shortest paths algorithms are possible on specific classes of graphs in \hybrid or weaker models \cite{Augustine2019, coy2022routing, Coy2021Near, Feldmann2020}. In general, many networks of interest can offer drastically faster algorithms compared to the existential lower bounds.
Therefore, a worthy goal is to design \emph{universally optimal} algorithms. Loosely speaking, a universally optimal algorithm runs as fast as possible on \emph{any} graph, not just worst-case graphs. The concept of universal optimality  was first theorized in the distributed setting by Garay, Kutten, and
Peleg~\cite{Garay1998}. In particular, they asked the following question.

\begin{center}
\begin{minipage}{0.85\textwidth}
\textit{``The interesting question that arises is, therefore, whether it is possible to identify the inherent graph parameters associated with the distributed complexity of various fundamental network problems and to develop universally optimal algorithms for
them.''}
\end{minipage}
\end{center}

In this work, we consider solving fundamental distributed tasks such as information dissemination and shortest paths computation in the \hybrid model of distributed computing in a \emph{universally optimal} fashion -- i.e., we aim to design algorithms that are competitive even against the best algorithm designed specifically for the underlying graph topology.

\subsection{Our Contribution}

We give a condensed summary of our results with a comparison to the current state of the art. An overview that also explains our technical contributions is given in \cref{sec:technicaloverview}.
Before we proceed, we provide an informal definition of the \hybrid model. We are given an initial input graph $G = (V, E)$, and proceed in synchronous time steps called rounds. In each round, any two adjacent nodes in $G$ can exchange an arbitrarily large message, through the \emph{local network}. Further, any pair of nodes can exchange a $O(\log n)$ bit message through the \emph{global network} but each node can only send or receive $O(\log n)$ such messages per round. In \hybridzero nodes can only send global messages to nodes whose identifiers they know, and initially they know only those of their neighbors in $G$. Precise definitions are given in Section \ref{sec:models}.

A main conceptual contribution of this work is a new graph parameter called \emph{neighborhood quality} that captures the inherent complexity of many fundamental graph problems in the \hybrid model. Informally, $\NQ$ is the minimum distance $t$ such that the ball of radius $t$ around $v$ is sufficiently large to allow any node $v$ to exchange $\tildeBigOmega{k}$ bits of information with other nodes in $O(t)$ rounds via the global network.
In a sense, the parameter $\NQ$ dictates how effectively nodes can locally collaborate to solve a global distributed problem with some ``workload'' $k$ such as information dissemination and shortest paths. 
We believe that this could have significant implications for distributed paradigms such as \emph{edge computing} where the goal is to handle as much workload as possible locally in collaboration with nearby nodes using the local network, while minimizing the (more costly) global communication.\phil{I wanted to make this connection to edge computing since I believe it is important, but I don't know if it makes sense to others.} \oren{It makes sense to me, I think it's fine :)}
To compare the following results with the state of the art, it suffices to know that $1 \leq \NQ \leq \sqrt{k}$ (as shown in \cref{sec:neigh_qual}).

\paragraph{Universally optimal information dissemination.} We show the \emph{first universally optimal} algorithms for information dissemination in \hybrid. We prove that $\NQ$ constitutes a universal upper bound for broadcasting $k$ messages, aggregating $k$ messages, and unicasting an individual message from each of $k$ source nodes to each of $\ell \leq k$ target nodes (for a total of $k \cdot \ell$ distinct messages), for the precise problem definitions see \cref{sec:problems}. Furthermore, we prove that on the specific local communication graph $G$, no algorithm can solve these information dissemination problems in less than $\tildeBigOmega{\tk}$ rounds, even if the algorithm is tailor-made for the topology of $G$. \cref{tab:info_dissem} summarizes our results and prior results.

\bigskip

\begin{table}[h]
    \centering
    \setlength{\tabcolsep}{1mm}
    \begin{tabularx}{\textwidth}{@{\hskip 1mm}c *6{>{\centering\arraybackslash}X}@{}}
        \toprule
        reference & problem & bound & model & randomness & notes\\
        \midrule
        \cite{augustine2020shortest} & \multirow{3}{*}{\shortstack[c]{broadcast and\\ aggregation}} & $\tilO\big(\!\sqrt{k} \p {\ell}\big)$ & \hybrid & randomized & ex.\ opt.${}^*$ \\
        \cref{thm:optimal_dissemination} & & $\tilO(\NQ)$ & \hybridzero & deterministic & univ.\ opt.\ \\
        \cref{thm:broadcast_unicast_LB} & & $\tilOm(\NQ)$ & \hybrid & randomized & univ.\ opt.\ \\
        \midrule
        \cite{Kuhn2020} & \multirow{3}{*}{unicast} & $\tilO(\!\sqrt{k}
    \p k \ell/n)$ & \hybrid & randomized & $\ell \leq k$${}^\dagger$ \\ 
        \cref{thm:routing} & & $\tilO(\NQ)$ & \hybrid & randomized &  $\ell \leq k^\ddagger$\\ 
        \cref{thm:broadcast_unicast_LB} &  & $\tilOm(\NQ)$ & \hybrid & randomized & univ.\ opt.${}^\ddagger$ \\ 
        \bottomrule
    \end{tabularx}
    \begin{threeparttable}
        \begin{tablenotes}
            \footnotesize
            \item $*$ Where $\ell$ is the maximum initial number of messages per node, matching lower bounds $\tilOm\big(\!\sqrt{k}\big)$ and $\tilOm\big({\ell}\big)$ \cite{Schneider2023}.
            \item $\dagger$ Existentially optimal in $k$ \cite{Schneider2023}. Further conditions apply regarding distribution of source and target sets.
            \item $\ddagger$ Universally optimal. Further conditions apply regarding size or distribution of source and target sets, see Thm.\ \ref{thm:routing}.            
	\end{tablenotes}
    \end{threeparttable}
    \caption{Results comparison for information dissemination.}
    \label{tab:info_dissem}
\end{table}

\paragraph{Universally optimal shortest paths computation.} We consider various shortest paths problems, where the most general is the $(k,\ell)$-SP problem, where $\ell$ target nodes must learn their distance to $k$ source nodes. Furthermore, we cover the special cases of $k$-SSP $:= (k,n)$-SP, APSP := $n$-SSP and SSSP := $1$-SSP. In the approximate version of the problem, nodes must learn a distance that is precise up to a stretch factor. See Section \ref{sec:problems} for formal definitions.

In the \hybrid model, many shortest paths problems are settled with existentially optimal bounds (with an important gap remaining, see Figure \ref{fig:kssp_complexity_overview}).
Using our tools for information dissemination, we can obtain the first universally optimal solutions for shortest paths problems, that are, in general, faster than the existentially optimal ones. Tables \ref{tab:apsp} and \ref{tab:klsp} show the known and our results.

\bigskip

\begin{table}[h]
    \centering
    \setlength{\tabcolsep}{1mm}
    \begin{tabularx}{\textwidth}{@{\hskip 1mm}c *8{>{\centering\arraybackslash}X}@{}}
        \toprule
        reference & bound & stretch & model & randomness & weights & notes \\
        \midrule
        \cite{Kuhn2020} &  $\tilO\big(\!\sqrt{n}\big)$ & exact & \hybrid & randomized & weighted & exist.\ opt.\  \\
        \cite{anagnostides2021deterministic} & $\tilO\big(\!\sqrt{n}\big)$ & $\bigOa\big(\frac{\log n}{\log \log n}\big)$ & \hybrid & deterministic & weighted & exist.\ opt.\   \\
        \cite{anagnostides2021deterministic} &  $\tilO\big(\!\sqrt{n}\big)$ & $1 \p \eps$ & \hybrid & deterministic & unweighted & exist.\ opt.\  \\
        \cite{augustine2020shortest} & $\tilOm\big(\!\sqrt{n}\big)$ & $\bigOa\big(\!\sqrt{n}\big)$ & \hybrid & randomized & unweighted & exist.\ opt.\ \\
        \midrule
        \cref{thm:unweighted_apsp_approx} &  $\tilO(\tn)$ & $1 \p \eps$ & \hybridzero & deterministic & unweighted & univ.\ opt.\ \\    
        \cref{thm:firstWeightedAPSPApprox} &  $\tilO(\tn)$ & \smash{$\bigOa\big(\frac{\log n}{\log \log n}\big)$} & \hybridzero & deterministic & weighted &  univ.\ opt.\ \\ \cref{theorem:secondWeightedAPSPApprox} &  \smash{$\tilO\big({n^{1/4}\tn^{1/2}}\big)$} & 3 & \hybridzero & randomized & weighted & exist.\ opt.\ \\ \cref{thm:apsp_hybridzero_unweighted} &  $\tilOm(\tn)$ & any & \hybridzero & randomized & unweighted &  univ.\ opt.\ \\     \cref{thm:klsp_LB},\ref{thm:klsp_LB_sources_known}&  $\tilOm(\tn)$ & $\poly n$ & \hybrid & randomized & weighted &  univ.\ opt.\ \\
        \bottomrule
    \end{tabularx}
    \caption{Results comparison for the all pairs shortest paths (APSP) problem.}
    \label{tab:apsp}
\end{table}

\yijun{Question: why is $\tilO\big({n^{1/4}\tn^{1/2}}\big)$ existentially optimal? I see. It is at most $\sqrt{n}$}
\phil{we could also say that it is univ. opt. with competitiveness factor $n^{1/4}/\tn^{1/2}$.}
\phil{added another line to the table for the weighted lower bounds in \hybrid}

\bigskip

\begin{table}[h]
    \centering
    \setlength{\tabcolsep}{1mm}
    \begin{tabularx}{\textwidth}{@{\hskip 1mm}c *9{>{\centering\arraybackslash}X}@{}}
        \toprule
        reference & problem & bound & stretch & model & randomness & weights & notes\\
        \midrule
        \cite{Kuhn2020} & $(k,1)$-SP & $\tilOm\big(\!\sqrt{k}\big)$ & $\bigOa\big(\!\sqrt{n}\big)$ & \hybrid & randomized & unweighted & exist.\ opt.\  \\  
        \cref{thm:klsp_UB} & $(k,\ell)$-SP & $\tilO(\NQ)$ & $1 \p \eps$ & \hybrid & randomized & weighted & $\ell \leq k^*$\\
        Thm.\ \ref{thm:klsp_LB},\ref{thm:klsp_LB_sources_known} & $(k,1)$-SP & $\tilOm(\NQ)$ & $\poly n$ & \hybrid & randomized & weighted & univ.\ opt.${}^*$  \\
        \bottomrule
    \end{tabularx}
\begin{threeparttable}
        \begin{tablenotes}
            \footnotesize
            \item $*$ Universally optimal. Further conditions apply regarding size or distribution of source and target sets, see Thm.\ \ref{thm:klsp_UB}.         
	\end{tablenotes}
    \end{threeparttable}
    \caption{Results comparison for the $(k,\ell)$-shortest paths ($(k,\ell)$-SP) problem.}
    \label{tab:klsp}
\end{table}

\oren{Should we highlight again that $\NQ\leq{}\sqrt{k}$? I feel like we said it one time in a hidden spot and it could be missed}
\phil{I don't think so due to redundancy. The place above where we say it is also relatively prominent and we even suggest that the reader should remember this fact for interpreting/comparing the results.}

\paragraph{Existentially optimal shortest paths computation.} Last but not least, we also consider existentially optimal algorithms for the SSSP and the $k$-SSP problem. We use this as a tool in our universally optimal shortest paths algorithms, but our solutions also close some of the remaining gaps for shortest path problems in the \hybrid model. Table \ref{tab:sssp} gives an overview of the state of the art and our results for the SSSP problem and Figure \ref{fig:kssp_complexity_overview} gives a visual representation for $k$-SSP.

\bigskip

\begin{table}[h]
    \centering
    \setlength{\tabcolsep}{1mm}
    \begin{tabularx}{\textwidth}{@{\hskip 1mm}c *6{>{\centering\arraybackslash}X}@{}}
        \toprule
        reference & bound & stretch & model & randomness  \\
        \midrule
        \cite{anagnostides2021deterministic} & \smash{$\tilO(n^{1/2})$} & \smash{$\frac{\log n}{\log\log n}$}  &  \hybrid & deterministic  \\         
        
        \cite{censor2021sparsity} & $\tilO(n^{5/17})$ & $1 \p \eps$ & \hybrid & randomized  \\ 
                
        \cite{augustine2020shortest} & $\tilO(n^{\eps})$ & $(1/\eps)^{\bigOa(1/\eps)}$ & \hybrid & randomized  \\ 
        
        \cref{thm:almost_shortest_sssp} & $\tilO(1)$ & $1 \p \eps$ & \hybridzero & deterministic  \\   
        \bottomrule 
    \end{tabularx}
    \caption{Results comparison for the single-source shortest paths (SSSP) problem.}
    \label{tab:sssp}
\end{table}
\phil{rearranged the rows a little (sorted by running time) because I think it highlights the better all the parameters where we have improvement.}

\begin{figure}[h]
    \centering
    \includegraphics[width=0.73\textwidth]{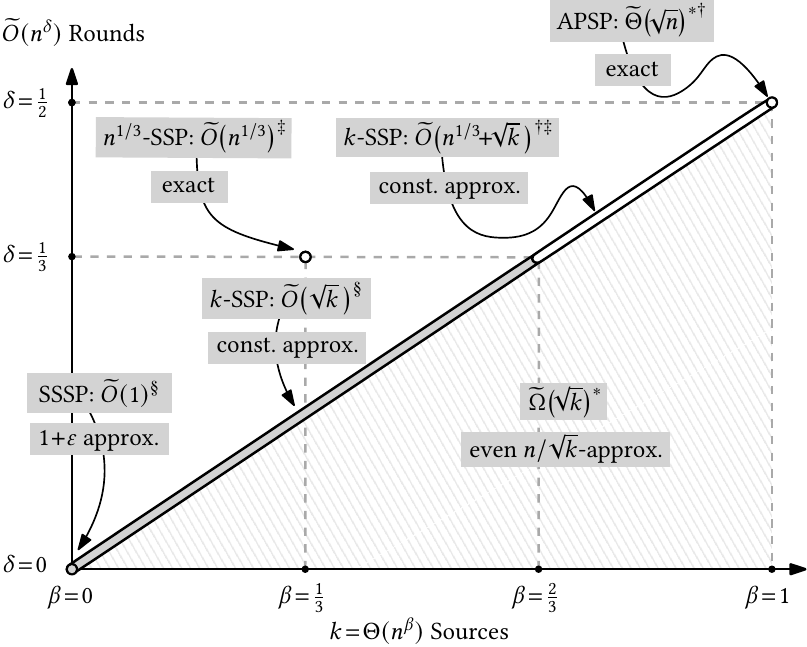}
    \caption{(Existential) complexity landscape of the $k$-SSP problem with the number of sources on the horizontal and the round complexity on the vertical axis. Circles or bars denote known upper bounds (ours in gray). The gray shaded area denotes the lower bound. References are as follows $*$: \cite{augustine2020shortest}, $\dagger$: \cite{Kuhn2020}, $\ddagger$: \cite{CensorHillel2021}, \S: this work.}		\label{fig:kssp_complexity_overview}
\end{figure}

\subsection{Notations}
Unless otherwise stated, we consider undirected and connected graphs $G = (V, E, \omega)$, with $n = |V|$, $m = |E|$, and an edge-weight function $\omega$ such that all the edge weights $\omega(e)$ are polynomial in $n$. If the graph is unweighted, then $\omega \equiv 1$. The weighted distance between two nodes $v,w\in{V}$ is denoted by $d(v,w)$. The hop distance between $v,w\in{V}$ is denoted by $\hop(v,w)$ and is the unweighted distance between the two nodes. We also extend this definition to node sets:  $\hop(A,B) = \min_{v \in A, \, w \in B} \hop(v,w)$.
The diameter of a graph is denoted by $D$ and is defined by $\max_{v, w \in V} \hop(v,w)$. Denote by $d^h(v,w)$ the weight of a shortest path between $v$ and $w$ among all $v$-$w$ paths of at most $h$ hops. 

Let 
$\calB_t(v)=\{w\in{}V \, | \, \hop(v,w)\leq{}t\}$
be the ball of radius $t$ centered at $v$. Given a node set $S \subseteq V$, we write $\calB_t(S) = \bigcup_{v \in S} \calB_t(v)$.
Given a set of nodes $C\subseteq V$, the \emph{weak} diameter of $C$ is defined as $\max_{u,v\in{}C}\hop(u,v)$, where the hop distance is measured in the original graph $G$. The \emph{strong} diameter of $C$ is the diameter of the subgraph of $G$ induced by $C$. For any positive integer $x$, let $[x] = \{1, 2, \ldots, x\}$. Unless otherwise specified, all logarithms $\log x := \log_2 x$ are of base-$2$.

We say that an event occurs \emph{with high probability} (\whp) if it happens with probability at least $1 - 1/n^c$ for some large constant $c>0$, where $n$ is the number of nodes in the underlying local communication network. We use the notations $\tildeBigO{\cdot}$, $\tildeBigOmega{\cdot}$, and $\tildeTheta{\cdot}$ to hide any polylogarithmic factors. For example, $\tilO(1) = \polylog n$.

\subsection{Models}
\label{sec:models}

We now formally define the \hybrid and \hybridzero models, introduced recently by Augustine, Hinnenthal, Kuhn, Scheideler, and Schneider~\cite{augustine2020shortest}. In these models, the local communication network is abstracted as a graph $G=(V,E)$, where each node $v \in V$ corresponds to a computing device and each edge $e = \{u,v\} \in E$ corresponds to a local communication link between $u$ and $v$.
Communication happens in synchronous rounds. In each round, nodes can perform arbitrary local computations, following which they communicate with each other using the following two communication modes.

\begin{description}
    \item[Unlimited local communication:] Local communication is modeled with the standard \local model of distributed computing of~\cite{linial1992locality}, where for any $e = \{v, u\} \in E$, nodes $v$ and $u$ can communicate any number of bits over $e$ in a round.
    \item[Limited global communication:] Global communication is modeled with the \emph{node-capacitated congested clique} (\nccshort) model~\cite{Augustine2019}, where every node can exchange $O(\log{n})$-bit messages with up to any $O(\log{n})$ nodes in $G$. In other words, it is required that each node is the sender and receiver of at most $O(\log{n})$ messages of $O(\log{n})$ bits per round.
\end{description}
There is some subtlety in the model definition regarding how we handle the situation where the number of messages sent to a node $v$ exceeds the bound $O(\log{n})$. For example, one reasonable option is to assume that there is an adversary that drops an arbitrary number of messages and leaves only $O(\log{n})$ messages delivered to $v$. This subtlety in the model definition is not an issue for us because, for all our algorithms, it is guaranteed that the bound $O(\log{n})$ is satisfied deterministically or \whp.

\paragraph{Identifiers.}
We assume that each node $v \in V$ is initially equipped with an $O(\log n)$-bit distinct identifier $\ID(v)$.
We make a distinction based on the assumption about the range of identifiers. In the \hybrid model, the set of all identifiers is exactly $[n]$. In the \hybridzero model, the range of identifiers is $[n^c]$ for some constant $c \geq 1$. At the start of an algorithm, a node only knows its identifier and the identifiers of its neighbors. Consequently, the global communication is over \ncczero~\cite{augustine2021distributed} instead of \nccshort. 
Compared with \hybrid, a main challenge in designing algorithms in \hybridzero is that a node might not know which identifiers are used in the graph and thus can only send messages to nodes whose identifiers it knows. For example, suppose a node wants to send a message to a uniformly random node. In \hybrid, this task can be done by selecting a number $s \in [n]$ uniformly at random and then sending a message to the node whose identifier is $s$. Such an approach does not apply to \hybridzero.
 If an algorithm works in \hybridzero, then it works in \hybrid too, and if a lower bound holds in \hybrid, then it holds in \hybridzero too.

    


\paragraph{Parameterization.}
It is possible to parameterize the \hybrid model by two parameters: $\lambda$ is the maximum message size for local edges and $\gamma$ is the maximum number of bits that each node can communicate per round via global communication. Therefore, the standard \hybrid model can be seen as $\hybrid(\lambda,\gamma)$ with $\lambda=\infty$ and $\gamma \in O(\log^2{n})$, as we have unlimited-bandwidth local communication and allow each node to send $O(\log n)$-bit messages to $O(\log n)$ nodes via global communication. As the number of messages is not constrained by $O(\log n)$ in $\hybrid(\infty, O(\log^2{n}))$, this model has slightly more flexibility compared with \hybrid, but they are equivalent up to $\tilO(1)$ factors in the round complexity.  
Many standard distributed models can be seen as specific cases of $\hybrid(\lambda,\gamma)$ or $\hybridzero(\lambda,\gamma)$, as follows, where $\approx$ indicates equivalence up to $\tilO(1)$ factors.
\begin{align*}
\clique & \approx \hybrid(0,O(n\log{n})) &&&
    \local &= \hybridzero(\infty, 0) \\
\NCC & \approx \hybrid(0,O(\log^2 {n})) &&&
    \congest &= \hybridzero(O(\log n), 0)\\
\ncczero & \approx \hybridzero(0,O(\log^2 {n}))
\end{align*}


\subsection{Problems}
\label{sec:problems}

We provide formal definitions for the distributed tasks considered in this work. We start with the fundamental distributed task of broadcasting $k$ messages to the entire network, which is interesting on its own as an end goal (e.g., to broadcast a network
update or notification of failure) and interesting as a basic building block for solving other problems (e.g., for computing paths and distances between nodes).


\begin{definition}[\kdis]
\label{def:kdis}
    Given any set of messages $M$ of $\bigO{\log n}$ bits, where $k = |M|$ and each $m \in M$ is originally known to only one node in the graph, the \kdis problem requires that all messages $M$ become known to every node in the graph.
\end{definition}

The \kdis problem is one of the most basic \hybrid communication primitives. In the paper defining the \hybrid model~\cite{augustine2020shortest}, it was shown that \kdis problem can be solved \whp in $\tildeBigO{\sqrt{k}+\ell}$ rounds, where $\ell$ is the maximum number of messages that a node can hold at the beginning. Recently, a deterministic algorithm operating in $\tildeBigO{\sqrt{k}}$ rounds was shown in~\cite{anagnostides2021deterministic}, but their algorithm requires $k\geq{}n$. In our paper, we consider the most general version of the problem, with no bounds on $k$ and no limitation on the original distribution of the messages in the graph. The number of messages initially at a node can be any number in $0, 1, \ldots, k$.

Next, we consider a related distributed task called \kagg. In the following definition, we say that a function $F: X\times X \rightarrow X$ is an aggregation function if $F$ is associative and commutative. Examples of aggregation functions include minimum, maximum and sum.

\begin{definition}[\kagg]
    \label{def:kagg}
    Let $F: X\times X \rightarrow X$ be an aggregation function with $|X| \in n^{O(1)}$. Assume each node $v$ originally holds $k$ values $f_1(v), \dots, f_k(v)$. The \kagg problem requires that each node learn all the values $F(f_i(v_1),\dots,f_i(v_n))$, for all $i \in [k]$. 
\end{definition}

Next, we consider the \emph{unicast} problem where multiple, distinct messages are routed between specific source and target nodes.

\begin{definition}[\klrout]
\label{def:klrout}
Let $S \subseteq V$ be a set of source nodes. Let $T \subseteq V$ be a set of target nodes. Each source $s  \in S$ has an individual message intended for each target $t \in T$ that is labelled with $\ID(t)$. 
The \klrout problem requires every target node $t \in T$ to know all the $|S|$ messages that are addressed to $t$. We consider four different scenarios.
\begin{description}
    \item[Arbitrary sources and arbitrary targets:] $S$ is a set of $k$ arbitrary nodes, and $T$ is a set of $\ell$ arbitrary nodes
    \item[Arbitrary sources and random targets:] $S$ is a set of $k$ arbitrary nodes, and $T$ is selected by letting each node in $V$ join $T$ independently with probability $\ell/n$.
    \item[Random sources and arbitrary targets:] $S$ is selected by letting each node in $V$ join $S$ independently with probability $k/n$, and $T$ is a set of $\ell$ arbitrary nodes
    \item[Random sources and random targets:] $S$ is selected by letting each node in $V$ join $S$ independently with probability $k/n$, and $T$ is selected by letting each node in $V$ join $T$ independently with probability $\ell/n$.
\end{description}
\end{definition}

In different scenarios, the roles of the two parameters $k$ and $\ell$ are slightly different. In the arbitrary setting, these parameters specify the size of the corresponding node sets. In the randomized setting, these parameters specify the \emph{expected} size of the corresponding node sets. Furthermore, in our randomized algorithms, a random set of sources $S$ or targets $T$ is fixed after it was sampled and cannot be re-sampled. Randomized sources or targets still help tremendously for our upper bounds. By contrast if resampling would be allowed the corresponding lower bounds would not hold. In fact, much better running times are possible if we could re-sample source or target nodes in case we are not satisfied with the result.

\paragraph{Shortest paths.} Let $S \subseteq V$ be a set of source nodes. Let $T \subseteq V$ be a set of target nodes. The $(k,\ell)$-{Shortest Paths} ($(k,\ell)$-SP) problem requires every target $t \in T$ to learn $d_G(s,t)$ for all sources $s \in S$ and match the distance label $d_G(s,t)$ to the identifier $\ID(s)$ of the corresponding source node $s \in S$.\phil{that assignment is important for the lower bound proof in unweighted graphs and \hybridzero} In the $\alpha$-approximate version of the problem for \emph{stretch} $\alpha \geq 1$, every target node $t$ has to compute a distance estimate $\widetilde{d}_G(s,t)$ such that $d_G(s,t)\leq \widetilde{d}_G(s,t)\leq \alpha d_G(s,t)$ for all source nodes $s \in S$.

Same as \klrout, here we also consider four different variants of the $(k,\ell)$-SP problem.
For example, in the setting where the sources are arbitrary and the targets are random, $S$ is an arbitrary set of $k$ nodes, and $T$ is selected by letting each node join $T$ with probability $\ell/n$ independently.
Many natural variants of the shortest paths problems can be stated as special cases of $(k,\ell)$-SP with $\ell = n$ targets $T = V$.
\begin{align*}
   \text{$(1,n)$-SP} &= \text{single-source shortest paths (SSSP)} \\
   \text{$(k,n)$-SP} &= \text{$k$-source shortest paths ($k$-SSP)}\\
   \text{$(n,n)$-SP} &= \text{all-pairs shortest paths (APSP)}
\end{align*}



We assume that the local communication network and the input graph for the considered graph problem are the same. This is a standard assumption for distributed models with graph-based communication such as \LOCAL and \CONGEST and reflects the desire to gather topological information about the local network.


\paragraph{Universal optimality.}


Our concept of universal optimality adheres closely to \cite{haeupler2021universally} which bases itself on a description by \cite{Garay1998}. Consider a problem instance of some problem $\Pi$, which is defined by certain problem inputs. We split such an instance $(S,I) \in \Pi$ into a fixed part $S$ and a parametric input $I$. The intuition behind this partition is that we want to design universal algorithms that are (provably) competitive against algorithms that have the advantage that all nodes are initially granted knowledge of $S$, but not $I$. Therefore, we have to define how powerful we make the algorithms that we compete against by deciding what part of the problem instances of $\Pi$ we assign to $S$ and what we assign to $I$.

In our problems of information dissemination and distance computation in \hybrid, we always assign the local graph $G$ to $S$. More concretely, in case of an information dissemination instance, e.g., $(S,I) \in $ \kdis (see \cref{sec:problems}), the fixed part $S$ contains $G$ but also the starting locations of all $k$ messages, whereas the contents of the messages are part of $I$. 
In the case of a distance problem like $(k, \ell)$-SP, consider an instance $(S,I)\in(k, \ell)$-SP. Then $S$ contains (at least) the graph $G$, whereas $I$ contains (at most) an assignment of identifiers $\ID:V \to [n]$, the weight function $\omega$, and the set of source and target nodes. 
However our universal algorithms can often be made competitive even against algorithms that have additional knowledge like, e.g., the weight function $\omega$, which means that $\omega$ is part of $S$, too.

In order to highlight why the split of an instance of $\Pi$ into $S$ and $I$ is important, consider an even stronger concept than universal optimality called \textit{instance optimality} (also theorized by \cite{Garay1998}). An algorithm is instance optimal if it would be competitive with any algorithm optimized for $S$ \textit{and} the input $I$.
Unfortunately, the concept of instance optimality is not very interesting in the \hybrid model. This is due to the fact that in the \hybrid model, nodes can detect extremely fast (in $\bigOa(\log n)$ rounds to be precise) whether they all live in a specific problem instance $(S,I) \in \Pi$ by using the global network and if so output a hard-coded solution, else run the trivial $D$-round algorithm. 
So unless a graph problem $\Pi$ has $\tilO(1)$ complexity in \hybrid in general, instance optimality with competitiveness $\tilO(1)$ is unattainable for any algorithm that is \textit{oblivious} to $(S,I)$. 
Hence, designing algorithms that are competitive with the algorithm where nodes know $S$ but not $I$ is a more fruitful concept. 
Formally, we define universal optimality as follows.

\begin{definition}[Universal Optimality, see \cite{Garay1998, haeupler2021universally}]
	\label{def:uni_opt}
	Let $\calA$ be an algorithm that correctly computes the solution to some distributed problem $\,\Pi$ with probability at least $p>0$ in some computational model $\mathcal M$ and takes $T_{\calA}(S,I)$ rounds for $(S,I) \in \Pi$. Then $\calA$ is called a \emph{universally optimal} model $\mathcal M$ algorithm with competitiveness $C$ (omitted for $C \in \tilO(1)$) if the following holds. Let $\Pi_S := \{I \mid (S,I) \in \Pi \}$. For all fixed inputs $S$ and for all algorithms $\calA'$ that solve $(S,I) \in \Pi$ with probability at least $p$ in time $T_{\calA'}(S,I)$, we have $$T_{\calA}(S,I) \leq C \cdot \max\limits_{I \in \Pi_S} T_{\calA'}(S,I).$$
\end{definition}

In order to satisfy the definition above and show universality of an algorithm $\calA$ we conduct the following steps. First, we prove that our algorithm $\calA$ achieves a certain time $T$ without nodes having knowledge of $(S,I)$. Second, we prove a lower bound of $\tilOm(T)$ that holds even when nodes are given initial knowledge of $S$ (but not $I$).

\subsection{Further Related Work}

The \hybrid model in its current form was introduced in \cite{augustine2020shortest}. Since then, most research in \hybrid focused on information dissemination, shortest paths computations and closely related problems such as diameter calculation or computation of routing schemes. 
These classes of problems are well suited for \hybrid due to the following reasons: (a) they are ``global'' in the sense that using only \local it takes $D$ rounds to solve them, (b) they require significant information exchange, i.e., $\tilOm(n)$ rounds are required if only \NCC would be used \cite{augustine2020shortest, Kuhn2020, Kuhn2022}, and (c) they are not too demanding on the computation nodes have to conduct locally (as would be the case for global, NP-hard problems, e.g., \emph{optimal} vertex coloring).

The stated goal of prior work was typically to show that the combination of \local and \NCC in the \hybrid model offers significant speedup over using either \local or \NCC. 
For an overview of the current state of the art for shortest paths computation, see~\cref{tab:info_dissem,tab:apsp,tab:klsp,tab:sssp,fig:kssp_complexity_overview}. 
Computation of the graph diameter in \hybrid was considered in \cite{Kuhn2020, anagnostides2021deterministic, CensorHillel2021}, the computation of routing schemes in \cite{Kuhn2022, Coy2021Near, coy2022routing} and distance problems in classes of sparse graphs in \cite{Feldmann2020, Coy2021Near, coy2022routing}.

Several other distributed models are of a hybrid nature, such as \cite{gmyr2017distributed, Afek1990,baDISC}. One model related to \hybrid is the Computing with Cloud (\cwc) model introduced by \cite{afek2021distributed}, which considers a network of computational nodes, together with usually one passive storage cloud node. They explore how to efficiently run a joint computation, utilizing the shared cloud storage and subject to different capacity restrictions. We were inspired by their work on how to analyze neighborhoods of nodes to define an optimal parameter for a given graph which limits communication.

There is a series of exciting research on universally optimal algorithms~\cite{haeupler2021universally,Haeupler2021low,ghaffari_dfs,ghaffari_mixing_time,haeupler2016near, haeupler2018faster, haeupler2018minor, haeupler2022hop, kitamura2021low, rozhovn2022undirected, zuzic2018towards, Ghaffari2022, zuzic2022universally,ghafarriOPTBroadcast} that  addressed the research question suggested in the quote of Garay, Kutten, and
Peleg~\cite{Garay1998} in an interesting setting. Specifically, they identified a graph parameter called \emph{shortcut quality} $\mathcal{SQ}(G)$ that accurately captures the inherent complexity of many fundamental problems, including minimum spanning tree (MST), approximate SSSP, and approximate minimum cut in the \congest model of distributed computing. For these problems, $\tildeBigOmega{\mathcal{SQ}(G)}$ is a lower bound even for algorithms that are specifically designed for the network $G$ and know its graph topology~\cite{haeupler2021universally}. In the known-topology setting, universally optimal algorithms that attain the matching upper bound $\tildeBigO{\mathcal{SQ}(G)}$ have been shown~\cite{haeupler2021universally}. Even in the more challenging unknown-topology setting, a weaker upper bound of the form $\poly(\mathcal{SQ}(G)) \cdot n^{o(1)}$ can be attained~\cite{haeupler2022hop}.

Elkin~\cite{elkin2006faster} considered universally optimal algorithms in the \local model. He focused on a class of MST algorithms that maintain a set of edges in each round that eventually converge to the correct solution. He defined a graph parameter $\mu(G, \omega)$ called \emph{MST-radius}, which depends on both the graph topology $G$ and the edge weights $\omega$, and he showed that this parameter captures the universal complexity for the considered class of algorithms.

\subsection{Roadmap}
In \cref{sec:technicaloverview}, we formally state all our main results and discuss the key technical ideas that underlie their proofs.
In \cref{sec:neigh_qual}, we define the neighborhood quality of a graph and analyze its properties to compare it to other graph parameters.
In \cref{sec:optimal_broadcast}, we show universally optimal algorithms for \kdis and \kagg.
In \cref{sec:optimal_unicast},  we show universally optimal algorithms for \klrout.
In \cref{sec:upper_bound}, we utilize our broadcast and unicast algorithms to give universally optimal algorithms for approximating distances and cuts.
In \cref{sec:lower_bound}, we present lower bounds to demonstrate the universal optimality of our algorithms.

Our universally optimal shortest paths algorithms utilize existentially optimal SSSP and $k$-SSP algorithms as subroutines. In \cref{sec:sssp_logtime}, we show that a $(1 \p \eps)$-approximation of SSSP can be computed in $\tilO(1/\eps^2)$ rounds in \hybridzero deterministically. In \cref{sec:kssp}, we extend this SSSP algorithm to give existentially optimal algorithms for $k$-SSP.

In \cref{apx:generalnotations}, we present a few basic probabilistic tools that we use throughout this paper.
In \cref{sec:graph_families}, we analyze the neighborhood quality for some graph families to give a taste as to how $\tk(G)$ relates to other graph parameters such as $n$ and $D$ and to show that our \emph{universally} optimal algorithm improves significantly over the existing state-of-the-art \emph{existentially} optimal algorithms in these graph families. In \cref{sec:node_comm}, we review \emph{the node communication problem}, which is an abstraction for the problem where a part of the network has to learn some information from a distant part of the network. We use this tool in our lower bound proofs.

\phil{although I really like the roadmap, I think it would be a candidate for removal if we are desperate for space for the PODC submission.}
\phil{If we keep it the references might need to be updated.}
\yijun{The roadmap can also appear at the beginning of the appendix}
\phil{Yes I think that's how we should do it.}


\section{Technical Overview}\label{sec:technicaloverview}

In this section, we formally state all our main results and provide some additional consequences that these results have. The goal is to provide intuition and discuss the key technical ideas underlying their proofs, along with the corresponding challenges.

We begin by defining the \emph{neighborhood quality} of a graph. Roughly speaking, it describes the size of the neighborhood each node has available in order to collaboratively handle a ``workload'' of size $k$. Given a graph $G = (V, E)$, a value $k$, and a node $v$, let
    $\tk(v) = \min \{\{t \mid \size{\calB_t(v)} \geq k/t\} \cup{} \{D\}\}$ and
    and let $\tk(G) = \max_{v \in V} \tk(v)$.
For the most part, we drop the $G$ in $\NQ(G)$ and write $\tk$ instead. 
For this technical overview, we only briefly mention that the quality of neighborhoods behaves inversely in \NQ, i.e., neighborhoods have higher quality if $\NQ$ is small, whereas $1 \leq \NQ \leq \sqrt{k}$. \cref{sec:neigh_qual} is dedicated to an in depth analysis of the parameter $\NQ$.

\subsection{Universally Optimal Information Dissemination}

Before we start summarizing our lower and upper bounds of $\tilde\Theta(\tk)$ for solving \kdis, \kagg, and \klrout for certain ranges of parameters $(k,\ell)$ with random target nodes in \hybrid, we observe a couple of interesting properties.

First, this bound shows that in \hybrid, the complexity of these information dissemination tasks depends \emph{only on the graph topology}.\phil{in some sense it also depends on k, but I think it's ok since $\NQ$ depends only on the graph for any fixed k} In particular, the complexity of \kdis does not depend on the original distribution of the tokens to broadcast. 

Intuitively, this makes sense as in \hybrid within $t$ rounds every node in $G$ can communicate with every neighbor it has within $t$ hops. At the same time, in those $t$ rounds each of those nodes can receive $\tilO(t)$ messages through the global network. Thus the size of the $t$-hop neighborhood of a node in $G$ dictates an upper bound on the amount of information that node can receive in $t$ rounds by combining the global and local networks. 


\paragraph{Multi-message broadcast upper bound.} 
We consider \kdis and \kagg (see \cref{def:kdis} and \cref{def:kagg}) in \cref{sec:optimal_broadcast}, for which we show universally optimal algorithms that are \emph{deterministic}.

\begin{restatable}[Broadcast]{theorem}{kdisUP}
\label{thm:optimal_dissemination}
The \kdis problem can be solved
    in $\tildeBigO{\tk}$ rounds {deterministically} in the \hybridzero model.
\end{restatable}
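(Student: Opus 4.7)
The underlying counting observation is: let $t = \NQ$. In $t$ rounds of \local communication, every node $v$ can gather all messages held inside its hop-$t$ ball $\calB_t(v)$, and by definition of $\NQ$ this ball contains at least $k/t$ nodes. In the same $t$ rounds of global communication, each of these nodes can receive $\tilO(t)$ extra messages, so collectively $\calB_t(v)$ can absorb $\tilO(k)$ messages from the global network---exactly the budget needed to cover all $k$ tokens. The proof is therefore about realizing this potential through a coordinated schedule that, crucially, is deterministic in \hybridzero.

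The algorithm I have in mind runs in three phases, each of length $\tilO(\NQ)$. First, \emph{neighborhood gathering}: every $v$ floods its identifier and its initial messages for $t = \NQ$ rounds along local edges, ending with complete knowledge of $\calB_t(v)$ together with the messages originally stored there; a preliminary doubling search over $t \in \{1,2,4,\dots\}$ lets the nodes agree on the correct value of $\NQ$ within $\tilO(1)$ extra rounds. Second, \emph{hosting}: each of the $k$ messages $m_i$ is placed on a designated set of ``hosts'' $H_i \subseteq V$ with $|H_i| = \tilO(n\NQ/k)$, chosen so that $H_i \cap \calB_t(v) \neq \emptyset$ for every $v$. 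Source $s_i$ then ships $m_i$ to all of $H_i$ via the global network; the total global load is $k \cdot \tilO(n\NQ/k) = \tilO(n\NQ)$ message-deliveries distributed over $n$ nodes with $\tilO(1)$ bandwidth each, i.e.\ $\tilO(\NQ)$ rounds. Third, \emph{local spreading}: a second $t$-round \local flood lets every $v$ pull from the hosts in $\calB_t(v)$ each of the $k$ messages.

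The main technical hurdle is the deterministic choice of the host sets $H_i$ in \hybridzero. Conceptually $\bigcup_i H_i$ should act as a dominating set for the power graph $G^{\NQ}$: every $\NQ$-ball (which by assumption has at least $k/\NQ$ nodes) must be hit by a host of every message. A uniformly random sample of density $p = \tilO(\NQ/(k/\NQ))$ suffices by a standard Chernoff and union-bound argument, but since \hybridzero forbids global communication with unknown identifiers, derandomization cannot appeal to a fixed global name-space. Instead, I would leverage the view obtained in phase one: after the local flood, each node $v$ knows the induced structure of $G^{\NQ}$ restricted to $\calB_t(v)$, which is enough to run a deterministic ruling-set or network-decomposition procedure (in the spirit of the Rozho\v n--Ghaffari-style derandomization) on the power graph, using the hop-$\NQ$ neighborhoods as the local horizon and the global network as the coordination medium. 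The delicate parts I expect are (a) verifying that this ruling-set computation itself fits in $\tilO(\NQ)$ rounds without exceeding the per-node global bandwidth, and (b) deterministically load-balancing the shipping step, e.g.\ by distributing the addresses in $H_i$ among sources through a fixed combinatorial schedule rather than through random hashing.
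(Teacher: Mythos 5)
Your counting intuition matches the paper's, and your phase~1 and phase~3 (local flooding to depth $\NQ$) are unproblematic. The fatal problem is phase~2. For $H_i \cap \calB_t(v) \neq \emptyset$ to hold for every $v$, each $H_i$ must dominate the power graph $G^{\NQ}$, and since balls of radius $\NQ$ may have size only $\Theta(k/\NQ)$ (e.g.\ on a path), this forces $|H_i| \in \Omega(n\NQ/k)$. Your round count then averages the $\tilO(n\NQ)$ total deliveries over all $n$ nodes, but only the \emph{receiving} side is that well spread. The \emph{sending} side is concentrated on the sources: a source holding a single message must itself inject $\Omega(n\NQ/k)$ copies into the global network, which at $\tilO(1)$ messages per round takes $\tilOm(n\NQ/k)$ rounds --- for $k \ll n$ this vastly exceeds $\tilO(\NQ)$. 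Worse, the problem definition explicitly allows one node to hold all $k$ messages initially, in which case that node alone must perform $\tilOm(n\NQ)$ sends. A direct one-hop fan-out from sources to host sets therefore cannot work; you would need a multi-level relay structure, which is exactly the machinery your sketch omits. A secondary but also serious gap in \hybridzero: each source must learn the $\Omega(n\NQ/k)$ identifiers of its hosts, which are scattered far outside its $\NQ$-ball; distributing those identifiers is itself an information-dissemination task of comparable hardness, and your ruling-set derandomization on $G^{\NQ}$ only gives each node knowledge of hosts \emph{inside} its own ball.

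The paper's proof resolves precisely these two issues. It partitions $V$ into clusters of weak diameter $\tilO(\NQ)$ and size $\Theta(k/\NQ)$ (via a deterministic ruling set, close in spirit to your dominating-set idea), but instead of replicating each message to every cluster directly, it builds a logarithmic-depth virtual tree over the clusters, load-balances the tokens inside each cluster so that every node carries at most $\NQ$ of them, and then converge-casts all $k$ tokens up to a root cluster and broadcasts them back down, re-balancing at each of the $O(\log n)$ levels. Each physical node thus sends and receives only $\tilO(\NQ)$ global messages in total, and the identifier problem is solved by matching the (identically shaped) internal binary trees of adjacent clusters level by level, so that each node only ever needs $O(\log n)$ foreign identifiers. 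If you want to salvage your outline, you would have to replace the single-hop shipping step with such a tree-structured, load-balanced relay --- at which point you have essentially reconstructed the paper's argument.
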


\begin{restatable}[Aggregation]{theorem}{kaggUB}
\label{thm:kaggUB}
The \kagg problem can be solved in $\tildeBigO{\tk}$ rounds {deterministically} in the \hybridzero model.
\end{restatable}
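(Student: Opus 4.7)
The plan is to reduce \kagg to \kdis via a leader-and-broadcast scheme with three phases. Phase (i): elect a leader $r$ (say, the node with smallest identifier), which in \hybridzero can be done in $\tilO(1)$ rounds by iterated minimum-aggregation over the global network. Phase (ii): convergecast all input values to $r$, so that $r$ ends up holding the $k$ output aggregates $F(f_i(v_1),\ldots,f_i(v_n))$ for $i \in [k]$. Phase (iii): $r$ broadcasts these $k$ aggregates to every node by invoking \cref{thm:optimal_dissemination}, which runs in $\tilO(\NQ)$ rounds deterministically.

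The crux is phase (ii). I plan to exploit the meaning of $\NQ$: by definition every node $v$ has $|\calB_{\NQ}(v)| \geq k/\NQ$, so the graph can be partitioned into clusters of hop-radius $O(\NQ)$ whose sizes are at least roughly $k/\NQ$. Within each cluster, $O(\NQ)$ rounds of \local communication suffice for a cluster representative to learn and aggregate every member's $k$ values, collapsing the input down to one $k$-entry partial-aggregate vector per cluster. After this local reduction, only $O(n\NQ/k)$ cluster representatives remain active, collectively holding $O(n\NQ)$ values, i.e.\ $O(n\NQ \log n)$ bits. Combining these at $r$ over the global network can be done along a balanced aggregation tree of depth $O(\log n/\log\log n)$, pipelining over the $k$ indices and exploiting associativity and commutativity of $F$ to merge partial aggregates in any order. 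A counting argument against the $O(n\log^2 n)$ aggregate global bandwidth per round shows the total payload fits within $\tilO(\NQ)$ rounds.

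The main obstacle will be orchestrating phase (ii) deterministically in \hybridzero while respecting the per-node $O(\log n)$ bandwidth cap on the global network. Two delicate issues arise: cluster representatives do not initially know each other's identifiers, so the global aggregation tree must be constructed implicitly via a deterministic procedure (for instance, a Bor\r{u}vka-style pairing that uses shared auxiliary structure); and when pipelining $k$ partial-aggregate streams through the tree, one must avoid overloading any single internal node. A clean way to sidestep both issues is to observe that these are precisely the routing problems already handled inside the proof of \cref{thm:optimal_dissemination}, so the same infrastructure---a symmetric communication schedule---can be reused to convergecast rather than broadcast, with combination via $F$ replacing concatenation of messages. This yields the claimed $\tilO(\NQ)$-round deterministic algorithm in \hybridzero.
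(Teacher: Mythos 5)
Your proposal is correct and follows essentially the same route as the paper: reduce \kagg to getting the $k$ aggregates into one place by reusing the clustering, cluster-tree, and per-cluster load-balancing machinery from the proof of \cref{thm:optimal_dissemination} in convergecast mode (merging partial aggregates via $F$ instead of concatenating messages), and then invoke \cref{thm:optimal_dissemination} to broadcast the $k$ results. The separate leader-election phase and the detour through a depth-$O(\log n/\log\log n)$ balanced tree are unnecessary---the root cluster of the cluster tree already plays the role of $r$---but your final paragraph lands on exactly the paper's argument.
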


One interesting consequence of \cref{thm:optimal_dissemination} is that it offers a simple $\tildeBigO{\tn}$-round preprocessing step such that after the step any \hybrid algorithm can be emulated in \hybridzero without any slowdown. We just run \cref{thm:optimal_dissemination} with $k=n$ to broadcast all the identifiers. After that, effectively we may assume that the range of identifiers is $[n]$ as every node $v$ knows the rank of $\ID(v)$ in the list of all identifiers.

Since the algorithms for \kdis and \kagg are based on the same framework, in the technical overview we only focus on \kdis.
To show \cref{thm:optimal_dissemination}, we partition the graph into \emph{clusters}, each with diameter $\tildeBigO{\tk}$ and with $\Theta(k/\tk)$ nodes. We desire to ensure that all the $k$ messages arrive at each cluster, which will allow each node to ultimately learn all the $k$ messages by using the local edges to receive all the information its cluster has.

To do so, we begin by building a virtual tree of all the clusters. While trivial in the \hybrid model, it is rather challenging in \hybridzero, as we must construct a tree that spans the entire graph, has $\tildeBigO{1}$ depth and maximum degree, and every two nodes in the tree know the identifiers of each other even though they may be distant in the original graph. To do so, we build upon certain overlay construction techniques from~\cite{gmyr2017distributed}.

Then, between any parent $P$ and child $C$ clusters in the binary tree, we ensure that every node $P$ knows the identifier of exactly one node in $C$ and vice versa, so that they may communicate through the global network. 
Reaching this state requires great care, as sending the identifiers of all the nodes in one cluster to all the nodes in another cluster must be done over the bandwidth restricted global network, since the clusters might be physically distant from each other.

\oren{I think the above paragraph is phrased a bit awkwardly maybe - I agree with its idea following our reviews, but I felt it's repeating the same point many times. Maybe it will be better to distribute it throughout the technical section to not give the impression we "try to hard to show our work is complex"?}
\phil{I shortened this a little (original text commented out)}


Once nodes in clusters $P$ and $C$ can communicate with each other, we can converge-cast all $k$ messages in parallel in the cluster tree. To prevent congestion in the global network, we perform load-balancing steps within each cluster to ensure that each of its nodes is responsible for sending roughly the same number of messages.

\phil{Did some more shortening, mainly removing the redundancy that Oren was mentioning, original text in the comments.}

\paragraph{Multi-message unicast upper bound.}
We proceed with the algorithmic results for the \klrout problem in \cref{sec:optimal_unicast}, see \cref{def:klrout} for the definition.

\begin{restatable}[Unicast]{theorem}{UnicastUB}
	\label{thm:routing}
The \klrout problem can be solved w.h.p.\ in the \hybrid model with the following round complexities.
    \begin{enumerate}[(1)]
        \item $\tilO(\NQ)$ rounds, for $\ell \leq \NQ$, with arbitrary sources and random targets.
        \item $\tilO(\mathcal{NQ}_\ell)$ rounds, for $k \leq \mathcal{NQ}_\ell$, with random sources and arbitrary targets.
        \item $\tilO(\max\{\NQ,\mathcal{NQ}_\ell\})$ rounds, for  $k \cdot \ell  \leq \NQ \cdot n$, with random sources and random targets.
    \end{enumerate}
\end{restatable}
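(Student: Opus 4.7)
My plan is to reduce \klrout to a ball-based relaying scheme backed by Theorem~\ref{thm:optimal_dissemination} ($k$-dissemination) as a subroutine. The guiding intuition is that within $t=\tildeTheta{\NQ}$ rounds the $t$-hop ball around any node $v$ has at least $k/t$ nodes, and thus collectively commands $\tildeTheta{k}$ global bandwidth slots---precisely the capacity needed to absorb or dispatch the up to $k$ messages $v$ participates in. The randomness of the source or target set supplies the concentration needed to keep the per-node load balanced.

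For case~(1), I would proceed in four phases: (i) a single $k$-dissemination publishes all source identifiers in $\tilO(\NQ)$ rounds; (ii) each target $t$ broadcasts the list of $\tilO(k/\NQ)$ identifiers in its $\NQ$-hop ball together with a shared hash assigning each source to one proxy inside this ball, which is feasible because the combined volume $\ell \cdot (k/\NQ) \leq k$ under $\ell \leq \NQ$ can be handled by another $k$-dissemination; (iii) each source $s$ sends its $\ell \leq \NQ$ messages globally, addressing the message for $t$ to the designated proxy $r_{s,t} \in \calB_{\NQ}(t)$; (iv) the proxies locally aggregate their messages into $t$ within $\NQ$ hops. Cases~(2) and~(3) follow the same template but with the ball radius replaced by $\mathcal{NQ}_\ell$ and with proxies placed on whichever side has a per-node message load that exceeds the time budget: in case~(2), $k \leq \mathcal{NQ}_\ell$ lets each arbitrary target receive its $\leq k$ incoming messages directly, so the proxies instead surround each random source; case~(3) invokes both constructions simultaneously, and the condition $k\ell \leq n\NQ$ ensures that the total global traffic $\tilO(k\ell)$ fits the aggregate bandwidth $\tilO(n \max\{\NQ, \mathcal{NQ}_\ell\})$.

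The main obstacle I anticipate is bounding the congestion at a single relay in an irregular graph: a node $v$ that sits inside many target balls may end up routing $\sum_{t \in T \cap \calB_{\NQ}(v)} k/|\calB_{\NQ}(t)|$ incoming messages, and showing this is $\tilO(\NQ)$ whp requires careful exploitation of the randomness of $T$ (or $S$, for case~(2)). My plan is to combine a pseudorandom hash for proxy selection with a Chernoff--union-bound argument that simultaneously handles the variability of $|T \cap \calB_{\NQ}(v)|$ and of $|\calB_{\NQ}(t)|$, and to smooth any residual imbalance through a deterministic post-hashing load-balancing pass inside each ball, implemented via Theorem~\ref{thm:kaggUB}. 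A secondary complication is threading the three cases and their respective conditions through one uniform proof; I intend to present case~(3) as the two-sided simultaneous version of cases~(1) and~(2) and derive the first two as specializations in which one side's ball collapses to a single node.
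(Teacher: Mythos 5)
Your architecture is the right one and matches the paper's in spirit (nearby proxy/helper sets of size roughly $k/\NQ$, randomness of one side of the communication for load balancing, \cref{thm:optimal_dissemination} as a subroutine), but the two steps you defer are exactly where the paper has to do its real work, and as sketched they do not go through.

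First, the relay-congestion bound. You draw proxies from the \emph{overlapping} balls $\calB_{\NQ}(t)$ and acknowledge that a node $v$ may carry load $\sum_{t\in T\cap\calB_{\NQ}(v)}k/|\calB_{\NQ}(t)|$; but this is not merely a concentration issue that a Chernoff--union-bound pass can fix, because the \emph{expectation} $\tfrac{\ell}{n}\sum_{t\in\calB_{\NQ}(v)}k/|\calB_{\NQ}(t)|$ is itself not obviously $\tilO(\NQ)$ on general graphs: $|\calB_{\NQ}(v)|$ may be as large as $n$ while each summand may be as large as $\NQ$, and only the \emph{average} load over all $v$ is bounded by $k\ell/n\leq\NQ$. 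The paper sidesteps this entirely by drawing helpers not from balls but from the \emph{disjoint} clusters of \cref{lem:clustering_with_weak_diam} (size between $k/\NQ$ and $2k/\NQ$, weak diameter $\tilO(\NQ)$): a node then only helps targets inside its own cluster, the expected number of which is $|C|\cdot\ell/n\leq|C|\cdot\NQ/k\in O(1)$, which immediately yields the crucial ``each node is in $\tilO(1)$ helper sets'' property of \cref{def:adaptive_helpers}.

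Second, proxy addressing and the dense regimes of case (3). Broadcasting each target's proxy list is affordable in case (1) because the total volume is $\tilO(k)$ tokens, but in case (3) the volume can reach $\ell\cdot k/\NQ\in\Theta(n)$ tokens, whose broadcast costs $\tilO(\mathcal{NQ}_{n})$, which in general exceeds $\tilO(\NQ)$. The paper instead routes via intermediate nodes determined by a shared $\Theta(\NQ\log n)$-wise independent hash $h(\ID(s),\ID(t))$ (\cref{lem:assign_helpers}) and lets the target helpers \emph{pull} messages from the intermediates, so only a $\tilO(\NQ)$-bit seed must be disseminated. Moreover, when $k>\sqrt{n\cdot\NQ}$ the source sampling probability $k/n$ exceeds the threshold $\NQ/k$ below which helper sets of size $k/\NQ$ with $\tilO(1)$ per-node membership exist at all, so ``invoking both constructions simultaneously'' is unavailable; the paper needs the separate consolidation reduction to $\tilO(1)$ instances with $k',\ell'\leq\sqrt{n\cdot\NQ}$ (\cref{lem:consolidate_sources}), plus a role-reversal argument (via logged schedules) to drop the assumption $\ell\leq k$. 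Your proposal is missing analogues of both mechanisms.
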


We point out a few observations about the above results. First, the solutions for \klrout are strictly better than broadcasting all $k \ell$ messages in the network, which takes $\tilT(\NQ[k\ell])$ rounds, and in general $\NQ \ll \NQ[k \ell]$. Compared to broadcasting, this allows us to handle many more distinct messages in $\tilO(\NQ)$ rounds. Second, the requirement $k \cdot \ell  \leq \NQ \cdot n$ in case (3) of \cref{thm:routing} is at the theoretical limit (up to a $\tilO(1)$ factor) as this is the most traffic that the global network can handle in total in $\tilO(\NQ)$ rounds and it is not too hard to see that in large diameter networks, a constant fraction of all messages need to be transmitted via the global network eventually.

Third, in case (3) of \cref{thm:routing} the restriction that each source node has at most one message for each target can be lifted, we only require that each node sends at most $k$ and receives at most $\ell$ messages. If $k$ and $\ell$ are roughly equal, then this can be seen from a reduction, as our base routing algorithm can simulate the so-called \CC model among all source and target nodes, where any pair of source or target nodes can exchange a message each round. It is known due to \cite{Lenzen2013} that it only takes a constant number of rounds in the \CC to deliver $k\ell$ messages with the eased condition. In another reduction step, we can even drop the assumption that $k,\ell$ are roughly equal (cf.\ \cref{lem:consolidate_sources}).

Let us briefly discuss our algorithmic solution. First, we compute for each source and each target node an adaptive set of helpers. This concept is a generalization of the {helper sets} in~\cite{Kuhn2020}, where we take into account the quality of the neighborhoods expressed by $\NQ$. It means that each source and each target node is assigned a set of nearby helper nodes, that they can rely on almost exclusively to increase their communication bandwidth. It also explains the condition of randomized source or target nodes, since this ensures that they are ``well spread out'' w.h.p., allowing the appointment of helpers that can be used (almost) exclusively.

On a high level, the messages are routed from sources to their respective helpers over the local network. From there, the global network is used to send messages from the helpers of the source nodes to the helpers of the target nodes. Finally, the target nodes can collect their assigned messages from their respective helpers via the local network. This process poses two main challenges. 

First, the sets of helper nodes of sources and targets do not mutually know the identifiers of each other, and thus cannot directly forward the messages via the global network. Reducing the information that needs to be exchanged in order to ``connect'' helpers of sources and targets requires quite some technical work, by routing via intermediate nodes that are determined using universal hashing. Second, computing these helper sets is not possible in case $k$ or $\ell$ is very large (e.g., $k=n, \ell = \tk$). We show instead that the general case can be reduced to $\tilO(1)$ routing instances with ``good'' parameter ranges $\ell, k \leq \sqrt{n \cdot \NQ}$.

\paragraph{Information dissemination lower bounds.} To prove the universal optimality of our upper bounds for information dissemination we match them with corresponding lower bounds in \cref{sec:lower_bound}, which hold even for algorithms that have access to the graph topology.


\begin{restatable}[Information dissemination lower bounds]{theorem}{broadcastunicastLB}
\label{thm:kdisLB}    \label{thm:broadcast_unicast_LB}
    The \kdis, \kagg, and \klrout problems with arbitrary target nodes take $\tildeBigOmega{\tk}$ rounds in expectation in the \hybrid model. This holds for randomized algorithms with constant success probability $p>0$. For the case of the \klrout problem with random target nodes, the same lower bound holds for algorithms with a probability of failure of less than \smash{$\frac{(1-p)\ell}{n}$} for any constant $0<p\leq 1$. For \kdis the lower bound holds for any distribution of messages.
\phil{The success probability can be decreased to $(1-\frac{(1-p)\ell}{n})^{\NQ}$ but the formula is not so nice and I'm not sure how to simplify or meaningfully upper bound it. It's anyway already tight since our upper bounds have higher success probability.} 
\end{restatable}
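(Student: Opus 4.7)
The plan is to reduce each of the three dissemination tasks to a common ``funnel'' lower bound around a hard node, using the node communication primitive developed in \cref{sec:node_comm}. Fix $v^* \in V$ attaining $\NQ(G) = \max_v \NQ(v)$, and for any $t < \NQ$ let $B := \calB_t(v^*)$, which by definition satisfies $|B| < k/t$. The first ingredient is an information-flow bound. In $t$ rounds, any bit reaching $v^*$ via purely local edges must have originated in $B$; hence a bit whose source lies outside $B$ can affect $v^*$ only if it passes through at least one global transmission. Moreover, a global reception at a node $u$ at round $r \leq t$ can propagate locally to $v^*$ within the remaining budget only if $u \in \calB_{t-r}(v^*) \subseteq B$. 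Summing the $\tildeBigO{1}$ per-round global capacity over the nodes in $B$ and all $t$ rounds, the total number of $O(\log n)$-bit message equivalents that can enter $B$ from outside and reach $v^*$ by round $t$ is $\tildeBigO{t \cdot |B|}$. For $t \leq \NQ/\polylog n$ this quantity is $o(k)$, so $v^*$ cannot learn $\Omega(k)$ distinct messages in that many rounds except with small probability.

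To apply this to \kdis, observe that $|B| < k/t \leq k$, so for every initial placement of the $k$ messages at least $k-|B| \geq (1-1/t)k$ source nodes lie strictly outside $B$. Hence regardless of the distribution, a $1-o(1)$ fraction of the $k$ messages originate outside $B$ and must cross the funnel to reach $v^*$, which the bound above forbids in $o(\NQ/\polylog n)$ rounds. The \kagg lower bound then follows by a direct reduction: fix the aggregation function to bitwise OR and have every non-source node contribute the identity, so that learning the $k$ coordinate-wise aggregates is equivalent to learning the $k$ input messages and the same funnel bound applies.

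For \klrout with arbitrary targets the argument is even more direct: place $v^*$ in $T$ and all $k$ sources outside $B$, so every message addressed to $v^*$ must cross the funnel. For random targets, each node independently joins $T$ with probability $\ell/n$. Conditioning on the event $v^* \in T$, which holds with probability exactly $\ell/n$, the funnel bound forces failure probability at least $1-p$ for any algorithm running in $o(\NQ/\polylog n)$ rounds; multiplying gives an unconditional failure probability of at least $(1-p)\ell/n$, matching the theorem's success-probability threshold.

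The main obstacle will be making the information-flow bound rigorous against adversarial algorithms that may adaptively encode information across the local and global channels, since bits received globally at one node in $B$ can be repackaged and re-transmitted before $v^*$ reads them. The abstraction in \cref{sec:node_comm} is designed for exactly this: it frames the argument as a counting/entropy bound over the cut around $B$, charging every externally originating bit that reaches $v^*$ to a unique global reception at some round $r$ by a node in $\calB_{t-r}(v^*)$, so the unlimited local bandwidth becomes irrelevant for the bookkeeping. A secondary subtlety is the polylog slack: the global network delivers $\log n$ messages of $\log n$ bits per node per round, introducing a $\log^2 n$ factor in the flow bound. I will absorb this using monotonicity of $t \mapsto t \cdot |\calB_t(v^*)|$, namely that the smallest $t$ with $t \cdot |\calB_t(v^*)| \geq k/\polylog n$ is within a $\polylog n$ factor of $\NQ(v^*)$, so the forbidden regime $t \leq \NQ/\polylog n$ is the one delivered by the theorem.
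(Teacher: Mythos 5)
Your overall strategy---pin down a node with a small ball, bound the information that can enter that ball via the node communication problem of \cref{sec:node_comm}, and reduce all three tasks to this---is the same as the paper's, and your treatment of \klrout (both arbitrary and random targets, including the $(1-p)\ell/n$ conditioning) and of \kagg (a direct hard instance rather than the paper's algorithmic reduction from \kdis) is sound. However, there is a genuine gap in your argument for \kdis, precisely on the point the theorem emphasizes: that the bound holds \emph{for any distribution of messages}. You fix the receiver $v^*$ in advance and claim that ``at least $k-|B|$ source nodes lie strictly outside $B$.'' This conflates nodes with messages: the problem definition allows a single node to hold anywhere from $0$ to $k$ messages, so an adversarial (for the lower-bound prover) distribution can place \emph{all} $k$ messages at $v^*$ itself, in which case nothing needs to cross your funnel and your argument yields no bound. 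The hard receiver must therefore be chosen \emph{as a function of the message placement}, and the difficulty is that an arbitrary alternative receiver need not have a small ball, so the funnel bound would not apply to it.

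The paper's \cref{lem:receive_lower_bound} resolves exactly this: starting from the node $v$ with $|\calB_r(v)|\leq k/r$ for $r=\NQ-1$, it sets $h=\lfloor r/3\rfloor-1$ and locates a second node $w$ at hop distance $2h+1$ from $v$ along a shortest path, so that $\calB_h(v)$ and $\calB_h(w)$ are \emph{disjoint} yet \emph{both contained in} $\calB_r(v)$ and hence both of size at most $k/r$. Since the two balls are disjoint, at least one of them contains at most $k/2$ of the tokens, so for at least one of $v,w$ an entropy of $\Omega(k)$ bits sits outside its $h$-ball and the node-communication bound of \cref{lem:lower_bound_node_comm} applies to that node. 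To repair your proof you need this (or an equivalent) adaptive choice of receiver; the rest of your reductions then go through, keeping in mind that the resulting lemma only guarantees a hard receiver \emph{exists}, which is exactly what \kdis (all nodes must learn everything) and \klrout with arbitrary targets require.
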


We start by discussing the results of \cref{thm:broadcast_unicast_LB}. First, the $\tildeBigOmega{\tk}$ bound holds even for randomized algorithms and for the weaker \hybrid model, showing that $\tilT(\NQ)$ is a universal bound for \kdis and \kagg. Second, in case of \klrout with randomly sampled target nodes, the lower bound holds for any algorithm that fails with probability less than $\frac{(1-p)\ell}{n}$. Compare this to our upper bound, which holds w.h.p., i.e., the failure probability is at most $1/n^c$ for \emph{any} constant $c>0$. Thus, our upper bounds for \klrout are almost tight up to $\tilO(1)$ factors, which narrows down the complexity \klrout to $\tilT(\NQ)$ for the parameter ranges and conditions given in \cref{thm:routing}.

The general idea of proving lower bounds in the hybrid model is as follows. The aim is to create an ``information gap'' in the network, which is formalized by a random variable whose outcome is known by one subset of nodes $A$ but unknown to some other subset $B$ that is at a relatively large distance in the local network. This problem has been named the ``node communication problem'' (see \cref{def:node_comm_problem}), and a reduction from Shannon's source coding theorem \cite{Shannon1948} shows a lower bound that depends on the entropy of $X$, the hop distance between $A$ and $B$ and, crucially, the size of the neighborhoods of $A$ and $B$, see \cite{Kuhn2022, Schneider2023}.

We show how the node communication problem can be adapted into our setting for universal lower bounds. In particular, we prove in \cref{lem:receive_lower_bound} that there is a node for which learning a set of $k$ ``information tokens'' that are distributed arbitrarily in the network, requires at least $\tilOm(\NQ)$ rounds.
Besides the arbitrary token distribution, the challenge for the lower bound is that we cannot customize our graph to represent some worst-case instance as it was done in prior work, but can only rely on the properties inherent to the graph parameter $\NQ$.
This \cref{lem:receive_lower_bound} forms the foundation for our reductions to the \kdis and \klrout problem. Another reduction from \kdis to \kagg, extends the lower bound of \tilOm(\NQ) to the latter.

\paragraph{Application.}
Combining the upper bound and lower bound for \kdis has interesting consequences for the simulation of the \emph{broadcast congested clique} (\bccshort)~\cite{drucker2014on}, which is a distributed model where in each round each node can broadcast a message of $O(\log n)$ bits to the entire network. Using broadcast as the only communication primitive, it was shown that many fundamental problems can be solved efficiently in \bccshort~\cite{chen2019broadcast, becker2015hierarchy, jurdzinski2018connectivity, becker2020impact, montealegre2016brief, holzer2015approximation}. 
An immediate corollary of our \kdis algorithm is a  \bccshort simulation in \hybrid, which requires broadcasting $k=n$ tokens spread uniformly across $G$, which can be done using  $\tildeBigO{\tn}$ rounds by our broadcast algorithm. This is optimal due to our  $\tildeBigOmega{\tk}$ lower bound for \kdis. 

\begin{corollary}[Implied by \cref{thm:optimal_dissemination,thm:broadcast_unicast_LB}]
    There is a universal lower bound of $\tildeBigOmega{\tn}$ rounds for simulating one round of \bccshort in \hybrid that holds for randomized algorithms with constant success probability, and there exists a deterministic algorithm which does so in $\tildeBigO{\tn}$ rounds in  \hybridzero.
\end{corollary}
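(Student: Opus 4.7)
The plan is to establish a tight equivalence between one round of $\bccshort$ and a specific \kdis instance with $k = n$, then invoke \cref{thm:optimal_dissemination,thm:broadcast_unicast_LB} to obtain matching upper and lower bounds.

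First I would make the reduction explicit: a single round of $\bccshort$ consists of every node $v \in V$ broadcasting an $O(\log n)$-bit message $m_v$ to every other node. This is precisely an instance of \kdis (\cref{def:kdis}) with $k = n$ and the specific initial distribution where node $v$ holds the unique message $m_v$. Any algorithm that solves this \kdis instance produces, at every node, knowledge of all $n$ broadcast messages, which is exactly the postcondition of one $\bccshort$ round. Conversely, any $r$-round \hybrid algorithm that simulates one round of $\bccshort$ solves this \kdis instance in $r$ rounds. Hence simulating one $\bccshort$ round and solving this $n$-message dissemination instance are equivalent up to additive $\tildeBigO{1}$ overhead.

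For the upper bound I would apply \cref{thm:optimal_dissemination} directly with $k = n$, which solves \kdis deterministically in $\tildeBigO{\tn}$ rounds in \hybridzero; this is a fortiori an upper bound in \hybrid and yields the stated $\tildeBigO{\tn}$-round deterministic algorithm. For the lower bound I would apply \cref{thm:broadcast_unicast_LB} with $k = n$. The crucial point here is that \cref{thm:broadcast_unicast_LB} explicitly asserts that the $\tildeBigOmega{\tk}$ bound on \kdis holds for \emph{any} initial distribution of the $k$ messages; in particular it applies to the one-message-per-node distribution induced by the $\bccshort$ simulation, and it applies to randomized algorithms with any constant success probability $p > 0$.

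The proof is essentially a reduction, so there is no real obstacle beyond pinning down the equivalence. The one subtlety I would be careful about is that the \kdis lower bound must hold for this particular (uniform, one-token-per-node) distribution, since a weaker statement covering only worst-case adversarial distributions would not suffice; fortunately \cref{thm:broadcast_unicast_LB} is phrased with exactly the right generality, so the corollary follows immediately by instantiating $k=n$ in both theorems.
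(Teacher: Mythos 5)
Your proposal is correct and matches the paper's argument exactly: the paper also treats one round of \bccshort as the \kdis instance with $k=n$ and one token per node, then cites \cref{thm:optimal_dissemination} for the upper bound and the "any distribution of messages" clause of \cref{thm:broadcast_unicast_LB} for the lower bound. Your explicit attention to why the lower bound applies to the uniform one-token-per-node distribution is precisely the right subtlety, and nothing further is needed.
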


\subsection{Universally Optimal Shortest Paths}

We show a variety of applications for our information dissemination tools for graph problems, in particular, we obtain solutions whose running time can be expressed in the graph parameter $\NQ$. Furthermore, we give matching lower bounds for these graph problems, which proves their universal optimality. 
\phil{It was honest and not wrong to say that the upper bounds of this subsection are straightforwad, but I think there is a danger that this is generalized to all results in this subsection, including lower bounds and maybe some reviewers will carry that even over to the SSSP solutions in the subsequent subsection. I rewrote the inital sentence a little.} 

\paragraph{{\boldmath $(k,\ell)$}-SP.} First of all, using our multi-message unicast algorithm, we obtain the following $(k,\ell)$-SP algorithm. A key technical ingredient needed to establish the following theorem is an efficient algorithm for $k$-SSP, which we will discuss later.

\begin{restatable}[$(k,\ell)$-SP]{theorem}{uniKLSP}
	\label{thm:klsp_UB}
    Let $\eps > 0$ be an arbitrary constant. The $(k,\ell)$-SP problem can approximated with stretch $(1 \p \eps)$ w.h.p.\ in the \hybrid model in $\tilO\left(\NQ\right)$ rounds if one of the conditions is met.
    \begin{itemize}
        \item  The set of source nodes is an arbitrary set of $k$ nodes, the target nodes are sampled with probability \smash{$\frac{\ell}{n}$}, and $\ell \leq \NQ$.
        \item  The source and target nodes are sampled with probability \smash{$\frac{\ell}{n}$} and \smash{$\frac{k}{n}$}, respectively, with $\ell \leq \NQ^2$ and $\ell \cdot k \leq \NQ \cdot n$.
    \end{itemize} 
\end{restatable}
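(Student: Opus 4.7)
The plan is to reduce $(k,\ell)$-SP to two ingredients: the efficient $(1\p\eps)$-approximate $k$-SSP algorithm to be developed later in the paper (running in $\tilO(\sqrt{k})$ rounds), and the multi-message unicast algorithm from \cref{thm:routing}. The key observation is that $G$ is undirected, so $d_G(s,t) = d_G(t,s)$, which lets us swap the roles of the source and target sets in the intermediate distance computation.

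First, invoke the $\ell$-SSP algorithm using $T$ as its source set. This delivers to every node $v \in V$—and in particular to every actual source $s \in S$—a $(1\p\eps)$-approximate distance $\widetilde d(s,t)$ together with $\ID(t)$ for each $t \in T$. The cost is $\tilO(\sqrt{\ell})$. Under condition~(1) we have $\ell \leq \NQ$ and hence $\sqrt{\ell} \leq \NQ$; under condition~(2) we have $\ell \leq \NQ^2$ and hence $\sqrt{\ell} \leq \NQ$. Either way, this phase fits into the $\tilO(\NQ)$ budget.

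Second, use \cref{thm:routing} to ship, for each $(s,t) \in S \times T$, the message $(\ID(s), \widetilde d(s,t))$ from $s$ to $t$, delivering to each target the required $k$ labelled distance estimates. Under condition~(1) the sources are arbitrary and the targets random with $\ell \leq \NQ$, matching case~(1) of \cref{thm:routing} at cost $\tilO(\NQ)$. Under condition~(2) both sets are random with $k\ell \leq \NQ \cdot n$, matching case~(3) with complexity $\tilO(\max\{\NQ, \NQ[\ell]\})$. Since $\NQ \leq \sqrt{k}$ together with $\ell \leq \NQ^2$ implies $\ell \leq k$, and $\mathcal{NQ}_x$ is monotone in its subscript, we get $\NQ[\ell] \leq \NQ$, so the routing phase also fits into $\tilO(\NQ)$. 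The stretch is preserved at $(1\p\eps)$ because the unicast transmits the distance estimates verbatim.

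The main obstacle I anticipate is the bookkeeping that justifies $\NQ[\ell] \leq \NQ$ in case~(2) and, more broadly, the verification that the intermediate computation fits the distributional pattern (arbitrary versus random sources and targets) required by the specific case of \cref{thm:routing} being invoked. In particular, running the $\ell$-SSP with a randomly sampled source set $T$ should be handled by the generic $k$-SSP guarantee, which is agnostic to how the source set is chosen, and one should verify that after this step each actual source $s$ already holds labelled entries ready for \cref{thm:routing}, so that no additional broadcast or coordination is needed before the unicast call.
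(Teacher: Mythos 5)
Your proposal follows essentially the same route as the paper: solve the shortest-paths problem ``in reverse'' from the (random) target set $T$, then use \cref{thm:routing} cases~(1) and~(3) to ship the labelled estimates from sources to targets, with the same bookkeeping $\ell \leq \NQ^2 \leq k$ (via \cref{lem:boundingTk}) giving $\NQ[\ell] \leq \NQ$. The only substantive caveat is your closing claim that the $k$-SSP guarantee is ``agnostic to how the source set is chosen'': it is not --- the $(1\p\eps)$-stretch, $\tilO(\sqrt{\ell})$-round bound of \cref{thm:k-ssp} holds only for \emph{randomly sampled} sources (arbitrary sources in \hybrid only get stretch $3\p\eps$), so your argument goes through precisely because $T$ is sampled with probability $\ell/n$ in both conditions of the theorem, not because the subroutine is distribution-free. (The paper handles condition~(1) slightly differently, by running $\tilO(\ell)$ sequential instances of the $\tilO(1)$-round SSSP algorithm of \cref{thm:almost_shortest_sssp}; your use of \cref{thm:k-ssp} there is equally valid and in fact cheaper.)
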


We start by discussing these results. Note that the best existentially optimal algorithms for $k$-SSP all require  $\tilT\big(\!\sqrt{k}\big)$ rounds (see \cite{Kuhn2020} with a remaining gap for $k \leq n^{2/3}$ that we also close in this work). This is inherent, as there exists a worst-case graph where a single target node learning its distance to $k$ source nodes $(k,1)$-SP has a lower bound of $\tilOm\big(\!\sqrt{k}\big)$. Note that our upper bound of $\tilO(\NQ)$ is, in general, faster.

We emphasize that the parameter ranges and conditions on randomization of source and target nodes required in \cref{thm:klsp_UB}, are a direct consequence of the fact that we have to communicate $k\cdot \ell$ distance labels from sources to targets. Therefore, the same restrictions on the overall communication capacity of the global network and the requirement that sources and targets are ``well spread out'' in the local network that we discussed for \cref{thm:routing} apply here as well.

The first result of \cref{thm:klsp_UB} is a consequence of our existentially optimal shortest path algorithm for a $(1\p\eps)$ approximation of SSSP in \tilO(1) rounds (\cref{thm:almost_shortest_sssp}), which we can use to solve SSSP sequentially for each target node as a ``source'' in $\tilO(\NQ)$ rounds. Afterwards, each source $s$ knows the approximate distance $\tild(s,t)$ to each target. To solve the $(k,\ell)$-SP problem, we need to ``reverse'' this situation. Therefore, each source node $s$ creates a message to each target node $t$ containing $\tild(s,t)$ which we deliver with an instance of \klrout using \cref{thm:routing} case~(1).

The second result works similarly, although we have to be more efficient in solving the $\ell$-SSP problem for the set of target nodes as sources. For this we employ our existentially optimal $k$-SSP algorithm with stretch $(1\p \eps)$ in $\tilO\big(\!\sqrt \ell\big) \subseteq \tilO(\NQ)$ rounds (\cref{thm:k-ssp}, also discussed later). Again, this gives a solution ``in reverse'' where the source nodes know the distances that the target nodes need to learn, which corresponds to \klrout that can be solved using \cref{thm:routing} case~(3).\phil{I just explained the whole thing, given that it is only application of other theorems. I think it's good in the sense that people see how routing and existentially optimal SP can be used, but it can also be shortened.}

\oren{I think it looks good, and agree that if we have to shorten the paper, we can shorten maybe the paragraph about randomization? I think we emphasized this point before.}

\paragraph{APSP.} 
In previous works, exact weighted APSP was settled with existential lower and upper bounds of $\tildeBigO{\sqrt{n}}$ rounds, due to \cite{augustine2020shortest, Kuhn2020}. Their algorithms are randomized, while the best known deterministic algorithm of \cite{anagnostides2021deterministic} runs in $\tildeTheta{\sqrt{n}}$ rounds to produce an $O(\log{n}/\log{\log{n}})$-approximation. For unweighted APSP, \cite{anagnostides2021deterministic} showed a $(1+\eps)$-approximation, deterministically, in $\tildeBigO{\sqrt{n}}$ rounds as well.

We show several algorithms for APSP that are based on our universally optimal broadcasting protocol and we stress that these work in \hybridzero, where identifiers can be arbitrary $O(\log n)$ strings and most results are deterministic. Most of our algorithms below run in $\tildeBigO{\tn}$ rounds, and thus are never slower than the $\tildeTheta{\!\sqrt{n}}$ round algorithms of \cite{augustine2020shortest, Kuhn2020}.

The most immediate consequence of the broadcast protocol is that in sparse graphs one can efficiently learn the entire graph and thus solve any graph problem locally. That is, in graphs with $\tildeBigO{n}$ edges, we apply \cref{thm:optimal_dissemination} with $k=n$ at most $\tildeBigO{1}$ times, resulting in $\tildeBigO{\tn}$ rounds, note that the corresponding existential bound in \cite{Anagnostides2021} takes $\tilO\big(\!\sqrt{n}\big)$ rounds.

\phil{I went over this first paragraph and tried to concentrate all the general remarks about the APSP results and related work in to this paragraph and also reduce redundancy. The argument that the given algorithms are universally optimal will be given later when the matching lower bounds are presented.}

\begin{corollary}
    \label{cor:APSP_sparse_graph}
    Given a sparse, weighted graph $G=(V,E,\omega)$ with $|E|=\tildeBigO{n}$, there is a deterministic algorithm that solves any graph problem in $\tildeBigO{\tn}$ rounds in \hybridzero, including exact weighted APSP.
\end{corollary}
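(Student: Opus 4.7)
The plan is to have every node learn the entire topology of $G$ using the universally optimal broadcast from \cref{thm:optimal_dissemination}, after which any graph problem on $G$ can be solved by purely local centralized computation, with no further communication required.

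To set this up, each node $v$ prepares, for every incident edge $e=\{v,u\}$ such that $\ID(v) < \ID(u)$, the message $(\ID(v), \ID(u), \omega(e))$ of size $O(\log n)$ bits, using the standing assumption that edge weights are polynomial in $n$. The resulting multiset of messages has size exactly $|E| = \tildeBigO{n}$ and describes $G$ completely. I then invoke \cref{thm:optimal_dissemination} on these $k := |E|$ messages, which solves \kdis deterministically in \hybridzero in $\tildeBigO{\NQ_k}$ rounds.

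It only remains to justify that $\tildeBigO{\NQ_{\tildeBigO{n}}} = \tildeBigO{\tn}$, for which I use a simple scaling property of $\NQ$. If $k' = c \cdot n$ with $c = \polylog n$, then at $t := c \cdot \tn$ we have
$$|\calB_t(v)| \geq |\calB_{\tn}(v)| \geq n/\tn = k'/t,$$
so $\tk[k'](v) \leq c \cdot \tn(v)$, and also $\tk[k'] \leq D \leq c \cdot \tn$ holds trivially when the first bound exceeds $D$. Hence $\tk[\tildeBigO{n}] = \tildeBigO{\tn}$. After the broadcast, every node holds a complete description of $G$ and can locally compute the solution to any graph problem, in particular exact weighted APSP, e.g., by running Dijkstra from every source. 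The entire procedure is deterministic, runs in \hybridzero, and takes $\tildeBigO{\tn}$ rounds.

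There is no real technical obstacle here; the content of the corollary is simply the observation that a sparse graph's complete topology is exactly the kind of $\tildeBigO{n}$-token workload for which \cref{thm:optimal_dissemination} is already tailored, and that once the topology is known locally, all remaining graph-theoretic computation is free.
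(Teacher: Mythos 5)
Your proposal is correct and is essentially the paper's argument: the paper likewise observes that a graph with $\tildeBigO{n}$ edges can be made globally known via \cref{thm:optimal_dissemination} (it phrases this as running the broadcast with $k=n$ for $\tildeBigO{1}$ batches, whereas you do one broadcast of $k=\tildeBigO{n}$ tokens and scale $\NQ$, which \cref{lem:growth_of_tk} also justifies), after which every node solves any graph problem locally. The only cosmetic slip is writing $\tn(v)$ where your displayed inequality yields the global bound $\tk[k'](v)\leq c\cdot\tn$; the argument is unaffected.
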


\oren{Should we put this as the first theorem/corollary of this APSP subsection? It does look a bit simplified and immediate, since it's the first thing that jumps to the eyes of the reader. I would suggest leaving this corollary as a sentence }
\phil{Changed the structure accordingly. I think it is good to have it as a corollary. I was wondering if we should generalize it for any number of edges with a running time of $\tildeBigO{\tn \cdot |E|/n}$? It is also deterministic, right? It's important to mention as \cite{Anagnostides2021} has a similar existeintially optimal argument that is deterministic.}

To deal in graphs with any number of edges and still solve APSP in $\tilO(\NQ)$ rounds, we combine the universally optimal broadcast from \cref{thm:optimal_dissemination} with additional techniques of computing spanners, skeleton graphs and our existentially optimal SSSP solution.

We show a $(1+\eps)$ approximation of unweighted APSP which runs in $\tildeBigO{\tn/\eps^2}$ rounds deterministically. The best known algorithm for exactly computing or approximating APSP for general graphs to practically any factor takes $\tilOm\big(\sqrt{n}\big)$ rounds~\cite{Kuhn2020}. As $\tn \leq \sqrt{n}$, we are always at least as fast, and are faster when $\tn \in o(\sqrt{n})$.

\begin{restatable}[Unweighted APSP]{theorem}{unweightedAPSPApprox}
\label{thm:unweighted_apsp_approx}
    For any $\eps \in (0, 1)$, there is a  deterministic algorithm that computes a $(1+\eps)$-approximation of APSP in unweighted graphs in \hybridzero in $\widetilde{O}(\tn/\eps^2)$ rounds.
\end{restatable}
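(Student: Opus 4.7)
The plan is to combine the universal broadcast primitive of \cref{thm:optimal_dissemination}, the deterministic approximate-SSSP algorithm of \cref{thm:almost_shortest_sssp}, and a landmark (hitting-set) decomposition tuned to $\tn$. At a high level, every source node $u$ will estimate $d(u,v)$ to each other node $v$ either as an exact short-range distance (read from a local BFS of depth $\Theta(\tn/\eps)$) or as a landmark-mediated value $d(u,h_u) + \tilde d(h_u,v)$, where $h_u$ is a landmark in the hop-neighborhood of $u$ and $\tilde d$ is a $(1+\eps)$-approximate distance from $h_u$.

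First I would broadcast all node identifiers via \cref{thm:optimal_dissemination} in $\tilO(\tn)$ rounds, which effectively reduces \hybridzero to \hybrid for the remainder of the algorithm. Next every node performs a truncated BFS of depth $R = \Theta(\tn/\eps)$ over the local network in $\tilO(\tn/\eps)$ rounds, obtaining exact hop distances to every node within $R$ hops. Using the resulting local neighborhoods together with the aggregation tool of \cref{thm:kaggUB}, I would then deterministically build a hitting set $H$ of size $\tilO(n/R)$ such that $\calB_R(v) \cap H \neq \emptyset$ for every $v$, via a standard greedy/covering argument derandomized through aggregation. Each node $v$ reads off its nearest landmark $h_v \in H$ from its local BFS and its exact distance $d(v, h_v) \le R$.

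The workhorse step is to have every node in $V$ learn a $(1+\eps)$-approximate distance $\tilde d(h,\cdot)$ from every landmark $h \in H$. Invoking the existentially optimal $k$-SSP algorithm of \cref{thm:k-ssp} on the $|H|=\tilO(n/\tn)$ landmarks yields $\tilT(\sqrt{|H|})$ rounds, which fits within $\tilO(\tn/\eps^2)$ in the regime where $\tn$ is not too small; in the complementary regime, smallness of $\tn$ implies that every node has an unusually large local ball and therefore disproportionately much aggregate global bandwidth available. I would exploit this to pipeline multiple $(1+\eps)$-SSSP executions (\cref{thm:almost_shortest_sssp}) from batches of landmarks, interleaved with the broadcast primitive to propagate the resulting distance labels. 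After this, one more $n$-token broadcast costing $\tilO(\tn)$ rounds publishes each node's chosen landmark ID to every other node. Every target $v$ then outputs $\min\bigl(d(u,v)\big|_{\text{BFS}},\, d(u,h_u) + \tilde d(h_u,v)\bigr)$ for each source $u$; for pairs at hop distance $\le R$ the local BFS already gives the exact answer, and for longer pairs the triangle inequality together with $d(u,h_u) \le R \le \eps \cdot d(u,v)$ yields the claimed $(1+\eps)$-stretch.

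The main obstacle will be the parallel landmark-SSSP step: ensuring that all $\tilO(n/\tn)$ landmarks can push their approximate distance labels deterministically within the $\tilO(\tn/\eps^2)$ budget uniformly across all values of $\tn$. This requires a careful load-balancing that combines \cref{thm:optimal_dissemination,thm:kaggUB,thm:almost_shortest_sssp} with the existentially optimal $k$-SSP algorithm, exploiting the defining property of $\tn$ — that a small $\tn$ means large local balls, and hence more aggregate global bandwidth — to bound the number of pipeline rounds needed to cover all of $H$.
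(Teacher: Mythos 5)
Your overall architecture (broadcast IDs, local BFS to radius $\Theta(\tn/\eps)$, landmarks, approximate SSSP from landmarks, broadcast each node's landmark, combine) matches the paper's, but there is a genuine gap at exactly the point you flag as "the main obstacle," and the paper resolves it with an idea your proposal is missing. You build a hitting set of size $\tilO(n/R)$ using only the generic fact that an $R$-ball has $\ge R$ nodes. The right move is to use the \emph{defining property} of $\tn$ (cf.\ \cref{lem:neigh_qual_min_neigh_size}): every ball of radius $\tn$ has at least $n/\tn$ nodes, so the deterministic clustering of \cref{lem:clustering_with_weak_diam} with $k=n$ yields disjoint clusters of size $\ge n/\tn$ and weak diameter $\tilO(\tn)$, hence at most $\tn$ cluster leaders. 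With only $\tn$ landmarks one simply runs the deterministic $(1+\eps)$-SSSP of \cref{thm:almost_shortest_sssp} \emph{sequentially} from each leader in $\tilO(\tn/\eps^2)$ total rounds — no $k$-SSP, no pipelining, no case distinction on the size of $\tn$. Your fallback for small $\tn$ ("pipeline multiple SSSP executions, exploiting extra bandwidth") is not substantiated and is precisely the hard part you would need to prove; as written the proposal does not establish the bound for all graphs.

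Two further problems. First, invoking \cref{thm:k-ssp} breaks the theorem's guarantees: that algorithm is randomized (w.h.p.) and, for \emph{arbitrary} sources such as your landmark set, only gives stretch $3+\eps$; the claimed result is deterministic with stretch $1+\eps$. Second, there is a parameter mismatch in your stretch analysis: you guarantee $d(u,h_u)\le R$ with $R=\Theta(\tn/\eps)$ and then assert $d(u,h_u)\le R\le \eps\, d(u,v)$ for pairs outside the BFS radius $R$, which fails for pairs with $R < d(u,v) < R/\eps$. The landmark distance must be a factor $1/\eps$ smaller than the BFS radius (in the paper: cluster weak diameter $4\tn\lceil\log n\rceil$ versus BFS radius $x = 4\tn\lceil\log n\rceil/\eps$). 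This is fixable, but the landmark-count issue is not a detail — it is the core of the proof.
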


A key technical ingredient of the proof of \cref{thm:unweighted_apsp_approx} is our $\widetilde{O}(1/\eps^2)$-round deterministic $(1+\eps)$-approximate SSSP algorithm, which we will discuss later. Given the $(1+\eps)$-approximate SSSP algorithm, the idea behind the unweighted APSP algorithm is as follows. We first compute a $\tildeBigO{\tn}$-weak diameter clustering consisting of at most $\tn$ clusters, c.f., \cref{lem:clustering_with_weak_diam}. Then, we execute the $(1+\eps)$-approximate SSSP algorithm from cluster centers. Each node then determines the distance to its closest cluster center locally, and all these distances are broadcast. Finally, each node can approximate its distance to all others well enough using the $(1+\eps)$ approximations to cluster centers and the local distances that have been broadcast. 

\oren{I guess we can remove most of the above paragraph on the unweighted algorithm, as it's not to complex and based on existing subroutines. (Both to shorten and to not appear too simple)}

\phil{I have no strong opinion here, in fact I can see your argument but also that this small explanation here is actually nice (made some slight changes to it).}

We now show several results for approximating APSP in weighted graphs.
We begin with the following algorithm that computes an $(\eps \cdot \log n)$-approximation, and is based on the result of \cite{rozhovn2020polylogarithmic} for constructing a graph spanner, which is a sparse subgraph that preserves approximations of distances, and then broadcasting that entire spanner.

\begin{restatable}[Deterministic weighted APSP]{theorem}{firstWeightedAPSPApprox}
\label{thm:firstWeightedAPSPApprox}
    For any $\eps$ such that $0 < \eps \in O(1)$, there is a deterministic algorithm that given a weighted graph $G=(V,E,\omega)$, computes a $(1 + \eps \cdot \log{n})$-approximation for APSP in $\widetilde{O}(2^{1/\eps}\tn)$ rounds in the \hybridzero model.
\end{restatable}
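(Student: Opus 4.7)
The plan is to combine a deterministic spanner construction with the universally optimal broadcast of \cref{thm:optimal_dissemination}. Concretely, I would proceed in three steps: (i) deterministically build a sparse multiplicative spanner $H\subseteq G$ of stretch $(1+\eps\log n)$ with $\tildeBigO{n\cdot 2^{1/\eps}}$ edges, (ii) broadcast the entire edge set of $H$ (together with the weights) to every node in the network, and (iii) have each node locally run Dijkstra's algorithm on the received copy of $H$ to output a distance estimate to every other node.

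For step (i), I would invoke a Baswana--Sen-style $(2t-1)$-multiplicative spanner with $O(n^{1+1/t})$ edges, derandomized via the deterministic network decomposition framework of \cite{rozhovn2020polylogarithmic}. Choosing the stretch parameter $t=\Theta(\eps\log n)$ yields $(2t-1)=O(\eps\log n)$ stretch and at most $O\bigl(n^{1+1/t}\bigr)=\tildeBigO{n\cdot 2^{1/\eps}}$ edges. This construction can be carried out in $\polylog n$ rounds of \LOCAL, which is subsumed by \hybridzero and therefore comes for free in our round budget. Since the spanner is multiplicative, for every pair $u,v\in V$ we have $d_G(u,v)\leq d_H(u,v)\leq (1+\eps\log n)\cdot d_G(u,v)$, which is exactly the approximation promised by the theorem.

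For step (ii), let $k':=\size{E(H)}\cdot O(\log n)\in\tildeBigO{n\cdot 2^{1/\eps}}$ be the total number of $O(\log n)$-bit tokens needed to encode $H$ (endpoints and weight per edge). We apply \cref{thm:optimal_dissemination} to this broadcast task. The key observation is that $\tk[k'](G)\in \tildeBigO{2^{1/\eps}\cdot\tn(G)}$: one can partition the $k'$ tokens into $O(2^{1/\eps})$ batches of $\tildeBigO{n}$ tokens each and broadcast the batches sequentially, each batch finishing in $\tildeBigO{\tn}$ rounds by \cref{thm:optimal_dissemination}. Alternatively, one can verify directly from the definition that multiplying $k$ by a factor $c\geq 1$ increases $\NQ[k]$ by at most a factor of $c$. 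Either way, step (ii) finishes deterministically in $\tildeBigO{2^{1/\eps}\tn}$ rounds in \hybridzero.

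Step (iii) is purely local and requires no rounds: every node now possesses the complete spanner $H$ and computes $d_H(u,v)$ for all pairs by a local shortest-paths computation. The main obstacle I anticipate is justifying the deterministic spanner construction cleanly in \hybridzero and pinning down the inequality that links $\NQ[k']$ to $\NQ$ when $k'$ exceeds $n$, since the parameter $\NQ$ was defined pointwise over balls and one has to argue that enlarging the token count only linearly enlarges the smallest radius $t$ at which $\size{\calB_t(v)}\geq k'/t$ holds; the batch-broadcast argument above sidesteps this issue by paying one $\tilO(\tn)$ factor per batch, which is what the theorem's running time $\tildeBigO{2^{1/\eps}\tn}$ reflects.
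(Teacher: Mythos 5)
Your proposal follows exactly the paper's route: deterministically build a $(2t-1)$-spanner via \cite{rozhovn2020polylogarithmic}, broadcast all its edges with \cref{thm:optimal_dissemination}, and let each node compute spanner distances locally. The one place where your argument does not reach the stated bound is the broadcast phase. To guarantee stretch $2t-1 \leq \eps\log n + 1$ you must take $t \approx \eps\log n/2$, which gives $n^{1/t} = 4^{1/\eps}$ and hence $m^\ast \in \tildeBigO{4^{1/\eps}\, n}$ edges, not $\tildeBigO{2^{1/\eps}\, n}$ as you write. Both of your proposed ways of bounding the broadcast time --- splitting into $m^\ast/n$ sequential batches, or the (correct but weak) observation that $\NQ[ck] \leq c\cdot\NQ[k]$ --- scale \emph{linearly} in the blow-up factor and therefore yield $\tildeBigO{4^{1/\eps}\,\tn}$ rounds, i.e.\ a $(1+\eps\log n)$-approximation only in time $2^{2/\eps}\tn$ up to polylogs. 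The paper closes this gap with \cref{lem:growth_of_tk}, which shows the \emph{sublinear} growth $\tk[\alpha k] \leq 6\sqrt{\alpha}\cdot\tk$; applied with $\alpha \in \tildeBigO{4^{1/\eps}}$ this gives $\tk[m^\ast] \in \tildeBigO{2^{1/\eps}\tn}$ and hence the claimed round complexity in a single invocation of \cref{thm:optimal_dissemination}. So the fix is simply to replace your batching/linear-growth step by that lemma; otherwise you prove only a version of the theorem with $4^{1/\eps}$ in place of $2^{1/\eps}$ (which would still suffice for \cref{cor:APSP_det_weigh}, but is not the statement as given).
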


We use \cref{thm:firstWeightedAPSPApprox} to show a result which is comparable to the best known deterministic approximation, by \cite{anagnostides2021deterministic}, deterministically achieving the same approximation ratio of $O(\log{n}/\log{\log{n}})$, but in $\tildeBigO{\tn}$ instead of $\tildeBigO{\sqrt{n}}$ rounds. By running \cref{thm:firstWeightedAPSPApprox} with $\eps=1/\log{\log{n}}$, we obtain the following corollary.

\begin{corollary}
    \label{cor:APSP_det_weigh}
    An $O(\log{n}/\log{\log{n}})$-approximation for weighted APSP can be obtained in \hybridzero in $\widetilde{O}\left(\tn\right)$ rounds deterministically.
\end{corollary}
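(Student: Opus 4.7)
The plan is a direct parameter substitution into \cref{thm:firstWeightedAPSPApprox}, which is essentially what the sentence preceding the corollary already suggests. Specifically, I would set $\eps := 1/\log\log n$ and invoke the theorem verbatim. First, I would check that this choice is admissible: the theorem requires $0 < \eps \in O(1)$, and indeed $1/\log\log n$ is a positive quantity that tends to $0$, so the hypothesis is satisfied for all sufficiently large $n$ (say $n \geq 16$, so that $\log\log n \geq 1$).

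Next, I would verify the approximation guarantee. \cref{thm:firstWeightedAPSPApprox} produces a $(1+\eps\cdot\log n)$-approximation, which under our substitution becomes
\[
1 + \frac{\log n}{\log\log n} \; \in \; O\!\left(\frac{\log n}{\log\log n}\right),
\]
matching the stretch claimed in the corollary. The additive constant $1$ is absorbed into the asymptotic notation since $\log n / \log\log n \to \infty$.

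For the round complexity, I would plug in the same $\eps$ into the stated $\widetilde{O}(2^{1/\eps}\tn)$ bound. Here $2^{1/\eps} = 2^{\log\log n} = \log n$, so the running time becomes $\widetilde{O}(\log n \cdot \tn) = \widetilde{O}(\tn)$, since $\widetilde{O}(\cdot)$ absorbs polylogarithmic factors in $n$ by definition. The algorithm inherits its deterministic nature and its execution in the \hybridzero model directly from \cref{thm:firstWeightedAPSPApprox}.

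There is no real technical obstacle in this step: all of the algorithmic work (the spanner construction of~\cite{rozhovn2020polylogarithmic}, the broadcast via \cref{thm:optimal_dissemination}, and the stretch analysis with parameter $\eps$) is encapsulated in \cref{thm:firstWeightedAPSPApprox}, and the corollary amounts to a choice of parameter. The only mild sanity check is that the implicit constants hidden in $\widetilde{O}(\cdot)$ in the original bound do not depend on $\eps$ in a way that would inflate $2^{1/\eps}$ beyond $\log n$; this should follow from the statement of \cref{thm:firstWeightedAPSPApprox} as written, where $\eps$ appears only in the explicit $2^{1/\eps}$ factor.
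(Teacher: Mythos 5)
Your proposal is correct and is exactly the paper's derivation: the corollary is obtained by instantiating \cref{thm:firstWeightedAPSPApprox} with $\eps = 1/\log\log n$, so that $2^{1/\eps} = \log n$ is absorbed by the $\widetilde{O}(\cdot)$ and the stretch becomes $1 + \log n/\log\log n \in O(\log n/\log\log n)$. No gaps.
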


Finally, we show the following result for approximating APSP. This runs slightly slower than $\tildeBigO{\tn}$ rounds, yet, shows a much better approximation ratio. For instance, for a $3$-approximation of weighted APSP, it achieves a round complexity of \smash{$\tildeBigO{n^{1/4}\tn^{1/2}} \subseteq \tildeBigO{\sqrt{n}}$}, which is, in general, much faster since $1 \leq \tn \leq \sqrt{n}$. This result is based on the skeleton graphs technique, first observed by \cite{Ullman1991}.

\begin{restatable}[Randomized weighted APSP]{theorem}{secondWeightedAPSPApprox}
\label{theorem:secondWeightedAPSPApprox}
    For any integer $\alpha \geq 1$, there is an algorithm that computes a $(4\alpha - 1)$-approximation for APSP in weighted graphs in $\tildeBigO{n^{1/(3\alpha + 1)} \left(\tn\right)^{2/(3 + 1/\alpha)} + \tn}$ rounds in the \hybridzero model \whp.
\end{restatable}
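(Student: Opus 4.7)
My plan is to prove this theorem via a hierarchical skeleton graph construction in the spirit of Ullman--Yannakakis and Thorup--Zwick distance oracles, combined with our universally optimal broadcast primitive from \cref{thm:optimal_dissemination}. The role of the global network is to replace the usual ``gather at a central server'' step of a classical distance oracle with a network-wide broadcast of the entire skeleton graph.

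I would first handle the base case $\alpha = 1$ (the $3$-approximation). Sample a skeleton set $S \subseteq V$ independently with probability $p$; by standard concentration $|S| = \tilT(pn)$ w.h.p. Run $h$ rounds of weighted Bellman--Ford from all skeletons simultaneously over the local network, so that every node learns its $h$-hop weighted distance to every skeleton it can reach in $h$ hops in $\tilO(h)$ rounds. Choosing $h$ so that $p \cdot |\calB_h(v)| \geq c \log n$ for all $v$ guarantees by a Chernoff bound that every node has a skeleton within $h$ hops w.h.p.; this is the step that forces $h$ to depend on the neighborhood-quality of $G$. Each skeleton now knows an $h$-hop distance to every other skeleton, so the weighted skeleton graph on $|S|$ nodes and at most $|S|^2$ edges is stored in a distributed fashion. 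Broadcasting all $|S|^2$ edges via \cref{thm:optimal_dissemination} takes $\tilO(\NQ[|S|^2])$ rounds, after which every node computes APSP on the skeleton graph locally. In parallel, each node broadcasts its nearest-skeleton identifier together with its weighted distance to it, costing an additional $\tilO(\tn)$ rounds via $n$-dissemination. Any node $u$ then estimates $d(u,v)$ by minimizing $d(u,s) + d(s,s') + d(s',v)$ over skeleton pairs $(s,s')$, giving a $3$-approximation by a standard triangle-inequality argument, provided $s_v$ is close to $v$ relative to $d(u,v)$.

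For general $\alpha$, I would extend this to a nested hierarchy $V = S_0 \supseteq S_1 \supseteq \cdots \supseteq S_\alpha$ of skeleton sets with geometrically decreasing sampling probabilities $p_1 > p_2 > \cdots > p_\alpha$ and correspondingly increasing hop bounds $h_1 < h_2 < \cdots < h_\alpha$. At each level $i$ we repeat the construction above for $S_i$, building and broadcasting the level-$i$ skeleton graph. The query rule combines distances level-by-level, so that queries involving nodes whose nearest level-$i$ skeleton is too far ``fall through'' to level $i+1$. A standard inductive argument on the hierarchy shows that each additional level multiplies the stretch by roughly a constant factor, yielding the $(4\alpha-1)$ bound in total.

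The main obstacle will be the simultaneous tuning of the parameters $p_i$ and $h_i$ across all levels so that three things hold at once: the ball-coverage condition $p_i \cdot |\calB_{h_i}(v)| = \tilOm(1)$, the balance between the Bellman--Ford cost $\tilO(h_i)$ and the broadcast cost $\tilO(\NQ[|S_i|^2])$ at each level, and a geometric decay of the dominant terms across levels so that their sum is only a polylogarithmic factor larger than the top level. Using the worst-case bound $\NQ[k] \leq \sqrt{k}$ together with the monotonicity $\NQ[ck] \leq c \cdot \NQ[k]$ to express $\NQ[|S_i|^2]$ in terms of $\tn$ and $|S_i|$, I expect the optimal balance to set $h_i$ and $|S_i|$ so that the per-level cost is $\tilO(n^{1/(3\alpha+1)} \tn^{2\alpha/(3\alpha+1)})$, with the additive $\tn$ coming from the final per-node nearest-skeleton broadcast. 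The technically delicate part is proving that the stretch analysis across the hierarchy composes correctly for weighted graphs, where hop distances and weighted distances decouple.
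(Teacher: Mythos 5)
Your base case $\alpha=1$ is essentially the paper's construction: sample skeletons with probability $1/t$ for $t=n^{1/4}\tn^{1/2}$, learn $h$-hop distances locally, broadcast the full skeleton graph with \cref{thm:optimal_dissemination}, and broadcast each node's nearest skeleton; the arithmetic $\NQ[|S|^2]\le 6(|S|/\sqrt n)\,\tn\in\tilO(t)$ (via \cref{lem:growth_of_tk}, which gives $\sqrt{\alpha}$-scaling, not the linear scaling $\NQ[ck]\le c\,\NQ[k]$ you invoke) closes the round complexity there.

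For general $\alpha$, however, your plan has two genuine gaps. First, the round complexity of the hierarchy does not close: at each level you propose to broadcast the \emph{full} level-$i$ skeleton graph, which has $\Theta(|S_i|^2)$ edges in the worst case. The bottom level must have many skeletons (otherwise $h_1$, and hence the Bellman--Ford cost, exceeds $t$, since on a path-like graph $|\calB_h(v)|\approx h$ forces $h_1\gtrsim n/|S_1|$), but then $\NQ[|S_1|^2]\approx (|S_1|/\sqrt n)\tn$ far exceeds $t$; no choice of $p_1,h_1$ satisfies both constraints simultaneously. The sparsification has to happen \emph{within} a single skeleton level, not by thinning the skeleton set across levels. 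The paper's device is to compute a $(2\alpha-1)$-spanner of the (single) skeleton graph by simulating the deterministic \congest spanner algorithm of \cite{rozhovn2020polylogarithmic} over skeleton edges via the local network, cutting the broadcast payload from $(n/t)^2$ to $\tilO((n/t)^{1+1/\alpha})$ edges; a Thorup--Zwick bunch structure of total size $\tilO(\alpha|S|^{1+1/\alpha})$ would also work, but that is not what you describe. Second, your stretch claim is internally inconsistent: if ``each additional level multiplies the stretch by roughly a constant factor,'' then $\alpha$ levels yield stretch exponential in $\alpha$, not $4\alpha-1$. Linear stretch in $\alpha$ requires either the Thorup--Zwick alternation argument or, as in the paper, a single application of a $(2\alpha-1)$-stretch structure on skeleton distances combined with the detour terms $d^h(v,v_s)$, $d^h(w,w_s)$, each bounded by $d(v,w)$ via the on-shortest-path skeleton property of \cref{lem:skeleton-graph}, giving $(2\alpha-1)d(v,w)+2\alpha\, d(v,w)=(4\alpha-1)d(v,w)$.
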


\paragraph{Approximating cuts.}
Our broadcast algorithm can also be used in combination with cut-sparsifiers by first computing such a sparsifier and then broadcasting it to the entire graph. This allows us to obtain approximation algorithms whose round complexity is $\tilO(\tk)$, where $k$ is the size of the sparsifier, plus the round complexity for constructing the sparsifier.

Similar to~\cite{anagnostides2021deterministic}, we leverage cut-sparsifiers~\cite{spielman2004nearly} and their efficient implementation in the \congest 
 model~\cite{koutis2016simple} to approximate the size of all cuts and solve several cut problems. Our algorithms run in $\tildeBigO{\tn/\eps + 1/\eps^2}$ rounds in \hybridzero, which is always at least as fast as all existing algorithms, and faster when $\tn \in o(\sqrt{n})$. The idea behind our result is to execute the \congest algorithm of~\cite{koutis2016simple}, in $\tildeBigO{1/\eps^2}$ rounds, to create a subgraph with $\tildeBigO{n/\eps^2}$ edges which approximate all cuts, and then we broadcast that subgraph in $\tildeBigO{\tk[n/\eps^2]} \subseteq \tildeBigO{\tn/\eps}$ rounds using~\cref{thm:optimal_dissemination}.

\begin{restatable}[Cuts]{theorem}{minCutApprox}
\label{theorem:minCutApprox}
    For any $\eps \in (0, 1)$, there is an algorithm in \hybridzero that runs in $\tildeBigO{\tn/\eps + 1/\eps^2}$ rounds \whp, after which each node can locally compute a $(1+\eps)$-approximation to all cut sizes in the graph. This provides $(1+\eps)$-approximations for many problems including minimum cut, minimum $s$-$t$ cut, sparsest cut, and maximum cut.
\end{restatable}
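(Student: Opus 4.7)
The plan is to instantiate the two-step template outlined in the paragraph preceding the theorem: first compute a cut-sparsifier of $G$ embedded in $G$ by a known \CONGEST procedure, and then broadcast that sparsifier to every node via \cref{thm:optimal_dissemination}, so that each node can subsequently evaluate every cut locally without further communication.

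For the first step I would invoke the deterministic cut-sparsifier construction of Koutis and Xu~\cite{koutis2016simple}, which realizes a Spielman--Srivastava-style sparsifier~\cite{spielman2004nearly} in the \CONGEST model in $\tildeBigO{1/\eps^2}$ rounds, producing a weighted subgraph $H \subseteq G$ with $k := |E(H)| \in \tildeBigO{n/\eps^2}$ edges such that the size of every cut in $H$ is a $(1 \pm \eps)$-approximation of the corresponding cut size in $G$. Since any \CONGEST algorithm runs round-by-round in \hybridzero with the global network left idle, this step costs $\tildeBigO{1/\eps^2}$ \hybridzero rounds. At the end, each endpoint of every edge in $H$ knows whether that edge belongs to $H$ together with its sparsifier weight, so the edge set (with weights) is laid out across the graph as a \kdis instance with $O(\log n)$-bit tokens.

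For the second step I would apply the universally optimal broadcast from \cref{thm:optimal_dissemination} to this instance, delivering all $\tildeBigO{n/\eps^2}$ tokens to every node in $\tildeBigO{\tk[n/\eps^2]}$ rounds. A short ball-growth computation, using monotonicity of $|\calB_t(v)|$ in $t$ together with the definition of $\tk$, would bound $\tk[n/\eps^2] \in \tildeBigO{\tn/\eps + 1/\eps^2}$; combining with the sparsifier construction cost yields the claimed round complexity. Once every node holds the entire weighted sparsifier $H$, it can compute $|E_H(A, V \setminus A)|$ for any bipartition $(A, V \setminus A)$ with no further communication, and by the sparsifier guarantee this simultaneously $(1\pm\eps)$-approximates $|E_G(A, V \setminus A)|$ over all bipartitions. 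Standard reductions then give $(1+\eps)$-approximations of the values of minimum cut, minimum $s$-$t$ cut, sparsest cut, and maximum cut, since each of these problems is an extremum of a function of cut sizes alone.

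The main obstacle is the parameter estimate $\tk[n/\eps^2] \in \tildeBigO{\tn/\eps + 1/\eps^2}$: the naive bound obtained by scaling $k$ from $n$ to $n/\eps^2$ in the definition of $\tk$ only gives $\tk[n/\eps^2] \leq \tn/\eps^2$, so one must be more careful, separating the regime where the trivial lower bound $\tk \geq k/n$ (contributing the additive $1/\eps^2$) dominates from the regime where an intermediate-radius ball (contributing the $\tn/\eps$ term) yields a better estimate. Apart from this bookkeeping, the proof is a clean composition of two black-box subroutines; correctness of the sparsifier, and hence of all cut approximations, holds \whp.
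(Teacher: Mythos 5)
Your proposal is correct and follows essentially the same route as the paper: run the \congest cut-sparsifier of \cref{thm:congest_cut_sparsifier} in $\tildeBigO{1/\eps^2}$ rounds, broadcast its $\tildeBigO{n/\eps^2}$ edges with \cref{thm:optimal_dissemination}, and let each node evaluate all cuts locally. The ``main obstacle'' you flag is already handled by \cref{lem:growth_of_tk}, which gives $\tk[\alpha k] \leq 6\sqrt{\alpha}\,\tk$ and hence $\tk[n/\eps^2] \in O(\tn/\eps)$ in one line, so no regime separation (and no additive $1/\eps^2$ in the broadcast cost) is needed — that term comes solely from the sparsifier construction.
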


\paragraph{Shortest paths lower bounds.} We show lower bounds for shortest paths computations. The first lower bound works for unweighted graphs in \hybridzero, i.e., we assume that the set of identifiers for the source nodes is unknown. Therefore, to solve $k$-SSP, each node must learn the identifiers of these $k$ source nodes. Roughly, this can be seen as an application of \kdis, so the lower bound of $\tildeBigOmega{\tk}$ rounds discussed before applies.

\begin{restatable}[Unweighted $k$-SSP lower bound]{theorem}{unweightedLB}
    \label{thm:apsp_hybridzero_unweighted}
    The unweighted $k$-SSP problem with random source nodes takes $\tildeBigOmega{\tk}$ rounds in expectation in the \hybridzero model, even if all nodes initially know the input graph $G = (V,E)$. This holds for any approximation ratio and for randomized algorithms with constant success probability $p>0$.
\end{restatable}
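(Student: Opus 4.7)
The plan is to reduce the unweighted $k$-SSP problem with random sources to a version of the information-receiving bottleneck that underlies \cref{thm:broadcast_unicast_LB}, specifically to an application of the node communication lemma (\cref{lem:receive_lower_bound}) at the node $v^\star \in \argmax_v \tk(v)$ that witnesses the neighborhood quality bound. The conceptual observation is that, even if every node knows the graph $G = (V,E)$, the assignment $\ID: V \to [n^c]$ is part of the parametric input $I$ in \hybridzero, and by the formal problem statement every target node must output each distance estimate labeled by the corresponding source identifier $\ID(s)$. Because the source set $S$ is sampled by including each node independently with probability $k/n$, the random variable ``set of source identifiers $\{\ID(s): s\in S\}$'' has entropy $\Omega\bigl(\log \binom{n}{k}\bigr) = \tildeBigOmega{k}$ conditioned on $G$, independently of the approximation ratio used for the actual distances.

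Given this, the first step is to show that at $v^\star$ the output of any correct $k$-SSP algorithm deterministically reveals $\{\ID(s): s \in S\}$: indeed $v^\star$ itself is a target (since $T = V$), so it must list all source identifiers. Hence $v^\star$ must receive $\tildeBigOmega{k}$ bits of information about the random input $I$ in order to succeed with constant probability. The second step is to bound from above the amount of information that $v^\star$ can receive in $t$ rounds. Following the node communication framework (the same argument used to prove \cref{thm:broadcast_unicast_LB} and referenced in \cref{sec:node_comm}), every bit learned by $v^\star$ within $t$ rounds must either have entered the ball $\calB_t(v^\star)$ through global communication of some node in that ball (at most $\tildeBigO{t\cdot |\calB_t(v^\star)|}$ bits) or already been present in $\calB_t(v^\star)$ at time $0$ (negligible, since identifiers outside the ball are independent of anything inside it when $G$ is fixed). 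For $t < \tk$, we have $|\calB_t(v^\star)| < k/t$ by definition of $\tk(v^\star)$, so the total information reaching $v^\star$ is only $\tildeBigO{k}$ which, after absorbing polylog factors into the lower bound, is too small to cover the $\tildeBigOmega{k}$ bits of entropy required.

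The third step is to translate this into an expected-time lower bound for any randomized algorithm with constant success probability $p>0$. Applying the standard Markov/averaging argument as in \cref{thm:broadcast_unicast_LB}: if the algorithm finishes in $t < c\cdot \tk / \polylog n$ rounds with probability bounded away from zero, Shannon's source coding theorem forces an encoding of the random source identifier set into the $\tildeBigO{t\cdot k/t} = \tildeBigO{k}$ bits delivered into $\calB_t(v^\star)$, which violates the entropy lower bound. Hence the expected round complexity is $\tildeBigOmega{\tk}$.

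The main obstacle I anticipate is making the entropy step robust against the ``any approximation ratio'' clause: one must ensure the lower bound uses only the identifier-matching part of the $k$-SSP specification and is insensitive to how coarsely distance values are approximated. A secondary technical point is handling the random cardinality of $S$ cleanly, so that with probability bounded away from $0$ the entropy of $\{\ID(s): s \in S\}$ is indeed $\tildeBigOmega{k}$; this follows from a standard Chernoff bound showing $|S| = \Theta(k)$ \whp, combined with the fact that conditioning on a high-probability event only loses a constant factor in the entropy bound.
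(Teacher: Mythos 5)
Your proposal is correct and follows essentially the same route as the paper: the paper also observes that in \hybridzero the identifiers come from an unknown range $[n^c]$, that the $k$-SSP output format forces every target to match each distance label to the corresponding source identifier $\ID(s)$, and hence that a target must learn the $k$ source identifiers as "tokens," which reduces the problem to the $\tildeBigOmega{\tk}$ lower bound for a single node receiving $k$ tokens (\cref{lem:receive_lower_bound}, itself proved via the node communication framework you invoke). The only difference is presentational: you re-derive the entropy/ball-size counting inline at $v^\star$, whereas the paper packages that argument once in \cref{lem:receive_lower_bound} and cites it.
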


This lower bound is almost tight for $k=n$ due to our various solutions for APSP in \cref{cor:APSP_sparse_graph}, \cref{thm:apsp_hybridzero_unweighted}, \cref{cor:APSP_det_weigh} in $\tilO(\NQ)$ rounds thus proving the universal optimality of the algorithms associated with Theorems \ref{thm:unweighted_apsp_approx}, \ref{thm:firstWeightedAPSPApprox} and \ref{theorem:secondWeightedAPSPApprox}.

Note that the lower bound of
\cref{thm:apsp_hybridzero_unweighted} works for unweighted graphs, but assumes that the choice of identifiers are unknown, which corresponds to the \hybridzero model. 
By contrast, the second lower bound applies for the weighted problem in the stronger \hybrid model, where the set of all identifiers is exactly $[n]$ and is known to all nodes.

\begin{restatable}[Weighted $(k,\ell)$-SP lower bound]{theorem}{weightedLB}
    \label{thm:klsp_LB}
    Solving the $(k,\ell)$-SP problem with arbitrary target nodes takes $\tildeBigOmega{\tk}$ rounds in expectation in the \hybrid model, which holds for randomized algorithms with constant success probability $p>0$, even for a polynomial stretch and even if all nodes initially know the \emph{weighted} local graph $G = (V,E,\omega)$. In the case of random target nodes, the same lower bound holds for algorithms with a probability of failure of less than \smash{$\frac{(1-p)\ell}{n}$} for any constant $0<p\leq1$.
\end{restatable}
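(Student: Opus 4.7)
The plan is to reduce the weighted $(k,\ell)$-SP problem to the token-learning lower bound of \cref{lem:receive_lower_bound} that underlies \cref{thm:broadcast_unicast_LB}: that lemma guarantees some node $v^*$ requires $\tilOm(\NQ)$ rounds to learn an arbitrary set of $k$ tokens in \hybrid, even with full knowledge of $G$.

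Fix an arbitrary weighted graph $G=(V,E,\omega)$ and let $v^*$ be such a witness node; in the arbitrary-target case I would simply include $v^*$ in the target set $T$. For the parametric input I would sample the source set $S$ uniformly at random among size-$k$ subsets of $V\setminus\{v^*\}$. Although the weighted graph is common knowledge and all pairwise distances are therefore determinable offline, the set $\{\ID(s): s\in S\}$ still carries $\log\binom{n-1}{k}=\Theta(k\log(n/k))$ bits of entropy that $v^*$ has no way to extract without communication: it must output, for each $s\in S$, the exact identifier $\ID(s)$ paired with a (possibly polynomially stretched) distance label. The polynomial stretch loosens only the distance component, not the identifier match, so the algorithm must effectively transport the $k$ source identifiers to $v^*$.

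I would then treat these $k$ source identifiers as the tokens placed at the corresponding source nodes and invoke \cref{lem:receive_lower_bound} directly: learning them at $v^*$ takes $\tilOm(\NQ[k](G)) = \tilOm(\NQ)$ rounds in expectation, for any algorithm succeeding with a constant probability $p>0$. This settles the theorem for arbitrary targets in the stronger \hybrid model, for weighted graphs with polynomial stretch, and even when every node knows $G$ for free. For the random-target variant, I would condition on $\{v^*\in T\}$, which occurs with probability $\ell/n$. If the algorithm's global failure probability were strictly less than $(1-p)\ell/n$, then $\Pr[\text{fail at } v^* \mid v^*\in T] < 1-p$, i.e.\ the conditional success probability at $v^*$ is at least $p$, and the preceding constant-probability bound applies.

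The main obstacle I anticipate is ruling out that knowledge of the weighted graph lets the algorithm short-circuit identifier learning, e.g.\ by inferring source IDs from distance values through a cleverly chosen output format. The way to handle this cleanly is that \cref{lem:receive_lower_bound} is a pure entropy-based bound on the node-communication problem, so as long as the correct output distribution has $\Omega(k\log n)$ bits of entropy independent of anything $v^*$ can derive from $G$ alone, the reduction carries through; the uniform random source set above satisfies this unconditionally.
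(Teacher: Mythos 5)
Your proposal takes a genuinely different route from the paper, and the route does not establish what the theorem is meant to (and is used to) assert. The paper's proof grants every node not only the weighted graph but also the \emph{set of source identifiers} $\{\ID(s) \mid s \in S\}$ (this is stated explicitly in the preliminaries of the lower-bound section and is what distinguishes \cref{thm:klsp_LB} from \cref{thm:klsp_LB_sources_known}); what it withholds is the mapping $\ID : V \to [n]$, i.e., \emph{where} the source identifiers live. It then uses \cref{lem:node_partition} to split the far-away nodes into two large sets whose weighted distances to the witness node differ by a factor exceeding the polynomial stretch, so that even an approximate distance label reveals one bit per source; the entropy source is this random placement, and the edge weights are essential. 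Your argument instead places all the entropy in \emph{which identifiers are sources}, so it collapses the moment the source-identifier set is given as input, and it never uses the weights or the stretch bound at all. That should be a warning sign: if your argument were valid under the paper's knowledge assumptions, it would also yield an unweighted, arbitrary-stretch lower bound in \hybrid, which the paper deliberately claims only for \hybridzero (\cref{thm:apsp_hybridzero_unweighted}), where identifiers come from the large range $[n^c]$ and genuinely must be learned.

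Even taken on its own terms, the reduction has two concrete gaps. First, \cref{lem:receive_lower_bound} cannot be invoked ``directly'': in that lemma the entropy resides in the \emph{contents} of tokens at fixed (arbitrary) positions, namely a fresh random bit string encoded outside $\calB_h(v)$, whereas in your construction the entropy resides in the random \emph{positions} of the sources, and the ``token'' $\ID(s)$ carries no free content. You would have to set up the node communication problem (\cref{def:node_comm_problem}, \cref{lem:lower_bound_node_comm}) from scratch with $X$ equal to the source indicator restricted to $V \setminus \calB_h(v^*)$, and deal with the fact that sources sampled inside $\calB_h(v^*)$ leak information about $X$ to nodes disjoint from $A$ (sampling only from $V\setminus\{v^*\}$ does not suffice). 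Second, your entropy bound $\log\binom{n-1}{k}$ degenerates for $k$ close to $n$ (it is $O(\log n)$ for $k = n-1$), so the argument gives nothing for APSP, one of the main applications of this theorem; the paper handles this by first reducing to $k \le n/4$ via $\NQ[n] \in \Theta(\NQ[n/4])$ (\cref{lem:growth_of_tk}), a step you would also need.
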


Note that this result shows the universal optimality of results in \cref{thm:klsp_UB}, which has a much lower probability of failure even for $\ell = 1$. Moreover, this result holds for the stronger \hybrid model even if each node is given the \emph{weighted} graph and a separate set of identifiers of all source nodes as initial input. 

In this lower bound, we exploit that the mapping $\ID: V \to [n]$ from the source nodes in the graph to their identifiers is unknown (cf., \cref{sec:problems}). This allows us to find a node $v$ and two relatively large sets of nodes that are at a large \emph{hop} distance to $v$, where the nodes in one of the sets have a much larger \emph{weighted} distance to $v$ than those in the other. Sources are then randomly assigned to either one of the two sets.
By learning distances to the source nodes, $v$ also learns which set each source is located in. 

This is akin to learning the state of some random variable, thus we can apply a lower bound for the node communication problem, and we can prove a $\tildeBigOmega{\tk}$ lower bound.
One of the challenges is that for a universal lower bound, we cannot modify the topology of the given graph, so we have to rely on the graph property $\NQ$ and set edge weights to identify the node sets required in this proof.

The subsequent theorem shows that we can also give the nodes the exact location of the sources in the graph $G$ (by granting the mapping $\ID: V \to [n]$ and the set of identifiers of all sources as input) and instead withholding the weight function $\omega$. It is interesting to point out that if we would grant both the weighted graph and the location of source nodes as input, then the solution can computed locally. This implies that it is strictly necessary to initially withhold either the location of sources (\cref{thm:klsp_LB}) or edge weights (\cref{thm:klsp_LB_sources_known}) to obtain meaningful lower bounds.

\begin{restatable}[Weighted $(k,\ell)$-SP lower bound with known sources]{theorem}{weightedLBSourcesKnown}
    \label{thm:klsp_LB_sources_known}
    Solving the $(k,\ell)$-SP problem with arbitrary target nodes takes $\tildeBigOmega{\tk}$ rounds in expectation in the \hybrid model. This holds for randomized algorithms with constant success probability $p>0$ and even if all nodes initially know the unweighted local graph $G = (V,E)$ the set of source identifiers $\{\ID(s) \mid s \in S\}$ and the assignment of identifiers $\ID : V \to [n]$. In the case of random target nodes, the same lower bound holds for algorithms with a probability of failure of less than \smash{$\frac{(1-p)\ell}{n}$}, for any constant $0 < p\leq 1$.
\end{restatable}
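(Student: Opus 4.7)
The plan is to mimic the reduction strategy underlying Theorem~\ref{thm:klsp_LB} but to randomize the unknown weight function $\omega$ instead of the unknown source-to-identifier mapping. Let $v\in V$ attain $\tk(v)=\tk(G)$. I place $v$ in the target set; for random targets, $v$ is sampled with probability $\ell/n$, and the $\tfrac{(1-p)\ell}{n}$ failure budget is absorbed by the same conditioning argument as in Theorem~\ref{thm:klsp_LB}. I fix $k$ sources $s_1,\ldots,s_k$ at publicly known positions and associate to each an independent uniform bit $X_i\in\{0,1\}$. The adversary then chooses $\omega$ as a function of $X_1,\ldots,X_k$ in such a way that the weighted distance $d_G(s_i,v)$ uniquely determines $X_i$. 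Because $s_i$ sees the weights of its own incident edges, it also knows $X_i$, so $(X_1,\ldots,X_k)$ represents $k$ independent one-bit tokens placed at $k$ distinct nodes which $v$ must learn in order to solve $(k,\ell)$-SP.

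Concretely, I would fix a BFS tree $T$ of $G$ rooted at $v$ and pick the source set $S$ so that no two sources lie on a common root-to-leaf path in $T$ (for instance, take all sources at a single carefully chosen BFS layer, adjusting if necessary). For each $s_i$ let $e_i$ be its parent edge in $T$; set $\omega(e_i)=1+X_i$, set every edge incident to some $s_i$ other than $e_i$ to a prohibitively large weight such as $n^{10}$, and set all remaining edges to weight $1$. Since any detour away from $e_i$ incurs cost $\Omega(n^{10})$, every shortest path from $s_i$ to $v$ starts with $e_i$ and then continues along $T$ without traversing any other source's incident edges; thus $d_G(s_i,v)=(1+X_i)+(h_i-1)$ where $h_i=\hop(s_i,v)$ is public information derivable from the unweighted graph. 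The target $v$ therefore recovers $X_i$ exactly as $d_G(s_i,v)-h_i$. With this encoding the $(k,\ell)$-SP problem at $v$ is reduced to the node-communication task of collecting $k$ independent one-bit tokens from $k$ distant source nodes, and Lemma~\ref{lem:receive_lower_bound} delivers the $\tildeBigOmega{\tk}$ lower bound in the \hybrid model, even for a polynomial stretch (the encoding gap of $1$ is preserved under any sub-polynomial multiplicative error in the distance estimate since weights can be scaled up by a polynomial factor).

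The main obstacle is constructing $S$ and $\omega$ so that the recovery of each $X_i$ is genuinely independent of the other bits $X_j$, uniformly across all graphs $G$. The natural BFS-tree construction breaks down when one source lies on the tree path from another source to $v$, because the distance then involves multiple unknown bits, and also if too many sources cluster within a small BFS layer. I would resolve this by showing that in any graph with $\tk(G)$ large, a layer or antichain of size $\tildeOmega(k)$ in $T$ can always be found far enough from $v$ to place the sources (leveraging the definition of $\tk$, which controls ball growth around $v$), and by pruning any remaining adjacency conflicts while losing only a polylogarithmic fraction of tokens. The resulting effective number of independent tokens remains $\Theta(k)$ up to $\tildeBigO{1}$ factors, which is absorbed by the $\tildeBigOmega{\cdot}$ slack in the target lower bound. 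This pruning/layer-selection argument is the technical heart of the proof.
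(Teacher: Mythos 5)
Your high-level strategy is the same as the paper's: fix the node $v$ with small balls (from \cref{lem:neighborhood_small}), encode independent random tokens into the weights of the parent edges of a spanning tree rooted at $v$, observe that each source (as an endpoint of its parent edge) knows its token while $v$'s initial state is independent of them, and reduce to the token-learning problem of \cref{lem:receive_lower_bound}. However, your construction has a genuine gap at exactly the point you flag as "the technical heart": you require the $k$ sources to form an antichain in the BFS tree (no two on a common root-to-leaf path), and you assert that a layer or antichain of size $\tildeOmega(k)$ far from $v$ can always be found because $\tk(G)$ is large. This is false, and the counterexample is the extremal case for the whole theorem: on a path graph with $v$ an endpoint, $\tk \in \Theta(\sqrt{k})$, every BFS layer is a single node, and all of $V \setminus \calB_r(v)$ lies on one root-to-leaf path, so the largest available antichain has size $1$. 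The definition of $\tk$ works against you here — $v$ is chosen precisely so that its balls (hence its BFS layers) are \emph{small}. The paper's proof avoids the antichain requirement entirely: it places the $k$ sources at arbitrary distinct nodes of $V \setminus \calB_r(v)$ (possibly nested along a path) and encodes each $\bigO{\log k}$-bit message $m_i$ as weight $n \cdot m_i$ on the parent edge, with weight $1$ on the remaining tree edges and prohibitive weight off the tree; the point is that a tree distance then decomposes as a sum of the encoded values of the sources on the path plus a remainder below $n$, which $v$ (knowing $G$, the tree, and the source positions) can disentangle. Repairing your argument essentially requires adopting this nested-decoding idea rather than pruning to an antichain.

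A secondary issue: your remark that the $1$-unit encoding gap survives polynomial multiplicative stretch "since weights can be scaled up by a polynomial factor" does not work — multiplicative stretch is scale-invariant, and distinguishing $d = h_i$ from $d = h_i + 1$ only tolerates stretch below $1 + 1/h_i$. To get robustness against large stretch one needs the two possible distances to differ by a large multiplicative factor, as in the proof of \cref{thm:klsp_LB}. This is minor for the exact version of the statement but matters if you want the approximation-hardness claimed in the surrounding discussion.
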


Note that the lower bounds in \cref{thm:klsp_LB} and \ref{thm:klsp_LB_sources_known} for the $(k,\ell)$-SP problem are tight up to $\polylog n$ factors for the parameter ranges given in the corresponding upper bounds in \cref{thm:klsp_UB}. Finally, since APSP clearly solves the $(n,\ell)$-SP problem, \cref{thm:klsp_LB} and \ref{thm:klsp_LB_sources_known} also provide a lower bound for the APSP problem of $\tildeBigOmega{\tn}$ on weighted graphs that holds for the \hybrid model, i.e., where nodes initially \textit{know} the set of IDs. Therefore, Theorem \ref{thm:firstWeightedAPSPApprox} proves that the lower bounds in \cref{thm:klsp_LB} and \ref{thm:klsp_LB_sources_known} are almost tight for the case $k=n$.\phil{added this sentence to clarify the implications of the two above theorems omn APSP.}

\subsection{Existentially Optimal Shortest Paths}

We now discuss our new existentially optimal shortest paths algorithms, which are utilized as subroutines in our universally optimal algorithms and are of independent interest on their own.

\begin{restatable}[Existentially optimal SSSP]{theorem}{optSSSP}
\label{thm:almost_shortest_sssp}
A $(1 \p \eps)$-approximation of SSSP can be computed in $\tilO(1/\eps^2)$  rounds deterministically in the \hybridzero model.
\end{restatable}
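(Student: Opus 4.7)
The plan is to combine hop-limited Bellman-Ford in \local with a deterministic $(1+\eps)$-approximate hopset implemented through \ncczero, so that shortest-path distance estimates from the source $s$ propagate across the entire graph in polylogarithmic rounds despite the bandwidth limitations of the global network. The point is that \local lets a node advance its estimate by one hop per round for free, while \ncczero allows a bounded number of ``virtual'' long-range edges per node to be relaxed in $\tilO(1)$ rounds; an appropriate hopset bridges these two capabilities.

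First I would build a hopset $H$: a collection of virtual weighted edges such that for every node $v$ there is a path from $s$ to $v$ in $G \cup H$ of hop length $h \in \tilO(1/\eps)$ whose total weight is within a $(1 \p \eps)$ factor of $d_G(s,v)$. The target parameters are $|H| \in \tilO(n)$ total virtual edges, with each node incident to only $\tilO(1)$ hopset edges, so the hopset is realizable within the bandwidth of \ncczero. The construction should proceed in $\tilO(1/\eps)$ phases; each phase would use \local to propagate short-range distance information and \ncczero to install long-range virtual edges between pairs of ``pivot'' nodes that are chosen by a deterministic rule (for instance an ID-based layered hashing scheme) instead of random sampling.

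Given the hopset, the main loop would run $h \in \tilO(1/\eps)$ iterations of Bellman-Ford on the augmented graph $G \cup H$. In each iteration every node $v$ updates $\tild(v) \leftarrow \min\{\tild(v),\, \min_{u\,:\,(u,v)\in G \cup H} \tild(u) + w(u,v)\}$: the local edges are relaxed in one round of \local, and the $\tilO(1)$ hopset edges per node are relaxed in $\tilO(1)$ rounds of \ncczero. Accounting for $O(\log W) \in \tilO(1)$ weight scales handled either in parallel or sequentially, and for the $\tilO(1/\eps)$ hopset iterations, the total round complexity comes out to $\tilO(1/\eps^2)$, matching the statement.

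The main obstacle will be the \emph{deterministic} construction of the hopset directly in \hybridzero within polylogarithmic rounds, since the classical approximate-hopset constructions tend to rely on random sampling of pivot sets and on the availability of a common name space. I would address this by (i) choosing pivots deterministically via a layered ID-based hashing scheme with bounded dependencies that achieves the necessary covering properties, and (ii) carefully scheduling the \ncczero exchanges so that each node sends and receives only $\tilO(\log n)$ messages per round throughout the construction, avoiding any appeal to the $\tilO(\NQ)$-round broadcast primitive. A secondary subtlety is the composition of error sources: the hopset stretch, the weight bucketing, and the Bellman-Ford truncation at $h$ hops must combine to the target $(1 \p \eps)$ factor, which will typically require running the internal construction at precision $\eps/\polylog n$ to absorb the polylogarithmic overheads.
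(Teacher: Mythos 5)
Your plan hinges on a hopset $H$ that simultaneously has (a) hopbound $h\in\tilO(1/\eps)$, (b) total size $\tilO(n)$, and (c) only $\tilO(1)$ hopset edges incident to \emph{each} node. This object is the crux of the argument, and it is precisely where the proposal breaks down: no such hopset is known to exist, and the known size--hopbound tradeoffs for $(1+\eps)$-hopsets (the lower bounds of Abboud, Bodwin, and Pettie, matched by the Elkin--Neiman/Huang--Pettie constructions) show that near-linear-size hopsets cannot have polylogarithmic hopbound; conversely, hopsets with hopbound $\mathrm{poly}(\log n,1/\eps)$ require size $n^{1+\Omega(1)}$, which forces some nodes to be incident to polynomially many virtual edges. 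That last point is fatal for your simulation specifically: the $\ncczero$ relaxation step needs the \emph{per-node} hopset degree to be $\tilO(1)$, and there is no slack to trade size for degree. This is exactly the obstruction that keeps the existing hopset/skeleton-based \hybrid SSSP algorithms (e.g., the $\tilO(n^{5/17})$-round algorithm) polynomially far from $\tilO(1)$ rounds, so the approach cannot be rescued by a more careful deterministic pivot selection --- the bottleneck is information-theoretic in the hopset itself, not in the derandomization. A secondary issue is that even granting the hopset, your round count would be $\tilO(1/\eps)$ iterations times $\tilO(1)$ rounds each, and the extra $1/\eps$ factor you attribute to weight scales is not justified (weight bucketing contributes $O(\log W)=\tilO(1)$, not $1/\eps$).

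The paper takes an entirely different route that sidesteps hopsets. It invokes the result of Rozho\v{n} et al.\ that a $(1+\eps)$-approximate SSSP reduces to $\tilO(1/\eps^2)$ rounds of the \minor model plus calls to an \euler oracle \oracle (this reduction comes from the transshipment/gradient-descent framework, not from Bellman--Ford over a hopset). The technical work is then to show that (i) one round of \minor can be simulated deterministically in $\tilO(1)$ rounds of \hybridzero via overlay-tree constructions on contracted components, and (ii) the oracle \oracle can be implemented deterministically in $\tilO(1)$ rounds by reducing to graphs of polylogarithmic arboricity through a network decomposition of $G^2$, a forests decomposition, and an $O(\log n\log^* n)$-round cycle-orientation procedure. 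If you want to pursue a proof yourself, that reduction target --- simulating \minor and \euler rather than constructing a hopset --- is the viable path.
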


This above result closes an existing gap for SSSP in \hybrid, as the fastest known solutions are $\tilO(n^{5/17})$ by \cite{censor2021sparsity} for a stretch of $1 \p \eps$, or $\tilO(n^{\eps})$ for a (large) constant stretch $(1/\eps)^{\bigOa(1/\eps)}$ due to \cite{augustine2020shortest}, or \smash{$\frac{\log n}{\log\log n}$} in \smash{$\tilO(n^{1/2})$} for a deterministic solution by \cite{anagnostides2021deterministic}. Note that our result is deterministic and holds even in \hybridzero, thus it beats the aforementioned results in all categories.

We give a brief overview of the proof.
It was shown in~\cite{rozhovn2022undirected} that a $(1+\eps)$-approximation of SSSP can be computed deterministically in the $\mathsf{PRAM}$ model with linear work and $\tilO(1)$ depth. 
Their main machinery is the simulation of an interface model, called the \minor model.
Specifically, they showed that a $(1 \p \eps)$-approximation of SSSP on $G$ can be computed with a total of $\tilO\big(1/\eps^2\big)$ rounds of \minor model and calls to the oracle \oracle. The oracle \oracle returns a solution for the so-called \euler problem (\cref{def:euler_oracle}) on a certain subgraph $H$ of the original graph (and a small number of certain virtual nodes).

We prove \cref{thm:almost_shortest_sssp} by giving efficient implementation of both  \minor and \oracle in the \hybridzero model. We stress that the main technical work lies in the provision of the oracle \oracle, i.e., the efficient solution of the \euler problem.
First, for the implementation of \minor we use an \emph{overlay network} construction~\cite{gmyr2017distributed}, we show that the \minor model can be implemented efficiently in the \hybridzero model. 

Second, Our strategy for implementing \oracle is to reduce the problem to the case where the arboricity of the graph is $\tilO(1)$ using the power of the local communication in the \hybridzero model (note that local communication is also strictly necessary for this simulation see \cref{rem:minor_aggregation_simulation}). Specifically, we compute a \emph{network decomposition}~\cite{rozhovn2020polylogarithmic} of the power graph $G^2$, and then we go through each color class one by one to try to orient as many edges as possible for the small-diameter clusters in the color class. In the end, we are left with a small-arboricity graph. 

Once we know that the graph has a small arboricity, we may assign each edge to a node using the \emph{forests decomposition} algorithm of Barenboim and Elkin~\cite{BE10} so that each node has a small load. After that, a desired orientation can be computed by first decomposing each node into degree-2 nodes and then orienting each cycle individually. We conclude that an Eulerian orientation of any graph can be computed in $\tilO(1)$ rounds deterministically in the \hybridzero model, which is a result that might be of independent interest.

We conclude the technical overview with the following theorem.

\begin{restatable}[Existentially optimal $k$-SSP]{theorem}{kssp}
	\label{thm:k-ssp}
	Given $\eps > 0$, $k$-SSP can be approximated w.h.p.
	\begin{itemize}
		\item in \hybrid in \smash{$\tilO\big(\!\sqrt{k}\cdot\tfrac{1}{\eps^2}\big)$} rounds with stretch $1 \p \eps$, if the sources are sampled with probability \smash{$\frac{k}{n}$},
		\item  in \hybridpar{\infty}{\gamma} in $\tilO\big(\!\sqrt{k/\gamma}\cdot\tfrac{1}{\eps^2}\big)$ rounds with stretch $3 \p \eps$, for $k$ arbitrary sources,
		\item  in \hybridpar{\infty}{\gamma} in $\tilO\big(\tfrac{1}{\eps^2}\big)$ rounds with stretch $1 \p \eps$, for $k \leq \gamma$ arbitrary sources.
	\end{itemize}
\end{restatable}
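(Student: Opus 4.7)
The three cases are proved in the order (3), (1), (2), with Case (3) serving as the engine that is parallelized or scaled up in Cases (1) and (2). Case (3) follows almost immediately from \cref{thm:almost_shortest_sssp}: that algorithm uses only $\tilO(1)$ bits of global bandwidth per node per round, so with bandwidth $\gamma \geq k$ we can run $k$ independent copies in parallel, one per source, losing only a $\polylog n$ factor. This yields $\tilO(1/\eps^2)$ rounds with stretch $1+\eps$.

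For Case (1), since sources are sampled with probability $k/n$, a standard concentration argument shows every node has a source within $\tilO(\sqrt{k})$ hops \whp. The plan is: (i) in $\tilO(\sqrt{k})$ rounds of local BFS every node learns exact distances to all sources within $\sqrt{k}$ hops; (ii) define an overlay skeleton graph $H$ on $S$ whose edges connect pairs $s, s' \in S$ at hop-distance at most $\sqrt{k}$ with weights equal to their true $G$-distances (computable from step (i)); and (iii) solve SSSP on $H$ from every source in $S$ by a Case-(3)-style parallelization of \cref{thm:almost_shortest_sssp} over the overlay. The random placement of $S$ enables us to appoint $\tilO(n/k)$ disjoint helpers per source using a construction akin to the one in \cref{sec:optimal_unicast}, amplifying the effective global bandwidth of each source by a factor of $\tilO(n/k)$ so that the $k$ parallel SSSP instances fit within the model budget. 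Each round of the overlay SSSP is simulated in $\tilO(\sqrt{k})$ rounds of the true model using the helpers, giving $\tilO(\sqrt{k}/\eps^2)$ total rounds; recombining the local BFS distances with the overlay SSSP distances via the triangle inequality preserves stretch $1+\eps$.

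Case (2) generalizes Case (1) by introducing landmarks. Set $h := \sqrt{k/\gamma}$ and sample a random landmark set $R \subseteq V$ by including each node with probability $\tilO(1/h)$, so that every node has a landmark within $h$ hops \whp. After an $h$-hop local BFS, build an overlay skeleton graph $H'$ on $R \cup S$ with $h$-hop true-weight edges and apply the Case (3) subroutine on $H'$, drawing the bandwidth amplification from $\gamma$ together with the helper construction around each landmark; this gives $\tilO(\sqrt{k/\gamma}/\eps^2)$ rounds. The stretch inflates to $3+\eps$ due to the standard triangle-inequality argument for relaying distances through the nearest landmark $r(v)$. The principal technical obstacle, common to Cases (1) and (2), is the overlay simulation of $H$ and $H'$: each skeleton node must know the identifiers of its helpers and of the skeleton neighbors it needs to contact, and the total global bandwidth per node must stay within the model budget. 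Both requirements are discharged by reusing the helper-set construction developed for \cref{thm:routing}, leveraging the randomness of $S$ in Case (1) and of $R$ in Case (2) to guarantee that enough helpers are available around each skeleton node.
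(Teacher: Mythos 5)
Your overall architecture (run many copies of the SSSP routine from \cref{thm:almost_shortest_sssp} on an overlay graph, simulated via helper sets that amplify each overlay node's global bandwidth) is the same as the paper's, and your Case (3) is exactly the paper's argument. However, Case (1) contains a genuine quantitative gap. A source set sampled at rate $k/n$ is far too sparse to serve as the overlay: on a path, a node has a source within $h$ hops w.h.p.\ only if $h \gtrsim (n/k)\log n$, so your claim that every node sees a source within $\tilO(\sqrt{k})$ hops fails whenever $k \in o(n^{2/3})$ --- which is precisely the regime this theorem is meant to close (for $k \geq n^{2/3}$ the bound was already known). The runtime also does not work out: with $S$ at density $k/n$, each source's $\tilO(n/k)$ helpers live at hop distance $\tilT(n/k)$ and must jointly handle $k$ SSSP instances, so one overlay round costs $\tilO(n/k + k^2/n)$, which is $\tilO(\sqrt{k})$ only at $k=\Theta(n^{2/3})$. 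The paper instead samples a \emph{separate} skeleton at the much denser rate $\sqrt{\gamma/k}$ (so helpers sit within $\tilO(\sqrt{k/\gamma})$ hops and each simulates $\tilO(\sqrt{k\gamma})$ instances, balancing to $\tilO(\sqrt{k/\gamma})$ per overlay round), observes that for $k \leq n^{2/3}$ the sampling rate $k/n$ is dominated by $\sqrt{\gamma/k}$ so the random sources can simply be folded into the skeleton, and handles $k \geq n^{2/3}$ by invoking the known $\tilO(n^{1/3}+\sqrt{k})$-round algorithm.

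Case (2) has a related, milder problem: you place the arbitrary source set $S$ directly into the overlay $H'$, but a concentrated $S$ (e.g., all $k$ sources in one small ball) destroys the helper-set guarantees --- nearby nodes would have to serve $\omega(\polylog n)$ sources. The paper sidesteps this by routing each source through a \emph{proxy}: its nearest skeleton node $u_s$. Only the (well-spread) proxies run SSSP instances; the offsets $d_G^h(u_s,s)$ are then made global knowledge via a $k$-dissemination instance (split into $\gamma$ parallel broadcasts to finish in $\tilO(\sqrt{k/\gamma})$ rounds), a step your write-up omits but which is needed for every node to combine $\tild(v,u_s)$ with $d_G^h(u_s,s)$. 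The triangle-inequality accounting then gives the $3+\eps$ stretch as you anticipated.
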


This result matches the lower bound of $\tilOm\big(\!\sqrt{k}\big)$ in the \hybrid model (due to \cite{Kuhn2020}) and the generalization $\tilOm\big(\!\sqrt{k/\gamma}\big)$ in the \hybridpar{\infty}{\gamma} model \cite{Schneider2023}, which hold for randomized algorithms with large stretch $\Omega(\!\sqrt{n}\big)$. This was not known up to our result, as all previous upper bounds were only able to match the lower bound for \smash{$k \geq n^{2/3}$} (see Figure \ref{fig:kssp_complexity_overview}). It is also interesting that the global capacity $\gamma$ does not only simply scale the running time, but can also be used to obtain near optimal solutions for up to $\gamma$ sources in just $\tilO(1)$ rounds.

Our technique is to schedule our SSSP algorithm from \cref{thm:almost_shortest_sssp} to achieve a round complexity for solving $k$-SSP that is better than simply repeating SSSP for $k$ times. To do so, we combine our SSSP algorithm with a framework to efficiently schedule multiple algorithms in parallel on a {skeleton graph}~\cite{Ullman1991} by utilizing {helper sets}~\cite{Kuhn2020}.\footnote{Note that these techniques are also used to prove other results in this paper. Our APSP algorithms use skeleton graphs. Our multi-message unicast algorithm uses an adaptive version of helper sets.}
Using the scheduling framework, we can run multiple instances of SSSP on a skeleton graph efficiently. The distance estimates in the skeleton graph can be turned into good distance estimates in the original graph, thereby allowing us to solve the $k$-SSP problem.

\section{The Graph Parameter Neighborhood Quality}
\label{sec:neigh_qual}

We start by presenting a fundamental graph parameter that describes the complexity of various types of problems in \hybrid and \hybridzero. Leaning on the nomenclature used by prior work on universally optimal distributed algorithms~\cite{Ghaffari2022,haeupler2021universally,haeupler2022hop},
we call our graph parameter the \emph{neighborhood quality} $\NQ$. Informally, $\NQ$ is the minimum distance $t$ such that the ball of radius $t$ around $v$ is sufficiently large to allow $v$ to exchange $\tildeBigOmega{k}$ bits of information, measured in terms of Shannon entropy~\cite{Shannon1948}, with other nodes in $O(t)$ rounds.
Intuitively, the higher quality the neighborhoods have, the smaller the parameter $\NQ$ is. 

Any distributed task can be trivially solved in diameter rounds using the unlimited-bandwidth local network even if nodes are required to learn a huge amount of information, so $\NQ$ has to be, in effect, upper bounded by $D$, which roughly means that neighborhoods play a bigger role on graphs with large diameter, on which global problems become interesting. 

For any node $v$, denote by $\calB_t(v)$ the ball of radius $t$ around $v$ -- that is, the set of all nodes that can reach $v$ with a path of at most $t$ edges, including $v$ itself.

\begin{definition}[Neighborhood Quality]
\label{def:bcast_quality}
\label{def:neigh_qual}
    Given a graph $G = (V, E)$, a number $k > 0$, and a node $v \in V$, we define
    \[
    \NQ(v) = \min \left(\left\{t : \size{\calB_t(v)} \geq k/t\right\} \cup{} \left\{D\right\}\right) \ \ \ \ \text{and} \ \ \ \  \tk(G) = \max_{u \in V} \tk(u).
    \]
\end{definition}

When $G$ is clear from context, we write $\tk$ instead of $\tk(G)$. Intuitively, if the diameter of the graph is not too small, then $\NQ$ strikes a balance between the radius and the size of the neighborhood of any node relative to $k$. This results directly from the definition and is reflected in the following useful observation that we will refer to occasionally.

\begin{observation}
	\label{lem:neigh_qual_min_neigh_size}
	Let $G=(V,E)$ be a graph and let $k \in[n]$. If $\NQ < D$, then $|\calB_{\NQ}(v)| \geq k/\NQ$ and equivalently $\NQ \geq k/|\calB_{\NQ}(v)|$ for any node $v \in V$.
\end{observation}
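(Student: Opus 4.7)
The plan is to unpack the definition of $\NQ = \NQ(G) = \max_{u\in V}\NQ(u)$ and exploit that under the hypothesis $\NQ < D$, the minimum in the per-node definition cannot be attained by the fallback value $D$. I would first fix an arbitrary node $v \in V$ and observe that $\NQ(v) \leq \NQ(G) < D$, so in particular $\NQ(v) \neq D$. Since $\NQ(v) = \min(\{t : |\calB_t(v)| \geq k/t\} \cup \{D\})$, this forces $\NQ(v)$ to lie in the first set, that is, $|\calB_{\NQ(v)}(v)| \geq k/\NQ(v)$.

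Next I would transfer this inequality from radius $\NQ(v)$ to the possibly larger radius $\NQ(G)$ by monotonicity of balls: since $\NQ(v) \leq \NQ(G)$, we have $\calB_{\NQ(v)}(v) \subseteq \calB_{\NQ(G)}(v)$, hence
\[
|\calB_{\NQ}(v)| \;\geq\; |\calB_{\NQ(v)}(v)| \;\geq\; \frac{k}{\NQ(v)} \;\geq\; \frac{k}{\NQ},
\]
where the last inequality again uses $\NQ(v) \leq \NQ$. The equivalent formulation $\NQ \geq k/|\calB_{\NQ}(v)|$ then follows by simply rearranging, noting that $|\calB_{\NQ}(v)| > 0$ so dividing is legitimate.

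Since $v$ was arbitrary, this establishes both stated inequalities for every $v \in V$. There is no real obstacle here; the only subtlety worth flagging in the write-up is that the hypothesis $\NQ < D$ is precisely what is needed to rule out the ``$D$ case'' of the $\min$ in \cref{def:neigh_qual}, and without it one would only get the weaker bound $|\calB_D(v)| = |V|$ with no useful relation to $k/\NQ$. I would present the argument in two short sentences in the final text.
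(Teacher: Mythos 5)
Your proof is correct and matches the paper's intent exactly: the paper states the observation "results directly from the definition" without writing out a proof, and your argument is precisely that direct unpacking — using $\NQ(v) \leq \NQ < D$ to rule out the fallback case of the $\min$, and then monotonicity of balls in the radius to pass from $\NQ(v)$ to $\NQ$.
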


Observe that $\tk$ can be seen as a property of the set of \emph{power graphs} of $G$. A power graph $G^t$ of $G$ has the same node set as $G$ and has an edge $e = \{v, u\}$ if there is a path between $v$ and $u$ in $G$ with at most $t$ edges. The neighborhood quality $\tk$ can be alternatively defined as the minimum value $t$ such that the minimum degree in $G^t$ is at least $k/t$. Intuitively, this makes sense as in \hybrid or \hybridzero within $t$ rounds every node in $G$ can communicate with every neighbor it has in $G^t$. At the same time, in those $t$ rounds each of those nodes can receive $O(t \cdot \log n)$ messages through the global network. Thus the minimum degree of a node in $G^t$ dictates an upper bound on the amount of information a node can exchange in $t$ rounds by combining the global and local networks.

As a warm-up, we show that the parameter $\NQ$ can be computed deterministically in the \hybridzero model in $\tilO\left(\NQ\right)$ rounds. Consequently, for any algorithm design task whose goal is to attain the round complexity $\tilO\left(\NQ\right)$, we can freely assume that $\NQ$ is global knowledge and $\NQ(v)$ is known to each node $v$.

\begin{lemma}[Computing $\NQ$]
	\label{lem:compute_neigh_qual}
 Given an integer $k$, there is a deterministic $\tilO\left(\NQ\right)$-round algorithm that lets each node compute the precise values of $\NQ$ and $\NQ(v)$ in the \hybridzero model. 
\end{lemma}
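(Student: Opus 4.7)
The plan is to run a breadth-first search from every node in parallel using the unlimited local bandwidth, and to use a doubling schedule so that we pay only for the radius we actually need.

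First, I would set up a cheap global communication backbone. At the start of the algorithm I would construct a low-depth overlay tree over all nodes using the $\tilO(1)$-round overlay construction in the \hybridzero model of \cite{gmyr2017distributed} (the same technique invoked elsewhere in the paper). This backbone lets us execute $O(\log n)$-bit aggregation and broadcast primitives in $\tilO(1)$ rounds without having to know in advance which identifiers are in use.

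Second, I would do doubling BFS. For phases $i = 0, 1, 2, \ldots$, let $t_i = 2^i$. In phase $i$, each node $v$ extends its BFS in the local network from radius $t_{i-1}$ to radius $t_i$, which costs $2^{i-1}$ additional rounds (or $1$ for $i = 0$). After phase $i$, each $v$ locally knows $\calB_{t}(v)$, and in particular $|\calB_{t}(v)|$, for every $t \leq t_i$. Node $v$ declares itself \emph{locally done} if either $|\calB_{t}(v)| \geq k/t$ for some $t \leq t_i$, or the last BFS layer was empty (certifying $t_i$ has reached the eccentricity of $v$, so the $D$-term in the definition of $\NQ(v)$ is witnessed). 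At the end of each phase we aggregate the AND of the ``locally done'' bits up the overlay in $\tilO(1)$ rounds and exit the loop as soon as that bit is $1$.

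Third, once we exit with some final $t_{i^\star}$, each node locally computes $\NQ(v) = \min\bigl(\{t \leq t_{i^\star} : |\calB_t(v)| \geq k/t\} \cup \{D\}\bigr)$ from the BFS information it already holds. One more aggregation over the overlay produces $\NQ = \max_{v} \NQ(v)$, which is then broadcast to every node. Because the stopping rule triggers by the first power of two exceeding $\NQ$, we have $t_{i^\star} \leq 2\,\NQ$, so the BFS costs telescope geometrically to $O(\NQ)$ rounds, while the $O(\log \NQ)$ aggregation checks plus the final aggregation and broadcast add only $\tilO(1)$ in total. This yields the claimed $\tilO(\NQ)$ bound.

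The main obstacle is the global termination check: the natural $k$-aggregation primitive of \cref{thm:kaggUB} is itself proved later using this very lemma, so invoking it as a black box would be circular. I would sidestep this by relying only on the overlay-tree primitive of \cite{gmyr2017distributed}, whose cost is independent of $\NQ$ and suffices for aggregating a single $O(\log n)$-bit value per phase. A minor subtlety is that simultaneous BFS from all $n$ sources could be bandwidth-heavy, but this is harmless in \hybridzero since local edges have unlimited capacity.
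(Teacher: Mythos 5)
Your proposal is correct and essentially matches the paper's proof: both let every node explore its local neighborhood in parallel to increasing depth while periodically running a $\tilO(1)$-round single-value aggregation over the \cite{gmyr2017distributed}-style overlay tree (the paper's \cref{lem:hybrid_zero_aggregate}, which is indeed proved independently of this lemma, so your circularity concern is resolved exactly as in the paper) to decide when to stop, and then take a final max-aggregation. The only differences are cosmetic — the paper increments the radius by $1$ and aggregates $\min_v |\calB_t(v)|$ each step rather than doubling and aggregating a done-bit — and your handling of the $D$-term is fine since a node that never meets the size condition has explored the entire graph and can compute $D$ locally.
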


\begin{proof}
    The high-level idea is to let each node locally compute $\NQ(v)$ and then all nodes can compute the maximum of those values to obtain $\NQ(G)$ using \cref{lem:hybrid_zero_aggregate}.  The first step works by each node exploring the local network to increasing depth $t=1,2,3, \ldots$ and locally computing $|\calB_t(v)|$ in parallel. An exploration up to depth $d$ costs $d$ rounds in the local communication network.
	To ensure that we do not look too deep, i.e., beyond $\NQ$, after each step of exploration, we compute \smash{$N_t := \min_{v \in V} |\calB_t(v)|$} in $\tilO(1)$ rounds using \cref{lem:hybrid_zero_aggregate}. We stop once we reach a value $t'$ with $N_{t'} \geq k/t'$, in which case we have found $\NQ := t'$ by \cref{def:neigh_qual}. If the stopping condition is not met even after the entire network is explored, then all nodes know that $\NQ = D$. In any case,  	
	the overall round complexity is $\tilO\left(\NQ\right)$.
\end{proof}

\subsection{Graph Clustering Depending on \texorpdfstring{\NQ}{Neighborhood Quality}}


An important tool that we require frequently, is a clustering with guarantees pertaining to the weak diameter and the size expressed in terms of \NQ. We first need the following definition.

\begin{definition}
\label{def:ruling_sets}
    An $(\alpha,\beta)$-ruling set for $G=(V,E)$ is a subset $W\subseteq{}V$, such that for every $v\in{}V$ there is a $w\in{}W$ with $\hop(v,w)\leq{}\beta$ and for any $w_1,w_2\in{}W$, $w_1\neq{}w_2$, we have $\hop(w_1,w_2)\geq{}\alpha$.
\end{definition}


We use ruling sets to design a clustering algorithm that partitions the node set $V$ into clusters with weak diameter $\tildeBigO{\tk}$ and size $\Theta(k/\tk)$ in $\tildeBigO{\tk}$ rounds deterministically. This clustering algorithm plays a crucial role in our broadcast and unicast algorithms.

\begin{lemma}[Clustering]
\label{lem:clustering_with_weak_diam}
    For any integer $k$, there is a deterministic $\tildeBigO{\tk}$-round algorithm that partitions the node set $V$ into clusters satisfying the following requirements in the \hybridzero model.
    \begin{itemize}
        \item The weak diameter of each cluster $C$ is at most $4\tk\ceil{\log{n}}$.
        \item The size of each cluster $C$ is at least $k/\tk$ and at most $2k/\tk$.
        \item Each cluster $C$ has a leader $r(C)$.
        \item Let $R$ be the set of all cluster leaders. Each node knows whether it is in $R$ and knows to which cluster it belongs.
    \end{itemize}
\end{lemma}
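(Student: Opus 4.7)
The plan is to construct the clustering from a ruling set of carefully chosen parameters and then post-process to enforce the size window. I would first compute a $(2\tk+1,\, 2\tk\lceil\log n\rceil)$-ruling set $W \subseteq V$ deterministically in $\tildeBigO{\tk}$ rounds via iterative doubling over $\lceil\log n\rceil$ phases: in each phase every surviving candidate uses a $2\tk$-hop local BFS (leveraging unlimited local bandwidth) to check whether it has the minimum ID in its $2\tk$-neighborhood, and if so it joins $W$ and removes every surviving candidate within hop-distance $2\tk\lceil\log n\rceil$. Each phase costs $O(\tk\log n)$ local rounds, for a total of $\tildeBigO{\tk}$; the output guarantees pairwise center spacing $\geq 2\tk+1$ and that every node lies within hop-distance $2\tk\lceil\log n\rceil$ of some $w \in W$.

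Next, I would let each node $v$ identify its closest center $w(v) \in W$ (ties broken by ID) via a single parallel BFS from $W$ of depth $2\tk\lceil\log n\rceil$, and define the initial clusters $C_w = \{v \in V : w(v) = w\}$ with leader $w$. Any two members of $C_w$ then lie within hop-distance $4\tk\lceil\log n\rceil$ in $G$ via paths through $w$, yielding the weak-diameter bound. For the size lower bound, the spacing $\geq 2\tk+1$ forces $\hop(v,w') \geq \tk+1 > \hop(v,w)$ for every $v \in \calB_{\tk}(w)$ and every other center $w'$, so $\calB_{\tk}(w) \subseteq C_w$; by \cref{lem:neigh_qual_min_neigh_size} this gives $|C_w| \geq k/\tk$ whenever $\tk < D$, while the edge case $\tk = D$ is handled by declaring the entire graph one cluster.

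Finally, for every oversized cluster $C_w$ with $|C_w| > 2k/\tk$, the leader $w$ collects all member identifiers by extending its local BFS to depth $4\tk\lceil\log n\rceil$ (cost $\tildeBigO{\tk}$, since local bandwidth is unlimited in \hybridzero), partitions them into $\lfloor |C_w|\tk/k \rfloor$ groups whose sizes fall in $[k/\tk,\, 2k/\tk]$, designates the minimum-ID member of each group as its leader, and broadcasts the assignment back along the same BFS tree. Sub-clusters inherit the weak-diameter bound as subsets of $C_w$, and each node thereby learns its (sub-)cluster membership and whether it lies in the leader set $R$. The main obstacle is the deterministic ruling-set construction within the $\tildeBigO{\tk}$ budget; the iterative-doubling approach succeeds here precisely because local communication in \hybridzero is congestion-free, so each simulated round on the $2\tk$-th power graph costs only $O(\tk)$ rounds in $G$ rather than the $\poly(\tk)$ blowup one would face in \congest.
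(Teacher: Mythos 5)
Your overall architecture is exactly the paper's: compute a $(2\tk+1,\,2\tk\ceil{\log n})$-ruling set, assign each node to its closest ruler, get the weak-diameter bound $4\tk\ceil{\log n}$ through the center, get the size lower bound from $\calB_{\tk}(w)\subseteq C_w$ together with \cref{lem:neigh_qual_min_neigh_size}, and split oversized clusters locally. Those parts are fine (and your leader-coordinated split is an immaterial variant of the paper's "every node splits locally," since after the flooding everyone knows the full membership anyway).

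The genuine gap is the ruling-set construction itself, which you flag as the main obstacle and then resolve incorrectly. Your rule "in each phase, every surviving candidate that is the minimum ID in its $2\tk$-neighborhood joins $W$ and kills all candidates within $2\tk\ceil{\log n}$ hops" does not terminate in $\ceil{\log n}$ phases. Consider a path whose identifiers decrease monotonically along it: in every phase exactly one surviving candidate is a local minimum (the smallest surviving ID), so each phase adds a single ruler and clears only a ball of radius $2\tk\ceil{\log n}$, forcing $\Theta\bigl(n/(\tk\log n)\bigr)$ phases and hence $\Theta(n)$ rounds in total --- far exceeding the $\tildeBigO{\tk}$ budget whenever $\tk$ is small. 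The congestion-free local simulation of the power graph does not help here; the problem is the phase count, not the per-phase cost. The correct deterministic route (and the one the paper takes) is to invoke the known $(\mu+1,\mu\ceil{\log n})$-ruling-set algorithm of Kuhn, Maus, and Weidner (building on Awerbuch et al.), which recurses on the \emph{bits of the identifiers} --- one level of recursion per bit, paying an additive $\mu$ in the domination radius each time --- and runs in $O(\mu\log n)$ rounds of \congest, hence of \hybridzero. If you want a self-contained construction rather than a citation, you must reproduce that bit-by-bit recursion; the greedy local-minima rule is not a substitute.
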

\begin{proof}
In this proof, we utilize the following result from~\cite{Kuhn2018a}. Let $\mu$ be any positive integer. A $(\mu+1, \mu\ceil{\log{n}})$-ruling set can be computed deterministically in $O(\mu\log{n})$ rounds in \congest~\cite[Theorem 1.1]{Kuhn2018a}, so the algorithm also applies to \hybridzero. We remark that before the work~\cite{Kuhn2018a}, it was already known that an $(\alpha, \beta)$-ruling set with $\beta \in  O(\alpha \log n)$ can be computed in $O(\alpha \log n)$ rounds deterministically in the \local model~\cite{Awerbuch1989}.

    We compute $\tk$ in $\tildeBigO{\tk}$ rounds by \cref{lem:compute_neigh_qual}. We choose $\mu=2\tk$ and use the algorithm of~\cite{Kuhn2018a} to compute a $(2\tk+1, 2\tk\ceil{\log{n}})$-ruling set in $O(\tk\log{n})$ rounds
    We denote the set of rulers by $R$ -- i.e., $R$ is the set of nodes in the ruling set. For $2\tk\ceil{\log{n}}$ rounds, each node learns its $2\tk\ceil{\log{n}}$ neighborhood and the rulers in it, through the local network. For every $v\in{V}$, let $r(v)$ be the closest ruler by hop distance, with ties broken by minimum identifier. By \cref{def:ruling_sets}, $r(v)$ must be in its $2\tk\ceil{\log{n}}$ neighborhood. By exploring this neighborhood, each node $v$ finds $r(v)$.


Every node $v \in V\setminus R$ joins the cluster of its closest ruler $r(v)$. Observe that any cluster $C$ contains exactly one ruler, so we set this ruler as the cluster leader $r(C)$ and set the cluster identifier of $C$ as the identifier of $r(C)$. \cref{def:ruling_sets} guarantees that the strong diameter of each such cluster is at most $2\beta=4\tk\ceil{\log{n}}$. Thus, for $4\tk\ceil{\log{n}}$ rounds, each node $v$ floods $r(v)$ through the local network, so for every cluster $C$, any $v\in{}C$ knows all the nodes in $C$. 

Let $C$ be a cluster. As for every $r_1\neq{}r_2\in{}R$, $\hop(r_1,r_2)\geq{}\alpha=2\tk+1$, it holds that $\calB_{\tk}(r(C))\subseteq{}C$ -- that is, every node in $u \in \calB_{\tk}(r(C))$ joins $C$, as the closest ruler to $u$ is $r(C)$. By \cref{lem:neigh_qual_min_neigh_size}, $|C|\geq{}|\calB_{\tk}(r(C))|\geq{}k/\tk$. Thus, every cluster has a minimum size of $k/\tk$.

Now, to make sure that our clusters are not too big, each cluster $C$ with $|C|>2k/\tk$ splits deterministically into multiple clusters, until each cluster $C'$ satisfies the condition $k/\tk\leq{}|C'|\leq{}2k/\tk$. This can be computed locally for each cluster without communication.
After this process, we get at most $n\tk/k$ disjoint clusters, each with weak diameter at most $4\tk\ceil{\log{n}}$, and of size $k/\tk\leq{}|C|\leq{}2k/\tk$. The diameter guarantee is weak and not strong due to splitting.
\end{proof}

\subsection{Properties of \texorpdfstring{$\tk$}{Neighborhood Quality}}

Next, we analyze the properties of $\tk$ to compare it to other graph parameters, such as the number of nodes in a graph and its diameter. This allows us to compare our results to existing previous works.

\begin{restatable}{lemma}{boundingTk}
\label{lem:boundingTk}
$\sqrt{\frac{Dk}{3n}} < \tk\leq{}\min{}\left\{D,\sqrt{k}\right\}$.
\end{restatable}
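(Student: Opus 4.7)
The plan is to prove the upper and lower bounds separately, with the lower bound requiring a packing argument along a diameter path. For the upper bound, $\tk \leq D$ is immediate since $D$ is explicitly included in the set over which $\NQ(v)$ is minimized. To establish $\tk \leq \sqrt{k}$, I would set $t = \lceil\sqrt{k}\rceil$ and assume $t \leq D$ (otherwise the $\tk \leq D$ bound already suffices). Since $G$ is connected, every node $v$ has $|\calB_t(v)| \geq t+1 > \sqrt{k} \geq k/t$, so $\NQ(v) \leq t$ for every $v$, hence $\tk \leq \lceil\sqrt{k}\rceil$, with the integer rounding elided in the statement.

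For the lower bound, fix two nodes at hop distance $D$ and a shortest path $v_0, v_1, \ldots, v_D$ between them; since the path is shortest, $\hop(v_i,v_j) = |i-j|$ in $G$. Setting $t = \tk$ and considering the main case $t < D$, I would select the set $S = \{v_{i(2t+1)} : 0 \leq i \leq \lfloor D/(2t+1)\rfloor\}$, which has $|S| > D/(2t+1)$ (using $\lfloor x \rfloor + 1 > x$ for every real $x$). Because any two elements of $S$ are at hop distance at least $2t+1$, the balls $\calB_t(v)$ for $v \in S$ are pairwise disjoint, and because $\tk = t < D$, each such ball satisfies $|\calB_t(v)| \geq k/t$. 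This yields
\[
\frac{D}{2t+1}\cdot\frac{k}{t} \,<\, |S|\cdot\frac{k}{t} \,\leq\, \sum_{v\in S}|\calB_t(v)| \,\leq\, n,
\]
which rearranges to $Dk < nt(2t+1) \leq 3nt^2$ (using $t \geq 1$), and hence $t > \sqrt{Dk/(3n)}$.

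The main obstacle will be handling the edge case $\tk = D$, since the ball condition need not be witnessed at $t = D$. Here one needs $D > \sqrt{Dk/(3n)}$, which either follows automatically in the regime $k \leq nD$ (giving $\sqrt{Dk/(3n)} \leq D/\sqrt{3} < D$) or is implicit in the problem setup, since the bound is only informative when $k < 3nD$. A secondary subtlety is ensuring the strict inequality in the statement: this is obtained by combining the strict lower bound $|S| > D/(2t+1)$ with the non-strict inequality $t(2t+1) \leq 3t^2$ for integer $t \geq 1$, whose product delivers the constant $3$ in the denominator while preserving strictness.
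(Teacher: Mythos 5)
Your proof is correct and follows essentially the same route as the paper's: the upper bound via $\tk\leq D$ by definition plus connectivity giving $|\calB_t(v)|\geq t+1$ at $t\approx\sqrt{k}$, and the lower bound by packing pairwise-disjoint balls of radius $\tk$, each of size at least $k/\tk$, along a diameter path and comparing their total size to $n$. Your explicit treatment of the edge case $\tk=D$ (and of the integrality of $t$) is a point the paper glosses over — its lower-bound argument invokes $|\calB_{\tk}(v)|\geq k/\tk$, which its own Observation only guarantees when $\NQ<D$ — so flagging that the inequality there needs the implicit regime $k<3nD$ is a genuine improvement in care rather than a deviation in method.
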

\begin{proof}
    From \cref{lem:neigh_qual_min_neigh_size}, we know that for all $v\in{}V$ it holds that $\left|\calB_{\tk}(v)\right|\geq{}k/\tk$, so $\frac{n}{\left|\calB_{\tk}(v)\right|}\leq{}\frac{n\tk}{k}$. Thus, there can be at most $\frac{n\tk}{k}$ disjoint balls of radius $\tk$. Consider a shortest path $P$ whose length equals the diameter $D$ of the graph $G$. We can find $\left\lceil(D+1) / (2\tk + 1)\right\rceil$ nodes in $P$ whose balls of radius $\tk$ are disjoint. Specifically, we can just pick the first node $s$ in $P$ and all nodes $v$ in $P$ such that $\dist(s,v)+1$ is an integer multiple of $2\tk + 1$.
    Since we always have $\tk \geq 1$, we infer that $\left\lceil(D+1) / (2\tk + 1)\right\rceil \leq \frac{n\tk}{k}$, which implies that   
    $D+1\leq{}\frac{n\tk}{k}\cdot{}(2\tk+1)\leq{} \frac{3n\tk^2}{k}$. Therefore, we obtain the first inequality $\tk > \sqrt{\frac{Dk}{3n}}$.

    For the second inequality, $\tk\leq{}D$ automatically holds by \cref{def:bcast_quality}. For the rest of the proof, we assume that $k < D$ and let $t = \sqrt{k}$,    
    so $\left|\calB_{t}(v)\right|\geq t = k/t$ for any node $v\in{V}$. 
    Therefore, by \cref{def:bcast_quality} $\tk(v)\leq{} t = \sqrt{k}$ for any node $v\in{V}$, so $\tk\leq{} t = \sqrt{k}$.
\end{proof}

	

\cref{lem:boundingTk} offers the first indication that $\NQ$ is a suitable parameter to describe a universal bound for various problems with parameter $k$, including broadcasting $k$ messages and computing distances to $k$ sources, is obtained by relating $\NQ$ to the previous \emph{existential} lower bound \smash{$\tilOm\big(\!\sqrt k\big)$} of for these problems~\cite{Kuhn2020, augustine2020shortest, CensorHillel2021}. 
\cref{lem:boundingTk} implies that the neighborhood quality $\NQ \in O(\sqrt{k})$ is always at most such an existential lower bound.
As we will later see, the bound is tight in that we have $\NQ \in \Theta(\sqrt{k})$ when $G$ is a path and $k \in O(n^2)$. 
Indeed, graphs that feature an isolated long path have frequently been used to obtain existential lower bounds for shortest paths problems in the \hybrid model~\cite{augustine2020shortest,Kuhn2020, Kuhn2022}.

Next, we show a statement that limits the rate of growth of $\tk$ as $k$ grows. The statement says that for $k'=\alpha{}k$, the value $\tk[k']$ can only be larger than $\tk$ by a factor of $O(\sqrt{\alpha})$. The idea behind the proof is that for any graph, all neighborhoods of radius $\tk$ can learn $\tildeBigO{k}$ messages in $\tk$ rounds. Therefore, if we increase $\tk$ by a factor of $O(\sqrt{\alpha})$, then all neighborhoods of radius $O(\sqrt{\alpha}) \cdot{}\tk$ can learn $\tildeBigO{k \cdot \alpha}$ messages in $O(\sqrt{\alpha}) \cdot{}\tk$ rounds, as the increase in both round complexity and radius contributes to the increase in bandwidth.

\begin{restatable}{lemma}{growthOfTk}
\label{lem:growth_of_tk}
For $\alpha \geq 1$, 
$\tk[\alpha{}k]\leq{}6\sqrt{\alpha}\cdot{}\tk$.
\end{restatable}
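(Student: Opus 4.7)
Write $T := \tk$ and $t' := 6\sqrt{\alpha} T$. The plan is to show that for every $u \in V$ one has $|\calB_{t'}(u)| \geq \alpha k / t' = \sqrt{\alpha} k / (6T)$; this directly yields $\tk[\alpha k](u) \leq t'$ by \cref{def:neigh_qual}, and hence $\tk[\alpha k] \leq 6\sqrt{\alpha} T$. If $t' \geq D$ the bound is immediate from the ``$\cup \{D\}$'' in the definition of $\NQ$, so I would assume $t' < D$, which in particular gives $T < D$, unlocking \cref{lem:neigh_qual_min_neigh_size}: every radius-$T$ ball in $G$ has size at least $k/T$.

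The main step is a disjoint ball-packing argument along a shortest path emanating from $u$, in the spirit of the proof of \cref{lem:boundingTk}. Fix $u$ and suppose first that the eccentricity satisfies $e(u) \geq t' - T$. Then I would walk along a shortest path from $u$ and pick nodes $v_0 = u, v_1, \ldots, v_m$ spaced at intervals of $2T + 1$, where $m = \floor*{(t' - T)/(2T+1)}$. Any two such nodes are more than $2T$ hops apart, so their $T$-balls are pairwise disjoint; moreover, each $T$-ball lies inside $\calB_{t'}(u)$ because its centre is at most $t' - T$ hops from $u$. Since each $T$-ball has at least $k/T$ nodes, summing gives $|\calB_{t'}(u)| \geq (m+1)\,k/T$. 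A short estimate using $T \geq 1$ and $\alpha \geq 1$ shows $m + 1 > (t' - T)/(2T+1) \geq (6\sqrt{\alpha}-1)/3 \geq \sqrt{\alpha}$, so $|\calB_{t'}(u)| > \sqrt{\alpha}\, k/T$, which is already six times the target.

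The remaining case is $e(u) < t' - T$, meaning that the whole graph sits inside $\calB_{t'}(u)$ and hence $|\calB_{t'}(u)| = n$. Here the local ball-packing argument collapses, so I would instead invoke the global inequality $D < 3nT^2/k$ derived inside the proof of \cref{lem:boundingTk} (valid since $T < D$). Combined with $t' = 6\sqrt{\alpha}\,T < D$, this yields $n > 2\sqrt{\alpha}\,k/T$, again comfortably above the required $\sqrt{\alpha}\,k/(6T)$.

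The main obstacle I expect is precisely this boundary case: if $u$ sits in a short portion of the graph, one cannot walk far enough from $u$ to extract the disjoint balls, and the local packing argument breaks. The key point is that whenever the local argument fails, the graph must globally be small, because $T < D$ constrains $D$ quadratically through \cref{lem:boundingTk}, and this global bound is exactly what is needed to finish. Everything else reduces to elementary algebraic estimates in the two cases.
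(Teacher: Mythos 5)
Your proof is correct. The engine is the same as in the paper's argument: both pack pairwise-disjoint $\tk$-balls, each of size at least $k/\tk$ by \cref{lem:neigh_qual_min_neigh_size}, along a shortest path emanating from the node under consideration, and both dispose of the case $6\sqrt{\alpha}\,\tk \geq D$ via the ``$\cup\{D\}$'' clause of \cref{def:neigh_qual}. The one genuine structural difference is how a long enough shortest path is guaranteed. The paper notes that every node has eccentricity at least $D/2$ (otherwise the diameter would be smaller than $D$); since $D > 6\sqrt{\alpha}\,\tk$, this immediately supplies a path of length greater than $3\sqrt{\alpha}\,\tk$, which suffices to pack about $\sqrt{\alpha}$ disjoint balls inside the smaller ball $\calB_{3\sqrt{\alpha}\tk}(v)$, so no further case analysis is needed (the paper in fact establishes the stronger bound $\tk[\alpha k] \leq 3\sqrt{\alpha}\,\tk$ before relaxing it to $6\sqrt{\alpha}\,\tk$). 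You instead aim at the full radius $t' = 6\sqrt{\alpha}\,\tk$, which requires a path of length roughly $(6\sqrt{\alpha}-1)\tk$ that the $D/2$ bound does not provide, and you therefore add a separate low-eccentricity case, correctly resolved through the counting inequality $D < 3n\tk^2/k$ extracted from the proof of \cref{lem:boundingTk} (and correctly noting it needs $\tk < D$). Both routes close; your fallback case is a valid self-contained alternative, while the paper's choice of the smaller target radius is what lets it avoid that case altogether.
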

\begin{proof}
    If $6 \sqrt{\alpha} \cdot \tk \geq D$, then by definition $\tk[\alpha{}k] \leq D \leq 6 \sqrt{\alpha} \cdot \tk$. Thus, assume that $6 \sqrt{\alpha} \cdot \tk < D$. Let $v$ be any node. There must exist a node $u$ such that the hop distance between $v$ and $u$ is at least $D/2 > 3 \sqrt{\alpha} \cdot \tk$, since otherwise the diameter of the graph is less than $D$. Denote by $P = (v = v_1, v_2 \ldots, v_{\ell+1} = u)$ a shortest path between $v$ and $u$ of length $\ell \geq D/2 > 3 \sqrt{\alpha} \cdot \tk$. 
    
    Observe that for any two nodes on the path that are at least $3\tk$ edges apart on the path, their $\tk$-hop balls are disjoint. That is, we claim that for any two nodes $v_i$ and $v_{j}$ such that $j - i \geq 3\tk$, it holds that $\calB_{\tk}(v_i) \cap \calB_{\tk}(v_{j}) = \emptyset$. Assume for the sake of contradiction that $\calB_{\tk}(v_i) \cap \calB_{\tk}(v_{j}) \neq \emptyset$ and take $w \in \calB_{\tk}(v_i) \cap \calB_{\tk}(v_{j})$. It holds that $\hop(v_i, w) \leq \tk$ and $\hop(w, v_{j}) \leq \tk$, and thus $\hop(v_i, v_{j}) \leq 2\tk$. However, as $P$ is a shortest path, $\hop(v_i, v_{j}) = j-i \geq 3\tk > 2\tk$, where the last inequality holds since $\tk \geq 1$, and so we arrive at a contradiction.
    Therefore, we obtain that 
    \begin{align*}
        |\calB_{3 \sqrt{\alpha} \cdot \tk}(v)| &\geq \sum_{i \in [\sqrt{\alpha}]} |\calB_{\tk}(v_{1 + i \cdot 3\tk})| 
        \geq \sqrt{\alpha}\cdot{}\frac{k}{\tk} 
        > \frac{\alpha{}k}{3\sqrt{\alpha}\tk}.
    \end{align*}
    Thus, by \cref{def:neigh_qual}, $\tk[\alpha{}k] \leq 3\sqrt{\alpha} \cdot{} \tk < 6\sqrt{\alpha} \cdot{} \tk$, so we are done. 
\end{proof}

A useful property for our lower bounds is that we can always find a node that has a small neighborhood.\phil{I'm also putting this here since I use it at least twice in lower bounds proofs}

\begin{lemma}
    \label{lem:neighborhood_small}
    There exists a node $v \in V$ such that $|\calB_{r}(v)| < k/r$ for any $r < \NQ$.
\end{lemma}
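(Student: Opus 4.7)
The plan is to extract the desired node directly from the definition of $\NQ(G) = \max_{u \in V} \NQ(u)$. Since $V$ is finite, pick any $v \in V$ that attains this maximum, i.e., $\NQ(v) = \NQ(G) = \NQ$. The claim then reduces to a statement about this single node and its own neighborhood quality.

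Next, I would unpack \cref{def:neigh_qual} applied to $v$: we have $\NQ(v) = \min(\{t : |\calB_t(v)| \geq k/t\} \cup \{D\})$. Fix any $r < \NQ$. Since $r < \NQ(v) \le D$, the value $r$ is strictly smaller than $D$ and strictly smaller than the minimum of the set $\{t : |\calB_t(v)| \geq k/t\}$. Hence $r$ cannot belong to that set, which is exactly the statement $|\calB_r(v)| < k/r$.

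There is no real obstacle here — the entire argument is just a careful reading of the $\min$ in \cref{def:neigh_qual}. The only subtlety worth flagging is the degenerate case $\NQ = D$: even in that case the argument is identical, because the minimum is still witnessed either by $D$ itself or by some smaller $t$, and in both cases every $r < \NQ$ fails the condition $|\calB_r(v)| \geq k/r$. The conclusion follows for this specific $v$, which is what the lemma asserts.
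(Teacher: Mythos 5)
Your proof is correct and follows essentially the same route as the paper's: pick a node $v$ attaining the maximum $\NQ(v) = \NQ(G)$ and observe that any $r < \NQ(v)$ cannot lie in the set $\{t : |\calB_t(v)| \geq k/t\}$ by minimality, hence $|\calB_r(v)| < k/r$. Your extra remark about the degenerate case $\NQ = D$ is a welcome clarification the paper leaves implicit, but the argument is the same.
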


\begin{proof}
    Recall \cref{def:neigh_qual}, which stipulates that $\NQ(G) = \max_{u \in V} \NQ(u)$. Let $v$ be the node that maximizes $\NQ(v)$, i.e., $\NQ(v) = \NQ(G)$.    
    
    The definition $\NQ(v) = \min \left(\left\{t : \size{\calB_t(v)} \geq k/t\right\} \cup{} \left\{D\right\}\right)$ implies that $|\calB_{r}(v)| < k/r$, since $|\calB_{r}(v)| \geq k/r$ would be a contradiction.
\end{proof}

\subsection{Special Graph Families}
The previous algorithms for weaker variants of \kdis all take $\tildeBigO{\sqrt{k}}$ rounds~\cite{augustine2020shortest, anagnostides2021deterministic},  so \cref{lem:boundingTk} implies that our algorithm for \kdis, which costs $\tildeBigO{\tk}$ rounds, is never slower than the previous works and supports a wider variety of cases. 
Conversely, we analyze certain graphs of families where $\tk \in o(\sqrt{k})$, to show that many such graphs exist, beyond just graphs with $D \in o(\sqrt{k})$. We proceed with estimating the value of $\tk$ for paths, cycles, and any $d$-dimensional grid graphs. 

\begin{definition}[Grid graphs]
    A $d$-dimensional grid graph $G=(V,E)$ with $n=m^d$ nodes is the $d$-fold Cartesian product $G= \underbrace{P_m \times \dots \times P_m}_{d \; \text{times}}$ of the $m$-node path $P_m$. 
\end{definition}

The calculation details for the proofs for the following results are deferred to \cref{sec:graph_families}.

\begin{restatable}[$\tk$ in paths and cycles]{theorem}{pathsTk}
\label{theorem:path_graphs_tk}
    For paths and cycles,  $\tk \in \min\left\{\Theta(\sqrt{k}), D\right\}$.
\end{restatable}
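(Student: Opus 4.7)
The upper bound $\tk \leq \min\{\sqrt{k}, D\}$ is immediate: it is already established by \cref{lem:boundingTk}. So the remaining task is the matching lower bound $\tk \in \Omega(\min\{\sqrt{k},D\})$, which by \cref{def:neigh_qual} only requires exhibiting a single node $v$ with $\NQ(v) \in \Omega(\min\{\sqrt{k},D\})$.

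For a path $P_n$, the natural candidate is an endpoint $v$, for which $|\calB_t(v)| = \min(t+1, n)$. The plan is to show that for every $t < c\sqrt{k}$ with $c$ a sufficiently small constant (and $t < D$), we have $t+1 < k/t$, equivalently $t(t+1) < k$: this is a routine one-line calculation from $t^2 \leq c^2 k$. By \cref{def:neigh_qual} this forces $\NQ(v) \geq \min\{c\sqrt{k},D\}$, which suffices. For a cycle $C_n$, every node $v$ satisfies $|\calB_t(v)| = \min(2t+1, n)$, and the same calculation goes through with $c$ adjusted by a $\sqrt{2}$ factor: for $t < c\sqrt{k/2}$ (and $t < D = \lfloor n/2 \rfloor$) one obtains $2t+1 < k/t$, giving $\NQ(v) \in \Omega(\min\{\sqrt{k},D\})$.

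Finally, I would note that the fallback branch of \cref{def:neigh_qual} (the ``$\cup\{D\}$'' term) handles the regime $\sqrt{k} \gg D$ cleanly: once $c\sqrt{k}$ exceeds $D$, the argument above still shows that no $t < D$ satisfies the ball-size inequality at the chosen endpoint (respectively, at any cycle node), so $\NQ(v) = D$ and hence $\tk = D$ in that regime, matching $\min\{\Theta(\sqrt{k}),D\} = D$.

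The only potential subtlety — and thus the part to be most careful about — is keeping the constants consistent so that the bound can be stated as a single $\min$ rather than requiring a case split; this is essentially a bookkeeping issue, ensuring that the constant $c$ produced by the lower-bound calculation is absorbed into the $\Theta(\cdot)$ on $\sqrt{k}$ and that the transition point $\sqrt{k} \approx D$ is handled by the fallback to $D$ in the definition. No deeper argument is needed beyond these elementary ball-size estimates.
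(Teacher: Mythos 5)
Your proposal is correct and follows essentially the same route as the paper: both arguments reduce to the endpoint of the path (resp.\ an arbitrary node of the cycle), use the exact ball sizes $t+1$ (resp.\ $2t+1$), and solve the resulting quadratic inequality $t(t+1)\gtrless k$, with the $\cup\{D\}$ fallback handling the regime $\sqrt{k}\gg D$. The only cosmetic difference is that you import the upper bound from \cref{lem:boundingTk} and prove the lower bound separately, whereas the paper extracts both directions at once from the location of the smallest $t$ satisfying the ball-size condition.
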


\begin{restatable}[$\tk$ in grids]{theorem}{gridGraphsTk}
\label{theorem:grid_graphs_tk}
    For $d$-dimensional grids with $d \in O(1)$, we have
    \[\tk  \in \min\left\{\Theta\left(k^{1/(d+1)}\right), D\right\}.\]
\end{restatable}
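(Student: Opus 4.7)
The plan is to obtain matching upper and lower bounds on $\NQ(G)$ for a $d$-dimensional grid $G$ with $n = m^d$ nodes and diameter $D = d(m-1) \in \Theta(m)$, based on a standard volume estimate: for every node $v$ of the grid and every $t \leq m$, the $L^1$-ball satisfies $|\calB_t(v)| = \Theta(t^d)$ with constants depending only on the (constant) dimension $d$. The upper bound $|\calB_t(v)| \leq c_d t^d$ is immediate from counting lattice points within $L^1$-distance $t$ in $\mathbb{Z}^d$. The matching lower bound $|\calB_t(v)| \geq c'_d t^d$ holds for every node, including boundary and corner nodes, since even a corner retains a full $d$-dimensional orthant of lattice points within distance $t$, which still contains $\Theta(t^d)$ points.

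First I would establish the upper bound. Set $t^\star = \lceil c \cdot k^{1/(d+1)} \rceil$ for a constant $c = c(d)$ chosen so that $c'_d (t^\star)^{d+1} \geq k$. If $t^\star \geq D$, then the definition yields $\NQ(v) \leq D$ for free. Otherwise $t^\star \leq m$, so for every node $v$ we have
\[
|\calB_{t^\star}(v)| \;\geq\; c'_d (t^\star)^d \;\geq\; k/t^\star,
\]
hence $\NQ(v) \leq t^\star$ by \cref{def:neigh_qual}. Taking the maximum over $v$ gives $\NQ(G) \leq t^\star \in O(k^{1/(d+1)})$, and combined with the always-valid $\NQ \leq D$ this yields $\NQ \leq O(\min\{k^{1/(d+1)}, D\})$.

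Next I would prove the matching lower bound using \cref{lem:neighborhood_small}, or directly: for any node $v$, the upper volume bound $|\calB_t(v)| \leq c_d t^d$ implies $|\calB_t(v)| < k/t$ whenever $c_d t^{d+1} < k$, i.e., whenever $t < (k/c_d)^{1/(d+1)}$. By \cref{def:neigh_qual}, this forces $\NQ(v) \geq (k/c_d)^{1/(d+1)}$ unless $\NQ(v) = D$. Either way $\NQ(G) \geq \Omega(\min\{k^{1/(d+1)}, D\})$. Combining the two bounds gives $\tk \in \min\{\Theta(k^{1/(d+1)}), D\}$.

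The main obstacle is keeping the volume estimate clean and uniform across \emph{every} node, since boundary and corner nodes have the smallest balls and determine the tight constants in the upper bound on $\NQ$. The key observation that makes this painless is that an orthant of radius $t$ in $\mathbb{Z}^d$ still has $\Theta(t^d)$ lattice points, so the $\Theta$-estimate is position-independent up to constants as long as $t \leq m$; once $t > m$ the ball saturates the grid and the $D$-cap in the $\min$ absorbs everything.
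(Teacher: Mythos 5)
Your argument is correct in substance, but it takes a different route from the paper's proof of this theorem. The paper first reduces to corner nodes via a monotonicity observation (a node whose balls are pointwise smallest has the largest $\tk(v)$), then computes the corner-node ball sizes \emph{exactly} as $|\calB_r(v)|=\binom{r+d}{d}$ through a telescoping binomial identity, and reads off $\tk$ from the condition $t\binom{t+d}{d}\geq k$. You instead work with uniform two-sided volume estimates $|\calB_t(v)|=\Theta(t^d)$ valid for \emph{every} node at once, which lets you skip both the reduction to corners and the exact lattice-point count; the price is carrying explicit constants $c_d, c'_d$. Your version is essentially the paper's own generalization in \cref{thm:growth} (graphs with $|\calB_r(v)|\in\Omega(r^d)$), specialized back to grids, so it buys robustness and brevity at the cost of the clean closed form; the paper's exact computation buys tight constants and the explicit threshold $k\leq (m-1)\binom{m-1+d}{d}$ at which the behavior switches to $D$.

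One step needs repair: after disposing of $t^\star\geq D$, you assert ``otherwise $t^\star\leq m$,'' but $D=d(m-1)$, so for $d\geq 2$ the range $m<t^\star<D$ is nonempty and there your volume lower bound $|\calB_{t^\star}(v)|\geq c'_d(t^\star)^d$ fails (the ball saturates at $n=m^d<(t^\star)^d$). The conclusion survives because in that regime $t^\star>m>D/d$, so $D\in O(t^\star)$ for constant $d$ and the trivial bound $\NQ(v)\leq D$ already gives $\NQ(v)\in O(\min\{k^{1/(d+1)},D\})$; but you should say this explicitly rather than claim $t^\star\leq m$. Your lower bound is fine as written, since $|\calB_t(v)|\leq c_d t^d$ does hold for all $t$ (for $t>m$ the ball is capped at $n=m^d\leq t^d$), which is exactly what is needed to conclude $\NQ(v)\geq\min\{(k/c_d)^{1/(d+1)},D\}$.
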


Since $D \in \Theta\left(n^{1/d}\right)$, we have $\tk \in \Theta\left(k^{1/(d+1)}\right)$ whenever $k \in O\left(n^{1+1/d}\right)$ in \cref{theorem:grid_graphs_tk}.
This shows a few interesting points. First, when $k = O\left(n^{1+1/d}\right)$ is not too big, we can broadcast $k$ messages in a much better round complexity $\tildeBigO{\tk} \subseteq \tildeBigO{k^{1/(d+1)}}$ than all the existing algorithms in previous works, which cost at least $\tildeBigO{\sqrt{k}}$ rounds. However, once $k$ crosses $\Omega\left(n^{1+1/d}\right)$, we cannot do much better than naively sending all messages via the local network in $O(D)$ rounds.

For any $d \in O(1)$, grid graphs have $O(dn) \subseteq O(n)$ edges, and thus in $\tildeBigO{\tn} = O\left(n^{1/(d+1)}\right)$ rounds, all the nodes can learn the entire graph topology using \cref{thm:optimal_dissemination}, and then they can locally compute APSP exactly. For any $d\geq 2$, this is polynomially faster than the existentially optimal algorithms of~\cite{augustine2020shortest, Kuhn2020, anagnostides2021deterministic} as well as the trivial $O(D)$-round algorithm. 

More generally, we have the following fact.

\begin{theorem}\label{thm:growth}
    Let $G = (V,E)$ be a graph satisfying $|\calB_r(v)| \in \Omega\left(r^d\right)$ for all $v \in V$ and $r \leq D$. We have $D \in O(n^{1/d})$ and $\tk \in \min\left\{D, O(k^{1/(d+1)})\right\}$.
\end{theorem}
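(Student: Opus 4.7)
The plan is to prove both bounds by direct application of the hypothesis $|\calB_r(v)| \geq c\cdot r^d$ (where $c$ is the constant hidden in the $\Omega$) to well-chosen values of $r$.

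First I would establish the diameter bound $D \in O(n^{1/d})$. I would pick any node $v \in V$ and apply the hypothesis at $r = D$. Since $\calB_D(v) \subseteq V$, we have $n \geq |\calB_D(v)| \geq c\cdot D^d$, which rearranges to $D \leq (n/c)^{1/d} \in O(n^{1/d})$. This is the easy step.

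Next I would handle the $\NQ$ bound. The inequality $\NQ \leq D$ is immediate from \cref{def:neigh_qual}, so the substantive claim is $\NQ \in O(k^{1/(d+1)})$ whenever this is $\leq D$. Let $c' = \lceil c^{-1/(d+1)} \rceil$ and set $t = \lceil c' \cdot k^{1/(d+1)} \rceil$; if $t \geq D$ the bound is trivial, so assume $t < D$. Then the hypothesis gives, for every $v \in V$,
\[
|\calB_t(v)| \;\geq\; c\cdot t^d \;\geq\; c\cdot (c')^d\cdot k^{d/(d+1)} \;\geq\; \frac{k^{d/(d+1)}}{c'} \;\geq\; \frac{k}{t},
\]
where the third inequality uses $c\,(c')^{d+1} \geq 1$ by our choice of $c'$, and the last uses $t \geq c' k^{1/(d+1)}$. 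By \cref{def:neigh_qual} this implies $\NQ(v) \leq t \in O(k^{1/(d+1)})$ for every $v$, hence $\NQ = \max_v \NQ(v) \in O(k^{1/(d+1)})$.

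Combining the two cases (depending on whether $t \geq D$ or not), we obtain $\NQ \in \min\{D, O(k^{1/(d+1)})\}$. There is no real obstacle here: the argument is essentially a one-line balancing of the ball-growth lower bound $c\cdot t^d$ against the definitional threshold $k/t$, solved by $t \approx k^{1/(d+1)}$. The only care needed is to keep track of the constant $c$ so that the chosen constant $c'$ actually makes $c\,(c')^{d+1} \geq 1$, and to handle cleanly the degenerate case where the chosen $t$ exceeds $D$ (which is why the $\min$ with $D$ appears in the statement).
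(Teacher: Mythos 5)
Your proof is correct and follows essentially the same approach as the paper's: both bounds come from balancing the ball-growth lower bound $c\cdot r^d$ against $n$ (for the diameter) and against the threshold $k/t$ from \cref{def:neigh_qual} (for $\tk$), solved by $r \approx n^{1/d}$ and $t \approx k^{1/(d+1)}$ respectively. Your version just makes explicit the constant-tracking and the degenerate case $t \geq D$ that the paper's terser argument leaves implicit.
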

\begin{proof}
The diameter bound $D \in O(n^{1/d})$ follows from the fact we may find a number $r' \in O(n^{1/d})$ such that $\calB_{r'}(v) \in \min\left\{n, \Omega(r^d)\right\} = n$ for all $v \in V$, so $D \leq r'$. For the neighborhood quality bound, if we assume that $\tk < D$, then  
 $\tk$ is the smallest number $t$ such that $\size{\calB_t(v)} \geq k/t$. The condition can be rewritten as $t \in O(k^{1/(d+1)})$, as  $|\calB_r(v)| \in \Omega\left(r^d\right)$ for all $v \in V$ and $r \leq D$. Therefore, indeed $\tk \in \min\left\{D, O(k^{1/(d+1)})\right\}$.
\end{proof}

Consequently, for the graph classes considered in \cref{thm:growth}, an $\tildeBigO{\tn}$-round algorithm offers a \emph{polynomial} advantage over the trivial diameter-round algorithm.

\section{Universally Optimal Multi-Message Broadcast}
\label{sec:optimal_broadcast}

In this section, we show our universally optimal broadcasting result. Using the analyses of $\tk$ and the tools developed in \cref{{sec:neigh_qual}}, we show universally optimal algorithms for \kdis and \kagg. 

\subsection{Basic Tools}

We start with introducing some basic tools.

\begin{lemma}[Uniform load balancing]
\label{lemma:load_balancing}
    Given a set of nodes $C$ with weak diameter $d$ and a set of messages $M$ with $|M|$ distributed across $C$, there is an algorithm that when it terminates, each $v\in{}C$ holds at most $\ceil{|M|/|C|}$ messages. The algorithm costs $2d=O(d)$ rounds deterministically in \hybridzero. 
\end{lemma}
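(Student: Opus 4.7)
The plan is to carry out a two-phase procedure entirely over the local network: a convergecast of all messages to the cluster leader $r(C)$, followed by a balanced broadcast back out. The weak-diameter hypothesis states that every pair of nodes in $C$ is within $d$ hops in $G$, so in particular every node of $C$ lies within $d$ hops of $r(C)$. Hence there is a BFS tree $T$ rooted at $r(C)$ and spanning $C$ whose depth is at most $d$. Intermediate nodes of $T$ need not lie in $C$, but this is harmless because the local mode of \hybridzero is \local-style with unbounded per-edge bandwidth.

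In Phase 1 ($d$ rounds) every node pushes up $T$ all messages it currently holds. Because per-edge bandwidth is unrestricted, by round $d$ the root $r(C)$ has received all of $M$, and in the process has also learned the full membership of $C$ (hence $|C|$). In Phase 2 ($d$ rounds) $r(C)$ deterministically partitions $M$ into $|C|$ consecutive blocks of at most $\lceil |M|/|C|\rceil$ messages each, using canonical orderings of $M$ (say, by source identifier and payload) and of $C$ (by node identifier), and assigns block $i$ to the $i$-th member of $C$. It then pushes these assignments back down $T$, with each internal node forwarding to every child exactly the messages addressed to that child's subtree. Since $T$ has depth at most $d$, within $d$ further rounds every $v\in C$ has received its assigned block and discards everything else, giving at most $\lceil |M|/|C|\rceil$ messages per node.

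The main obstacle is that $T$ is not given in advance and must be provided within the $2d$-round budget. The cleanest workaround I would use is to replace the explicit tree by \emph{tree-oblivious flooding}: for $d$ rounds each node transmits to every neighbor the union of all cluster-tagged messages and member identifiers it has seen so far. By the weak-diameter property, every $v \in C$ accumulates the full set $M$ and the full membership of $C$ by round $d$; after that each $v$ locally computes the same canonical assignment described above and simply keeps only its own block, needing no further communication. This variant comfortably fits within $2d$ rounds, is deterministic, uses only local edges (no global communication), and does not depend on any random choices, matching the claim of the lemma.
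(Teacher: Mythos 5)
Your final flooding-based argument is correct and is essentially the paper's proof: the paper likewise floods all messages and the identifiers of $C$ for $d$ rounds, then has the minimum-identifier node compute the allocation and flood it back for another $d$ rounds. Your variant, where every node computes the same canonical allocation locally from the common knowledge, is a valid (and slightly leaner) version that avoids the second flooding phase; the initial BFS-tree detour is unnecessary, as you yourself note.
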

\begin{proof}
    In $d$ rounds, all nodes flood the messages and identifiers of $C$. The minimum identifier node then computes an allocation such that each $v\in{}C$ is responsible for at most $\ceil{k/|C|}$ messages and floods the allocation for another $d$ rounds, so it reaches all $v\in{}C$.
\end{proof}

For any model with unlimited-bandwidth local communication, we use the following term throughout the paper, which we define formally.

\begin{definition} [Flooding]
\label{def:flooding}
    Flooding information through the local network is sending that information through all incident local edges of all nodes. In subsequent rounds, the nodes collect the messages they receive and continue to send them as well. After $t$ rounds, every node $v$ knows all of the information that was held by any node in its $t$-hop neighborhood before the flooding began.
\end{definition}

\paragraph{Overlay networks.} A basic tool in designing algorithms in the \hybrid model and its variants is to construct a \emph{overlay network}, which is a virtual graph on the same node set in the original graph $G$ that has some nice properties that can facilitate communication between nodes that are far away in $G$ using limited-bandwidth global communication. For example, it was shown in~\cite{Augustine2019} that a \emph{butterfly} network, which is a small-diameter constant-degree graph, can be constructed efficiently in \NCC. 
It was shown in~\cite[Theorem 2]{gmyr2017distributed} 
that given a connected graph $G$ of $n$ nodes and polylogarithmic maximum degree, there is an $O(\log^2{n})$-round deterministic algorithm that constructs a constant degree virtual tree $T$ of depth $O(\log{n})$ that contains all nodes of $G$ and is rooted at the node with the highest identifier in \hybridzero. Each node in $G$ knows the identifiers of its parent and children in $T$.
The round complexity can be further improved to $O(\log n)$ if randomness is allowed. Specifically, for any graph $G$, it was shown in~\cite{Goette2021} that in \ncczero, there is an $O(\log n)$-round algorithm that constructs a constant degree virtual rooted tree $T$ of depth $O(\log{n})$ w.h.p.
Both algorithms are not directly applicable to our setting, as we need a deterministic virtual tree construction that works for any graph $G$. We show how to adapt the deterministic overlay network construction of \cite{gmyr2017distributed} to this setting.

\begin{restatable}[Virtual tree]{lemma}{treeWithoutIDs}
\label{lem:construct_tree_without_ids}
In the \hybridzero model, there is a deterministic polylogarithmic-round algorithm that constructs a constant degree virtual rooted tree $T$ of depth $O(\log{n})$ over the set of all nodes in $G$. By the end of the algorithm, each node in $G$ knows the identifiers of its parent and children in $T$.
\end{restatable}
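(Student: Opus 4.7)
The main obstacle is that $G$ may have arbitrarily large maximum degree, whereas Theorem~2 of~\cite{gmyr2017distributed} requires a polylogarithmic maximum degree. The plan is therefore to first construct, deterministically in polylog rounds, a connected spanning subgraph $H \subseteq G$ with $V(H) = V$ and maximum degree $\polylog n$, and then apply the overlay construction of~\cite{gmyr2017distributed} to $H$ as a black box. Because $H \subseteq G$, any round of the overlay algorithm that uses local edges of $H$ can be simulated in $O(1)$ rounds over the local edges of $G$ at no overhead, and global communication is unchanged under \ncczero.

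For the construction of $H$, the plan is to run a Bor\r{u}vka-style spanning-tree algorithm entirely over the local edges of $G$. Initially each node forms its own singleton cluster; in each of $O(\log n)$ phases, each cluster selects one canonical outgoing edge via deterministic ID-based tie-breaking and merges with its neighbor across that edge. Since clusters formed after $i$ phases have weak diameter $O(i) \subseteq O(\log n)$, each phase can be carried out in $O(\log n)$ local rounds by flooding cluster IDs and candidate merge edges along a within-cluster spanning tree maintained incrementally across phases. After $O(\log n)$ phases the accumulated merge edges form a spanning tree of $G$. To enforce a polylog degree bound on $H$, the at-most-one-new-merge-edge-per-cluster is redistributed so that each \emph{physical} endpoint accepts at most $O(\log n)$ new tree edges per phase, routing any overflow along its cluster's internal spanning tree. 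This preserves connectivity while yielding $\Delta(H) \in O(\log^2 n)$ across all phases.

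With $H$ constructed in $O(\log^2 n)$ rounds, I would then invoke~\cite[Theorem~2]{gmyr2017distributed} on $H$ to obtain in an additional $O(\log^2 n)$ rounds a constant-degree virtual rooted tree $T$ of depth $O(\log n)$ in which every node knows the identifiers of its parent and children. The total round complexity remains polylogarithmic and the whole procedure is deterministic.

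The main technical difficulty is the degree-balancing step, as naive Bor\r{u}vka merges can concentrate arbitrarily many tree edges at a single boundary node. The redistribution via intra-cluster spanning trees, supported by the deterministic ruling-set and ID-based tie-breaking primitives already used in \cref{lem:clustering_with_weak_diam}, is the essential new ingredient over the black-box application of~\cite{gmyr2017distributed}. A secondary subtlety is that, because identifiers in \hybridzero are arbitrary strings in $[n^c]$, all tie-breaking rules must operate only on identifiers that each participating cluster can locally witness; restricting comparisons to endpoints of candidate merge edges (known to both clusters by the end of each phase's intra-cluster flood) avoids this issue cleanly.
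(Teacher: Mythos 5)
Your approach breaks at its very first step: you require $H$ to be a connected spanning \emph{subgraph} of $G$ with polylogarithmic maximum degree, and no such subgraph exists in general. Take $G = K_{1,n-1}$, the star: every connected spanning subgraph must contain all $n-1$ edges incident to the center (removing any of them isolates a leaf), so its maximum degree is $n-1$. Hence no algorithm can produce the object you plan to hand to \cite[Theorem 2]{gmyr2017distributed}. The same example shows why your degree-redistribution step cannot work: in the first Bor\r{u}vka phase all $n-1$ singleton leaf clusters select their unique edge to the center, and there is no intra-cluster tree along which to ``route the overflow'' --- the leaves have no other edges. A secondary, independent problem is your claim that clusters after $i$ phases have weak diameter $O(i)$: a single Bor\r{u}vka phase can chain together many clusters (or merge many clusters into one center), so the diameter grows multiplicatively per phase rather than additively; taming this growth is precisely the classical difficulty in distributed spanning-tree computation and is not resolved by your sketch.

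The missing idea is that $H$ need not be a subgraph of $G$ at all --- it only needs to be a \emph{virtual} graph whose edges join nodes that are within $\polylog n$ hops of each other in $G$, so that one \local round on $H$ can be simulated in $\polylog n$ rounds on $G$. The paper obtains such an $H$ from a deterministic \emph{sparse neighborhood cover}~\cite[Corollary 3.15]{rozhovn2020polylogarithmic}: overlapping clusters of polylogarithmic diameter such that each $\calB_1(v)$ lies entirely in some cluster and each node lies in at most $\polylog n$ clusters. Each cluster locally builds a constant-degree virtual tree over its members; the union of these trees is connected (every edge of $G$ has both endpoints in a common cluster), has polylogarithmic maximum degree (each node is in few clusters), and its virtual edges join nodes of a common small-diameter cluster, enabling efficient simulation. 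Applying \cite[Theorem 2]{gmyr2017distributed} to this $H$ then yields the lemma. For the star graph, this construction simply places virtual edges between leaves --- exactly what your subgraph constraint forbids.
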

\begin{proof}
To adapt the virtual tree construction of \cite{gmyr2017distributed}, we just need to show that a virtual graph $H$ with a polylogarithmic maximum degree over the set of all nodes $V$ can be constructed in polylogarithmic rounds deterministically in such a way that each round of \LOCAL in $H$ can be simulated in polylogarithmic rounds in $G$.
Given $H$, we may apply \cite{gmyr2017distributed} to $H$ to obtain the desired tree $T$. 
To construct such a graph $H$, we use a \emph{sparse neighborhood cover}, which is a clustering of the node set of the original graph $G$ into overlapping clusters of polylogarithmic diameter such that for each node in $G$, its 1-hop neighborhood $\calB_1(v)$ is entirely contained in at least one of the clusters, and each node is in at
most polylogarithmic number of clusters. It was shown in~\cite[Corollary 3.15]{rozhovn2020polylogarithmic} that such a sparse neighborhood cover can be constructed deterministically in polylogarithmic rounds in the \local model. Given a sparse neighborhood cover, a desired overlay network $H$ can be obtained by letting each cluster compute a constant-degree virtual tree and then taking the union of all these trees. Since any two adjacent nodes in $H$ are within the same polylogarithmic-diameter, indeed each round of \LOCAL in $H$ can be simulated using polylogarithmic rounds in $G$.
\end{proof}


Using \cref{lem:construct_tree_without_ids}, we immediately obtain the following solution for \kagg for the special case of $k=1$. As we will later see, our \kdis and \kagg algorithms for general $k$ utilize this result.

\begin{lemma}[Basic deterministic aggregation]
\label{lem:hybrid_zero_aggregate}
    For $k = 1$, it is possible to solve \kagg and \kdis deterministically in polylogarithmic rounds in \hybridzero.
\end{lemma}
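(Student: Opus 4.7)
The plan is to reduce both tasks to communication on a virtual rooted tree provided by \cref{lem:construct_tree_without_ids}, and then do a standard convergecast/broadcast along that tree.

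First I would invoke \cref{lem:construct_tree_without_ids} to construct, deterministically in polylogarithmic rounds, a virtual rooted tree $T$ over all of $V$ with constant degree and depth $O(\log n)$, such that every node knows the identifiers of its parent and children. Since the tree is an overlay, a node only ever needs to exchange messages with a constant number of designated partners whose identifiers it already knows, which is exactly what \hybridzero permits.

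For \kdis with $k=1$, let $v^\star$ be the unique node holding the message $m$. I would first route $m$ from $v^\star$ up to the root of $T$ along the unique tree path, which takes $O(\log n)$ global communication rounds (each hop is a single $O(\log n)$-bit message between a node and its parent, both of whom know each other's identifiers). Then I would flood $m$ down $T$ from the root: in each round every node forwards $m$ to each of its constantly many children. Because $T$ has depth $O(\log n)$ and constant degree, this takes $O(\log n)$ rounds, and every node trivially respects the $O(\log n)$ messages-per-round budget of \ncczero.

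For \kagg with $k=1$, each node $v$ holds one value $f(v) \in X$ where $|X|\in n^{O(1)}$, so a value fits in an $O(\log n)$-bit global message. I would perform a convergecast on $T$: processing levels from the leaves upward, each internal node waits to receive partial aggregates from all its (constantly many) children, combines them with its own value using $F$, and forwards the result to its parent. After $O(\log n)$ rounds the root holds $F(f(v_1),\dots,f(v_n))$. I would then broadcast this aggregate down $T$ exactly as in the \kdis case. Correctness of the aggregation step uses that $F$ is associative and commutative, so the order in which children's partial aggregates are combined is irrelevant.

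The only step that is not entirely routine is ensuring that the tree operations really do take polylogarithmic rounds in \hybridzero, but this is immediate: each tree round involves only communication between a node and its $O(1)$ parent/children, whose identifiers are already known after the overlay is built, and each message is a single $O(\log n)$-bit value from $X$ (or the broadcast message). Summing the overlay construction cost from \cref{lem:construct_tree_without_ids} with the $O(\log n)$ convergecast and broadcast phases yields a polylogarithmic-round deterministic algorithm in \hybridzero, as required.
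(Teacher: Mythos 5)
Your proposal is correct and matches the paper's proof: both construct the virtual tree of \cref{lem:construct_tree_without_ids} and then perform a convergecast to the root followed by a broadcast down the tree, with the only cosmetic difference being that the paper first runs a BFS so nodes learn their levels (which your level-by-level convergecast implicitly also needs, and which is equally routine). No gaps.
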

\begin{proof}
Construct a virtual tree $T$ using \cref{lem:construct_tree_without_ids}. By a BFS from the root, all nodes can learn its level in $T$. To solve \kagg, by reversing the communication pattern of BFS, the root of $T$ learns the solution for the aggregation task. By a BFS from the root again, all nodes learn the solution. The \kdis problem can be solved similarly.
\end{proof} 

For any $k \in \tilO(1)$,  \cref{lem:hybrid_zero_aggregate} implies that \kdis can also be solved in $\tilO(1)$ rounds by simply running $k \in \tilO(1)$ broadcast instances in parallel and using the unique identifiers of the initial senders to distinguish between different instances.

We need the following helper lemma on pruning virtual trees.

\begin{lemma}[Pruning]
\label{thm:ncc_tree_pruning}
    Let $T=(V,E_T)$ be a virtual tree rooted at $r$, maximum degree $c \in O(1)$ and depth $d$. Given some function $f: V \rightarrow \{0, 1\}$ such that $f(v)$ is known to $v$ for all $v \in V$, there is an algorithm that constructs a virtual tree $T'=(U,E_{T'})$, where $U=\{v \in V \mid f(v) = 1\}\subseteq{}V$, with maximum degree $c' \in O(cd)$ and depth $d' \leq d$. The construction takes $O(d^2)$ rounds deterministically in \hybridzero. 
\end{lemma}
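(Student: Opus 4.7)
The plan is to build $T'$ by a top-down assignment of $T'$-parent pointers, followed by a bottom-up reconciliation of child pointers with degree balancing. Throughout, I exploit that $T$ has constant degree, so one synchronous round along every edge of $T$ can be simulated in $O(1)$ rounds of \hybridzero by exchanging $O(1)$ global messages between virtual-tree neighbors; hence an $O(d)$-round wave on $T$ costs $O(d)$ rounds of \hybridzero overall.

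In the first phase, I run a top-down wave on $T$ so that every $v \in V$ learns $a(v)$, its nearest strict ancestor in $U$, or $\bot$ if no such ancestor exists. Concretely, the root sends either itself (when $f(r)=1$) or $\bot$ to each of its children; each non-root node, upon receiving token $a'$ from its parent $p$, forwards $p$ if $f(p)=1$ and $a'$ otherwise. After $O(d)$ rounds each $v \in U$ tentatively adopts $a(v)$ as its $T'$-parent. The $v \in U$ with $a(v) = \bot$ form a set $R$; using Lemma \ref{lem:hybrid_zero_aggregate}, I elect its minimum-identifier element $r'$ as the root of $T'$ and attach the remaining members of $R$ below $r'$ via a shallow balanced sub-tree whenever $|R|$ is large.

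Next, each $u \in U$ needs to discover its $T'$-children. Naively having every $v \in U$ announce itself to $a(v)$ in one round could overload a heavy $a(v)$, so I instead run a bottom-up aggregation along $T$ in $d$ iterations. At each non-$U$ node I maintain a \emph{bundle} of currently unattached $U$-descendants; whenever the bundle that would attach to a single $u \in U$ would push $u$'s degree past the cap $O(cd)$, the bundle is re-rooted into a shallow balanced sub-tree of $T'$ whose internal nodes are drawn from the bundle itself, so that only one representative actually attaches to $u$. Each iteration performs one $O(d)$-round aggregation pass along $T$, for a total of $O(d^2)$ rounds.

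The main obstacle is matching the three requirements simultaneously: degree $c' \in O(cd)$, depth $d' \leq d$, and round complexity $O(d^2)$. The degree bound follows because each $u \in U$ receives at most one representative per $T$-child, hence at most $c$ direct children, while the balancing sub-trees cap internal-node degree at $O(cd)$. The depth bound holds because every additional $T'$-edge introduced by balancing is paired with a strictly deeper $T$-ancestor, so no $v \in U$ ends up at $T'$-depth exceeding its $T$-depth, which is at most $d$. I expect the bookkeeping for the balanced sub-trees to be the most delicate part, since the depth-$\leq d$ guarantee leaves essentially no slack at the deepest nodes of $T$ and therefore requires the balancing step to spend only $T$-depth that is already available below the re-routing node, never introducing genuinely new levels beyond depth $d$.
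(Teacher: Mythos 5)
Your construction is genuinely different from the paper's: you try to realize the ``nearest $U$-ancestor'' tree and then repair its degrees with balanced sub-trees, whereas the paper never forms that tree at all -- it recursively walks from each non-$U$ node down to the first $U$-node $u^\star$ in its subtree and \emph{contracts} that path into $u^\star$, reassigning the $O(cd)$ children of the removed path nodes to $u^\star$. The crucial consequence of the paper's route is that no node ever has to learn more than $O(cd)$ identifiers, so every step is a constant-size message exchange along $T$.

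This is exactly where your proposal has a genuine gap. The edges of a virtual tree are realized over the global network, so each node can send and receive only $O(\log n)$ messages of $O(\log n)$ bits per round; a single ``$O(d)$-round aggregation pass'' can therefore only move $O(1)$ words per virtual edge per round. But your Phase~2 requires materializing a \emph{bundle} -- the set of unattached $U$-descendants -- at some node, and then informing each internal node of the balanced sub-tree of the identifiers of its $O(cd)$ children. A bundle can contain up to $c^{\,d-1}$ nodes (e.g., $u$ is the root and all leaves of $T$ lie in $U$ with no other $U$-nodes), so funnelling its membership through $u$ or through any single virtual edge costs $\Omega(c^{\,d}/\log^2 n)$ rounds, far exceeding $O(d^2)$. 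The same objection applies to Phase~1: the set $R$ of $U$-nodes with no $U$-ancestor can have size $\Theta(n)$, and electing $r'$ via \cref{lem:hybrid_zero_aggregate} does not tell $r'$ (or anyone) the identifiers needed to wire up a balanced sub-tree over $R$. Your depth argument for the balancing trees is also delicate (a bundle member at $T$-depth $j+1$ has zero slack, so it must be a direct child of the bundle representative, forcing a $T$-depth-ordered layout), but that part is repairable; the unaccounted-for congestion of distributing the balanced sub-tree structure is the step that breaks the $O(d^2)$ bound as written. To fix it you would need a pointer-only mechanism -- which is essentially what the paper's path-contraction achieves.
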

\begin{proof}
    Observe that every $v\in V$ knows whether $v\in U$. For every $v \in V$, denote by $T(v)$ the subtree of $T$ rooted at $v$. In the first step, we let every $v \in V$ compute $|U \cap T(v)|$ by each node sending up the tree how many nodes in $U$ are in its subtree. This takes $O(d)$ rounds.

    We design a recursive algorithm $\mathcal{A}$ to construct the desired tree $T'$. 
    For each node $v \in V$, we write $\mathcal{A}(v)$ to denote the algorithm $\mathcal{A}$ applied to the subtree $T(v)$. Given a node $v$ in $T$, the algorithm $\mathcal{A}(v)$ works as follows.
    \begin{itemize}
        \item If $|U \cap T(v)| = 0$, then $v$ removes itself from the tree and notifies all its children $u$ to run $\mathcal{A}(u)$.
        \item If $v \in U$, then $v$ notifies all its children $u$ to run $\mathcal{A}(u)$.
        \item If $|U \cap T(v)| \neq 0$ and $v \notin U$, then $v$ performs a walk down the tree, each time choosing to go to a child that has some nodes of $U$ in its subtree until the walk reaches a node $u^\star$ that belongs to $U$. Let $P = (v, \ldots, u^\star)$ denote this walk. Contracting the path $P$ into $u^\star$ by removing all nodes in $P$ except for $u^\star$ and resetting the parent of all the children of the removed nodes to be $u^\star$. After the contraction,  $u^\star$  notifies all its children $u$ to run $\mathcal{A}(u)$.
    \end{itemize}

Each recursive call costs $O(d)$ rounds.
    Since all recursive calls are done in parallel, the algorithm $\mathcal{A}(r)$ takes $O(d^2)$ rounds.
    For the rest of the proof, we analyze the tree $T'$ constructed by the algorithm $\mathcal{A}(r)$. From the algorithm description, the node set of $T'$ is exactly $U$.
    In the algorithm, the only way that a node $u$ can gain more children is when $u = u^\star$ for some recursive call, in which case the degree of $u$ becomes $O(cd)$. Since this can happen at most once, we infer that the maximum degree of $T'$ is at most $c' \in O(cd)$. From the description of the algorithm, the depth of the current tree never increase, so the depth of $T'$ is at most $d' = d$.
\end{proof}

Combining \cref{lem:construct_tree_without_ids,thm:ncc_tree_pruning}, we obtain the following result.

\begin{lemma}[Virtual tree on a subset]
\label{lem:construct_tree_without_ids2}
Given some function $f: V \rightarrow \{0, 1\}$ such that $f(v)$ is known to $v$ for all $v \in V$, there is a deterministic polylogarithmic-round algorithm in \hybridzero that constructs a virtual rooted tree $T$ of maximum degree $O(\log n)$ and depth $O(\log{n})$ over the set $U=\{v \in V \mid f(v) = 1\}$. By the end of the algorithm, each node in $T$ knows the identifiers of its parent and children in $T$.
\end{lemma}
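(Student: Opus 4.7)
The plan is essentially to compose the two previous lemmas in sequence. First, I would invoke \cref{lem:construct_tree_without_ids} to construct, in polylogarithmic rounds deterministically in \hybridzero, a virtual rooted tree $T_0$ over the entire node set $V$ whose maximum degree is a constant $c \in O(1)$ and whose depth is $d \in O(\log n)$. After this step, every node in $V$ knows the identifiers of its parent and children in $T_0$, so $T_0$ is available as a usable overlay structure.

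Next, I would apply \cref{thm:ncc_tree_pruning} to $T_0$ with the given indicator function $f$. Since each node $v$ already knows $f(v)$, the precondition of the pruning lemma is satisfied. Invoking it yields a virtual rooted tree $T$ whose node set is exactly $U=\{v \in V \mid f(v) = 1\}$, with maximum degree $O(c \cdot d) = O(\log n)$ and depth at most $d = O(\log n)$, which matches the parameters claimed in the statement. By the guarantees of \cref{thm:ncc_tree_pruning}, this step takes $O(d^2) = O(\log^2 n)$ rounds deterministically in \hybridzero, and upon termination each node in $T$ locally knows its parent and its children in $T$ (since the contraction operations used in the pruning explicitly reset these parent/child pointers to members of $U$).

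Combining the two steps, the total round complexity is polylogarithmic and deterministic, which is exactly what the lemma asserts. The only potential subtlety is to verify that the pruning lemma's output really does inform each surviving node of its new parent and children rather than just producing an abstract tree; this is addressed by noting that in the path-contraction step of \cref{thm:ncc_tree_pruning}, the surviving node $u^\star$ broadcasts its identifier to its newly adopted children along the global network (they know its identifier because the walk traversed them), and the children in turn are known to $u^\star$ because the removed intermediate nodes report them up before removal. Since these bookkeeping exchanges happen along paths of length at most $d$ in $T_0$, they fit within the $O(d^2)$ round budget already accounted for, so no additional overhead is incurred and the resulting tree $T$ is fully specified at every surviving node.
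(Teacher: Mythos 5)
Your proposal is correct and follows exactly the paper's own argument: construct a constant-degree, $O(\log n)$-depth tree over all of $V$ via \cref{lem:construct_tree_without_ids}, then prune it to $U$ with \cref{thm:ncc_tree_pruning}, giving maximum degree $O(cd)\subseteq O(\log n)$, depth at most $d\in O(\log n)$, and $O(d^2)$ rounds overall. The extra bookkeeping you note about parent/child identifiers is already handled inside the pruning lemma, so no further work is needed.
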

\begin{proof}
 We first create a virtual tree $T_1$ over $V$, with  maximum degree $c \in O(1)$ and depth $d \in O(\log{n})$ using \cref{lem:construct_tree_without_ids}.
 After that, we apply the pruning algorithm of \cref{lem:construct_tree_without_ids2} to $T_1$ and $f$ to obtain a virtual tree $T_2$ over $U$, with maximum degree $c' \in O(cd) \subseteq O(\log n)$ and depth $d' \leq d \in O(\log{n})$, satisfying all the requirements.
\end{proof}

\begin{figure}[ht!]
  \centering
\includegraphics[width=0.5\textwidth]{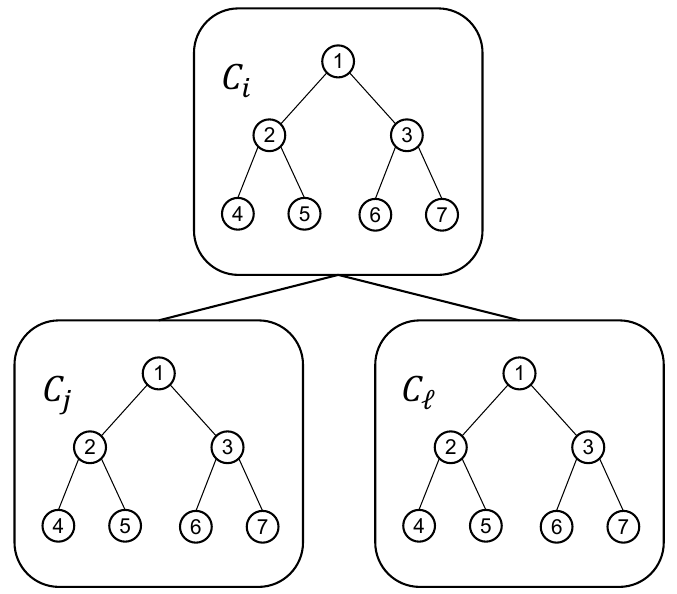}
  \caption{Overview of the proof of \cref{thm:optimal_dissemination}. We first create clusters with $\tildeBigO{\tk}$ weak diameter and roughly the same number of nodes, and then we construct a logical tree of the clusters, with logarithmic maximum degree and depth. Inside each cluster, we create a logical binary tree over the nodes of the cluster. As the clusters are roughly of the same size, we can ensure that the trees inside the clusters have the exact same shape. We ensure that for any two neighboring clusters in the cluster tree, the nodes in their internal trees know their respective nodes in the tree of the other cluster -- e.g.,~node $3$ in cluster $C_i$ knows the identifier of nodes $3$ in $C_j, C_\ell$ and can communicate directly with them using global communication. Once we are done constructing all of these trees, we propagate all the $k$ messages in the graph up to the top of the cluster tree, and then we propagate them back down to all the clusters to ensure that every node in the graph receives all the messages.} 
  \label{fig:notations}
\end{figure}

\subsection{\texorpdfstring{Solving \kdis and \kagg}{Solving k-dissemination and k-aggregation}}

We are now ready to prove \cref{thm:optimal_dissemination}.
\kdisUP*
\begin{proof}
\label{optimal_dissem_proof}
The algorithm consists of several phases: clustering, cluster-chaining, load balancing, and dissemination. See \cref{fig:notations} for an overview of our algorithm. The clustering phase ensures that we partition the nodes to disjoint clusters of similar size such that the weak diameter of each cluster is small. 
In the cluster-chaining phase, we order the clusters in a logical tree with logarithmic maximum degree and depth, and let the nodes of each cluster know the nodes of its parent and children clusters. 
In the dissemination phase, we trickle all the tokens up to the root cluster using the global network, and the chaining we devised in the cluster-chaining phase. Then, we trickle the tokens down the tree, such that each cluster learns all the tokens. 

We begin by computing $k$ by summing how many tokens each node holds using \cref{lem:hybrid_zero_aggregate} in $\tildeBigO{1}$ rounds, and then we compute $\tk$ in $\tildeBigO{\tk}$ rounds by \cref{lem:compute_neigh_qual}.

\paragraph{Clustering.} We wish to create a partition of the nodes in the graph into clusters of roughly the same size and with small weak diameter. We execute \cref{lem:clustering_with_weak_diam} with parameter $k$, which takes $\tildeBigO{\tk}$ rounds. In the subsequent discussion, let $R$ be the set of cluster leaders for the clustering computed by \cref{lem:clustering_with_weak_diam}.

We create a logical binary tree $\mathbb{T}_i$ of each cluster $C_i$, with   depth  $O(\log n)$, and with $r(C_i)$ being the root of $\mathbb{T}_i$.
Within each cluster $C_i$, we rank its nodes according to the in-order traversal of the logical binary tree $\mathbb{T}_i$. 
We desire for $\mathbb{T}_i$ to have exactly $2k/\tk$ nodes and have the same structure for all clusters $C_i$ in the clustering. As $k/\tk\leq{}|C_i|\leq{}2k/\tk$, this can be achieved by letting some of the nodes in $C_i$ to simulate two nodes in $\mathbb{T}_i$.

\paragraph{Cluster chaining.} This phase consists of two sub-phases. We first create a logical tree $\mathbb{T}_\diamond$ over the set of all clusters, with logarithmic maximum degree and depth. After that, for any two clusters $C_i$ and $C_j$ connected by an edge in $\mathbb{T}_\diamond$, we create a bijective mapping between the nodes in the two clusters with help from the 
 internal trees  $\mathbb{T}_i$ and $\mathbb{T}_j$.

\paragraph{Subphase 1: Building the cluster tree.}
In the first sub-phase, we build a virtual tree $\mathbb{T}_\diamond$ connecting all clusters, as follows.
Recall that each node knows whether it is a cluster leader or not. Thus, we define a function $f$ where every $v \in V$ sets $f(v) = 1$ if $v \in R$, and $f(v)=0$ otherwise. We now use \cref{lem:construct_tree_without_ids2} with $f$ to compute a virtual tree $\mathbb{T}_\diamond$, with maximum degree  $O(\log n)$ and depth $O(\log n)$, over the set of all cluster leaders. This step takes $\tildeBigO{1}$ rounds. From now on, we call $\mathbb{T}_\diamond$ the cluster tree.

\paragraph{Subphase 2: Matching parent and children cluster nodes.} 
Let $C_i$ and $C_j$ be two clusters whose leaders $r(C_i)$ and $r(C_j)$ are adjacent in the cluster tree $\mathbb{T}_\diamond$ such that $r(C_i)$ is the parent of $r(C_j)$ in $\mathbb{T}_\diamond$. As discussed earlier, it holds that $\mathbb{T}_i$ and $\mathbb{T}_j$ have the same structure. Let $v_i \in \mathbb{T}_i$  and $v_j \in \mathbb{T}_j$ be two nodes with the same rank in their trees. We now desire for $v_i$ and $v_j$ to be made aware of each other -- that is, we want them to learn the identifiers of each other so that they can communicate over the global network.

We begin with $r(C_i)$ and $r(C_j)$, who are at the root of $\mathbb{T}_i$ and $\mathbb{T}_j$, respectively. They already know the identifiers of each other, as that is guaranteed by the construction of $\mathbb{T}_\diamond$. Let $L_i$ and $R_i$ be the left and the right children of $r(C_i)$, respectively.  Let $L_j$ and $R_j$ be the left and the right children of $r(C_j)$, respectively. Node $r(C_i)$ sends to $r(C_j)$ the identifiers $L_i$ and $R_i$, and $r(C_j)$ sends the identifiers $L_j$ and $R_j$ to $r(C_i)$. Now, $r(C_i)$ sends to $L_i$ and $R_i$ the identifiers of $L_j$ and $R_j$, and likewise $r(C_j)$ communicates with $L_j$ and $R_j$. All of this takes $O(1)$ rounds using the global network, and can be done in parallel for all edges in the cluster tree $\mathbb{T}_\diamond$. 

Observe that now $L_i$ and $L_j$ know both their identifiers, and likewise $R_i$ and $R_j$. Thus, both of them can continue down their respective subtrees by applying the same algorithm, in parallel. As $\mathbb{T}_i$ and $\mathbb{T}_j$ have $O(\log n)$ depth, and each level of the trees takes $O(1)$ rounds to process, the process takes $O(\log n)$ rounds in total.

\paragraph{Load balancing.} Each cluster $C_i$ uniformly distributes the tokens of its nodes within itself by \cref{lemma:load_balancing}. There are $k$ tokens in the graph, and so at most $k$ tokens are in $C_i$. By \cref{lem:clustering_with_weak_diam},  $|C_i|\geq{}k/\tk$, so $C_i$ can load balance its tokens such that each node has at most $\tk$ tokens. In total, this phase takes $\tildeBigO{\tk}$ rounds, because the weak diameter of $C_i$ is at most $4\log{n}\tk$.

\paragraph{Dissemination.} We now aim to gather all the tokens in the root cluster $C_r$ of the cluster tree $\mathbb{T}_\diamond$. 
For $O(\log n)$ iterations, we send the tokens of each cluster up the cluster tree. Due to the load balancing step, each node holds at most $\tk$ tokens, so in $O(\tk)$ rounds we can send the tokens up by one level in the cluster tree  using the global communication network. In the beginning of each iteration, each cluster again load balances the tokens it received in the previous iteration, by \cref{lemma:load_balancing}. This is done to prevent the case of a node holding more than $\tk$ tokens.
We then continue to send the tokens up to the root, which takes at most $O(\log{n})$ iterations, by the depth of the cluster tree $\mathbb{T}_\diamond$. Considering the $\tildeBigO{\tk}$-round load balancing step at the beginning of each iteration, this step takes $O(\log{n}) \cdot \tildeBigO{\tk}=\tildeBigO{\tk}$ rounds.

Now, the root cluster holds all of the tokens. It again load balances the tokens within its nodes so that each node holds at most $\tk$ tokens. We now send down the tokens in the same manner in cluster tree. For $O(\log{n})$ iterations, each node sends its at most $\tk$ tokens to its matched nodes in the children clusters, through the global network. 
Now each cluster holds all the tokens. Each node floods all tokens through the local network for $4\tk\ceil{\log{n}}$ rounds. By \cref{lem:clustering_with_weak_diam}, the weak diameter of each cluster is at most  $4\tk\ceil{\log{n}}$, so all nodes in each cluster learn all the tokens. 
Similarly, this phase takes $\tildeBigO{\tk}$ rounds.

Now \kdis is solved, as every node in $G$ knows all $k$ tokens. Summing over all phases, the algorithm takes $\tildeBigO{\tk}$ rounds.
\end{proof}


We now extend the proof of \cref{thm:optimal_dissemination} to prove \cref{thm:kaggUB}.

\kaggUB*

\begin{proof}
    We show that we can indeed solve \kagg in $\tildeBigO{\tk}$ rounds, deterministically. Note that once \emph{only one} node learns the results of all $k$ aggregate functions, we can disseminate it in $\tildeBigO{\tk}$ rounds by \cref{thm:optimal_dissemination}. We use similar steps to the proof of \cref{thm:optimal_dissemination}. First, we cluster the nodes using the same procedure. We then compute inside each disjoint cluster $k$ intermediate aggregations, and load balance it inside the cluster with \cref{lemma:load_balancing}. That way, each node holds at most $\tk$ aggregation results. After that, we use the cluster tree and cluster chaining in the proof of \cref{thm:optimal_dissemination} to send the intermediate aggregation results up the cluster tree to the root cluster. In each step, we load balance again. As the depth of the constructed cluster tree is at most $\tildeBigO{1}$, this process finishes in $\tildeBigO{\tk}$ rounds. Once all the information is stored in the root cluster, we flood it inside it and compute locally the final $k$ aggregation results. This step takes $\tildeBigO{\tk}$ rounds by the weak diameter bound of our clusters, which is at most $4\tk\log{n}$.
    Finally, we disseminate the $k$ aggregation results from any node in the root cluster to the entire graph, using \cref{thm:optimal_dissemination} in $\tildeBigO{\tk}$ rounds.
\end{proof}

\section{Universally Optimal Multi-Message Unicast}\label{sec:optimal_unicast}

The \klrout (see \cref{def:klrout}) corresponds to the problem of each of $k$ source nodes sending a small message ($\tilO(1)$ bits) to each of $\ell$ dedicated target nodes. This is also known as \emph{unicasting} of individual point-to-point messages between dedicated pairs of nodes, which is in contrast to the previous section where every node is supposed to obtain one and the same copy of each message.

In this section we show that this message routing problem can be solved much faster 
than if one would simply broadcast each individual message. In particular, we show that $\tilO\left(\NQ\right)$ rounds suffice to solve the \klrout problem, under certain conditions. 
Note that broadcasting the $k\ell$ individual messages takes \smash{$\tilO\big(\NQ[k\ell]\big)$} rounds (\cref{thm:optimal_dissemination}) which is slower in general (by a factor up to $\sqrt{\ell}$, see \cref{lem:growth_of_tk}).
It is also better than the existential bound for the routing problem of \smash{$\tilT\big(\!\sqrt{k} + \sqrt{\ell}\big)$} (due to \cite{Kuhn2020, Schneider2023}), see \cref{lem:boundingTk}.


We start by introducing a distributed structure that assigns each source or target node a certain set of helper nodes in its local proximity. This structure can be used to enhance the global communication bandwidth of each node and is adaptive to the neighborhood quality $\NQ$ of the graph.



\subsection{Adaptive Helper Sets}
\label{sec:adaptive-helper-sets}

To obtain a solution for the \klrout problem we gear some techniques of \cite{Kuhn2020} for the parameter $\NQ$ by computing a structure of adaptive helper sets that accommodates the neighborhood quality of a given graph $G$.

\begin{definition}[Adaptive helper sets]
	\label{def:adaptive_helpers}
	Let $G = (V,E)$, $k \in [n]$ and let $W \subseteq V$. 
	A family $\{H_w \subseteq V \mid w \in W\}$ of \emph{adaptive helper sets} has the following properties. 

    \begin{enumerate}[(1)]
        \item Each $H_w$ has size at least $k/\NQ$. 
        
        \item For all $u \!\in\! H_w\!: \hop(w,u) \!\in\!  \tilO(\NQ)$.
 	
 	\item Each node is part of at most $\tilO(1)$ sets $H_w$.
    \end{enumerate}
\end{definition}

In contrast to the helper sets by \cite{Kuhn2020} we require a stronger guarantees for the helper sets, that leverages the graph parameter $\NQ$. In particular, the size of each helper set $H_w$ can be much larger and yet much closer to $w$ than in \cite{Kuhn2020}, which guarantees only $\tilT(\!\sqrt{k})$ for both of these parameters (and $\tilO(\NQ) \subseteq \tilO(\!\sqrt{k}) \subseteq \tilO(k/\NQ)$ by \cref{lem:boundingTk}).
Such a helper set does not necessarily exist under any circumstance, in particular, if $W$ is too large or too locally concentrated.
We prove that we can compute such a helper set in the following lemma under the following conditions. \cref{alg:compute-helpers} gives an overview of the steps.

\begin{algorithm}
    \caption{Compute Adaptive Helper Sets $\{H_w \subseteq V \mid w \in W\}$ (\cref{def:adaptive_helpers})}
    \label{alg:compute-helpers}
         Compute $\NQ$ using \cref{lem:compute_neigh_qual}
        
         

         Compute a clustering as in \cref{lem:clustering_with_weak_diam}

        
         For each cluster $C$ let \smash{$q_C = \min\big(1, \tfrac{k}{\NQ} \cdot \tfrac{1}{|C|} \cdot  8c \ln n\big)$}, \hfill {\small\textit{(const.\ $c$ controls success probability)}}
        
         For each cluster $C$ and each $w \in C \cap W$, each node $v \in C$ joins $H_w$ with $w \in C$ with probability $q_C$
\end{algorithm}


\begin{lemma}
	\label{lem:adaptive_helpers}
	Let $W \subseteq V$ be a subset of nodes of $V$ that join $W$ with probability at most $\frac{\NQ}{k}$. A family of \emph{adaptive} helper sets $\{H_w \subseteq V \mid w \in W \}$ for $W$ as specified in \cref{def:adaptive_helpers} can be computed w.h.p.\ in $\tilO(\NQ)$ rounds in \hybrid.
\end{lemma}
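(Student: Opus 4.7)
The plan is to follow Algorithm~\ref{alg:compute-helpers} and verify the three properties of Definition~\ref{def:adaptive_helpers} via a Chernoff-style analysis. First I would compute $\NQ$ in $\tilO(\NQ)$ rounds with Lemma~\ref{lem:compute_neigh_qual}, and then invoke Lemma~\ref{lem:clustering_with_weak_diam} to obtain a clustering whose clusters $C$ have weak diameter at most $4\NQ\lceil\log n\rceil$ and size $k/\NQ \leq |C| \leq 2k/\NQ$. By flooding inside each cluster for $\tilO(\NQ)$ rounds, every node in $C$ learns the identifiers of $C \cap W$ as well as $|C|$, and can therefore locally compute $q_C = \min\bigl(1, \tfrac{k}{\NQ |C|} \cdot 8c\ln n\bigr)$. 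After that each node independently samples, for every $w \in C \cap W$, whether it joins $H_w$ with probability $q_C$. All communication so far fits in $\tilO(\NQ)$ rounds in \hybrid (indeed, it only uses local edges within clusters).

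Property~(2) is immediate because $H_w \subseteq C$ and the weak diameter of $C$ is $\tilO(\NQ)$. For property~(1), fix $w \in C \cap W$. The expected size is $|C|\cdot q_C \geq \min\bigl(|C|,\tfrac{k}{\NQ} \cdot 8c\ln n\bigr) \geq \tfrac{k}{\NQ} \cdot 8c\ln n$, where we used $|C|\geq k/\NQ$. A Chernoff bound then gives $|H_w|\geq k/\NQ$ with probability $1 - n^{-\Omega(c)}$, and a union bound over all $|W|\leq n$ helper sets yields property~(1) \whp.

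For property~(3), consider an arbitrary node $v$ in a cluster $C$. Conditioning on the randomness of $W$, the set $C \cap W$ has expected size at most $|C|\cdot \tfrac{\NQ}{k}\leq 2$; a Chernoff bound shows $|C \cap W| \in O(\log n)$ \whp. Conditioned on this event, the number of helper sets $v$ belongs to is stochastically dominated by a sum of $O(\log n)$ independent Bernoulli variables with parameter $q_C$, which has expectation $O(\log n) \cdot q_C \leq O(\log^2 n)$ (using $|C|\geq k/\NQ$ to bound $q_C \leq 8c\ln n$ in the non-trivial case, or the trivial case $q_C=1$ where the count equals $|C\cap W|$). A second Chernoff bound plus a union bound over all $n$ nodes gives property~(3) \whp.

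The main subtlety I expect is handling property~(3) cleanly: the sampling of $W$ and of the helper sets are coupled through the shared cluster, and the bound $q_C \leq 1$ must be handled separately from the generic case so that the product $|C \cap W|\cdot q_C$ remains $\polylog n$. One has to be careful that the Chernoff bound for $|H_w|$ is non-vacuous exactly when $q_C<1$, and that when $q_C=1$ property~(1) follows deterministically from $|C|\geq k/\NQ$. Beyond this, the argument is routine: local flooding inside clusters handles all coordination within the promised $\tilO(\NQ)$ rounds of \hybrid, and no global communication is actually needed beyond that used by Lemmas~\ref{lem:compute_neigh_qual} and \ref{lem:clustering_with_weak_diam}.
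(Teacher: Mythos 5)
Your proposal is correct and follows essentially the same route as the paper: compute $\NQ$ and the clustering of \cref{lem:clustering_with_weak_diam}, sample each cluster node into $H_w$ with probability $q_C$, and verify the three properties of \cref{def:adaptive_helpers} via Chernoff plus a union bound, with the $q_C=1$ case handled deterministically exactly as the paper does. The only cosmetic difference is in property (3), where you condition on $|C\cap W|\in O(\log n)$ first and then bound the helper-membership count, whereas the paper bounds $\E(J_v)\leq 8c\ln n$ directly over the combined randomness of $W$ and the helper sampling; both are valid.
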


\begin{proof}
    First, nodes collaborate to compute the parameter $\NQ$ in $\tilO\left(\NQ\right)$ rounds, see \cref{lem:compute_neigh_qual}. 
    Second nodes compute the clustering with the conditions of \cref{lem:clustering_with_weak_diam}.
    In a sense, such a clustering locally partitions $G$ into ``areas of responsibility'', and allows us to assign each node $w \in W$ a helper set $H_w$ from the cluster $C$ that $w$ is located in, in a fair way such that no node in $C$ has to help too many nodes from $W$.
 
	

    For this to work, the nodes in $C$ first learn $C$ and $C \cap W$. Note that the weak diameter of $C$ is in $\tilO\left(\NQ\right)$
    thus we can compute the required information in the same number of rounds using the local network. 
    Then, for each node $w \in C \cap W$, each $v \in C$ joins the helper set $H_w$ with probability 
	 $$q_C = \min\big(1, \tfrac{k}{\NQ} \cdot \tfrac{1}{|C|} \cdot  8c \ln n\big).$$ 
    It is clear that this assignment of helper sets fulfills property (2) of \cref{def:adaptive_helpers} due to the weak diameter of $C$. Let us now look at the size of each helper set.
  
    The case $q=1$ implies that each node in the cluster $C$ that $w$ is located in, is drafted into $H_w$. Furthermore, using that each cluster has size at least $k/\NQ$ by \cref{lem:clustering_with_weak_diam}, we have $$|H_w| \geq k/\NQ.$$
    
    In case $q<1$, the expected size of $H_w$ is \smash{$\E(|H_w|) = q_C \cdot |C| = \tfrac{k}{\NQ} \cdot  8c \ln n$}, whereas $q$ is chosen such that we can conveniently apply a Chernoff bound (given in \cref{lem:chernoffbound} for completeness) to give a lower bound for the size of $H_w$ that holds w.h.p.:
	 \[
	 	\Prob\Big(|H_w| \leq  (1 - \tfrac{1}{2}) \E(|H_w|)\Big) \leq \exp \Big(\!-\!\tfrac{\E(|H_w|)}{8}\Big) \leq \exp \Big(\!-\!\tfrac{8c \ln n}{8}\Big) = \tfrac{1}{n^c}. 
	 \]
	 Therefore, the size of $H_w$ is at least $\E(|H_w|) /2 \geq k/\NQ$ w.h.p. 
	 
  For the third property, we exploit that $W$ was sampled uniformly at random with probability at most $\frac{\NQ}{k}$, thus the expected size of $C \cap W$ is $\E\big(|C \cap W|\big) \leq |C| \cdot  \tfrac{ \NQ}{k}.$  
  Let $J_v = |\{w \in W \mid v \in H_w\}|$ be the random number of helper sets that a node $v \in C$ joins. In expectation these are at most $\E(J_v) = q \cdot  \E\big(|C \cap W|\big) \leq 8 c \ln n$. Again, we employ a Chernoff bound (\cref{lem:chernoffbound}) to show $J_v \in \bigOa(\log n)$ w.h.p.:
	 \[
	 	\Prob\big(J_v \geq  (1+1) 8 c \ln n\big) \leq \exp \big(\!-\!\tfrac{8c \ln n}{3}\big) \leq  \tfrac{1}{n^c}. 
	 \]
    Finally, note that we consider the properties above for each node $v \in V$ and each set $H_w$. All these properties or ``events'' must hold or ``occur'' simultaneously. Since the number of these events is at most polynomial in $n$, all these events occur w.h.p.\ by a union bound (\cref{lem:unionbound}).
\end{proof}

\subsection{Solving the \texorpdfstring{\klrout}{(k,l)-routing} Problem}
\label{sec:adaptive_helper_sets}

We show how to solve the \klrout problem (\cref{def:klrout}) in \smash{$\tilO\left(\NQ\right)$} rounds under various conditions. Routing a total of $k \cdot \ell$ message pairs in $\tilO(\NQ)$ rounds is, however, subject to certain bottlenecks. For instance, it inherently requires that each source and target node has (almost) exclusive access to a sufficiently large set of nearby helper nodes, or that the number of messages each source or target has to send or receive is relatively small.

If, by contrast, both the set of sources or targets are locally highly concentrated and have a large number of messages to send, such that there are not enough helper nodes for each source or target, then local limitations for sending and receiving messages imply that $\tilO(\NQ)$ rounds are unattainable and
conducting a broadcast of all messages in $\tilO(\NQ[k\ell])$ rounds is in fact the best that can be achieved.

To realize the desired speedup, we require that the set of source nodes and the set of target nodes satisfy some condition of being ``well spread out'' in the network, which is fulfilled w.h.p.\ by sampling them randomly. Alternatively, we can also allow one of the sets (sources or targets) to be arbitrary, however, in this case, the opposite set has to be much smaller to make the workload of each node in the former set sufficiently light. The following theorem expresses this formally. See \cref{def:klrout} for the formal definition of the \klrout problem.

\UnicastUB*

The remaining part of this section is dedicated to the proof, of which we would first like to give a rough overview and then highlight the challenges that need to be resolved. First of all, we will show in the proof that the roles of sources and targets can be reversed, thus we will restrict ourselves to explain case (1) and case (3) of \cref{thm:routing}, the latter with the assumption $\ell \leq k$.

In case (1) of \cref{thm:routing} the idea is to draft a family of helper sets $\{H_t \subseteq V \mid t \in T \}$ for the target nodes $T$ using \cref{lem:adaptive_helpers}, exploiting the prerequisite that $T$ was randomly sampled. The rough idea is that each source can directly stream its messages for each $t \in T$ to one of the helper nodes $v \in H_t$ of $t$ using the global network. Afterwards, each target can collect its assigned messages from each helper over the local network.

In case (3) of \cref{thm:routing} we compute helper sets $\{H_t \subseteq V \mid t \in T \}$ and $\{H_s \subseteq V \mid s \in S \}$ for both sets of sources and targets, respectively. The very rough idea is to assign each helper $u \in H_s$ of $s \in S$ a helper of $H_t$ of $t \in T$ in a balanced way. Then we stream the messages of every source $s \in S$ to their respective helpers $H_s$ over the local network. Each $u \in H_s$ that has a message for $t \in T$ then sends its message to some node in $H_t$ over the global network, balancing out the workload over the nodes and over time. Finally, each $t \in T$ can collect its assigned messages from its helpers $H_t$ over the local network.

Under ideal conditions, this allows for all assigned messages to be send within the time bounds specified in \cref{thm:routing}. Important caveats are: first, that senders and receivers are not aware of each others helper sets, and worse still, exchanging the IDs of their respective helper sets between $S$ and $T$ can be seen as an instance of \klrout problem that we aim to solve in the first place. Second, large imbalances in the number of sources and targets $k \gg \ell$ in case (3) of \cref{thm:routing}, can inhibit the computation of suitable adaptive helper from \cref{lem:adaptive_helpers}.

We resolve both of these issues in the following. We start by showing how the required information to assign helpers of targets to senders can be condensed significantly by instead relaying messages via a pseudo-random set of intermediate nodes (cf.\ \cite{Kuhn2020}).
We construct a suitable assignment of such intermediate nodes that satisfies our conditions pertaining to the parameter $\NQ$ can be obtained with techniques of universal hashing. 

We require a function $h$ that assigns an intermediate node to each pair $(s,t) \in S \times T$ that the message from $s$ to $t$ should be routed over. The mapping $h$ must satisfy a few properties (see subsequent lemma) with the following intuition. 
First, no node is assigned as an intermediate node of too many source-target pairs.  Second, global communication with intermediate nodes can adhere to the constraints of the \hybrid model.
Third, this mapping can be computed easily.

\begin{lemma}
    \label{lem:assign_helpers}
    Consider the \klrout problem (\cref{def:klrout}) with $k \cdot \ell  \in \bigOa\big(\NQ \cdot n)$. There exists a mapping $h: [n] \times [n] \to [n]$ such that the following is true.
    \begin{enumerate}[(1)]
        \item For each $v \in V$ it holds that $\big|\big\{(s,t) \in S \times T : h\big(\ID(s),\ID(t)\big) = \ID(v) \big\}\big| \in \bigOa(\NQ)$, w.h.p.

        \item If each node $v \in V$ has a unique tuple $(i,j) \in [n]^2$ and sends at most one message to the node with identifier $\ID(h(i,j))$, then each node receives at most $\bigOa(\log n)$ messages, w.h.p.
        
        \item Such a mapping $h$ can be computed by every node in $\tilO(\NQ)$ rounds in \hybrid.     
    \end{enumerate}
\end{lemma}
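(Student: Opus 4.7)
The plan is to construct $h$ as a random hash function drawn from a suitably independent universal hash family, whose short description is computed by a leader and then disseminated to all nodes. Concretely, identify pairs $(i,j) \in [n]^2$ with elements of $[n^2]$, and pick $h$ from a $\Theta(\log n)$-wise independent family mapping $[n^2] \to [n]$ (for instance the classical polynomial-evaluation construction over a prime field of size $\Theta(n^2)$). Such a family admits a seed of $O(\log^2 n)$ bits, and given the seed any node can locally evaluate $h$ on any input. Note that under such an $h$, every fixed pair $(i,j)$ maps to each $v \in [n]$ uniformly, and any collection of up to $\Theta(\log n)$ image values are mutually independent.

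To establish property~(1), fix $v \in V$ and let $X_{s,t}$ be the indicator that $h(\ID(s),\ID(t)) = \ID(v)$. Then $\E\big[\sum_{(s,t)} X_{s,t}\big] = k\ell/n \in O(\NQ)$ by the hypothesis $k\ell \in O(\NQ \cdot n)$. Using a Chernoff-style tail inequality for $\Theta(\log n)$-wise independent sums (standard, e.g.\ Schmidt--Siegel--Srinivasan), the number of pairs hashing to $v$ is $O(\NQ + \log n) \subseteq O(\NQ)$ with probability $1 - 1/n^{c+1}$; a union bound over all $v \in V$ then yields the w.h.p.\ guarantee. Property~(2) follows by an analogous argument: in the unique-tuple setting there are at most $n$ sampled pairs and the expected load at each $v$ is $1$, so $\Theta(\log n)$-wise independence gives load $O(\log n)$ w.h.p.\ again by concentration plus a union bound.

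For property~(3), I would let the node with the smallest identifier sample a seed for $h$ using local randomness. Since the seed has only $O(\log^2 n) = \tilO(1)$ bits, it can be broadcast to every node via $\tilO(1)$ parallel invocations of the basic broadcast primitive of \cref{lem:hybrid_zero_aggregate} (one per $O(\log n)$-bit chunk), using the initially known neighbor identifiers; this works already in \hybridzero and certainly in \hybrid. Once every node knows the seed, it can locally evaluate $h$ on any input without further communication, so the whole computation completes in $\tilO(1) \subseteq \tilO(\NQ)$ rounds.

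The main obstacle is obtaining the w.h.p.\ concentration in property~(1) together with a short, cheaply-broadcastable seed: a fully random function is too large, while 2-wise independence is too weak to concentrate $k\ell/n$ Bernoullis around their mean when $\NQ$ itself is only $\polylog n$. Using $\Theta(\log n)$-wise independence resolves both issues simultaneously, since the seed length stays $\polylog n$ while the moment-based tail bound matches a Chernoff bound up to constants in the relevant regime. A secondary subtlety is that property~(2) must hold simultaneously with property~(1) under the same choice of $h$; because both follow from union bounds over polynomially many events at inverse-polynomial failure probability, a single sample of $h$ satisfies all requirements w.h.p.
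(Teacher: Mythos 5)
Your construction is sound and follows the same template as the paper's proof (a limited-independence hash family, concentration via a Chernoff-type bound for $\kappa$-wise independent sums, then broadcasting the seed), but you make a genuinely different parameter choice. The paper selects $h$ from a $\kappa$-wise independent family with $\kappa \in \Theta(\NQ \log n)$, so that its \cref{lem:chernoffbound2} (which needs independence at least $\lceil \mu_H \delta\rceil$ with $\mu_H \in O(\NQ)$) applies directly; the price is a seed of length $\tilO(\NQ)$, which then has to be disseminated with the full \kdis machinery of \cref{thm:optimal_dissemination} in $\tilO(\NQ)$ rounds. You instead use only $\Theta(\log n)$-wise independence, which pushes you into the regime where the independence may be far below the mean $k\ell/n \in O(\NQ)$; there the plain $k \ge \lceil\mu_H\delta\rceil$ form no longer applies, and you need the moment-based tail bounds for limited independence (Schmidt--Siegel--Srinivasan Theorem~5, or Bellare--Rompel), which do give $\Pr(X \ge 2\mu) \le n^{-\Omega(1)}$ once $\mu \in \Omega(\log n)$ and a load of $O(\log n)$ otherwise. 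In exchange your seed shrinks to $O(\log^2 n)$ bits and property~(3) costs only $\tilO(1)$ rounds, which is a real simplification.

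One imprecision to fix: your conclusion for property~(1) is a load of $O(\NQ + \log n)$, and the containment $O(\NQ + \log n) \subseteq O(\NQ)$ fails when $\NQ \in o(\log n)$. What you actually prove is $\tilO(\NQ)$ rather than the literal $O(\NQ)$ of the statement. This is harmless for the only place the lemma is used (the $\tilO(\NQ)$-round bound in \cref{thm:routing} absorbs the extra logarithmic factor), but you should either state the bound as $\tilO(\NQ)$ or note explicitly why the polylogarithmic slack does not affect the downstream argument. With that caveat, the proposal is correct.
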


\begin{proof}
    The idea is to use a hash function that is selected randomly from a suitable universal family $\calH$, consisting of functions $h : [n] \times [n] \to [n]$. That is, $h \in \calH$ maps a pair of identifiers $(i, j) \in [n]^2, j \in [\ell]$ to an intermediate node with identifier $h(i,j) \in [n]$.
    To define $\calH$ we utilize methods of universal hashing from the literature and summarized in \cref{def:hashfunctions} and \cref{lem:hashfunctions}. We assume that $\calH$ is $\kappa$-wise independent for some $\kappa \in \Theta(\NQ \log n)$ (see \cref{def:hashfunctions}), but other than that we leave $\calH$ unspecified for now.
    Further, assume that we randomly select a function $h \in \calH$. Note that for this it suffices to publish a relatively small random seed whose size depends on the independence of $h$ and the size of its domain and image. 
    
    We start with property (1) and bound the number of collisions. For $v \in V$ and $i,j \in [n]$ with $i > k, j > \ell$ let $X_v = |\{(s,t) \in S \times T \mid v = h\big(\ID(s),\ID(t)\big)\}|$.
    Bounding $X_v$ corresponds to a \emph{balls-into-bins} distribution, where we have $k \cdot \ell \in \bigOa(\NQ \cdot n)$ balls and $n$ bins. An analysis for this distribution is given in \cref{lem:balls_in _bins}, which guarantees that \whp $X_v \leq a \cdot \frac{k\ell}{n} \in \bigOa(\NQ)$ for all $v \in V$ for some constant $a  \in O(1)$.
    We conclude that no intermediate node $v$ is ``hit'' by $h\big(\ID(s),\ID(t)\big)$ more than $X_u \in \bigOa(\NQ)$ times w.h.p. 

    For property (2), observe that here we consider a balls-into-bins distribution with $|V| = n$ balls and $|V| = n$ bins using a  is a hash function $h$ with independence at least $\kappa \in \Omega(\log n)$. Therefore, by setting $m \in O(n \log n)$ in  \cref{lem:balls_in _bins}, we conclude that the number of messages received by a node is at most $O(\log n)$ for all nodes \whp.
    
    
    Property (3) is mainly about the size of the universal family $\calH$ that is required to ensure $\kappa$-wise independence for $\kappa \in \Theta(\NQ)$. 
    In particular, the size of such a family induces the size $\tilO(\NQ)$ of the random seed required to select a random function from it, see 
    \cref{lem:hashfunctions}. It is easy to create such a seed locally at a node, break it up into multiple messages and broadcast it using our methods from the previous section in $\tilO(\NQ)$ rounds, see \cref{thm:optimal_dissemination}.
\end{proof}

While \cref{thm:routing} allows flexibility in terms of the parameters $k$ and $\ell$ that dictate the size of the set of sources with the combined bound $k \cdot \ell \leq {n \cdot \NQ}$, this poses some challenges for the computation of helper sets. Specifically, if $k$ or $\ell$ is much larger then the other value, then the bound on the sampling probability of \cref{lem:adaptive_helpers} is violated. For instance, for $k=n$ and $\ell = \NQ$, helper sets for $S$ with $\tilO(1)$ workload per helper can only be of (too) small size \tilO(1).

On the positive side, we observe that if $k \gg \sqrt{n \cdot \NQ}$ then each source has to send only $\ell \ll k$ messages.
The next lemma shows instead how to reduce the \klrout problem under general conditions of \cref{thm:routing} case (3) to relatively few instances of the \klrout[k',\ell'] with $k', \ell' \leq \sqrt{n \cdot \NQ}$. This is sufficient to satisfy the condition of \cref{lem:adaptive_helpers}.

\begin{lemma}
    \label{lem:consolidate_sources}
    Given the \klrout problem for $\ell \leq k$ and $k \cdot \ell  \leq  \NQ \cdot n$ with $k > \sqrt{n \cdot \NQ}$ where source and target nodes are sampled with probabilities $\tfrac{k}{n}$ and $\tfrac{\ell}{n}$. This problem can be reduced to $\tilO(1)$ instances of \klrout[k',\ell'] problems with parameters $\ell' \leq k' \leq \sqrt{n \cdot \NQ}$ in $\tilO(\NQ)$ rounds, w.h.p., where source and target nodes are sampled with probabilities $\tfrac{k'}{n}$ and $\tfrac{\ell'}{n}$.
\end{lemma}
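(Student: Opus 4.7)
The plan is to reduce the $(k,\ell)$-routing problem to smaller sub-instances by randomly partitioning the source set $S$, so that each sub-instance's sources are still a uniformly random subset of $V$ but with a smaller inclusion probability. Concretely, I would first compute $\NQ$ globally using \cref{lem:compute_neigh_qual}, then set $k' := \lceil \sqrt{n \NQ} \rceil$ and $q := \lceil k / k' \rceil$, and let each source $s \in S$ independently pick a group label $g(s) \in [q]$ uniformly at random. This induces a partition $S = S_1 \sqcup \dots \sqcup S_q$ with $S_i := \{s \in S : g(s) = i\}$, and the sub-instances are $(S_i, T)$ for $i \in [q]$.

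The correctness of the reduction rests on three observations. First, for any node $v \in V$ we have $\Prob[v \in S_i] = \Prob[v \in S] \cdot \Prob[g(v) = i] = (k/n)(1/q) = k'/n$, so each $S_i$ has the desired randomly-sampled structure with probability $k'/n$. Second, the hypothesis $k > \sqrt{n \NQ}$ combined with $k\ell \leq n \NQ$ forces $\ell < \sqrt{n \NQ}$, so setting $\ell' := \ell$ meets the conditions $\ell' \leq k' \leq \sqrt{n \NQ}$ and preserves $k' \ell' \leq n \NQ$. Third, the target set $T$ is shared unchanged across all sub-instances, retaining its sampling probability $\ell/n = \ell'/n$.

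The communication cost of the reduction is dominated by the initial $\tilO(\NQ)$-round computation of $\NQ$; the label choices $g(s)$ require no communication, and no messages need to be routed as part of the reduction itself. A Chernoff bound applied to each $|S_i|$ combined with a union bound over $i$ will certify that $|S_i| \in \Theta(k')$ w.h.p., and the independence of the labels $g(\cdot)$ gives the required product-form distribution.

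The main obstacle I anticipate is controlling the number of sub-instances $q = \lceil k / \sqrt{n \NQ}\rceil$ so that it fits within the polylogarithmic budget implicit in $\tilO(1)$; this hinges on the regime in which this lemma is invoked inside the proof of \cref{thm:routing} case~(3) together with the constraint $k \ell \leq n \NQ$, which must be shown to bound $k$ by $\sqrt{n \NQ}$ up to polylogarithmic factors. A secondary technical point is to argue that the random labels $g(\cdot)$ are independent of the graph topology and of the target set $T$, so that each sub-instance still satisfies the prerequisites of the subsequent helper-set construction (\cref{lem:adaptive_helpers}) and the intermediate-node hashing analysis (\cref{lem:assign_helpers}) applied downstream.
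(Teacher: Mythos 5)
Your reduction does not achieve the stated goal: the number of sub-instances it produces is $q = \lceil k/\sqrt{n\NQ}\,\rceil$, which is polynomial in general, not $\tilO(1)$. For instance, with $k = n$ and $\ell = \NQ$ (a legal and important parameter choice, since then $k\ell = n\NQ$ and $k > \sqrt{n\NQ}$ whenever $\NQ < n$), you get $q = \Theta(\sqrt{n/\NQ})$ groups. The escape you hope for --- that the regime in which the lemma is invoked forces $k \leq \sqrt{n\NQ}$ up to polylogarithmic factors --- directly contradicts the hypothesis $k > \sqrt{n\NQ}$ of the lemma and its intended applications (e.g., the $(k,\ell)$-SP upper bounds where $k$ can be as large as $n$). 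Running $q$ sub-instances sequentially costs $q \cdot \tilO(\NQ)$ rounds, and running them in parallel breaks the resource bounds of the downstream machinery (each node would belong to $\tilO(q)$ helper sets, violating property (3) of \cref{def:adaptive_helpers}, and the per-round global receive budget would be exceeded). Merely relabelling sources cannot work: no partition of $S$ into $\tilO(1)$ parts can bring the per-part sampling probability down from $k/n$ to $\sqrt{\NQ/n}$ when $k \gg \sqrt{n\NQ}$.

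The missing idea is that the reduction must actually \emph{move messages}, exploiting that when $k$ is large each source has few messages ($\ell \leq n\NQ/k$). The paper consolidates the sources: it samples a sparse subset $S' \subseteq S$ of ``super sources'' (roughly $\tilO(n\NQ/k) \subseteq \tilO(\sqrt{n\NQ})$ of them), uses the clustering of \cref{lem:clustering_with_weak_diam} to guarantee w.h.p.\ at least one super source per cluster, and has every source hand its $\ell$ messages to a super source in its cluster over the local network in $\tilO(\NQ)$ rounds; each super source then holds at most $k$ messages. Symmetrically, each target is expanded into many ``sub-targets'' $T'$ so that each sub-target receives few messages, and a preliminary reverse routing step plus a broadcast of the identifiers of $S'$ tells the super sources which sub-targets to address. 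Only after this restructuring are both parameters at most $\sqrt{n\NQ}$ up to polylogarithmic factors, and a final random subdivision into $\sigma \in \tilO(1)$ groups (which is where an idea like yours legitimately applies) absorbs the remaining polylog excess. Your secondary points (computing $\NQ$ first, $\ell' = \ell \leq \sqrt{n\NQ}$, Chernoff plus union bound for concentration) are fine, but they do not address the central difficulty the lemma exists to solve.
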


\begin{proof}      
    On one hand, the idea is to consolidate the sources $S$ into a sparser subset of ``super sources'' $S' \subseteq S$ that collect all messages of source nodes $S$ and will then take care of sending the collected messages. On the other hand, we compute a larger set of ``sub-targets'' $T' \subseteq V$, which takes care of receiving the messages from the super sources, such that still each super source has at most one message for each sub-target. The sets $S'$ and $T'$ are randomly sampled to maintain the conditions stated in \cref{thm:routing} (2).
    
    The first step is to compute $S'$. After sampling $S$ with probability $k/n$, each node from $S$ decides to join $S'$ with probability $p = \min\big(\frac{\NQ\cdot n}{k^2} \cdot 8c\ln n,1\big)$ for a constant $c >0$.
    Next, we compute a clustering according to \cref{lem:clustering_with_weak_diam} in $\tilO(\NQ)$ rounds.
    For $s \in S$ that is located in some cluster $C$, let $X = |C \cap S'|$ be the random number of super sources within the cluster $C$. We want to show that $X \geq 1$, i.e., each cluster $C$ contains at least one super source $s' \in S' \cap C$. If $p=1$, then $s \in S'$ and we are done. Else, by the size property of the clustering we have $\E(X) \geq \frac{k}{\NQ} \cdot \tfrac{k}{n} \cdot p \geq 8c\ln n$. Applying the Chernoff bound from \cref{lem:chernoffbound}, we obtain
    \[
	 	\Prob\Big(X \leq  (1 - \tfrac{1}{2}) \E(X)\Big) \leq \exp \Big(\!-\!\tfrac{\E(X)}{8}\Big) \leq \exp \Big(\!-\!\tfrac{8c \ln n}{8}\Big) = \tfrac{1}{n^c}. 
    \]
    Thus $X \geq \E(X)/2 \geq 1$ w.h.p.

    Next, within each cluster $C$ we assign all sources $s \in S \cap C$ to one of the (at least 1, w.h.p.) super sources $s' \in S' \cap C$ in a balanced way (difference between assigned sources between all $S' \cap C$ at most 1). This can be done $\tilO(\NQ)$ rounds, due to the weak diameter $\tilO(\NQ)$ of each cluster (\cref{lem:clustering_with_weak_diam}).
    For some $s' \in S' \cap C$ let $Y$ be the random number of assigned sources to $s$. In expectation, $Y$ will correspond to the inverse probability  with which we sampled $S'$ from $S$ (plus one to account for slight imbalances in the assignment of sources to super sources), i.e., 
    \[\E(Y) \leq \tfrac{1}{p} + 1 \leq \tfrac{k^2}{\NQ\cdot n \cdot 8c\ln n} +1\]
    Again, with a Chernoff bound we can limit the absolute number of assigned sources w.h.p., as follows $\Prob\big(Y \geq (1+ 3c\ln n) \E(Y)\big) \leq \exp\big(-c \ln n \cdot \E(Y)\big) \leq \exp(-c \ln n) = 1/n^{-c}$. Hence w.h.p.
    \[ Y \leq (1+ 3c\ln n) \cdot \E(Y) \leq \tfrac{k^2}{\NQ\cdot n},\] 
    where the last step holds for sufficiently large $n$.
    Since each source originally had to send $\ell \leq \frac{n \cdot \NQ}{k}$ messages, the number of messages per super source $s' \in S'$ is capped at $\ell \cdot Y \leq k$. We also want to limit the random number $|S'|$ of super sources that we obtain. The overall probability for a node to join $S'$ is $\tfrac{kp}{n}$, thus $\E(|S'|) = kp \leq \frac{\NQ \cdot n}{k} \cdot 8c\ln n$. We apply another Chernoff bound
    \[
    \Prob\Big(|S'| \leq  (1+1) \E(|S'|)\Big) \leq \exp \Big(\!-\!\tfrac{\E(|S'|)}{3}\Big) \leq \exp \Big(\!-\!\tfrac{8c \ln n}{8}\Big) = \tfrac{1}{n^c}. 
    \]
    Thus, w.h.p., $|S'| \leq 2 \E(k') \leq  \frac{\NQ \cdot n}{k} \cdot 16 c\ln n \leq \sqrt{\NQ \cdot n} \cdot 16 c\ln n \in \tilO(\sqrt{\NQ \cdot n})$.  Note that $|S|,|S'| \in \tilO(k)$, w.h.p., thus it can be made publicly known which super source $s' \in S'$ is responsible for which source $s \in S$ in $\tilO(\NQ)$ rounds using \cref{thm:optimal_dissemination}.

    The second step is to compute the sub-targets $T'$, which we sample from $V$ with probability $q = \min\big(\tfrac{k}{n} \cdot 8 c \ln n,1\big)$. In each cluster $C$ we will assign each target $t \in T \cap C$ roughly the same number of sub-targets $t' \in T' \cap C$ in the same cluster (with difference at most 1) in $\tilO(\NQ)$ rounds.  
    We also need certain bounds for the random number $Z$ of assigned sub-target nodes per target node within a cluster $C$ and the overall number $|T'|$. We obtain $Z \geq \tfrac{k}{\ell}$ and $|T'| \leq 2 p n \in \tilO(k) \subseteq \tilO(\sqrt{\NQ \cdot n})$ w.h.p. We skip the details for computing the expectations and applying Chernoff bounds, which are very similar to before.

    The choices of the sampling probabilities for $S'$ and $T'$ lead to the following. The number $Y$ of of source nodes $s \in S$ from which a super node $s' \in S'$ collected the messages is at most as large as the number $Z$ of assigned sub-targets $t' \in T$ per target node $t \in T$, w.h.p.:
    \[
        Y \leq \tfrac{k^2}{\NQ\cdot n} \leq \tfrac{k}{\sqrt{\NQ\cdot n}} \leq \tfrac{k}{\ell} \leq Z
    \]
    The fact that $Y \leq Z$ allows us to do the following steps. For each super source $s' \in S'$ and each $s \in S$ whose messages were collected by $s'$, the target node $t \in T$ will task one of its assigned sub-targets $t' \in H_t$ with receiving the message from $s$ to $t$ from $s'$. 

    We now have the sets $S'$ and $T'$ of roughly the correct sizes, which almost sets up the new routing problem, with one caveat: The super sources $S'$ do not yet know which sub-targets in $T'$ to send their messages to. We can approach this by first solving the routing problem the other way around. Each sub-target $t' \in T'$ prepares a message for each super source $s' \in S'$ from which it wants to receive a message, as assigned by the original target $T$. Note that this routing problem can be set up in a canonical way leveraging that all nodes have knowledge of the identifiers of $S'$ from our previous broadcast.
    After this step, the nodes in $S'$ have the identifiers of the nodes in $T'$ which they need to send their messages to.

    The last point is about the sampling rates $k',\ell'$ of $S',T'$. Note that the probability with which we sample each set is $\frac{k}{n} \cdot p \leq \E(|S'|)/n \in \tilO(\sqrt{\NQ/n})$ and $q \leq \E(|T'|)/n \in \tilO(\sqrt{\NQ/n})$ with $\bigOa(\log n)$ factors that we require to make the Chernoff bounds work. Thus our sampling rates $k',\ell' \in \tilO(\sqrt{\NQ\cdot n})$ are effectively off by a polylogarithmic factor.     
    The remedy is to do a random subdivision where each node of $S',T'$ randomly joins one of $\sigma$ groups for appropriately chosen $\sigma \in \tilO(1)$, such that the effective probability to end up in a certain group falls below $\sqrt{\NQ/n}$ thus satisfying our condition on the sampling probabilities $\frac{k'}{n}, \frac{\ell'}{n}$ as required in this lemma. We then solve the routing problem for each of the $\sigma^2 \in \tilO(1)$ pairings of such groups. The overall number \klrout[k',\ell'] problems to solve is $\tilO(1)$.
\end{proof}

The previous lemmas do most of the technical lifting for the proof of \cref{thm:routing}, whereas it remains to show how the communication can be coordinated efficiently via the helper and intermediate nodes in the \hybrid model such that the following holds. First, the number of global messages any node receives in any given round is bounded by $\bigOa(\log n)$, so that no messages are dropped and lost (see \cref{sec:models}). Second, assuming $\ell \leq k$, the parameters fit together in such a way that the communication for the \klrout problem can be completed in $\tilO(\NQ)$ rounds.
Third, we show that the roles of source nodes and target nodes can be reversed while keeping the same upper bounds, which reduces case (2) to case (1), and shows case (3) without the assumption $\ell \leq k$.
\cref{alg:routing} gives a very high-level overview of the involved steps for case (1) and (3) of \cref{thm:routing} under the assumption $\ell \leq k \leq \sqrt{n \cdot \NQ}$ (which generalizes to all cases in the \cref{thm:routing} as shown in the subsequent proof).

\begin{algorithm}
    \caption{Multi-Message Unicast (\cref{thm:routing} cases (1) and (3) with $\ell \leq k \leq \sqrt{n \NQ}$)}
    \label{alg:routing}
    
        Compute adaptive helper sets $H_t, H_s$ for each $t \in T, s \in S$ using \cref{alg:compute-helpers}

        In case (1) of \cref{thm:routing} define $H_s := \{s\}$ instead

        Each source $s \in S$ labels its message for receiver $\ID(t)$ with the tuple $(\ID(s),\ID(t))$

        Each source $s$ sends its messages to its helpers $H_s$ in a balanced way over the local network

        Compute function $h$ as in \cref{lem:assign_helpers}
        
        \While {\normalfont some helper $u \in H_s$ still has a message to send}{
            $u$ sends message with label $(i,j)$ to intermediate node $\ID\big(h(i,j)\big)$ one per round
        }

        Broadcast the set of identifiers of $S$ using \cref{thm:optimal_dissemination}
        
        Each target $t \in T$ creates a request labelled with $(\ID(s),\ID(t))$ for each $i \in [k]$

        Each target $t$ sends its requests to its helpers $H_t$ in a balanced way over the local network
    
        \While {\normalfont some helper $w \in H_t$ still has a request}{
            $w \in H_t$ sends one request with label $(i,j)$ to intermediate node $v$ with $\ID\big(h(i,j)\big)$

            $v$ sends the message with label $(i,j)$ back to $w$.
        }

        Each target $t$ collects its assigned messages from its helpers $H_t$ over the local network
\end{algorithm}

\begin{proof}[Proof of \cref{thm:routing}]
    We give a proof for case (1) and (3) of \cref{thm:routing} under the assumption $\ell \leq k \leq \sqrt{n \cdot \NQ}$.
    Note that $k \leq \sqrt{n \cdot \NQ}$ is w.l.o.g., since solving the problem under this assumption in $\tilO(\NQ)$ rounds induces a solution for the general case $k \cdot \ell \leq {n \cdot \NQ}$ in $\tilO(\NQ)$ rounds due to \cref{lem:consolidate_sources}. At the end of this proof, we will also show that restricting ourselves to case (1) and (3) and the assumption of $\ell \leq k$ are w.l.o.g.
    
    We first compute the parameter \NQ in \smash{$\tilO\left(\NQ\right)$} rounds according to \cref{lem:compute_neigh_qual}. Note that it is simple to sum up and publish the number of source and target nodes $k$ and $\ell$ in $\tilO(1)$ rounds in the global network using the aggregation routine by \cite{Augustine2019}.     
    We conduct this proof under the assumption that $k \leq \sqrt{n \cdot \NQ}$. 
    
    This assumption $k \leq \sqrt{n \cdot \NQ}$ helps us in the following way. The inequality $k \leq \sqrt{n \cdot \NQ}$ is equivalent to $\tfrac{k}{n} \leq \frac{\NQ}{k}$, where the left hand side corresponds to the probability for sampling source nodes $S$ and the right hand side is the bound that is allowed by \cref{lem:adaptive_helpers} for the computation of adaptive helper sets. Consequently, our assumption on $k$ allows us the computation of adaptive helper sets for the set of sources $S$ as in \cref{lem:adaptive_helpers}. Due to $\ell \leq k$ the same is true for the set targets $T$.
    
    Next, we compute adaptive helper sets $H_t$ for each target $t \in T$ in \smash{$\tilO\left(\NQ\right)$} rounds w.h.p., as shown in \cref{lem:adaptive_helpers}. In case (3) of the Theorem, we also compute $H_s$ for the sources $s \in S$. In case (1) we simply set $H_s = \{s\}$ meaning each source $s$ must send all of its messages itself.

    We continue distributing the workload from sources and targets to their respective helpers, for which we leverage the knowledge of nodes in $S$ and $T$ of the identifiers of the opposing set $T$ and $S$, respectively. Although the targets $T$ do initially not know the identifiers in $S$, this is not very much information and can simply be broadcast in $\tilO(\NQ)$ rounds using \cref{thm:optimal_dissemination} (and the observation that $|S| \in \bigO{k \log n}$, by a simple application of a Chernoff bound, see \cref{lem:chernoffbound}).    

    Each source $s \in S$ labels its messages destined to some $t \in T$ with a tuple $(\ID(s), \ID(t))$ and disseminates them among its helpers $H_s$ such that each helper has roughly the same number of messages (with a difference of at most one). Similarly, the set of targets $T$ creates requests, which are tuples $(\ID(s), \ID(t))$ which it balances among its helpers in $H_t$.    
    This is a local task that can be done using only the local network in $\tilO(\NQ)$ rounds due to property (2) of \cref{lem:adaptive_helpers}.    

    To transmit messages, each sender helper $u \in H_s$ relays a given message for some target node $t$ to some helper $w \in H_t$ via some intermediate node, such that no helper or intermediate node has too much ``workload'' in terms of overall messages to send (at most $\tilO(\NQ)$) and in terms of the number of messages received per round (at most $\bigOa(\log n)$). For this, we compute the function $h$ with the properties given in \cref{lem:assign_helpers}. This takes $\tilO(\NQ)$ rounds, according to property (3) of \cref{lem:assign_helpers}.

    Then each helper $u \in H_s$ that has a message that is labelled with the (unique!) tuple $(i,j)$ sends this message to the node with ID $h(i,j)$. Note that we throttle the send rate that the helpers $H_s$ send their messages to a single message per round, so that no node receives more than $\bigOa(\log n)$ per round by property (2) of \cref{lem:assign_helpers}.

    We compute the number of rounds until all messages of any helpers $u \in H_s$ for all $s\in S$ have been sent to intermediate nodes and make a case distinction. Case (1) of \cref{thm:routing} is simple, each sender has to send at most $\ell \in \tilO(\NQ)$ messages, which take $\ell$ rounds to send. In \cref{thm:routing} case (3) we exploit property (1) of \cref{def:adaptive_helpers}, that each source $s \in S$ has at least $|H_s| \geq k/\NQ$ helpers. This means that each helper $u \in H_s$ will be responsible for at most $\bigOa(\NQ)$ messages from $s$. By property (3) of \cref{def:adaptive_helpers}, each node is part of at most $\tilO(1)$ sets $H_s$. Therefore, each node has to send at most $\tilO(\NQ)$ messages in their role as helper of some source $s \in S$, which can be accomplished in as many rounds.

    On the side of the target helpers $H_t$ we do similar steps to retrieve messages from the intermediate nodes. In alternating rounds, each helper $w \in H_t$ will send one(!) request $(i,j)$ to the helper $h(i,j)$ to which it attaches its own identifier $\ID(w)$. In the subsequent round, each intermediate node $v$ that received a request $(i,j)$ from helper $w$ will send the message with label $(i, j)$, to the requesting helper $w$ using the handle $\ID(w)$. By property (3) of \cref{lem:assign_helpers} each intermediate node receives and has to reply to at most $\bigOa(\log n)$ requests in this way.

    Finally, we want to reduce case (2) of \cref{thm:routing} to case (1) and show that the assumption $\ell \leq k$ in case (3) can be made w.l.o.g.
    Specifically, we aim to solve case (2) and case (3) of \cref{thm:routing} for $k \leq \ell$, in $\tilO(\NQ[\ell])$ rounds using \cref{alg:routing} that solves \cref{thm:routing} case (1) and case (3) for $\ell \leq k$ in $\calT$ rounds.
    
    First each $t \in T$ creates a logging message. Note that the ``reverse'' routing problem where each node $t \in T$ sends a logging message to each $s \in T$ conforms to case (1) and case (3) for $\ell \leq k$, which we can solve in $\calT \in \tilO(\NQ[\ell])$ rounds using \cref{alg:routing}.    
    While the logging messages are in transit, we attach the information in which rounds these messages got forwarded between helpers and intermediate nodes. Note that this information has size $\bigOa(\log n)$ and can indeed be attached to the logging message. After delivering all logging messages, each $s \in S$ attaches the logging message from $t \in T$ to its own message to $t$. 
    
    This allows the message from $s$ to $t$ to retrace the path of the logging message in $\calT \in \tilO(\NQ[\ell])$ rounds as follows. The assignment of helper nodes remains in place as in the reverse problem, i.e., each helper $v \in H_s, s\in S$ that was assigned a request for some logging message from $t \in T$, is also responsible for forwarding the message from $s$ to $t$. All corresponding steps can easily be done in the local network in $\in \tilO(\NQ[\ell])$ rounds.     
    Then a transition of a messages from $s \in S$ to some $t \in T$ among helpers and intermediate nodes occurs in round $\tau$, when the reverse transition was made by the logging message in round $\calT_\calA - \tau$. By symmetry, it is clear, that since all the logging messages were delivered w.h.p.\ the same is true for the actual messages.
\end{proof}

\section{Universally Optimal Shortest Paths}
\label{sec:upper_bound}

In this section, using the broadcast and unicast tools developed in \cref{sec:optimal_broadcast,sec:optimal_unicast}, we design various universally optimal algorithms for approximating distances and cut sizes. As we will later see, our \emph{existentially optimal} shortest paths algorithms in \cref{thm:almost_shortest_sssp,thm:k-ssp} are key technical ingredients of some of our universally optimal shortest paths algorithms.

In \cref{subsect:universalklsp}, we show  $(1+\eps)$-approximation algorithms for the $(k,\ell)$-SP problem for a certain range of parameters of $k$ and $\ell$ in weighted graphs.
In \cref{subsect:APSPu}, we show  $(1+\eps)$-approximation algorithms for APSP in unweighted graphs.
In \cref{subsect:APSPw}, we show  $O(1)$-approximation algorithms for APSP in weighted graphs.
In \cref{subsect:cuts}, we combine our broadcast tool with a spectral sparsifier to approximate all cut sizes.

\subsection{\texorpdfstring{$(1+\eps)$}{(1+eps)}-Approximate \texorpdfstring{$(k,\ell)$-SP}{(k,l)-SP} in Weighted Graphs}\label{subsect:universalklsp}

In this section, we give $(1\p\eps)$-approximation algorithms for the $(k,\ell)$-SP problem for certain parameters of $k$ and $\ell$. Our algorithms take $\tilO\left(\NQ\right)$ rounds for any graph $G=(V,E)$, which is competitive with the best algorithm optimized for $G$ up to $\tilT(1)$ factors, due to a matching lower bound that holds even for the $(k,1)$-SP problem in \cref{sec:lower_bound}, whereas our solution can handle up to $\ell \leq \NQ^2$ random target nodes.

\uniKLSP*

\begin{proof}
    For the first claim we solve the $\ell'$-SSP problem for the set of target nodes, which has size at most $\ell' \in \bigOa(\ell \log n) \subseteq \tilO(\NQ)$ w.h.p.\ (the bound on $\ell'$ is a simple application of the Chernoff bound in \cref{lem:chernoffbound}). For this we set up $\ell'$ instances of the SSSP problem which we solve with \cref{thm:almost_shortest_sssp}. Since we consider $\varepsilon$ constant and $\ell' \in \tilO(\NQ)$, the round complexity for this step is at most $\tilO\left(\NQ\right)$.    
    For the second claim, where $\ell \leq \NQ^2$ and $\ell \cdot k \leq \NQ \cdot n$ we first solve the shortest paths problem for the set of target nodes in $\tilO(\NQ)$ rounds with \cref{thm:k-ssp}.

    Afterwards, in either claim, each source $s$ knows its distance to each target $t$, but we require this the other way around in order to solve the problem. This is a task that we can solve with \cref{thm:routing}, where each source $s$ puts its distance to each target $t$ into a message, and the goal is to deliver all these messages in $\tilO\left(\NQ\right)$ rounds.
    For the first claim we can apply \cref{thm:routing} (1) due to $\ell \leq \NQ$. For the second claim we can apply \cref{thm:routing} (3) due to the condition $\ell \cdot k \leq \NQ \cdot n$.    
\end{proof}

\subsection{\texorpdfstring{$(1+\eps)$}{(1+eps)}-Approximate APSP in Unweighted Graphs} 
\label{subsect:APSPu}
We now proceed to prove \cref{thm:unweighted_apsp_approx} by utilizing our $(1+\eps)$-approximate SSSP result from \cref{thm:almost_shortest_sssp}. See \cref{alg:unweighted_apsp} for an overview of our algorithm.

\unweightedAPSPApprox*



\begin{proof}
    We begin by computing $\tk[n]$ in $\tildeBigO{\tk[n]}$ rounds using \cref{lem:compute_neigh_qual}, so from now on we can assume all nodes know this value.
    We proceed by broadcasting the identifiers of all the nodes in $\tildeBigO{\tn}$ rounds using \cref{thm:optimal_dissemination}. 
    We execute \cref{lem:clustering_with_weak_diam} with $k = n$, in $\tildeBigO{\tk[n]}$ rounds, to cluster the nodes such that we know a set $R$ of cluster leaders, each cluster $C$ has weak diameter at most $4\tn\ceil{\log{n}}$, and $n/\tk[n] \leq |C| \leq 2n/\tk[n]$.

    Now, observe that as the clusters are disjoint and each has size at least $n/\tn$, then we have at most $\tn$ clusters and as such $|R| \leq \tn$. Using \cref{thm:almost_shortest_sssp}, it is possible to compute $(1+\eps)$-approximate distances from all the nodes in $R$ to the entire graph in $\tildeBigO{|R|/\eps^2} \subseteq \tildeBigO{\tn/\eps^2}$ rounds. Denote the computed approximate distances by $\hat{d}$.

    Each node $v$ learns its neighborhood of radius $x = (4\tn\ceil{\log{n}})/\eps$, denoted $\calB_{x}(v)$. This takes $O(x) \subseteq \tilO(\tn/\eps)$ rounds. After that, every node $v$ broadcasts its closest node in $R$, denoted $c_v \in R$, and the unweighted distance $d(v, c_v)$. As each node broadcasts $O(1)$ messages,  using \cref{thm:optimal_dissemination}, this requires $\tildeBigO{\tn}$ rounds.

    Finally, each node $v$ approximates its distance to each other node $w$ as follows. If $w \in \calB_x(v)$, then $v$ already knows its exact unweighted distance to $w$, as the local communication links have unlimited bandwidth, so $v$ sets $\delta(v, w) = d(v, w)$. Otherwise, $v$ sets $\delta(v, w) = \hat{d}(v, c_w) + d(w, c_w)$, where $c_w$ is the closest node in $R$ to $w$. Note that $v$ knows both $c_w$ and $d(w, c_w)$, as $w$ broadcasts these values in the previous step.

    \paragraph{Analysis.} We conclude the proof by showing that $\delta$ is a $(1+\eps')$ approximation of $d$ for some $\eps' \in \Theta(\eps)$.
    If $w\in{}B_{x}(v)$, then we already have $\delta(v,w)=d(v,w)$. Otherwise, $d(v,w)>x=4\tn\ceil{\log{n}}/\eps$, and $\delta = \hat{d}(v, c_w) + d(w, c_w)$. We begin by showing that $\delta(v, w) \geq d(v, w)$. As $\hat{d}(v,c_w)$ is a valid $(1+\eps)$-approximation, $\hat{d}(v,c_w)\geq{}d(v,c_w)$, and thus $\delta(v, w) \geq d(v, c_w) + d(w, c_w) \geq d(v, w)$, where the last inequality is due to the triangle inequality. We now bound $\delta(v,w)$ from above. It holds that $d(v,w)>4\tn\ceil{\log{n}}/\eps$ and $d(w,c_w)\leq{}4\tn\ceil{\log{n}}$, as the weak diameter of each cluster is at most $4\tn\ceil{\log{n}}$. Therefore, $d(w,c_w)\leq{}4\tn\ceil{\log{n}}<\eps\cdot{}d(v,w)$. As such, the following holds.
    \begin{align*}
        \delta(v,w)&=\hat{d}(v,c_w)+d(w, c_w)\\
        &\leq{}(1+\eps)\cdot d(v,c_w)+d(w,c_w) \\
        &\leq{}(1+\eps)\cdot \left(d(v,w)+d(w,c_w)\right)+d(w,c_w) \\
    &=(1+\eps)\cdot d(v,w)+(2+\eps)\cdot d(w,c_w) \\
    &<(1+\eps)\cdot d(v,w)+\left(2\eps+\eps^2\right)\cdot d(v,w) \\
    &=\left(1+3\eps+\eps^2\right)\cdot d(v,w)\\
   &=(1+\eps')\cdot d(v,w) & \eps' := 3\eps+\eps^2
    \end{align*}
 Therefore, we achieve a $(1+\eps')$ approximation, where $\eps' = 3\eps+\eps^2$. As $\eps \in (0, 1)$,  $\eps' < 4\eps$,  so by a simple change of parameter $\widetilde{\eps} = \eps/4$ we achieve the desired $(1+\eps)$ approximation result with our algorithm.
\end{proof}

\begin{algorithm}
\caption{$(1+\eps)$-Approximate Unweighted APSP} \label{alg:unweighted_apsp}
        Compute $\tn$

        Broadcast the identifiers of all the nodes.
        
        Run \cref{lem:clustering_with_weak_diam} with $k = n$ to get the set of cluster leaders $R$.
        
        Run $(1+\eps)$-approximate SSSP from each node in $R$. Denote computed distances by $\hat{d}$.
        
        Each node learns the $x = (4\tn\ceil{\log{n}})/\eps$-hop neighborhood.

        Using \cref{thm:optimal_dissemination}, each node $v$ broadcasts its closest cluster leader $c_v \in R$ and $d(v,c_v)$.
        
        Each node $v$ approximates its distance to any $w\in{V}$ by: 
        \[
        \delta(v,w)=
            \begin{cases}
                d(v,w), & w\in{}B_{x}(v)\\
                \hat{d}(v, c_w)+d(c_w,w), & \text{otherwise}
            \end{cases}
        \]
\end{algorithm}

\subsection{\texorpdfstring{$O(1)$}{O(1)}-Approximate APSP in Weighted Graphs}
\label{subsect:APSPw}

In this section, we show a simple deterministic APSP algorithm by broadcasting the spanner and a more complicated randomized APSP algorithm that involves skeleton graphs. We begin with proving \cref{thm:firstWeightedAPSPApprox}. 
A spanner is a sparse spanning subgraph that maintains a good approximation of distances in the original graph. More formally, a subgraph $H$ of a \emph{weighted} $G=(V,E)$ is a \emph{$\alpha$-spanner} if $d_H(u,v) \leq \alpha \cdot d_G(u,v)$ for all $u \in V$ and $v \in V$.

\begin{lemma} [\hspace{-0.01cm}{\cite[Corollary 3.16]{rozhovn2020polylogarithmic}}]
\label{thm:spanner_construction}
    Let $G=(V,E,\omega)$ be a weighted graph. For any integer $k$, there exists a deterministic algorithm in the \congest model that computes a $(2k-1)$-spanner with $O(kn^{1+1/k}\log n)$ edges in $\widetilde{O}(1)$ rounds.
\end{lemma}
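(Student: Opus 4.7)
The plan is to derandomize the classical Baswana--Sen $(2k-1)$-spanner algorithm using the polylogarithmic deterministic network-decomposition machinery of Rozho\v{n}--Ghaffari. Recall that Baswana--Sen proceeds in $k-1$ phases and maintains a clustering $\mathcal{C}_0, \mathcal{C}_1, \ldots, \mathcal{C}_{k-1}$ of a subset of nodes with cluster radii $0, 1, \ldots, k-1$. In the randomized original, $\mathcal{C}_{i+1}$ is obtained from $\mathcal{C}_i$ by keeping each cluster independently with probability $n^{-1/k}$; every node whose cluster dies either joins one adjacent surviving cluster (committing a single lightest spanner edge to that cluster) or, if it has no surviving cluster neighbor, commits one lightest spanner edge to each of its still-alive cluster neighbors. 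The expected number of committed edges per phase is $O(n^{1+1/k})$, and the resulting subgraph has stretch $2k-1$ by a telescoping argument.

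First, I would implement the clustering bookkeeping in \congest. Each node stores its cluster identifier (an $O(\log n)$-bit label) together with, for every neighboring cluster reachable within its current ball of radius $i$, the lightest edge connecting it to that cluster. Maintaining this information from phase $i$ to phase $i+1$ requires only $O(i)$ rounds of plain local message passing together with $O(\log n)$-bit label propagation, which is absorbed into the final $\widetilde{O}(1)$ bound because we may assume $k \leq \log n$ (for larger $k$ the target density $n^{1+1/k}$ is already trivially reached at $k = \log n$).

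The main step, and the main obstacle, is to replace the independent random sampling of clusters by a deterministic rule that still yields an $O(n^{1+1/k}\log n)$ per-phase edge count. Here I would invoke the derandomization toolkit of \cite{rozhovn2020polylogarithmic}: compute, in $\polylog n$ \congest rounds, a network decomposition of the current cluster graph (whose vertices are clusters of $\mathcal{C}_i$ and whose edges are pairs of clusters adjacent in $G$) into $O(\log n)$ color classes, each consisting of clusters of weak diameter $O(\log n)$. Process the classes one at a time. Within a single class, the decision ``which clusters survive'' is reduced via the method of conditional expectations on a pessimistic estimator of the Baswana--Sen edge count to a local combinatorial optimization that can be executed inside each class in $\polylog n$ rounds, because distinct classes are independent and the intra-class diameter is $\polylog n$. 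The upshot is a deterministic survival rule that, for every node $v$ whose cluster dies, certifies that $v$ is committing $O(n^{1/k}\log n)$ spanner edges.

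Finally, I would verify stretch and size. The $(2k-1)$ stretch carries over verbatim from the Baswana--Sen analysis, since it depends only on the cluster radius bounds $0,1,\dots,k-1$, which the deterministic variant preserves by construction: an omitted edge $\{u,v\}$ at phase $i$ is replaced by a detour through the cluster $u$ joined, of radius at most $i+1$, and then back to $v$, giving total weight at most $(2i+1)\omega(u,v)$. The edge count is $O(n^{1+1/k})$ per color class times $O(\log n)$ classes times $k-1 \leq \log n$ phases, giving $O(kn^{1+1/k}\log n)$ in total, and the round complexity is $(k-1)\cdot \polylog n = \widetilde{O}(1)$. The essential difficulty is the second step: ensuring that the derandomized survival rule still produces the right density with only $O(\log n)$ overhead rather than $n^{o(1)}$, for which the polylogarithmic network decomposition of \cite{rozhovn2020polylogarithmic} is precisely the missing ingredient.
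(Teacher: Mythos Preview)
The paper does not prove this lemma at all; it is imported verbatim as a black-box citation from \cite[Corollary 3.16]{rozhovn2020polylogarithmic} and used only as a subroutine in the APSP algorithms. So there is no ``paper's own proof'' to compare against. Your sketch is, at the level of strategy, the argument of the cited reference: derandomize Baswana--Sen phase by phase using the polylogarithmic deterministic network decomposition together with a conditional-expectations rule for which clusters survive.

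One step in your sketch is stated too loosely to count as a proof, though. You propose to compute a network decomposition \emph{of the cluster graph} and then run the method of conditional expectations inside each color class. In \congest this is not automatic: the cluster graph is a virtual graph that may be dense, and you have not argued that one round on it can be simulated with bounded bandwidth on $G$. The argument in \cite{rozhovn2020polylogarithmic} does not decompose the cluster graph directly; instead it casts the ``which clusters survive'' question as a hitting-set-type problem on a structure that lives in $G$ (or a small power of $G$), computes the decomposition there, and then aggregates the pessimistic estimator along low-diameter pieces of $G$ itself. Your outline would go through once you route it this way, but as written the phrase ``network decomposition of the current cluster graph'' plus ``local combinatorial optimization inside each class'' hides exactly the bandwidth issue that the derandomization has to overcome.
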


To prove \cref{thm:firstWeightedAPSPApprox}, we execute \cref{thm:spanner_construction} and then broadcast the resulting spanner.

\firstWeightedAPSPApprox*
\begin{proof}
    We run the algorithm of \cref{thm:spanner_construction} with $k= \lceil \eps\log{n}/2 \rceil$ to receive a $(2k-1)$-spanner with \[m^\ast \in O\left(k\cdot{}n^{1+1/k}\cdot \log n\right) \subseteq \widetilde{O}\left(4^{1/\eps}\cdot{}n\right)\] edges. By \cref{lem:growth_of_tk}, $\tk[m^\ast] \in O(2^{1/\eps}\tn)$. Using the algorithm of \cref{thm:optimal_dissemination}, we can make the spanner globally known in $\tilO(\tk[m^\ast]) \subseteq \widetilde{O}(2^{1/\eps}\cdot{}\tn)$ rounds by broadcasting $m^\ast$ messages. This allows all nodes to approximate APSP by a factor of $2k-1 = 2 \lceil \eps\log{n}/2 \rceil - 1 < \eps\log{n} + 1$.
\end{proof}

For the rest of the section, we turn to the randomized setting and show an improved APSP algorithm.
Before we do so, we must introduce the well-known concept of \emph{skeleton graphs}, first observed by Ullman and Yannakakis~\cite{Ullman1991}. We define the $h$-hop distance $d_{G}^h(u,v)$ between two nodes $u$ and $v$ in $G$ as the minimum path length over all paths connecting $u$ and $v$ with at most $h$ hops in $G$. In case $\hop(u,v) > h$, we let $d_{G}^h(u,v) = \infty$.

\begin{definition}[Skeleton graphs]\label{def:skeleton_graph}
Given a parameter $x$, a skeleton graph $\calS = (V_\calS, E_\calS, \omega_\calS)$ of $G=(V,E,\omega)$, is obtained by sampling each node of a given weighted graph $G$ to $V_\calS$ independently with a probability that is at least $\frac{1}{x}$. The edge set of $\calS$ is defined as \[E_\calS = \left\{ \{u,v\} \in V_\calS \times V_\calS \mid u \neq v \text{ and } \hop_G(u,v) \leq h\right\},\] 
where $h:=\xi{}x\ln{n}$ for some sufficiently large constant $\xi > 0$. 
For each skeleton edge $e = \{u,v\} \in E_\calS$, its weight  $\omega_\calS(u,v)$ is defined as the $h$-hop distance $d_{G}^h(u,v)$ between $u$ and $v$ in $G$.
\end{definition}




Such a skeleton graph has many useful properties, the main ones being that it preserves distances of $G$ and can be computed efficiently in a distributed fashion, where each skeleton node knows all its incident skeleton edges.  From its definition, it is clear that a skeleton graph can be constructed in the \local model, and thus \hybridzero, in $h \in \tilO(x)$ rounds.  The following lemma summarizes the well-known properties of a skeleton graph.



\begin{lemma}[\hspace{-0.01cm}{\cite[Lemmas 4.2 and 4.3]{augustine2020shortest}}]
	\label{lem:skeleton-graph}
	A skeleton graph $\calS = (V_\calS, E_\calS)$  can be constructed in $h\in \tilO(x)$ rounds in the \LOCAL model with the following properties.
\begin{enumerate}[(1)]
    \item For any $u,v \in V$ with $\hop(u,v) \geq h$, there is a shortest path $P$ from $u$ to $v$ such that any subpath $Q$ of $P$ with at least $h$ nodes contains a node in $V_\calS$ w.h.p.
    \item For any $u,v \in V_\calS$, $d_\calS(u,v) = d_G(u,v)$ w.h.p.
\end{enumerate} 
\end{lemma}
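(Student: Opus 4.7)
The plan is to prove the two parts of the lemma after first establishing the claimed construction. For the construction, each node $v \in V$ independently joins $V_\calS$ via a local coin flip with probability $1/x$ (no communication needed). Each node then floods its identifier and its $h$-hop neighborhood (so every node learns the full subgraph of $G$ it sees within radius $h$), which takes $h = \xi x \ln n \in \tilO(x)$ \LOCAL rounds. After this, each skeleton node $u \in V_\calS$ can locally compute $V_\calS \cap \calB_h(u)$ and also the $h$-hop distance $d_G^h(u,v)$ to every such $v$ by running a local shortest-path computation on the subgraph it has learned. Edges of the form $\{u,v\}$ with $\hop_G(u,v) \leq h$ and weight $d_G^h(u,v)$ are exactly those of $\calS$ in \cref{def:skeleton_graph}, so each skeleton node ends up knowing all its incident skeleton edges.

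For property (1), the key probabilistic step is a standard ``hitting'' argument. Fix any pair $u,v \in V$ with $\hop_G(u,v) \geq h$ and fix an arbitrary shortest path $P_{u,v}$ in $G$ (with some canonical tie-breaking, e.g., lexicographic on identifiers, so that ``the'' shortest path is a function of the input and not of the randomness). For any particular subpath $Q$ of $P_{u,v}$ with exactly $h$ nodes, the probability that $Q$ contains no skeleton node is at most $(1-1/x)^h \leq e^{-h/x} = e^{-\xi \ln n} = n^{-\xi}$, using the standard inequality $1 - y \leq e^{-y}$. The number of such $(u,v,Q)$ triples is at most $n^2 \cdot n = n^3$ (pair of endpoints times the starting index of $Q$ inside $P_{u,v}$), so a union bound (\cref{lem:unionbound}) gives a total failure probability of at most $n^{3-\xi}$, which is $1/n^{\Omega(1)}$ for $\xi$ chosen as a sufficiently large constant. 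On this high-probability event, every length-$h$ subpath of $P_{u,v}$ contains a skeleton node, which is exactly property (1).

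For property (2), we show both inequalities. For $d_\calS(u,v) \geq d_G(u,v)$: every edge $\{a,b\} \in E_\calS$ has weight $d_G^h(a,b) \geq d_G(a,b)$, and concatenating the realizing $G$-paths turns any $\calS$-path into a $G$-walk of the same total weight, so by the triangle inequality $d_G(u,v) \leq d_\calS(u,v)$. For $d_\calS(u,v) \leq d_G(u,v)$: if $\hop_G(u,v) \leq h$ then $\{u,v\} \in E_\calS$ with weight $d_G^h(u,v) = d_G(u,v)$ and we are done. Otherwise we invoke property (1) on the pair $(u,v)$ to obtain a shortest $u$-$v$ path $P$ in $G$ in which every length-$h$ subpath hits $V_\calS$. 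Walking along $P$ and greedily selecting skeleton nodes at hop-gaps of at most $h$ yields a sequence $u = s_0, s_1, \ldots, s_\ell = v$ in $V_\calS$ with $\hop_G(s_i, s_{i+1}) \leq h$ and with the $G$-subpath from $s_i$ to $s_{i+1}$ being a shortest $s_i$-$s_{i+1}$ path in $G$; hence $\{s_i,s_{i+1}\} \in E_\calS$ with $\omega_\calS(s_i,s_{i+1}) = d_G(s_i,s_{i+1})$. Summing the edge weights along this chain gives $d_\calS(u,v) \leq \sum_i d_G(s_i,s_{i+1}) = d_G(u,v)$.

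The main obstacle, and the only delicate point, is making sure the union bound in property (1) is applied to a collection of paths that is \emph{determined before} the random sampling. This is why we pre-commit to one canonical shortest path $P_{u,v}$ per endpoint pair (rather than quantifying over ``every'' shortest path, which could be exponentially many and correlated with the sampling). Once this is set up, the polynomial union bound closes cleanly, and property (2) is a purely deterministic consequence on the high-probability event.
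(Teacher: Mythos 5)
The paper does not actually prove this lemma; it imports it verbatim from \cite{augustine2020shortest} (Lemmas 4.2 and 4.3), so there is no in-paper proof to compare against. Your reconstruction is the standard one for skeleton graphs: sample with probability $1/x$, learn $h$-hop balls in $h \in \tilO(x)$ \LOCAL rounds, hit every $h$-node window of a pre-committed canonical shortest path via $(1-1/x)^{h} \le n^{-\xi}$ plus a polynomial union bound, and derive property (2) by chaining skeleton nodes along that path. Your explicit remark that the union bound must range over paths fixed \emph{before} the sampling is exactly the right point to be careful about.

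There is one step in your proof of property (2) that is wrong as written. In the first branch you claim that $\hop_G(u,v) \le h$ implies $d_G^h(u,v) = d_G(u,v)$. In a weighted graph this can fail: $u$ and $v$ may be adjacent via a heavy edge while every minimum-weight $u$-$v$ path uses more than $h$ (light) edges, in which case $d_G^h(u,v) > d_G(u,v)$ and the direct skeleton edge does not certify $d_\calS(u,v) \le d_G(u,v)$. Symmetrically, your second branch invokes property (1), which is stated only for $\hop_G(u,v) \ge h$, so pairs with small hop distance but long minimum-weight paths fall through both cases. The repair is easy and uses only what you already proved: split on whether the \emph{canonical minimum-weight} path $P_{u,v}$ has at most $h$ hops (then $d_G^h(u,v)=d_G(u,v)$ and the direct edge suffices) or more than $h$ hops (then your union bound, which already covers all pairs and all $h$-node windows of $P_{u,v}$ irrespective of $\hop_G(u,v)$, lets you run the chaining argument). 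With that correction the proof is sound.
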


We are now ready to prove \cref{theorem:secondWeightedAPSPApprox}. See \cref{alg:polynomial_approx_apsp} for an overview of our algorithm.

\begin{algorithm}
\caption{$(4\alpha - 1)$-Approximate Weighted APSP} \label{alg:polynomial_approx_apsp}
            Broadcast the identifiers of all the nodes.

            Compute $\tn$. Denote $t = n^{1/(3\alpha + 1)} \cdot \left(\tn\right)^{2/(3 + 1/\alpha)}$.
            
            Compute a skeleton graph $G_{\mathcal{S}}=(V_{\mathcal{S}}, E_{\mathcal{S}}, \omega_{\mathcal{S}})$ with sampling probability $1/t$.
            
            Compute a $(2\alpha-1)$-stretch spanner for $G_{\mathcal{S}}$, denoted $K$.

            Broadcast the edges of $K$. Locally compute a $(2\alpha - 1)$ approximation of distances between all nodes in $V_\mathcal{S}$, denoted $\hat{d}$.

            Each node learns the $h = \xi{}t\ln{n}$-hop neighborhood, where $\xi$ is the constant from \cref{def:skeleton_graph}.
            
            Each node $v \in V$ finds a skeleton node $v_s \in V_\mathcal{S}$ in its $h$-hop neighborhood with minimum $d^h(v, v_s)$ and broadcasts $v_s$ and $d^h(v, v_s)$.
            
            Each node $v$ approximates its distance to any $w\in{V}$ by:
            \[
            \delta(v,w)= \min\left\{d^h(v,w),d^h(v,v_s)+\hat{d}(v_s,w_s)+d^h(w_s,w)\right\}
            \]
\end{algorithm}

\secondWeightedAPSPApprox*

\begin{proof}
We may assume that $\alpha \in O(\log n)$, since otherwise we can solve the problem by simply running the algorithm of \cref{thm:firstWeightedAPSPApprox} with some $\eps \in \Theta(1)$ to attain the desired approximation ratio $\alpha$ in $\tilO(\tn)$ rounds. Hence we may hide any $\alpha$-factor in any function stated in $\tilO(\cdot)$ notation.

   We begin by broadcasting the identifiers of all the nodes in $\tildeBigO{\tn}$ rounds using \cref{thm:optimal_dissemination}.
    We then compute $\tk[n]$ in $\tildeBigO{\tk[n]}$ rounds using \cref{lem:compute_neigh_qual} and then denote \[t = n^{1/(3\alpha + 1)} \cdot \left(\tn\right)^{2/(3 + 1/\alpha)}.\] In this proof, we strive to achieve a round complexity of $\tildeBigO{t +  \tn}$.
    
    Using \cref{lem:skeleton-graph}, we compute a skeleton graph $G_{\mathcal{S}}=(V_{\mathcal{S}}, E_{\mathcal{S}}, \omega_{\mathcal{S}})$ with sampling probability $1/t$ in $\tildeBigO{t}$ rounds. Observe that $|V_{\mathcal{S}}|=\widetilde{\Theta}(n/t)$ \whp. We now create a $(2\alpha - 1)$ spanner, denoted $K$, of $G_{\mathcal{S}}$ using the algorithm of \cref{thm:spanner_construction}. Each round of the algorithm of \cref{thm:spanner_construction} is simulated over $G_{\mathcal{S}}$ using the local edges of $G$, and thus takes $\tildeBigO{t}$ rounds. As \cref{thm:spanner_construction} takes $\tildeBigO{1}$ rounds, our entire simulation takes $\tildeBigO{t}$ rounds in $G$.
    
    Due to \cref{thm:spanner_construction}, the spanner $K$ has $\tildeBigO*{\alpha \cdot |V_{\mathcal{S}}|^{1+1/\alpha}} \subseteq \tildeBigO*{(n/t)^{1+1/\alpha}}$ edges. Set $x = \max\{(n/t)^{1+1/\alpha}, n\}$ and compute $\tk[x]$ in $\tildeBigO{\tk[x]}$ rounds using \cref{lem:compute_neigh_qual}. Using \cref{thm:optimal_dissemination}, we can broadcast the entire spanner $K$ in $\tildeBigO{\tk[x]}$ rounds.
    
    We desire to show that $\tk[x] \in O(t + \tn)$. If $x = n$, then trivially $\tk[x] = \tn$. Otherwise, 
    \begin{align*}
        x &= (n/t)^{1+1/\alpha} \\
        &= n^{1+1/\alpha} \cdot t^{-(1 + 1/\alpha)} \\
        &= n^{1+1/\alpha} \cdot \left(n^{1/(3\alpha + 1)} \cdot \left(\tn\right)^{2/(3 + 1/\alpha)}\right)^{-(1+1/\alpha)} \\
        &= n^{1+1/\alpha - (1+1/\alpha)/(3\alpha + 1)} \cdot \left(\tn\right)^{-(2+2/\alpha)/(3+1/\alpha)} \\
        &= n^{((\alpha + 1)/\alpha)\cdot(1-1/(3\alpha + 1))} \cdot \left(\tn\right)^{-(2\alpha + 2)/(3\alpha + 1)} \\
        &= n^{((\alpha + 1)/\alpha)\cdot(3\alpha/(3\alpha + 1))} \cdot \left(\tn\right)^{-(2\alpha + 2)/(3\alpha + 1)} \\
        &= n^{3\cdot(\alpha + 1)/(3\alpha + 1)} \cdot \left(\tn\right)^{-(2\alpha + 2)/(3\alpha + 1)} \\
        &= n^{1 + 2/(3\alpha + 1)} \cdot \left(\tn\right)^{-(2\alpha + 2)/(3\alpha + 1)}. 
    \end{align*}
    
    Due to \cref{lem:growth_of_tk}, $\tk[x] \in O\left(\sqrt{x/n} \cdot \tn\right)$, and so
    \begin{align*}
        \tk[x] 
        &\in O\left(\sqrt{x/n} \cdot \tn\right) \\
        &\subseteq O\left(\sqrt{n^{2/(3\alpha + 1)} \cdot \left(\tn\right)^{-(2\alpha + 2)/(3\alpha + 1)}} \cdot \tn\right) \\
        &\subseteq O\left(n^{1/(3\alpha + 1)} \cdot \left(\tn\right)^{-(\alpha + 1)/(3\alpha + 1)} \cdot \tn\right) \\
        &\subseteq O\left(n^{1/(3\alpha + 1)} \cdot \left(\tn\right)^{2\alpha/(3\alpha + 1)}\right) \\
        &\subseteq O\left(n^{1/(3\alpha + 1)} \cdot \left(\tn\right)^{2/(3 + 1/\alpha)}\right) \\
        &\subseteq O(t). 
    \end{align*}

    Thus, in either case, $\tk[x] \in O(t + \tn)$. Therefore, using \cref{thm:optimal_dissemination} we can broadcast all the edges in $K$ in $\tildeBigO{\tk[x]} \subseteq \tildeBigO{t + \tn}$ rounds. Using this information, each node locally computes a $(2\alpha - 1)$ approximation to the distances in $G_{\mathcal{S}}$.

    Next, every node learns its $h = \xi{}t\ln{n}$-hop neighborhood in $\tildeBigO{t}$ rounds, where $\xi$ is the constant from \cref{lem:skeleton-graph}. Due to \cref{lem:skeleton-graph}, every node $v$ sees at least one skeleton node in its $h$-hop neighborhood. Thus, $v$ can find a skeleton node $v_s \in V_\mathcal{S}$ in its $h$-hop neighborhood with minimum $d^h(v, v_s)$, and then $v$ broadcasts $v_s$ and $d^h(v, v_s)$. This step takes $\tildeBigO{\tn}$ rounds, due to \cref{thm:optimal_dissemination}, as every node broadcasts $O(1)$ messages.

    Finally, each node $v$ approximates its distance to any node $w$ by \[\delta(v,w)= \min\{d^h(v,w),d^h(v,v_s)+\hat{d}(v_s,w_s)+d^h(w_s,w)\}.\] It remains to show that this is a $(4\alpha-1)$-approximation.

    \paragraph{Analysis.} Let $v,w\in{V}$. If there exists a shortest path between them of less than $h$ hops, then $\delta(v,w)=d^h(v,w)=d(v,w)$. Otherwise, all $v$-$w$ shortest paths are longer than $h$ hops, and by \cref{lem:skeleton-graph}, there exists a skeleton node $s$ on one of $v$-$w$ shortest paths. Further, \cref{lem:skeleton-graph} guarantees that we can find such a skeleton node $s$ in the $h$-hop neighborhood of $v$. As $s$ sits on a shortest path from $v$ to $w$, it also holds that $d^h(v, s) = d(v, s)$. Finally, it holds that $d^h(w, w_s) \leq d(w, s)$ -- this is true as either there is a shortest path from $w$ to $s$ with at most $h$ hops, in which case $d^h(w, w_s) \leq d^h(w, s) = d(w, s)$, or there is a path from $w$ to $s$ with a skeleton node $s'$ on it that is also in the $h$-hop neighborhood of $w$, in which case $d^h(w, w_s) \leq d^h(w, s') \leq d(w, s)$. Using all of these, we now show that $\delta$ is a $(4\alpha-1)$-approximation.
    \begin{align*}
        \delta(v,w) &= d^h(v,v_s)+\hat{d}(v_s,w_s)+d^h(w_s,w)\\
    &\leq{} d^h(v,v_s) +(2\alpha-1)d(v_s,w_s) + d^h(w_s,w)\\
    &\leq{} d^h(v,v_s) + (2\alpha-1)(d(v_s,v)+d(v,w)+d(w,w_s)) + d^h(w_s,w) \\
    &\leq{} d^h(v,v_s) + (2\alpha-1)(d^h(v_s,v)+d(v,w)+d^h(w,w_s)) + d^h(w_s,w) \\
    &\leq{} (2\alpha-1)d(v,w) + 2\alpha(d^h(v,v_s)+d^h(w_s,w)) \\
    &\leq{} (2\alpha-1)d(v,w) + 2\alpha(d^h(v,s)+d^h(w_s,w)) \\
    &=(2\alpha-1)d(v,w) + 2\alpha(d(v,s)+d^h(w_s,w)) \\
    &\leq{} (2\alpha-1)d(v,w) + 2\alpha(d(v,s)+d(s,w)) \\
    &=(2\alpha-1)d(v,w) + 2\alpha\cdot{}d(v,w) \\
    &=(4\alpha-1)d(v,w)
    \end{align*}

    Observe that the approximation never underestimates, i.e.,~$\delta(v, w) \geq d(v, w)$, so we are done. To see this, consider the formula $\delta(v,w) = d^h(v,v_s)+\hat{d}(v_s,w_s)+d^h(w_s,w)$: Both $d^h(v,v_s)$ and $d^h(w_s,w)$ correspond to actual path lengths, and we have $\hat{d}(v_s,w_s) \geq d(v_s,w_s)$.
\end{proof}

\subsection{Approximating Cut Sizes}\label{subsect:cuts}

Our universally optimal broadcasting tool can be used in combination with different sparsification tools. In particular, we prove \cref{theorem:minCutApprox} using the following sparsifier.

\begin{lemma}[\hspace{-0.01cm}{\cite[Theorem 3.4]{koutis2016simple}}]
\label{thm:congest_cut_sparsifier}
    There is an $\tildeBigO{1/\eps^2}$-round \congest algorithm that given a weighted graph $G=(V,E,\omega)$ and any $\eps>0$, computes a weighted graph $H=(V, \hat{E}, \hat{\omega})$ such that for any cut $\emptyset \subsetneq S  \subsetneq V$ it holds that $(1-\eps)\operatorname{cut}_H(S)\leq{}\operatorname{cut}_G(S)\leq{}(1+\eps)\operatorname{cut}_H(S)$ and $|\hat{E}|=\tildeBigO{n/\eps^2}$ \whp. 
\end{lemma}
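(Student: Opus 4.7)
The plan is to import the standard Benczur-Karger style cut-sparsification machinery and implement it efficiently in \congest via spanner-based edge strength estimation, as in Koutis-Xu. The overall strategy is to assign each edge $e$ a sampling probability $p_e$ proportional to $\tildeBigO{1/(\eps^2 \kappa(e))}$, where $\kappa(e)$ is a lower bound on the \emph{strong connectivity} of $e$, then retain $e$ with probability $p_e$ and reweight it by $1/p_e$. The Benczur-Karger / Fung-Hariharan-Harvey-Panigrahi cut-concentration analysis then implies that every cut is preserved up to a $(1 \p \eps)$ factor \whp and that the expected number of retained edges is $\tildeBigO{n/\eps^2}$. The only distributed task is therefore to certify a strong connectivity lower bound $\kappa(e)$ for every $e$ and then perform independent local sampling.

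The key ingredient is an efficient distributed construction of a \emph{$t$-bundle spanner}: a subgraph $B$ that is the edge-disjoint union of $t = \tildeBigO{1/\eps^2}$ spanners of the current graph. Every edge $e \notin B$ is then guaranteed to have strong connectivity at least $t$ in the current graph, because any cut separating $e$ must cut each of the $t$ disjoint spanners. Each individual spanner can be computed deterministically in polylogarithmic \congest rounds by a Baswana-Sen style construction, and the $t$ spanners are computed sequentially on the residual graph, for a total of $\tildeBigO{1/\eps^2}$ rounds. Given $B$, each edge of $G \setminus B$ is independently sampled with probability $1/2$ and its weight is doubled; the strong connectivity guarantee together with the Karger sampling bound ensures that this step preserves all cuts up to a $(1\p\eps/\log n)$ factor \whp. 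Iterating the bundle-plus-sampling scheme $O(\log n)$ times causes the residual edge count to shrink geometrically while each level contributes at most $\tildeBigO{n/\eps^2}$ edges to the final sparsifier; the union of all bundles then has $\tildeBigO{n/\eps^2}$ edges.

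To handle the \emph{weighted} setting, I would bucket the edges of $G$ into $O(\log n)$ weight classes by powers of two (using $\omega(e) \leq \poly(n)$), run the above unweighted procedure on each bucket as an unweighted multigraph, and union the produced sparsifiers. Composition of $(1\p\eps/\log n)$-cut sparsifiers across buckets yields a $(1\p\eps)$-cut sparsifier of the original weighted graph; rescaling $\eps$ by a $\log n$ factor costs only polylogarithmic overhead absorbed into $\tildeBigO{\cdot}$. The $O(\log n)$ buckets can be processed in parallel in \congest because each edge belongs to exactly one bucket and all computations are edge-local.

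The main obstacle is controlling \congest round complexity across the $O(\log n)$ recursive levels without accumulating a polylog factor per level. The resolution is that every level only requires a spanner of the current residual graph together with independent local Bernoulli sampling: no node ever needs global information beyond what its own incident edges certify, so each level indeed runs in polylogarithmic rounds, and the $O(\log n)$ levels and $O(\log n)$ weight buckets together yield the claimed $\tildeBigO{1/\eps^2}$ round complexity. The cut approximation and edge-count bounds then follow directly from the standard strong-connectivity-based sparsification analysis applied to the certificates produced by each bundle spanner.
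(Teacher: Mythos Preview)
The paper does not prove this lemma at all: it is stated as a direct citation of \cite[Theorem 3.4]{koutis2016simple} and used as a black box in the proof of \cref{theorem:minCutApprox}. There is nothing in the paper to compare your argument against.

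Your sketch is broadly in the spirit of the cited Koutis--Xu construction, but two points are worth flagging. First, you invoke the Bencz\'ur--Karger \emph{strong connectivity} parameter, yet the property you actually derive from the $t$-bundle spanner (``any cut separating $e$ hits all $t$ spanners'') certifies only \emph{local edge connectivity} $\lambda_e \geq t$, which is weaker than strong connectivity; the Fung--Hariharan--Harvey--Panigrahi analysis you cite does allow sampling by $\lambda_e$, so the argument still goes through, but the terminology is mismatched. Second, the actual Koutis--Xu result is a \emph{spectral} sparsifier (hence also a cut sparsifier), and their analysis bounds the \emph{effective resistance} of off-bundle edges rather than their connectivity; your cut-only route is a legitimate simplification for the statement as written here, but it is not literally what the cited theorem does. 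Neither point is a fatal gap for the purpose of this lemma, since the paper only needs the cut guarantee and only uses the result as a citation.
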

Now we prove \cref{theorem:minCutApprox}.

\minCutApprox*

\begin{proof}
    We run the algorithm of \cref{thm:congest_cut_sparsifier} in $\tildeBigO{1/\eps^2}$ rounds and get a cut sparsifier with $k \in \tildeBigO{n/\eps^2}$ edges \whp. By \cref{lem:growth_of_tk} and \cref{thm:optimal_dissemination} we can broadcast the sparsifier in $\tildeBigO{\tk} \subseteq \tildeBigO{\tn/\eps}$ rounds, and then each node can compute any cut approximation locally.
\end{proof}

\section{Lower Bounds}
\label{sec:lower_bound}

In this section, we give matching lower bounds for the problems considered in the previous sections, thus proving their universal optimality. Our lower bounds are divided into two categories: information dissemination and shortest path problems. 


\subsection{Preliminaries}

The general framework for our lower bounds in the \hybrid model
is based on a reduction of an intermediate problem called the \textit{node communication problem}, which describes the complexity of communicating information between distinct node sets in the \hybrid model. 

More specifically, we are given two sets of nodes $A$ and $B$, where nodes in $A$ ``collectively know'' the state of some random variable $X$ and need to communicate it to $B$. The formal definition of the node communication problem is given in \cref{sec:node_comm}. 
The amount of information that can be conveyed from $A$ to $B$ via global communication is fundamentally restricted by the nodes within $h$ hops that each node set can rely on for global communication, as shown in the following lower bound. We emphasize that this lower bound holds even if the topology of $G$ is known to all nodes.



\begin{lemma}[cf. \cite{Kuhn2022, Schneider2023}]
	\label{lem:lower_bound_node_comm}
	Let $A,B$ be disjoint node sets and let $h \leq \hop(A,B)$ and $N := |\calB_{h-1}(A)|$.
	Any algorithm that solves the {node communication problem}, formally defined in \cref{def:node_comm_problem}, with $A,B$ in the \hybridpar{\infty}{\gamma} model with success probability at least $p$ for any random variable $X$, takes at least \smash{$\min\!\big(\frac{pH(X) -1}{N \cdot \gamma}, \frac{h}{2} \m 1\big)$} rounds in expectation.
\end{lemma}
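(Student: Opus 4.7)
The plan is to adapt the information-theoretic lower bound technique of \cite{Kuhn2022, Schneider2023} and reduce the node communication problem to Shannon's source coding theorem (via Fano's inequality). Suppose an algorithm takes $T$ rounds and succeeds with probability at least $p$. To establish $T \geq \min\bigl(\tfrac{pH(X)-1}{N\gamma}, \tfrac{h}{2} - 1\bigr)$, I would proceed by case analysis: if $T \geq \tfrac{h}{2} - 1$, the bound holds trivially since it is the minimum of two quantities. Otherwise $T < \tfrac{h}{2} - 1$, and it remains to prove $T \geq \tfrac{pH(X) - 1}{N\gamma}$ information-theoretically.

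In the non-trivial case, the first observation is that $2T < h \leq \hop(A,B)$ implies $\calB_T(A) \cap \calB_T(B) = \emptyset$, so no path of at most $T$ local hops connects any node of $A$ to any node of $B$. Therefore, purely local communication transmits no information about $X$ to $B$ within $T$ rounds, and any knowledge $B$ obtains about $X$ must have been mediated by the global network. To quantify this, I would note that within the first $T < h - 1$ rounds, the set of nodes that can obtain knowledge of $X$ via local propagation from $A$ is contained in $\calB_{T}(A) \subseteq \calB_{h-1}(A)$, which has size at most $N$. Each such node transmits at most $\gamma$ bits globally per round, so the cumulative amount of $X$-information that can leave this $A$-neighborhood via global messages over $T$ rounds is at most $TN\gamma$ bits. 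By the data processing inequality, any subsequent relaying of this information through additional global hops cannot increase the mutual information with $X$. Fano's inequality then requires $TN\gamma \geq pH(X) - 1$ for $B$ to recover $X$ with success probability at least $p$, which rearranges to the claimed bound.

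The main obstacle lies in the last counting step: in $\hybridpar{\infty}{\gamma}$ the local bandwidth is unbounded, so a naive simulation argument that partitions the nodes into two sides and tallies messages across the cut fails, since local messages from the boundary of $\calB_{h-1}(A)$ to its complement could in principle carry arbitrarily many bits. The resolution is to focus on where $X$-information can \emph{originate} rather than where it can travel; combined with the data processing inequality applied along the chain of global relays, this yields the effective bandwidth bound of $N\gamma$ per round. I would then invoke the existing formulation of this argument from \cite{Kuhn2022, Schneider2023} to complete the reduction and obtain the desired expected-round lower bound.
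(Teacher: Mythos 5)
Your proposal is correct and matches the paper's treatment, which likewise only sketches this lemma and defers the rigorous argument to \cite{Kuhn2022, Schneider2023}: both rest on the same two ideas, namely that $2T < h$ blocks any local-network transfer from $A$ to $B$, and that the global information about $X$ leaving $A$'s side originates from at most $N = |\calB_{h-1}(A)|$ nodes at $\gamma$ bits per round, with a data-processing/converse step yielding $TN\gamma \geq pH(X)-1$. The only cosmetic difference is that you invoke Fano's inequality, whereas the cited works derive the $pH(X)-1$ form by viewing the global transcript as a uniquely decodable code for $X$ and applying Shannon's source coding theorem; this also handles the expected-round formulation that your fixed-$T$ case analysis elides.
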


On a very high level, the proof of the lemma shows that a \hybrid algorithm that solves the node communication problem induces a transcript of the global communication that happened during its execution that can be used to obtain a uniquely decodable code for $X$, which implies a lower bound for the size of the transcript by Shannon's source coding theorem. This, in turn, implies a lower bound for the number of rounds it must have taken to communicate this transcript between the two node sets. 

To bound the amount of communication that can occur between the two node sets $A$ and $B$ per round. two ideas are used. First, the two node sets $A$ and $B$ need to be at a sufficiently large hop distance to limit the amount of information that can be transmitted in the local network. Second, the number of nodes that $A$ and $B$ have in their proximity determines the amount of global communication that can occur. This idea was also behind the definition of graph parameter neighborhood quality.

\subsection{Universal Lower Bounds for Information Dissemination}

We show how to use the lower bound for the node communication problem to derive matching lower bounds for the information dissemination problems considered in previous sections. We start by proving the following lemma that directly combines the properties of the graph parameter $\NQ$ with the lower bound for the node communication problem. This lemma forms the foundation for further reductions to our various information dissemination and graph problems.

One of the challenges for making this lower bound universal is that we cannot assume anything about the local communication graph $G=(V,E)$ other than its neighborhood quality. In particular, we cannot completely tailor a graph to prove a certain lower bound, as is customary in existential lower bounds.

\begin{lemma}
    \label{lem:receive_lower_bound}
    Given any local communication graph $G = (V,E)$, then for any assignment of $k$ tokens of size at least $\lceil\log k\rceil+1$ bits to the set $V$, where a single node may hold up $k$ tokens, there exists a node $v \in V$ such that it takes $\tilOm(\NQ)$ rounds in expectation in the \hybrid model, until $v$ knows all tokens with constant probability, even if $v$ is given knowledge of $G=(V,E)$.
\end{lemma}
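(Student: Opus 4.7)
My plan is to invoke the node-communication lower bound of Lemma~\ref{lem:lower_bound_node_comm} with a carefully chosen receiver $v$, exploiting Lemma~\ref{lem:neighborhood_small}: there is a node $v^* \in V$ with $|\calB_r(v^*)| < k/r$ for every $r < \NQ$. Set $h := \NQ$ and $U := \calB_{h-1}(v^*)$, so $|U| < k/(\NQ-1)$. Because the adversary may cluster all tokens around $v^*$ (robbing it of any receive-workload), the proof case-splits on the concentration of tokens relative to $U$, and treats token contents as a uniformly random variable $X$ of length $(\lceil \log k \rceil + 1)\cdot (\text{\#tokens at the sender set})$ bits — legitimate since the claim quantifies over \emph{every} content assignment.

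\textbf{Case A} (at least $k/2$ tokens sit outside $U$). Take $v := v^*$ and apply Lemma~\ref{lem:lower_bound_node_comm} with $B=\{v\}$, $A$ the outside token holders, and $h=\NQ$. Then $H(X) \geq (k/2)(\log k + 1)$, and $\hop(A,v)\geq h$ makes the second term $h/2-1 \in \tilOm(\NQ)$. Using the symmetric receiver-side cut $N = |\calB_{h-1}(B)| = |U| < k/(\NQ-1)$ and $\gamma \in O(\log^2 n)$, the first term evaluates to $\frac{pH(X)-1}{N\gamma} \in \tilOm\!\left(\frac{k\log k \cdot \NQ}{k\cdot\polylog n}\right) = \tilOm(\NQ)$.

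\textbf{Case B} (more than $k/2$ tokens lie inside $U$). Pick instead a receiver $w$ with $\hop(w,v^*) \geq 2\NQ$, so $\hop(w,U)\geq \NQ$; such a $w$ exists whenever $D \geq 2\NQ$. Information about the inside tokens can reach $w$ only by crossing the cut around $U$, and that cut's global-message bandwidth is at most $|U|\cdot\gamma \in \tilO(k/\NQ)$ bits per round, because local edges are useless within the first $\NQ$ rounds (local propagation from $U$ to $w$ needs at least $\NQ$ hops). Over $t < \NQ$ rounds the cumulative $X$-information available in $w$'s reach is thus $\leq t \cdot |U| \cdot \gamma$, and forcing this to exceed $(k/2)\log k$ bits yields $t \in \tilOm(\NQ)$. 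The degenerate subcase $D < 2\NQ$ collapses to $\NQ \in \Theta(D)$ and is handled directly by the hop-distance term $h/2-1$ of Lemma~\ref{lem:lower_bound_node_comm}, taking $v$ to be any node far from the token-heaviest region.

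The main obstacle is Case B: a naive application of Lemma~\ref{lem:lower_bound_node_comm} with $A \subseteq U$ would use $N = |\calB_{h-1}(A)| \leq |\calB_{2(h-1)}(v^*)|$, and Lemma~\ref{lem:neighborhood_small} provides \emph{no} bound on balls of radius $\geq \NQ$. The workaround is to substitute the true cut capacity $|U|\cdot\gamma$ for the nominal $|\calB_{h-1}(A)|\cdot\gamma$ in the source-coding proof underlying Lemma~\ref{lem:lower_bound_node_comm}: within fewer than $h$ rounds only global messages originating inside $U$ convey $X$-information across the cut (nodes beyond $U$ do not yet know anything about the inside tokens' contents), so the effective $N$ is $|U|$ rather than a larger ball around $A$. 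With that mild strengthening — which is standard in the node-communication literature — Case B yields the same $\tilOm(\NQ)$ bound as Case A, completing the proof.
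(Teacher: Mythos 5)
Your Case A is essentially the paper's argument and its arithmetic is fine. The gap is in Case B. The ``mild strengthening'' you invoke --- replacing the sender-side ball $|\calB_{h-1}(A)|$ in \cref{lem:lower_bound_node_comm} by $|U|$ on the grounds that only global messages originating inside $U$ can convey $X$-information across the cut --- is not valid in \hybrid. The local network has unlimited bandwidth, so after $j$ rounds every node in $\calB_j(A)$ knows the full contents of the tokens held in $A$, and each such node can then act as an additional global sender toward $w$; this is exactly why the node-communication bound charges $|\calB_{h-1}(A)|\cdot\gamma$ per round rather than $|A|\cdot\gamma$. Since $A\subseteq U=\calB_{\NQ-1}(v^*)$, the relevant quantity is $|\calB_{h-1}(A)|\leq|\calB_{2\NQ-2}(v^*)|$, on which \cref{lem:neighborhood_small} gives no control; and the receiver-side ball around your far node $w$ is equally uncontrolled, since the small-ball guarantee applies only to $v^*$. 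Your degenerate subcase $D<2\NQ$ suffers from the same issue: the bound of \cref{lem:lower_bound_node_comm} is a \emph{minimum} of two terms, so a large hop distance alone does not help when the first term $\tfrac{pH(X)-1}{N\gamma}$ can be as small as $\tilO(k/n)$ for an uncontrolled $N$.

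The paper closes this hole with a different device, which you should adopt. Shrink the radius to $h=\lfloor (\NQ-1)/3\rfloor-1$ and pick, on a shortest path of length $\NQ-1$ leaving $v^*$, a second node $w$ at hop distance exactly $2h+1$ from $v^*$. Then $\calB_h(w)\subseteq\calB_{\NQ-1}(v^*)$, so $|\calB_h(w)|\leq k/(\NQ-1)$ as well, and $\calB_h(v^*)\cap\calB_h(w)=\emptyset$. By disjointness, at least one of the two balls contains at most $k/2$ tokens; that node serves as the receiver, it has at least $k/2$ tokens outside its $h$-ball, and its $h$-ball is small --- so your Case A computation applies verbatim to it. This eliminates Case B, the need to strengthen the node-communication lemma, and the case distinction on $D$ altogether.
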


\begin{proof}
    Let $r := \NQ -1$ and let $v$ be a node with $\calB_r(v) \leq k/r$, whose existence is guaranteed in \cref{lem:neighborhood_small}. We assume that $\NQ(v) \geq 6$, since otherwise the lower bound is trivial.

 Define $h = \lfloor \frac{r}{3} \rfloor -1$. We have $h \geq 1$, due to $\NQ(v) \geq 6$.
    It is possible to identify another node $w \in \calB_{r}(v)$ such that the following two properties are fulfilled: (i) $\calB_{h}(w) \subseteq \calB_{r}(v)$ and (ii) $\calB_{h}(v) \cap \calB_{h}(w) = \emptyset$.
    We pick $w$ as follows. Since $r \leq \NQ \leq D$, there must be an unweighted shortest path $P \subseteq \calB_{r}(v)$ of length $r$ that starts at node $v$. We define $w \in P$ as the node that is at a distance exactly $2h+1$ from $v$.
    Then property (ii) follows directly from distance $\hop(v,w) = 2h+1$. Property (i) is because all other nodes in $\calB_{h}(w)$ are at hop distance at most $3h + 1 \leq r$.

    The reason to construct $w$ is the following. Since $\calB_{h}(v) \cap \calB_{h}(w) = \emptyset$ (property (ii)), at least one of $\calB_{h}(v)$ and $\calB_{h}(w)$ must contain at most $k/2$ tokens, otherwise we would have more than $k$ tokens. This implies that for one of the two nodes $v,w$ at least $k/2$ tokens must be located outside the respective $h$-hop neighborhood. In the following discussion, we assume w.l.o.g.\ that this is the case for $v$.
    
    
    Now we are ready to set up the node communication problem. Let $A := V \setminus \calB_{h}(v)$ and let $B := \{v\}$. We have $h = \hop(A,B)$. Let $N := |\calB_{h}(v)| \leq |\calB_{r}(v)| \leq k/r = k/(\NQ-1)$. Let $X$ be a random $(k/2)$-bit string, where each bit is assigned a value 0 or 1 with equal probability independently. Let the content of the $k/2$ messages outside $\calB_{h}(v)$ each contain one bit and the corresponding index in the bitstring $X$, for which $\lceil\log k\rceil+1$ bits suffice. The entropy of $X$ is \smash{$H(X) = -2^{k/2} \cdot 2^{-k/2} \cdot \log\big(2^{-k/2}\big) = \tfrac{k}{2}$} by \cref{def:entropy}.
    Note that the initial state of node $v$ is independent of $X$, in particular, even if we give initial knowledge of $G$.
    This sets up the conditions of the node communication as specified in \cref{def:node_comm_problem}.
    
    Furthermore, solving the problem that $v$ learns all $k/2$ messages also solves this node communication problem since $v$ can reconstruct the outcome of $X$ from these messages. By this reduction, a lower bound for the node communication problem must be one for the problem of $v$ learning all $k/2$ messages.
    Recall that $p$ is a constant, $\gamma \in \Theta(\log^2 n)$ in the \hybrid model (see \cref{sec:models}), and $h \in \Theta(\NQ)$, so the expected number of rounds is at least 
    \[
        \min\!\left(\frac{pH(X)-1}{N \cdot \gamma}, \frac{h}{2} \m 1\right) \leq \min\!\left(\frac{(\NQ-1)(pk/2 - 1)}{k \gamma}, \frac{h}{2} \m 1\right) \in \tilOm(\NQ). \qedhere
    \]    
\end{proof}

With this lemma on hand, we can prove lower bounds for \kdis and \klrout. To also obtain a lower bound for the \kagg problem, we first show a simple reduction from \kdis to \kagg.

\begin{lemma}[\kdis $\leq$ \kagg]\label{lem:reduction}
If there is a deterministic algorithm that solves \kagg in $t$ rounds in \hybrid or \hybridzero, then \kdis can be solved in  $t + \tildeBigO{1}$ rounds deterministically in the same model.
\end{lemma}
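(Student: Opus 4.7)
The plan is to reduce \kdis to \kagg by turning each of the $k$ messages into a slot of an aggregation instance. Concretely, if we can assign each message a unique index $i \in [k]$ such that the node holding the message knows this index, then for each $i$ we can define $f_i(v)$ to equal the message with index $i$ if $v$ holds it, and a neutral placeholder symbol $\bot$ otherwise. Choosing $F$ to be the aggregation ``pick the unique non-$\bot$ element'' (which is associative and commutative once we know at most one entry per index is non-$\bot$) makes $F(f_i(v_1), \dots, f_i(v_n))$ equal to the $i$-th message. A single call to the assumed \kagg algorithm then delivers all $k$ messages to every node, solving \kdis. The total cost is $t + \tildeBigO{1}$, where the $\tildeBigO{1}$ accounts for the preprocessing described next.

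The preprocessing step is to compute the indices $i \in [k]$. First, each node $v$ locally knows how many messages $c(v)$ it initially holds (possibly zero). Using the basic deterministic aggregation primitive from \cref{lem:hybrid_zero_aggregate} with the virtual tree of \cref{lem:construct_tree_without_ids}, we compute $k = \sum_{v \in V} c(v)$ and broadcast it in $\tildeBigO{1}$ rounds. Then we compute a prefix sum of $c(v)$ over the nodes in some canonical order (e.g., the in-order traversal of the virtual tree, or the order of identifiers). This is a standard aggregation along the virtual tree: each internal node sums the $c$-values in its subtree on the way up, and then distributes partial prefix sums on the way down. This also runs in $\tildeBigO{1}$ rounds on a tree of logarithmic depth and bounded degree. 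From its prefix sum, each node locally assigns the indices $\{\text{prefix}(v)+1, \dots, \text{prefix}(v)+c(v)\} \subseteq [k]$ to its messages, producing a globally unique indexing.

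Once indices are assigned, each node $v$ sets the \kagg input $f_i(v)$ to the (tagged) message with index $i$ if $i \in \{\text{prefix}(v)+1, \dots, \text{prefix}(v)+c(v)\}$, and to $\bot$ otherwise. We then invoke the assumed $t$-round deterministic \kagg algorithm with inputs $f_1, \dots, f_k$ and aggregation function $F$. After it terminates, every node knows $F(f_i(v_1), \dots, f_i(v_n))$ for all $i \in [k]$, which is exactly the $i$-th message. Thus every node learns all $k$ messages, as required by \kdis. The reduction is deterministic and works equally well in \hybrid and \hybridzero, since both the tree construction and the aggregation primitive used for index computation are available there.

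The only thing that might look like an obstacle is whether $F$ is genuinely a valid aggregation function in the sense of \cref{def:kagg}. By construction of the indices, for each $i$ there is exactly one node with $f_i(v) \neq \bot$, so $F$ can be realized by, say, taking the maximum over a fixed total order on the alphabet with $\bot$ as its minimum, which is associative and commutative. The message domain fits in $|X| \in n^{O(1)}$ because messages have $O(\log n)$ bits. Everything else is just composing known primitives, so no further technical hurdles arise.
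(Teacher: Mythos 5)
Your proposal is correct and follows essentially the same route as the paper: build a deterministic virtual tree, compute subtree token counts, derive a prefix-sum-style allocation of the indices $[k]$ to the token-holding nodes in $\tildeBigO{1}$ rounds, and then invoke the \kagg algorithm with a ``select the unique non-neutral element'' aggregation function. Your explicit justification that $F$ can be realized as a maximum over a total order with $\bot$ as the minimum is a slightly more careful treatment of a point the paper's proof only sketches, but the argument is the same.
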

\begin{proof}
Intuitively, since we have $k$ tokens to disseminate and $k$ aggregation results that can be made globally known in $t$ rounds, we would like to place those $k$ tokens in different indices of the values and have the rest of the nodes send the unit element in the rest of the indices. The only problem is that all of the nodes holding tokens need to coordinate in which indices each node should put its tokens, so they match the $k$ tokens to $k$ indices. This can be done by the following algorithm, operating in $\tildeBigO{1}$ rounds.

    First, we use \cref{lem:construct_tree_without_ids2} to construct a virtual tree on the set of all nodes with at least one token to disseminate. After we get this tree $T$, we can in $\tildeBigO{1}$ rounds compute for each node $v$ how many tokens its subtree, including itself, holds. We denote it by $\ell(v)$. This is done by sending the information up from the last level of the tree, aggregating the number in each node, and sending it to the parent node. After that, we begin allocating the indices, starting from the root.
    If the root holds $m$ tokens, it reserves the first $m$ indices for itself, and tells its first child $v_1$ that it should start allocating from the index $m+1$, and to its second child $v_2$, if it exists, that it should start allocating from the index $(m+1)+\ell(v_1)$. The root continues in this fashion for all its remaining children. The nodes lower in the tree also continue in the same fashion. This creates a bijection of the $k$ tokens across the graph to the $k$ indices of aggregation and correctly allocates all indices to each token-holding node, so we may now run the \kagg algorithm in $t$ rounds to solve the \kdis problem to let all nodes learn the $k$ tokens.
    In total, the \kdis algorithm takes $t+\tildeBigO{1}$ rounds.
\end{proof}

Now we prove the lower bounds for \kdis, \kagg, and \klrout.

\broadcastunicastLB*
\begin{proof}
    In both the \kdis and the \klrout problems with arbitrary target nodes, we require that any given node is able to receive any $k$ distinct messages. Therefore, both mentioned problems solve the problem described in \cref{lem:receive_lower_bound} which implies the stated lower bound by reduction.

    In the case of randomized target nodes, there is a probability of $\tfrac{\ell}{n}$ that we pick the node described in \cref{lem:receive_lower_bound} as a target. Hence, when we are given one round less than what is needed to solve the problem from \cref{lem:receive_lower_bound}, this will result in failure with probability at least $1-p$ in case we pick that node as a target, which gives us an overall failure probability of at least $\frac{(1-p)\ell}{n}$. Therefore, to ensure a failure probability less than {$\frac{(1-p)\ell}{n}$} any algorithm must take at least the $\tilOm(\NQ)$ rounds from \cref{lem:receive_lower_bound}.

    By \cref{lem:reduction}, the lower bound for \kdis also applies to \kagg.
\end{proof}


The lower bounds for \kdis, \kagg, and \klrout are almost tight, up to $\tilO(1)$ factors, as they match our upper bounds given in \cref{thm:optimal_dissemination,thm:kaggUB,thm:klsp_UB}. 
 For the case of \kdis and \kagg, our upper bounds even apply to the deterministic setting. 
For \klrout, our algorithms, which work for certain ranges of parameters $(k,\ell)$ with random target nodes, have a much higher success probability than that in the lower bound.\phil{Routing is relatively simple for $\ell \in \bigO(1)$ target nodes, so maybe this should be added to have some range where the the lower bound for arbitrary targets is matched.}
\yijun{Added back this.}



\subsection{Universal Lower Bounds for Shortest Paths Problems}

We show that the neighborhood quality parameter constitutes a lower bound for certain shortest paths problems.
Since we consider graph problems, we have to precisely state the information about the graph and the problem parameters that we grant or withhold from nodes initially, i.e., as input before any communication starts, as this has significant implications on lower bounds. We emphasize that any assumption on the initial knowledge of nodes is only used to demonstrate the universality of our lower bounds and is not required for our algorithmic upper bounds!
In all of the following lower bounds, we always grant knowledge of the topology of the unweighted graph $G = (V,E)$ to all nodes as initial input, which proves the competitiveness of our algorithms with \emph{any} potential algorithm that is optimized for a given graph.

Same as any other universal lower bounds, including \cref{thm:broadcast_unicast_LB} and the ones in \cite{haeupler2021universally}, we must specify clearly which part of the input is considered known to the algorithm for which we desire to prove a lower bound.
Here we take into consideration the initial knowledge of the set of all node identifiers, which highly depends on the computational model (\hybrid or \hybridzero) for which we prove the lower bound. 
For the $k$-SSP problem, we also consider the initial knowledge about the set of source nodes in our computational models. 
In \hybrid, nodes are aware of $n$ and thus the set of identifiers $[n]$. We further assume that nodes are aware of the set of identifiers $\{\ID(s) \in [n] \mid s \in S\}$ in the set of source nodes in the $k$-SSP problem, which only makes the lower bound stronger. However, we assume that nodes do not know the concrete mapping $\ID : V \to [n]$ from the nodes in the graph to their identifiers other than their immediate neighbors. In other words, nodes know $G$ and which identifiers belong to the sources $S$, but not necessarily where nodes with the identifiers of the sources are located in $G$.

In contrast, the definition of \hybridzero (\cref{sec:models}) implies that any node $v\in V$ does initially not even know the set $\{\ID(s) \in [n^c] \mid s \in S \setminus \calB_1(v)\}$.
We start with a simple lower bound that exploits this particular property of \hybridzero with a reduction from the lower bound in \cref{lem:receive_lower_bound}. Note that this lower bound is tight for $k=n$ due to \cref{thm:apsp_hybridzero_unweighted}.

\unweightedLB*

\begin{proof}
    In the \hybridzero model, the set of identifiers of $V$ comes from a polynomial range $[n^c]$, see \cref{sec:models}. We define the set $\{\ID(s) \mid s \in S\}$ as our set of tokens that are distributed among the set of source nodes. Each token is any distinct value in $[n^c]$, which can be written in $c \log n \geq \lceil \log k \rceil +1$ bits.

    By the definition of the $k$-SSP problem \cref{sec:problems}, to solve the shortest path problem, not only must each node $v \in V$ learn a set of approximate distance labels $\tild(v,s)$ (see \cref{sec:problems}) for each source $s \in S$, but must also be able to match identifiers $\ID(s)$ to the corresponding distance label $\tild(v,s)$.

    Hence, by solving the $k$-SSP problem, each node $v \in V$ assigns identifiers $\{\ID(s) \mid s \in S\}$ to the distance labels and must therefore necessarily learn the $k$ tokens. This reduces the problem from \cref{lem:receive_lower_bound} to $k$-SSP in \hybridzero which therefore takes at least $\tilOm(\NQ)$ rounds.
\end{proof}

Next, we turn to lower bounds in the \hybrid model, where we cannot rely on the fact that nodes have to learn source identifiers, since they are known in the first place. We show that even a solution to the $(k,1)$-SP problem where a single target node must learn its distance to $k$ sources requires $\tilOm(\NQ)$ rounds. We proceed with the following lemma.

\begin{lemma}
    \label{lem:node_partition}
    Let $G = (V,E)$, $k \leq n$, $r = \NQ-1$ and assume $r \geq 2$. Then there exists a node $v \in V$ such that $\calB_r(v) \leq k/r$ and there is a partition of $V' := V \setminus \calB_r(v)$ into $V'  = V_1 \cup V_2 = V', V_1 \cap V_2 = \emptyset$ satisfying the following conditions. 
    \begin{enumerate}[(1)]
        \item $|V_1| \geq n/4$ and $|V_2| \geq n/4$,
        \item there is a weight assignment $w:E \to [W]$ such that for any $v_1 \in V_1,v_2 \in V_2$, we have $p(n) d(v,v_1) \leq  d(v,v_2)$ for any (fixed) polynomial $p(n)$.
    \end{enumerate}
\end{lemma}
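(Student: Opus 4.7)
\emph{Proof plan.} The strategy is to locate a node $v$ with a small $r$-ball via \cref{lem:neighborhood_small}, cut $V \setminus \calB_r(v)$ along BFS distance from $v$ (allowing a partial split of at most one layer), and then install a polynomially heavy ``moat'' of edges across the cut to enforce the distance gap.

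First I would invoke \cref{lem:neighborhood_small} to fix $v \in V$ with $|\calB_r(v)| < k/r$ (for $r = \NQ - 1$). Combined with the hypotheses $k \leq n$ and $r \geq 2$ this yields the crucial slack $|\calB_r(v)| \leq n/2$, so at least $n/2$ nodes remain outside the ball to be partitioned into $V_1$ and $V_2$.

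Second, writing $L_i := \{u : \hop(v,u) = i\}$ and $f(j) := |\calB_j(v)|$, I would set $j^* := \max\{j \geq r : f(j) \leq 3n/4\}$ (which exists since $f(r) \leq n/2$), pick a subset $X \subseteq L_{j^*+1}$ of size $|X| := \max\{0,\, n/4 + |\calB_r(v)| - f(j^*)\}$, and define $V_1 := (\calB_{j^*}(v) \setminus \calB_r(v)) \cup X$ and $V_2 := V \setminus (\calB_{j^*}(v) \cup X)$. A routine case split, using only $|\calB_r(v)| \leq n/2$ and $f(j^*+1) > 3n/4$, verifies both $|X| \leq |L_{j^*+1}|$ and the required bounds $|V_1|, |V_2| \geq n/4$. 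The main obstacle here, which is precisely what forces the partial-layer construction, is the possibility that $L_{j^*+1}$ is huge so that $f$ jumps from below $n/4 + |\calB_r(v)|$ past $3n/4$ in one step; no cut along entire BFS layers can then balance both sides simultaneously, but splitting $L_{j^*+1}$ via $X$ resolves this cleanly.

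Finally, I would set $C_1 := \calB_{j^*}(v) \cup X$ and $C_2 := V \setminus C_1$, assign weight $1$ to every edge internal to $C_1$ or to $C_2$, and weight $W := p(n) \cdot n$ to every edge crossing the cut $(C_1, C_2)$; since $p$ is polynomial, $W$ remains polynomial in $n$ and thus lies in the admissible range. For any $v_1 \in V_1 \subseteq C_1$, a hop-shortest path from $v$ to $v_1$ stays inside $C_1$ (its penultimate vertex lies in $\calB_{j^*}(v) \subseteq C_1$) and has length at most $j^*+1 \leq n$, giving $d(v,v_1) \leq n$. For any $v_2 \in V_2 \subseteq C_2$, every $v$-to-$v_2$ path crosses the cut at least once, so $d(v,v_2) \geq W = p(n)\cdot n \geq p(n)\cdot d(v,v_1)$, as required.
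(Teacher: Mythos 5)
Your proof is correct and follows essentially the same route as the paper's: locate $v$ via \cref{lem:neighborhood_small}, use $k\le n$ and $r\ge 2$ to conclude $|V'|\ge n/2$, cut $V'$ by BFS distance from $v$ into a near part of size at least $n/4$ and its complement, and place polynomially heavy weights on every edge leaving the near region so that $d(v,v_1)\le n$ while $d(v,v_2)\ge n\cdot p(n)$. The only cosmetic difference is that the paper realizes the prefix-of-BFS-order split via a BFS tree (assigning weight $n\cdot p(n)$ to non-tree edges and to tree edges crossing into $V_2$), whereas you split one BFS layer explicitly via the set $X$ and weight the vertex cut $(C_1,C_2)$ directly; both arguments are sound.
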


\begin{proof}
    Let $v \in V$ be a node that satisfies the condition $\calB_r(v) \leq k/r$, which exists according to \cref{lem:neighborhood_small}. 
    We first lower bound the size of $V'$:
    \[
        |V'| = |V\setminus \calB_r(v)| \geq n-\frac{k}{r} = n- \frac{k}{2} \geq n- \frac{n}{2} = \frac{n}{2}
    \]
    Consider a breadth first search tree $T$ with root $v$ (note that we do not actually compute the breadth first tree 
    $T$, we only use it to exploit the properties to show the existence of the objects required in the lemma).
    We add nodes from $V$ to $V_1$ by the breadth first rule in $T$ until $|V_1 \cap V'| = n/4$ (and then remove the nodes from $V\setminus V'$ from $V_1$, so $V_1 \subseteq V'$).
    Then we define $V_2 := V' \setminus V_1$, which has size $|V_2| = |V'|-|V_1| \geq n/4$. This satisfies the size requirements (1) of the partition $V_1,V_2$.

    It remains to assign the weights $w : E \to [W]$ suitably.	
    First, all edges that are not part of $T$ obtain weight $n \cdot p(n)$ (which corresponds to the maximum polynomial weight $W$ we require for the lower bound this lemma is used in).
    Second, all edges that have exactly one endpoint in $V_1 \cup \calB_r(v)$ and one in $V_2$ obtain weight $n \cdot p(n)$.
    Third, all remaining edges obtain weight 1.

    Due to the breadth-first rule, nodes in $V_1$ have no ancestors in $V_2$ in $T$. Therefore all nodes in $V_1$ can be reached from $v$ using at most $n$ tree edges of weight $1$, i.e., $d(v,v_1) \leq n$ for all $v_1 \in V_1$. Nodes in $V_2$ must have an ancestor $V_1 \cup \calB_r(v)$, thus any path from $v$ to some node in $V_2$ must either use an edge of weight $n\cdot p(n)$ or an edge outside the tree of weight $n\cdot p(n)$, thus $d(v,v_2) \geq n \cdot p(n)$ for all $v_2 \in V_2$. This implies \smash{$\frac{d(v_2,v)}{d(v_1,v)} \geq p(n)$} for all $v_1 \in V_1,v_2 \in V_2$ and thus property (2).   
\end{proof}

The previous lemma allows us to reduce the node communication problem from \cref{lem:lower_bound_node_comm} to the $(k,\ell)$-SP problem and prove the following lower bound. Note that this lower bound is almost tight for the parameter ranges given in the corresponding \cref{thm:klsp_UB}.

Fundamental to this proof is that nodes do initially not know where the source nodes are located in $G$. However, the proof works even if the nodes are given initial knowledge of all edge weights in the graph.

\weightedLB*

\begin{proof}
    By \cref{lem:growth_of_tk} we have that $\NQ[n] = \NQ[4 \cdot n/4] \leq 12 \cdot \NQ[n/4]$, thus $\NQ[n] \in \Theta(\NQ[n/4])$ so it suffices to show the claim for $k \leq \frac{n}{4}$ with the difference for larger $k$ being at most a constant. Furthermore, we assume $\NQ \geq 3$, else the claim is trivial.

    We set up the node communication problem using the node $v$ and two node sets $V_1,V_2$ from \cref{lem:node_partition} with $|V_1|,|V_2| \geq n/4 \geq k$. The (weighted) distances from $v$ to the nodes in $V_1$ and $V_2$, respectively differ by a factor of at least $q(n)$ (with those in $V_2$ having the larger distance). Moreover, $V_1$ and $V_2$ are at hop distance $h := \NQ-1$ from $v$.
    Consider a random bit string $X = (x_i)_{i \in [k]}, x_i \in \{0,1\}$ of length $k$ with $\Prob(x_i\!=\!0)=\frac{1}{2}$. We enumerate $k$ nodes from each set $v_{1,i}\in V_1, v_{2,i} \in V_2$ with indices  $i \in [k]$. Then, a set of $k$ source nodes $S = \{s_1, \ldots, s_k\} \subseteq V$ is assigned to nodes from $V_1 \cup V_2$ as follows:  if $x_i =0$ we assign $v_{i,1}$ as source $s_i$, else we assign $v_{i,2}$ as source $s_i$ (meaning we assign the identifier $\ID(s_i)$ to that node).
    
	Assume that $v$ is given the knowledge of the number $k$, the set of identifiers of the nodes in $V_1,V_2$ (but not the mapping $\ID: V \to [n]$!) and which index $i$ is associated with which pair of identifiers in $V_1,V_2$ (which can only make the problem simpler). This is in addition to the knowledge of $G = (V,E,\omega)$.	
	Note that $v$ does not have any initial knowledge of whether $v_{1,i}$ or $v_{2,i}$ was assigned as source $s_i$ and thus has no knowledge about $X$. Or to express this differently: the initial state of $v$ is independent from $X$. To finish setting up the node communication problem let us define $A :=\{v\}$ and $B := V_1 \cup V_2$. We have $h = \hop(A,B)$ and $N := |\calB_{h-1}| \leq |\calB_{h}| \leq k/h = k/(\NQ-1)$.
	
	Presume that we solve the $(k,\ell)$-SP problem with approximation factor at most $q(n)-1$ (where the polynomial $q$ can be chosen freely) with probability at least $p$. Thus the node $v$ learns its distance to each source $s_i$ up to factor $q(n)-1$. Since $v$ knows that its distances to nodes in $V_2$ are a factor of $q(n)$ larger than to those in $V_1$, this approximation lets $v$ determine which nodes were selected as source and thus $v$ will be able to determine the state of $X$ with probability at least $p$.
	
	We have therefore solved the node communication problem with the following parameters. The node $v$ has learned,  with a probability of success $p$, the state of a random variable $X$ which it previously had zero knowledge of, which has Shannon entropy $H(X) = k$ (see \cref{def:entropy}). Plugging those parameters into \cref{lem:lower_bound_node_comm}, the number of rounds for solving $(k,1)$-SP must have been at least 
    \[
        \min\!\left(\frac{pH(X)-1}{N \cdot \gamma}, \frac{h}{2} \m 1\right) \leq \min\!\left(\frac{(\NQ-1)(pk-1)}{k \gamma}, \frac{h}{2} \m 1\right) \in \tilOm(\NQ).
    \]    
    Like in the proof of \cref{thm:broadcast_unicast_LB}, for randomized target nodes we have to factor in the probability that we randomly pick the node $v$ as a target, which implies that for an algorithm to achieve a probability of failure of less than \smash{$\frac{(1-p)\ell}{n}$} with otherwise the same conditions, it has to run for at least $\tilOm(\NQ)$ rounds.
\end{proof}

While the theorem above works even if we grant $G=(V,E,\omega)$ and source node identifiers $\{\ID(s) \mid s \in S\}$ as input to every node, it exploits the fact that by definition nodes cannot connect the source node identifiers $\{\ID(s) \mid s \in S\}$ to the set of nodes in $G$ since the mapping $\ID: V \to [n]$ is unknown. 

That can be turned around, in the sense that if we withhold the weight function $\omega$ from each node initially, then we can instead grant $G=(V,E)$, $\{\ID(s) \mid s \in S\}$ and the complete mapping $\ID : V \to [n]$ to each node and still obtain a lower bound for the shortest paths problem as we show in the theorem below.

This is interesting in the sense that, if nodes are know combined information $G=(V,E)$, the weight function $\omega$, $\{\ID(s) \mid s \in S\}$ and $\ID : V \to [n]$, then clearly each node can locally solve any shortest path problem in a single round. Consequentially, withholding either edge weights $\omega$ or the assignment of node identifiers $\ID : V \to [n]$ is strictly necessary.

\weightedLBSourcesKnown*


\begin{proof}
    The idea for the proof is to encode a random variable with large entropy in the set of weights. For this, we show a reduction from the problem of receiving $k$ tokens given in \cref{lem:receive_lower_bound}. Assume $\NQ \geq 4$, otherwise the lower bound is trivial. Further, since $\NQ[n] \in \Theta(\NQ[n/2])$ by \cref{lem:growth_of_tk}, it suffices to show the claim for $k \leq \frac{n}{2}$
    
    Let $r := \NQ-2$. By construction of $v$ (which maximizes $\NQ(v)$), $\calB_r(v) \leq k/r \leq k/2$ by \cref{lem:neighborhood_small}. Hence, there are at least $|V\setminus \calB_r(v)| \geq n/2 \geq k$ nodes in hop distance at least $r = \NQ -2 \geq 2$.    
    Note that $v$ does not share an edge with any of the nodes $V' := V\setminus \calB_r(v)$.     
    Consider any tree $T$ with root $v$. Since $G$ is connected and $T$ is a tree, each node in $u \in V'$ has a parent edge $\pi(u) \in E$ in $T$ and $\pi(u) \neq \pi(w)$ for all $u,w \in V'$ with $u \neq w$.

    Let $u_1, \ldots , u_k$ be $k$ distinct nodes in $V'$. Consider $k$ messages $m_1, \ldots, m_k$ of length $\sigma := \lceil \log k\rceil +1$ bits, which we interpret as values $m_i \in [2^\sigma-1]$. Note that $2^\sigma \in \bigOa(n)$, therefore, it is possible to assign weights $\omega(\pi(u_i)) = n \cdot  m_i$. For all edges in $E(T) \setminus \{\pi(u_1), \ldots , \pi(u_k)\}$ we assign weight 1, and for all other edges $E \setminus E(T)$ we assign weight $n \cdot 2^\sigma$. 
    
    Every node has a path in the tree $T$ of weight at most $n \cdot (2^\sigma \m 1)$, thus the edges outside $T$ are not viable. By learning the distance $d(v,u_i)$ for each $i \in [k]$, the node $v$ can recover the original messages with a simple arithmetic operation $m_i = d(v,u_i) \mod n$. We say that both endpoints of some edge $\pi(u_i)$ know $m_i$. Note that by construction initially $v$ does not know any of the edge weights $\omega(\pi(u_i))$.

    This concludes the reduction from the problem in \cref{lem:receive_lower_bound} to the $k$-SSSP problem even if the set of sources are completely known, which implies that the lower bound of $\tilOm(\NQ)$ also applies to $k$-SSP under these conditions.
    As before, we would like to point out that the proof holds for randomized target nodes by factoring in the probability to randomly pick $v$ as a target, implying that achieving a failure probability less than \smash{$\frac{(1-p)\ell}{n}$} has the same lower bound $\tilOm(\NQ)$.    
\end{proof}

\section{Existentially Optimal Shortest Paths}
\label{sec:sssp_logtime}

In this section, we show how to obtain a near-optimal deterministic $\tilO(1)$-round SSSP algorithm in the \hybridzero model by combining the techniques of \cite{rozhovn2022undirected} with a new algorithm to efficiently compute an Eulerian orientation in the \hybridzero model.

It was shown in~\cite{rozhovn2022undirected} that a $(1+\eps)$-approximation of SSSP can be computed deterministically in the $\mathsf{PRAM}$ model with linear work and $\tilO(1)$ depth. 
Their main machinery is the simulation of an interface model, called \minor model, defined as follows. Let $G=(V,E)$ be an undirected input graph. 
Different from most of the distributed models, here both nodes and \emph{edges} are assumed to be individual computational units that communicate in synchronous rounds and conduct arbitrary local computations in each round. Initially, nodes know their $O(\log n)$-bit IDs, and edges know the IDs of their endpoints. In each round, communication occurs by conducting the following three operations in that order.

\begin{description}
	\item[Contraction:] Each edge e chooses a value $c_e \in \{\top, \bot\}$, which defines a minor network $G' = (V', E')$ where all edges with $c_e = \top$ are contracted. This forms a set of supernodes \smash{$V'\subseteq 2^{V}$}, where a supernode $s \in V'$ consists of nodes connected by contracted edges. During the contraction, parallel edges are kept and self-loops are removed, so $G'$ is a multi-graph, and for each $e \in E$ that connects nodes of distinct supernodes, there is a distinct corresponding edge in $E'$. 
	\item[Consensus:] Each $v \in V$ chooses an $\tilO(1)$-bit value $x_v$. Each supernode $s \in V'$ sets \smash{$y_s := \bigoplus_{v \in s} x_v$}, where $\bigoplus$ is some pre-defined aggregation operator. All
	$v \in s$ learn $y_s$.
	\item[Aggregation:] Each edge $e \in E'$ connecting supernodes $a \in V'$ and $b \in V'$ learns $y_a$
	and $y_b$, and chooses two $\tilO(1)$-bit values $z_{e,a}$, $z_{e,b}$, where $z_{e,a}$ is intended for $a$ and $z_{e,b}$ is intended for $b$. Every node $v \in s$ in each supernode $s \in V'$ learns the aggregate of its incident edges in $E'$, which is
	\smash{$\bigotimes_{e \in \text{incidentEdges}(s)} z_{e,s}$}, where $\bigotimes$ is some pre-defined aggregation operator. 
\end{description}

The work of \cite{rozhovn2022undirected} shows that SSSP can be solved with a stretch of $(1 \p \eps)$ in $\tilO(1/\eps^2)$ rounds of the \minor model, \textit{if we can additionally} call on an oracle \oracle that solves the \euler problem once per round. For the sake of presentation, the formal definition of \oracle is deferred to \cref{sect:oracle_impl}.

\begin{lemma}[\hspace{-0.01cm}{\cite[Theorem 3.19]{rozhovn2022undirected}}] 
	\label{lem:sssp_minor_aggregation}
	A $(1 \p \eps)$-approximation of SSSP on $G$ can be computed with a total of $\tilO\big(1/\eps^2\big)$ rounds of \minor model and calls to the oracle \oracle.
\end{lemma}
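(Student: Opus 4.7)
The plan is to follow the hopset-based approach. A $(1 \p \eps)$-hopset is a set of weighted shortcut edges $H$ such that in the augmented graph $G \cup H$ the $\tilO(1/\eps)$-hop shortest path weights approximate the true shortest path weights in $G$ up to factor $(1 \p \eps)$. Given such a hopset of near-linear size and hop-bound $\tilO(1/\eps)$, running $\tilO(1/\eps)$ rounds of approximate Bellman--Ford on $G \cup H$ yields a $(1 \p \eps)$-approximate SSSP solution. Each Bellman--Ford round fits naturally into one round of the \minor model: the consensus phase lets each (trivial) supernode expose its current tentative distance, and the aggregation phase lets every node take the minimum over its incident edges of neighbor-distance plus edge-weight. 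So once the hopset is in place, the remaining SSSP cost is $\tilO(1/\eps)$ \minor rounds.

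The hopset itself would be constructed by iterating a low-diameter decomposition (LDD) across $\tilO(\log W)$ geometric distance scales $2^i$. At scale $i$, the graph is carved into clusters of weak diameter $\tilO(2^i/\eps)$ by a deterministic ball-growing procedure, and for each cluster a small number of shortcut edges carrying approximate intra-cluster distances is installed. Running this in the \minor model works as follows: one contracts each growing ball into a single supernode, uses consensus to synchronize stopping radii, and uses aggregation to propagate cluster identifiers to still-unassigned boundary edges. Summing across scales, the total construction cost is $\tilO(1/\eps^2)$ \minor rounds, which absorbs into the final bound.

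The oracle \oracle for \euler enters as the load-balancing primitive. When a supernode $s$ has many incident edges in the contracted multigraph $E'$, the aggregation step would force the constituents of $s$ to process too much data. By calling the oracle on the current contracted graph, each edge is consistently assigned to exactly one endpoint with bounded per-node load, so aggregation can be executed with $\tilO(1)$ local work per node per round. One call to \oracle is required at every round of the LDD/Bellman--Ford simulation, which amounts to $\tilO(1/\eps^2)$ oracle calls in total.

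The main obstacle will be making the LDD deterministic under the minor-aggregate interface. The standard Miller--Peng--Xu exponential-shift carving is randomized, and replacing it requires a careful deterministic ball-growing scheme that at each scale keeps the expected boundary-edge fraction polylogarithmically small while being expressible as a bounded sequence of contract/consensus/aggregate operations. Coordinating these operations, the Eulerian orientations, and the geometric sweep over $\tilO(\log W /\eps)$ scales so that the round count does not exceed $\tilO(1/\eps^2)$ is the critical technical commitment; every other step is relatively mechanical once the deterministic LDD-in-\minor primitive is in hand.
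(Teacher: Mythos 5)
This lemma is not proved in the paper at all: it is imported verbatim as Theorem~3.19 of \cite{rozhovn2022undirected}, and the paper's actual contribution in \cref{sec:sssp_logtime} is the \hybridzero \emph{implementation} of the two primitives the lemma consumes (\cref{lem:simulate_minor_agregation} and \cref{lem:sim_euler_oracle}). So there is no in-paper proof to compare against; what can be assessed is whether your sketch would establish the cited result, and it would not.

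The central gap is the hopset you postulate. A $(1 \p \eps)$-hopset of near-linear size with hopbound $\tilO(1/\eps)$ is not a known object, and in the relevant parameter regimes it provably does not exist: the Abboud--Bodwin--Pettie lower bounds show that $(1\p\eps)$-hopsets of size $n^{1+o(1)}$ must have hopbound that grows superpolynomially in $1/\eps$ (and superpolylogarithmically once $\eps$ is subconstant). This barrier is exactly why \cite{rozhovn2022undirected} does \emph{not} take the hopset-plus-Bellman--Ford route; their $\tilO(1/\eps^2)$ bound comes from the transshipment/gradient-boosting framework, in which a crude approximate transshipment solution is boosted over $O(1/\eps^2)$ iterations, each iteration being expressible in $\tilO(1)$ \minor rounds. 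Your plan therefore rests on an assumption that cannot be discharged, and no amount of care in the LDD construction repairs it.

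A second, independent error is the role you assign to \oracle. In the \minor model, the aggregation of the values $z_{e,s}$ over \emph{all} edges incident to a supernode is a single-round primitive of the model by definition; there is no per-node load to balance at the model level, and the cost of realizing that primitive is precisely what the separate simulation lemmas (\cref{lem:simulate_minor_agregation}, and \cref{rem:minor_aggregation_simulation}) address --- without invoking \oracle. In the cited work the Eulerian-orientation oracle is a genuinely algorithmic ingredient inside the deterministic SSSP/transshipment computation, not an implementation-level load balancer. As written, your proposal both assumes a nonexistent combinatorial object and misplaces the one nontrivial oracle the statement is parameterized by.
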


Our goal in this section is to apply \cref{lem:sssp_minor_aggregation} to prove \cref{thm:almost_shortest_sssp}. In \cref{sect:minor_sim}, we show that we can efficiently simulate the \minor model in \hybridzero. In \cref{sect:oracle_impl}, we show that we can efficiently implement the oracle \oracle in \hybridzero.

\subsection{Simulation of \texorpdfstring{\minor in \hybridzero}{Minor-Aggregation in Hybrid0}}\label{sect:minor_sim}





We show that the \minor model can be emulated efficiently in the \hybridzero model. 

\begin{lemma}
	\label{lem:simulate_minor_agregation}
	A round of the \minor model can be simulated in $\tilO(1)$ rounds in the \hybridzero model deterministically.
\end{lemma}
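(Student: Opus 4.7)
The plan is to simulate the three operations of the \minor round (Contraction, Consensus, Aggregation) by combining the deterministic overlay-network tools from \cref{lem:construct_tree_without_ids,lem:construct_tree_without_ids2} with the basic aggregation primitive of \cref{lem:hybrid_zero_aggregate}. The whole simulation will cost $\polylog n$ rounds because every building block we invoke already has $\polylog n$ complexity in \hybridzero.

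First, I would handle the \textbf{Contraction} step. After the edges fix their labels $c_e\in\{\top,\bot\}$, the supernodes are precisely the connected components of the subgraph $G_\top=(V,\{e\in E:c_e=\top\})$. To identify them, I would build a constant-degree depth $O(\log n)$ virtual tree $T$ over all of $V$ via \cref{lem:construct_tree_without_ids} (this does not depend on $c_e$ and can even be pre-computed). I then run a deterministic Bor\r{u}vka/overlay-style component identification in $\polylog n$ rounds: repeatedly, each tentative component broadcasts over its currently computed overlay tree the minimum identifier reachable through a $\top$-edge to a distinct component, merges, and rebuilds its overlay. Because each such merging phase at least halves the number of surviving components, $O(\log n)$ phases suffice, each implemented in $\polylog n$ rounds using the global network and \cref{lem:hybrid_zero_aggregate}. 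At the end, every $v\in V$ knows a canonical supernode identifier $\sigma(v)$. Using \cref{lem:construct_tree_without_ids2} applied to the indicator function ``is this node the minimum-ID node of its supernode'' one can then, for each supernode $s$, build a virtual rooted tree $T_s$ of $\polylog n$ degree and depth spanning exactly the nodes of $s$ (by invoking the lemma within the subgraph induced by each color/component, or equivalently by pruning the global overlay restricted to each component). All of this is deterministic and runs in $\polylog n$ rounds.

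Next, the \textbf{Consensus} step is straightforward once the trees $T_s$ exist: each $v\in s$ contributes $x_v$; we run a bottom-up aggregation on $T_s$ under the operator $\bigoplus$ and then a top-down broadcast of the result $y_s$. Because each $T_s$ has depth $O(\log n)$ and degree $\polylog n$, this is $\polylog n$ rounds via the global network exactly as in the proof of \cref{lem:hybrid_zero_aggregate}, executed in parallel across all supernodes (the per-node message load stays in $\polylog n$ since each node lies in one tree $T_s$).

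Finally, the \textbf{Aggregation} step. Every minor-edge $e'\in E'$ corresponds to an original edge $e=\{u,v\}\in E$ with $c_e=\bot$ and $\sigma(u)\ne\sigma(v)$; it is ``held'' by $u$ and $v$. Using one global exchange across each such $e$, the endpoints learn $y_{\sigma(u)}$ and $y_{\sigma(v)}$ (both values are already stored at $u$ and $v$ from the consensus step, so no communication is needed), then produce $z_{e,\sigma(u)}$ and $z_{e,\sigma(v)}$. Each node $v$ locally aggregates under $\bigotimes$ the contributions $z_{e,\sigma(v)}$ coming from its own incident non-contracted edges, and we then run a second $T_s$-aggregation inside each supernode to combine these partial aggregates into the final $\bigotimes_{e\in \text{incidentEdges}(s)} z_{e,s}$, which is broadcast down $T_s$ to every $v\in s$. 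This is again $\polylog n$ rounds.

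\textbf{Main obstacle.} The only nontrivial step is the component-identification subroutine inside Contraction: we must identify connected components of an arbitrary subgraph $G_\top$ of $G$ in $\polylog n$ rounds of \hybridzero, \emph{without} assuming we can communicate along $G_\top$'s long paths quickly in \local. My plan is to leverage the global network exclusively for the merging logic (minimum-ID convergecast along the current overlay per component) while using $G_\top$ only to exchange IDs across a single $\top$-edge, which is one \local round. All other communication goes through the overlay tree $T$ plus the per-component trees built by \cref{lem:construct_tree_without_ids2}, whose correctness and $\polylog n$ complexity in \hybridzero is exactly what those lemmas guarantee. Putting everything together, one \minor round is simulated deterministically in $\tilO(1)$ rounds of \hybridzero.
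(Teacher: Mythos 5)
Your treatment of the Consensus and Aggregation steps matches the paper's proof (per-supernode overlay trees, tree-based aggregate/broadcast, each minor-edge simulated by both endpoints of the corresponding local edge, with one local-round exchange of $y_{\sigma(u)},y_{\sigma(v)}$ — note that, contrary to your parenthetical, $u$ does \emph{not} already know $y_{\sigma(v)}$, so that one exchange is genuinely needed). The problem is the Contraction step, which you yourself flag as the main obstacle and then resolve with a Bor\r{u}vka-style component-identification routine that has a real gap. Each merging phase needs a \emph{per-component} min-aggregation and, afterwards, a rebuilt per-component overlay; but \cref{lem:hybrid_zero_aggregate} only provides a single \emph{global} aggregation, and running the pruning procedure of \cref{thm:ncc_tree_pruning}/\cref{lem:construct_tree_without_ids2} once per tentative component in parallel over the same base tree $T$ does not stay within $\tilO(1)$ rounds: the contraction walks route through nodes outside the target set, and a single such node may have to serve $\omega(\polylog n)$ different components simultaneously. "Merges, and rebuilds its overlay" is precisely the step whose cost you would have to control, and as written it is unsupported; naive tree-linking over $O(\log n)$ Bor\r{u}vka phases also does not obviously keep depth and degree polylogarithmic.

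The detour is unnecessary. Each supernode $s\in V'$ is by definition a \emph{connected} subgraph of the local network (the component of the $\top$-edges containing its nodes), and the supernodes are vertex-disjoint. So one can simply run the deterministic virtual-tree construction of \cref{lem:construct_tree_without_ids} on each component of $G_\top$ directly and in parallel — treating the component itself as the communication graph, simulating \local on it via the local edges of $G$ — and obtain all the trees $T_s$ in $\tilO(1)$ rounds without ever "identifying" components first; the tree construction \emph{is} the component identification. This is the route the paper takes, and it removes the only genuinely problematic part of your plan. With that replacement, the rest of your argument goes through.
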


\begin{proof}
Recall from the definition of the \minor model that $V'$ is the set of connected components of the subgraph of the local network $G$ consisting of the edges $e$ that select $c_e = \top$. 
Pretending that all the edges  $e$ that select $c_e = \bot$ do not exist, we may run the overlay network construction algorithm of \cref{lem:construct_tree_without_ids}, in parallel for all connected components $s \in V'$, to compute a virtual rooted tree $T_s$ with depth $O(\log n)$ and maximum degree $O(1)$ in \smash{$\tilO(1)$} rounds on each connected component $s \in V'$.

 Since nodes in the distributed model are computationally unlimited, each edge $e \in E$ can be simulated by both of its endpoints simultaneously.
 As discussed in the proof of \cref{lem:hybrid_zero_aggregate},  the small diameter and maximum degree of $T_s$ enable efficient computation of aggregation and broadcast operations on $T_s$ in $\tilO(1)$ rounds, allowing us to implement the consensus step of \minor in $\tilO(1)$ rounds, in parallel for all $s \in V'$.
	
Recall from the definition of \minor that each edge $e = \{s_1,s_2\} \in E'$ corresponds to an edge $\{v_1, v_2\} \in E$ with $v_1 \in s_1$ and $v_2 \in s_2$, so similarly we may simulate $e = \{s_1,s_2\}$ by \textit{both} of nodes $v_1$ and $v_2$. The values $y_{s_1}$ and $y_{s_2}$ from the consensus step can be exchanged between $v_1$ and $v_2$ via the local edge $\{v_1, v_2\} \in E$ in a single round, so both of them can continue the correct simulation of $e$ in the aggregation step. This is required to choose values $z_{e,a}$ and $z_{e,b}$, which depend on both $y_{s_1}$ and $y_{s_2}$. The aggregation of the values $z_{e,s}$ of all edges $e \in E'$ incident some $s\in V'$ can be done using $T_s$ in $\tilO(1)$ rounds  as before.
\end{proof}

\begin{remark}
\label{rem:minor_aggregation_simulation}
We would want to point out that for the efficient simulation of the \minor model it is insufficient to rely solely on the \NCC model. In particular the aggregation techniques from \cite{Augustine2019} for the \NCC model are not directly applicable since the local communication of the \hybridzero model is strictly necessary.
The reason for this lies in the Aggregation step of the \minor model combined with the fact that the \minor model considers non-contracted edges $E'$ as computational entities that must be simulated by the nodes in the \hybridzero model.

In the aggregation step these ``edge entitities'' $e\in E'$ first have to learn the consensus values $y_a,y_b$ from the incident clusters $a,b \in V'$ based on which they propose new values $z_{e,a},z_{e,b}$, which means that a node $v \in V$ that simulates $e$ has to learn $y_a,y_b$. By the pigeon-hole principle there has to be a node $v \in V'$ that simulates at least $|E'|/|V|$ edges in $E'$, and in general $|E'|/|V| \in \widetilde\omega(1)$. Further, $v$ has to learn the consensus values of all incident clusters of the edges it simulates, which might be all distinct, hence this is $\widetilde\omega(1)$ information in general. Therefore, the communication of all ``edge entitities'' cannot be simulated in $\tilO(1)$ rounds with the $\tilO(1)$ bandwidth that \NCC offers.
\end{remark}

\subsection{Implementation of the Oracle \texorpdfstring{\oracle}{}}\label{sect:oracle_impl}
An \emph{Eulerian graph} is a graph that contains an  Eulerian cycle, It is well-known that a graph is Eulerian if all nodes have even degree.
The goal of the \euler problem is to orient all edges in a given Eulerian graph $H$ in such a way that the indegree and the outdegree are equal for each node.
The oracle \oracle solves the \euler problem on a certain Eulerian graph $H$ that involves \emph{virtual nodes}. The precise specification of \oracle is given as follows.

\begin{definition}[Oracle \oracle]
	\label{def:euler_oracle}
	Let $H$ be an Eulerian graph that is a subgraph of some graph $H'$ that is obtained by adding at most $\tilO(1)$ arbitrarily connected virtual nodes to the communication network $G$.	
	Virtual edges, i.e., the edges incident to at least one virtual node, are given in a distributed form: An edge between a virtual and a real node is known by the real node, and edges between two virtual nodes are known by all real nodes. The \oracle outputs an orientation of edges of $H$ such that the indegree and the outdegree of each node in $H$ are equal.
\end{definition}

\cref{def:euler_oracle} combines the definition of \oracle in~\cite[Definition 3.8]{rozhovn2022undirected} and the definition of virtual nodes in~\cite[Appendix B.4]{rozhovn2022undirected}.

Our strategy for implementing \oracle is to reduce the problem to the case where the arboricity of $H$ is $\tilO(1)$ using the power of unlimited-bandwidth local communication in the \hybridzero model. We begin with considering the \euler problem for graphs with arboricity $\alpha \in \tilO(1)$. 

\begin{lemma}\label{lem:euler_arboricity}
    The \euler problem on an Eulerian graph $H=(V,E)$ with arboricity $\alpha \in \tilO(1)$ that contains $\tilO(1)$ virtual nodes can be solved deterministically in $\tilO(1)$ rounds in the \hybridzero model. 
\end{lemma}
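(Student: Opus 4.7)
The plan is to reduce the task to orienting a collection of disjoint virtual cycles whose vertices are $\tilO(1)$-distributed among the real nodes, which can then be done by an overlay-tree construction in $\tilO(1)$ rounds. First, apply the forests-decomposition algorithm of Barenboim and Elkin~\cite{BE10} to obtain an orientation of $H$ in which every node has outdegree at most $O(\alpha) \in \tilO(1)$, and assign each edge to its tail; this algorithm runs in $O(\log n)$ \congest rounds and hence executes in \hybridzero using only the local network. After this step every node is \emph{responsible} for at most $\tilO(1)$ edges.

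Second, since $H$ is Eulerian every node $v$ has even degree $2d_v$, and $v$ locally pairs up its incident edges arbitrarily, splitting itself into $d_v$ degree-$2$ ``sub-nodes.'' The resulting multigraph $\widetilde H$ is a disjoint union of cycles; each sub-node is hosted by the responsible real node of the paired edge with smaller identifier, so each real node hosts at most $\tilO(1)$ sub-nodes. A single round of local communication lets every host learn, for each of its sub-nodes, the identifiers of the hosts of the two neighbouring sub-nodes along the cycle; hence $\widetilde H$ becomes a virtual graph of constant virtual degree in which every virtual neighbour's identifier is known. To orient each cycle consistently we invoke an overlay construction on this virtual graph, analogous to \cref{lem:construct_tree_without_ids,lem:construct_tree_without_ids2}: because each host has $\tilO(1)$ virtual neighbours, one round of \local on $\widetilde H$ is emulated by $\tilO(1)$ rounds of \hybridzero via global communication, and applying the polylogarithmic-round overlay-tree construction of~\cite{gmyr2017distributed} to each connected component in parallel yields a constant-degree overlay tree of logarithmic depth per cycle; breaking each cycle at its minimum-identifier sub-node, orienting the resulting path from that root, and then closing the removed edge yields the desired Eulerian orientation in $\tilO(1)$ additional rounds. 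The $\tilO(1)$ virtual nodes of $H$ are handled by collecting their entire incidence structure at a designated real node via $\tilO(1)$ rounds of global communication and simulating them locally, so the procedure applies uniformly to $H'$.

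The principal obstacle is the cycle-orientation step: cycles of $\widetilde H$ may span $\Omega(n)$ hosts and hence have arbitrarily large diameter in $G$, so no purely local algorithm can orient them in $\tilO(1)$ rounds. The resolution is that the Barenboim--Elkin assignment reduces the \emph{virtual} degree at every host to $\tilO(1)$, which is precisely the regime in which one round of \local on $\widetilde H$ can be emulated in $\tilO(1)$ rounds of \hybridzero using global communication.
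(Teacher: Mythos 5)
Your reduction to a $2$-regular multigraph is sound and matches the paper's: the Barenboim--Elkin forests decomposition gives every node responsibility for $\tilO(1)$ edges, the pairing into degree-$2$ sub-nodes produces a disjoint union of cycles hosted with $\tilO(1)$ load, and one local round suffices for each host to learn the hosts of its sub-nodes' cycle-neighbours. The problem is the step you yourself identify as the principal obstacle: consistently orienting each cycle. Building an overlay tree of depth $O(\log n)$ over a cycle's sub-nodes (\`a la \cite{gmyr2017distributed}) gives you fast broadcast and aggregation over that vertex set, but the tree's structure bears no relation to the cyclic order, so ``breaking the cycle at the minimum-identifier sub-node and orienting the resulting path from that root'' is not an $\tilO(1)$-round operation you can actually perform. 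Each sub-node on the path knows its two neighbours but not which one is closer to the root along the path; determining this is a list-ranking / global symmetry-breaking task (indeed, by an indistinguishability argument no $o(n)$-round \local algorithm on the cycle itself can orient it consistently, and collecting the whole cycle at the root would cost $\Omega(n/\log n)$ rounds of global communication). The paper closes exactly this gap with an $O(\log n)$-phase iterated cycle contraction: compute an MIS on the current cycle in $O(\log^\ast n)$ rounds, delete the MIS nodes while installing shortcut edges between their neighbours (who exchange host identifiers so the shortcut can be used via the global network), shrink to a constant-length cycle, orient it, and unwind. A pointer-jumping list ranking would also work, but some such mechanism is indispensable and is missing from your argument.

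A second, smaller gap is the virtual nodes. A virtual node may be adjacent to $\Theta(n)$ real nodes, with its incidence list distributed among all of them, so ``collecting its entire incidence structure at a designated real node'' costs $\Omega(n/\log n)$ rounds of global communication. The paper instead pairs up the real neighbours of each virtual node $v$ via an overlay tree over $S_v$ (tokens are matched on the way up the tree), splits $v$ into $\lfloor |S_v|/2\rfloor$ degree-$2$ virtual nodes each simulable by one of its two real neighbours, and is left with a residual virtual node of degree $\tilO(1)$ whose adjacency list can be broadcast cheaply. You would need something of this form for your argument to go through.
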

\begin{proof}
    For the sake of presentation, we first focus on the case where there are no virtual nodes. In this case, we show that given any Eulerian graph $H=(V,E)$ with arboricity $\alpha \in \tilO(1)$, the \euler problem can be solved deterministically in polylogarithmic rounds in the \hybridzero model. We emphasize that our algorithm for this task does not exploit the unlimited-bandwidth local communication in the sense that in each round we only communicate $O(\log n)$ bits of information along each edge in $H$.    
    
    We start by running the \emph{forests decomposition} algorithm of Barenboim and Elkin~\cite{BE10}, using $O(\log n)$ rounds in the local network with $O(\log n)$-bit messages, we can orient the edges in such a way that the outdegree of each node is $O(\alpha) \subseteq \tilO(1)$. 

    Let $H^*$ be the result of splitting each node $v \in V$ into $\deg(v)/2$ nodes of degree two. Using the edge orientation, we may assign each node in $H^*$ to a node in $H$ in such a way that the load of each node in $H$ is $O(\alpha) \subseteq \tilO(1)$, as follows. Let $v^*$ be any node in $H^*$. Let $u^*$ be any one of the two neighbors of $v^*$ in $H^*$. Let $u$ and $v$  be the nodes in $H$ corresponding to  $u^*$ and $v^*$, respectively. If the orientation of $\{u,v\}$ is $u \rightarrow v$, then we assign $v^\ast$ to $u$. Otherwise, we assign $v^\ast$ to $v$. 

    Using the above assignment, we can simulate one round of \hybridzero in $H^*$ using $O(\alpha) \subseteq \tilO(1)$ rounds of \hybridzero in $H$, given that each node in $H^\ast$ only sends and receives $O(1)$ messages of $O(\log n)$ bits from each of its neighbors in $H^\ast$ in a round. 

    Since $H^*$ is $2$-regular, $H^*$ is a disjoint union of cycles. To solve the \euler problem on $H$, all we need to do is to orient each cycle of $H^*$. We show that such an orientation can be computed in $O(\log n \log^\ast n)$ rounds of $H^*$. To do so, we utilize a well-known fact that a maximal independent set can be computed in $O(\log^\ast n)$ rounds in the \congest model~\cite{barenboim2018locally}.

    From now on, we focus on orienting one cycle $C$ of $H^\ast$ in a consistent manner. We construct a sequence of cycles $C_0, C_1, C_2, \ldots$ starting from $C_0 = C$, as follows. Given $C_i$, we compute $C_{i+1}$ in two steps.
   \begin{itemize}
       \item Computing a maximal independent set $I_i$ of $C_i$ in $O(\log^\ast n)$ rounds.
       \item Each node $v \in I_i$ removes itself from the cycle and notifies its two neighbors $u$ and $w$ to add an edge $\{u,w\}$.
   \end{itemize} 
  Since $|C_i|/3\leq |I_i| \leq |C_i|/2$, we have $|C_{i+1}| = |C_i| - |I_i| \in \left[|C_i|/2, 2|C_i|/3\right]$. Therefore, after $t \in O(\log n)$ iterations, we obtain that $|C_t| \in O(1)$, so we may orient $C_t$ in $O(1)$ rounds. From $i = t$ down to $i = 1$, given an orientation of $C_i$, an orientation of $C_{i-1}$ can be computed in $O(1)$ rounds. Therefore, in $O(t) \subseteq O(\log n)$ rounds, we obtain the desired orientation of $C = C_0$.

  \paragraph{Virtual nodes.} For the rest of the proof, we show how to adapt the above algorithm to the case where there can be $\tilO(1)$ virtual nodes. Again, for the sake of presentation, we start with a high-level overview of our approach and defer the technical details for the implementation of some steps of the algorithm.
  For each virtual node $v$, let $S_v$ denote the set of real nodes adjacent to $v$. We split $v$ into  $\lfloor |S_v|/2 \rfloor$ new degree-2 nodes and one \emph{residual} virtual node $v_\diamond$, as follows. We pair up the nodes in $S_v$ arbitrarily, with at most one unpaired node when $|S_v|$ is odd.  For each pair $\{u,w\}$, we create a new virtual node $v_{\{u,w\}}$ and connect both $u$ and $w$ to $v_{\{u,w\}}$. 
  Let $v_\diamond$ be the result of removing all incident edges to the paired nodes in $S_v$ from $v$. Since the total number of virtual nodes is $\tilO(1)$, the degree of $v_\diamond$ is also $\tilO(1)$, as $v_\diamond$ can only be adjacent to at most one real node. Therefore, using $\tilO(1)$ rounds, we can broadcast all the edges incident to all residual virtual nodes to the entire network.

  \paragraph{Simulation.}
  Let $H'$ denote the graph resulting from the above splitting procedure. Since all nodes in $H'$ still have even degrees, $H'$ is Eulerian. We now show how to simulate our \euler algorithm on $H'$.
  For all the new degree-2 node $v_{\{u,w\}}$ that we create, both $u$ and $w$ are aware of each other and know the presence of $v_{\{u,w\}}$, so we may let either $u$ or $w$ simulate $v_{\{u,w\}}$. For all the residual virtual nodes, since their adjacency list is already global knowledge, we may let all real nodes in the network jointly simulate them by running \kdis with $k \in \tilO(1)$ to simulate one round of communication, as we only have $\tilO(1)$ such residual virtual nodes and each of them is adjacent to only $\tilO(1)$ nodes. The cost of \kdis is $\tilO(1)$ rounds due to \cref{lem:hybrid_zero_aggregate}.

  \paragraph{Creation of new degree-2 nodes.} Now we discuss the remaining technical detail for how we split a virtual node $v$ into  $\lfloor |S_v|/2 \rfloor$ new degree-2 nodes and one {residual} virtual node $v_\diamond$ and broadcast the adjacency list of $v_\diamond$ to the entire network. We show that this task can be done in  $\tilO(1)$ rounds, so we can afford to do this in parallel for all virtual nodes.

Recall that the set of all real nodes $S_v$ adjacent to $v$ know that they are in $S_v$, so we may use the overlay network construction of \cref{lem:construct_tree_without_ids2} to build a virtual rooted tree $T$ of maximum degree $O(\log n)$ and depth $O(\log{n})$ over $S_v$ in $\tilO(1)$ rounds deterministically. By the end of the construction, each node in $T$ knows the identifiers of its parent and children in $T$. Now, we use $T$ to pair up the nodes in $S_v$ by first letting each node $u \in S_v$ create a token that contains $\ID(u)$. Each node in $T$ sends up its token along the tree edges in $T$ towards the root. At any time, if a node $u \in S_v$ holds more than one token, then $u$ locally pairs up its tokens, leaving at most one unpaired token. All the token pairs stay in $u$. Only the unpaired token is sent up to the parent. After repeating this process for $O(\log n)$ steps, we have paired up all tokens, except for at most one unpaired token for the case where $|S_v|$ is odd. For each pair $\{u,w\}$, to let both $u$ and $w$ be aware of each other, we just need to reverse the communication pattern to send all the tokens back to their initial owners, with the added information about the pairing.

We let the root $r$ of $T$ be responsible for broadcasting the adjacency list of $v_\diamond$ to the entire network. Since the set of all the edges between virtual nodes is already global knowledge, $r$ just needs to broadcast at most one edge that connects $v_\diamond$ to a real node. Since the total number of virtual nodes is $\tilO(1)$, this can be done by running \kdis with $k \in \tilO(1)$, which costs $\tilO(1)$ rounds due to \cref{lem:hybrid_zero_aggregate}.
\end{proof}

Using \cref{lem:euler_arboricity}, we now show that we can efficiently implement a call of \oracle in \hybridzero.

\begin{lemma} 
	\label{lem:sim_euler_oracle}
	A call of the oracle \oracle can be implemented in $\tilO(1)$ rounds deterministically in the \hybridzero model.
\end{lemma}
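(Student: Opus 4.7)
The plan is to reduce the general \euler problem to the bounded-arboricity case already settled by \cref{lem:euler_arboricity}. First I would deal with the $\tilO(1)$ virtual nodes: since there are only $\tilO(1)$ of them and each real node already knows its incident virtual edges, a single call to \kdis with $k\in\tilO(1)$ (via \cref{lem:hybrid_zero_aggregate}) makes the entire virtual part of $H$ global knowledge, after which virtual nodes can be jointly simulated by the real nodes exactly as in \cref{lem:euler_arboricity}.

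Next I would compute a $(c,d)$-network decomposition of the power graph $G^2$ with $c,d\in\tilO(1)$, using the deterministic \local construction of Rozho\v{n} and Ghaffari~\cite{rozhovn2020polylogarithmic}. One round of \local on $G^2$ is simulated in two rounds of \local on $G$, so the whole decomposition is built in $\tilO(1)$ rounds in \hybridzero. The critical property I will exploit is that two distinct clusters of the same colour are at distance at least $2$ in $G^2$ and hence share no common neighbour in $G$, so they act on disjoint edge sets of $H$ and can be processed fully in parallel.

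I would then sweep through the $c\in\tilO(1)$ colour classes one at a time, maintaining the invariant that the currently unoriented edges always induce an Eulerian subgraph. When colour $i$ is processed, each cluster $C$ of that colour gathers, via flooding inside its weak-diameter-$\tilO(1)$ $G$-neighbourhood, all currently unoriented edges of $H$ with both endpoints in $C$; this costs $\tilO(1)$ rounds. The cluster centre then chooses locally an orientation of a maximal Eulerian subset of these internal edges, patching odd-degree vertices of the locally gathered subgraph by pairing them along augmenting paths inside $C$ in the spirit of the cycle-based orientation in the proof of \cref{lem:euler_arboricity}. The orientation is committed, the edges leave the unoriented graph, and the unoriented residual remains Eulerian. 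The no-common-neighbour property of the $G^2$-decomposition prevents the parallel per-cluster decisions from colliding at a shared vertex.

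After all $c$ colour classes have been processed, every surviving unoriented edge has endpoints in clusters of two different colours. A charging argument that orients each residual edge from the lower-colour endpoint to the higher-colour endpoint gives an acyclic orientation of out-degree $O(c)\in\tilO(1)$, hence the residual Eulerian graph has arboricity $\tilO(1)$. One final call to \cref{lem:euler_arboricity} orients the residual graph (together with its $\tilO(1)$ virtual nodes) in $\tilO(1)$ further rounds, finishing the oracle call. The main obstacle is the analysis of the colour-class sweep: one has to argue simultaneously that the locally chosen ``maximal Eulerian'' orientations preserve the global Eulerian invariant across clusters, and that the residual graph at the end has arboricity $\tilO(1)$ rather than merely $\tilO(1)$ average degree. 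Both facts hinge on the no-common-neighbour guarantee from decomposing the power graph $G^2$ rather than $G$ itself, so justifying this structural step carefully is the delicate part of the argument.
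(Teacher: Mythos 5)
Your overall strategy---a deterministic network decomposition of the power graph $G^2$, a color-by-color sweep that consistently orients and removes cycles while preserving the Eulerian invariant, and a final invocation of \cref{lem:euler_arboricity} on a bounded-arboricity residual---is the same as the paper's. However, there is a genuine gap in how you bound the arboricity of the residual. You only gather and orient edges of $H$ with \emph{both} endpoints inside a cluster $C$, so every cross-cluster edge survives the entire sweep (your claim that ``every surviving unoriented edge has endpoints in clusters of two different colours'' is also not quite right, since the internal forests survive too, but that part is harmless). The real problem is the final charging argument: orienting each surviving edge from its lower-colored endpoint to its higher-colored endpoint does \emph{not} give out-degree $O(c)$. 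A single vertex in a color-$1$ cluster can have $\Theta(n)$ neighbors inside one color-$2$ cluster, so its out-degree under your rule is $\Theta(n)$ and the residual can have arboricity $\Theta(n)$, not $\tilO(1)$. The paper avoids this by processing the \emph{extended} cluster $C' = C \cup \calB_1(C)$ rather than $C$ itself: because the decomposition is computed on $G^2$, extended clusters of the same color are still node-disjoint; every edge of $G$ has both endpoints in some extended cluster; and after greedily removing cycles, each $C'$ retains only a forest. Hence the residual is a union of $\chi \in O(\log n)$ forests and has arboricity $O(\log n) \subseteq \tilO(1)$. Your restriction to edges internal to $C$ is exactly what breaks both the edge-covering property and the arboricity bound.

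A secondary issue is your virtual-node preprocessing. Broadcasting ``the entire virtual part of $H$'' is not a $k \in \tilO(1)$ instance of \kdis: there are only $\tilO(1)$ virtual nodes, but each may be adjacent to $\Theta(n)$ real nodes, so the virtual edge set can have size $\Theta(n)$. The paper sidesteps this by noting that $\tilO(1)$ virtual nodes change the arboricity by at most an additive $\tilO(1)$ and deferring all virtual-node handling to \cref{lem:euler_arboricity}, which first splits each virtual node into degree-2 pieces plus an $\tilO(1)$-degree residual virtual node and only broadcasts the residual's adjacency list.
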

\begin{proof}
Given \cref{lem:euler_arboricity}, we just need to reduce the problem to the case where the arboricity of the graph is $\tilO(1)$. In this proof, we may completely ignore the $\tilO(1)$ virtual nodes, as their presence can affect the arboricity by at most an additive $\tilO(1)$. Therefore, for the sake of presentation, in the following discussion, we assume that there are no virtual nodes in the considered graph.

Let $G=(V,E)$ be the graph under consideration.
Our idea is to reduce the arboricity of $G$ by greedily orienting disjoint cycles on small-diameter subgraphs in parallel, which will ultimately leave us with a remaining graph of yet unoriented edges with arboricity $O(\log n)$. Observe that by orienting disjoint cycles consistently in one direction, the remaining graph of unoriented edges retains even node degree. In other words, it is still an Eulerian graph. To identify disjoint cycles efficiently, we first compute a $(\chi,\mathcal{D})$-\emph{network decomposition}, which is a partition of the node set $V$ into clusters of diameter $\mathcal{D}$ such that the cluster graph, where two clusters are adjacent if a pair of nodes from the two clusters are adjacent, is properly $\chi$-colored. A network decomposition with $\chi \in O(\log n)$ and $\mathcal{D} \in \tilO(1)$ can be computed in $\tilO(1)$ rounds deterministically in the \LOCAL model~\cite{rozhovn2020polylogarithmic}.  
	
We compute such a network decomposition not on $G$ but on the power graph $G^{2} = (V,E')$, where any two nodes in $V$ at distance at most $2$ hops in $G$ are connected by an edge in $E'$. The round complexity of computing a network decomposition of $G^2$ is asymptotically the same as computing a network decomposition of $G$, since one round of $G^{2}$ can be simulated in two rounds in the local network of the \hybridzero model. Let $C$ be a cluster of the network decomposition of $G^{2}$. Since two nodes in different clusters of the same color cannot be adjacent in $G^2$, the distance in the \emph{original network} $G$ between the two nodes must be greater than two.	
	Let $C'$ be the extended cluster of $C$, defined as the union of $C$ and all nodes within one hop of $C$. Due to the aforementioned distance property, any two \textit{extended} clusters of the same color are node-disjoint.
	
	For each color in $[\chi]$ and for each cluster $C$ with that color in parallel, all nodes in the extended cluster $C'$ learn $C'$ in $\mathcal{D} \in \tilO(1)$ steps via the local network. This knowledge enables all nodes in $C'$ to consistently orient and greedily remove  cycles w.r.t.~$G$ in $C'$, until $C'$ becomes a forest. Note that edges that are nominally removed will still be used for communication in $C'$. Further, we can do this for each extended cluster $C'$ of the current color in parallel since they do not overlap. Each extended cluster is now cycle-free, so all edges in all extended clusters of that color form a single forest.
	Since each node in $G$ is in some cluster $C$ of some color, each edge in $G$ belongs to some extended cluster $C'$, so in the end all edges in $G$ are covered by some forest. The number of forests equals the number of colors $\chi \in \bigO{\log n}$, so the remaining graph has arboricity $\bigO{\log n}$ and can be oriented by \cref{lem:euler_arboricity} in $\tilO(1)$ rounds, even with the presence of $\tilO(1)$ virtual nodes.
\end{proof}

We are now ready to prove \cref{thm:almost_shortest_sssp}.

\optSSSP*

\begin{proof}
By \cref{lem:sssp_minor_aggregation}, to compute a $(1 \p \eps)$-approximation of SSSP on $G$, we just need to perform $\tilO\big(1/\eps^2\big)$ rounds of the \minor model and calls to the oracle \oracle. By \cref{lem:simulate_minor_agregation,lem:sim_euler_oracle}, each of these tasks can be done in polylogarithmic rounds in \hybridzero, so the overall round complexity is $\tilO\big(1/\eps^2\big)$, as required.
\end{proof}

\section{Existentially Optimal Shortest Paths with Multiple Sources}
\label{sec:kssp}

In this section, we consider the $k$-SSP problem and prove \cref{thm:k-ssp} by combining the SSSP algorithm from  \cref{thm:almost_shortest_sssp} with a framework to efficiently schedule multiple algorithms in parallel on a skeleton graph. Since our algorithmic solutions scale with the amount of global communication that we permit, we also give solutions for the more general \hybridpar{\infty}{\gamma} model, where each node in each round can communicate $\lambda$ bits of information using global communication. 
In particular, we show that \smash{$\tilO\big(\!\sqrt{k/\gamma}\big)$} rounds suffice to compute $O(1)$-approximations for $k$-SSP. The claim for the standard \hybrid model is implied by substituting $\gamma = \log^2 n$. Our upper bound for the more general \hybridpar{\infty}{\gamma} model is also tight, up to $\tilO(1)$ factors, as the lower bound of \smash{$\tilOm\big(\!\sqrt{k}\big)$} by \cite{Kuhn2020} can be generalized to \smash{$\tilOm\big(\!\sqrt{k/\gamma}\big)$}, see~\cite{Schneider2023} for details. 


\subsection{Parallel Scheduling of Algorithms on a Skeleton Graph}
\label{ssec:scheduling}

Besides the scheme for SSSP introduced in \cref{sec:sssp_logtime}, our main technical component is to show that we can efficiently run multiple instances of independent algorithms in parallel on a skeleton graph. See \cref{def:skeleton_graph} for a formal definition of a skeleton graph and see \cref{lem:skeleton-graph} for its useful properties. 

For efficient parallel scheduling of algorithms, we also require the concept of so-called \textit{helper sets} introduced by Kuhn and Schneider~\cite{Kuhn2020}. 
The rough idea is as follows. When given a set of nodes $W \subseteq V$ such that each node in $V$ joins $W$ with probability $1/x$ independently, one can assign each sampled node a set of \emph{helper nodes} of a certain size within some small distance, which essentially marks the spot where distance and neighborhood size are in a balance. We parameterize the definition of helper sets from~\cite{Kuhn2020} as follows.

\begin{definition}[Helper sets~\cite{Kuhn2020}]
	\label{def:helpers}
	Let $G = (V,E)$ be a graph, let $x < n$, and let $W \subseteq V$ be selected by letting each node in $V$ join $W$ with probability $1/ x$ independently. 
	A family $\{H_w \subseteq V \mid w \in W\}$ of {helper sets} fulfills the following properties for all $w \in W$ and some integer $\mu \in \smash{\tilT (x)}$.
\begin{enumerate}[(1)]
    \item Each set $H_w$ has size at least $\mu$. 
    \item For each $u \in H_w$, it holds that $\hop(w,u) \leq \mu$.
    \item Each node is member of at most $\tilO(1)$ sets $H_w$.
\end{enumerate}
\end{definition}

The computation of helper sets by \cite{Kuhn2020} works out of the box and we are going to utilize it in the following form.

\begin{lemma}[\hspace{-0.01cm}\cite{Kuhn2020}]
	\label{lem:helpers}
 Let $W \subseteq V$ be selected by letting each node in $V$ join $W$ with probability $1/ x$ independently. 
	A family of helper sets $\{H_w \subseteq V \mid w \in W \}$
 can be computed w.h.p.\ in $\tilO\big(x\big)$  rounds in the \local model.
\end{lemma}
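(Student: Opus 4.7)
The plan is to invoke the construction of \cite{Kuhn2020}, which can be viewed as a non-adaptive counterpart of \cref{lem:adaptive_helpers}. Set $\mu := \xi \cdot x \log n$ for a sufficiently large constant $\xi$. The core idea is that for each $w \in W$, its $\mu$-hop neighborhood is large enough to host $\Theta(\mu)$ helpers, and that these can be assigned in a load-balanced way.

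First I would have each node learn its $\mu$-hop neighborhood, which costs $\mu \in \tilO(x)$ rounds in the \local model. Second, analogously to the proof of \cref{lem:adaptive_helpers}, I would compute a clustering of weak diameter $O(\mu)$ with cluster sizes in $\Theta(\mu \cdot x)$ (achievable by the same ruling-set technique as in \cref{lem:clustering_with_weak_diam}, since $\mu \in \tilO(x)$). Third, inside each cluster $C$, every node $v \in C$ joins $H_w$ independently with probability $q := \mu/|C|$ for each $w \in C \cap W$. This yields $\E[\size{H_w}] = q \cdot |C| = \mu$ and $\E\bigl[\size{\{w \in W \mid v \in H_w\}}\bigr] = q \cdot \E[\size{C \cap W}] = q \cdot |C|/x = \Theta(\log n)$. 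Standard Chernoff bounds plus a union bound over all nodes yield properties (1) and (3) of \cref{def:helpers} w.h.p., while property (2) is immediate from the $O(\mu)$ weak-diameter guarantee on clusters.

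The main obstacle is balancing the two conflicting requirements that helper sets be large (property (1)) while no node is overloaded (property (3)). The choice $|C| \in \Theta(\mu \cdot x) = \Theta(x^2 \log n)$ is the sweet spot: large enough that each cluster receives $\Theta(\log n)$ sampled nodes in expectation, which gives sufficient concentration for the per-node load, yet structured so that the weak diameter $O(\mu)$ keeps the overall round complexity at $\tilO(x)$. Since all communication is confined to $\mu$-hop balls, the whole procedure runs in the \local model within the claimed $\tilO(x)$ rounds.
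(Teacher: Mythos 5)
The paper itself does not prove this lemma---it is imported as a black box from \cite{Kuhn2020}---so I can only judge your reconstruction on its own merits, and it has a genuine gap in the clustering step. You require clusters of size $\Theta(\mu\cdot x)=\Theta(x^2\log n)$ with weak diameter $O(\mu)=O(x\log n)$. Such a clustering does not exist in general: on a path graph (the canonical hard instance for everything in this paper), any set of $N$ nodes has weak diameter at least $N-1$, so a cluster of size $\Theta(x^2\log n)$ forces weak diameter $\Omega(x^2\log n)\gg x\log n$. The ruling-set technique of \cref{lem:clustering_with_weak_diam} cannot produce it either: with parameter $k$ it gives size about $k/\tk$ and weak diameter about $\tk\log n$, and on a path $\tk=\Theta(\sqrt{k})$, so the cluster size is always roughly the weak diameter divided by $\log n$, never a factor $x$ larger. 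Your own justification also betrays the miscalibration: with $|C|=\Theta(x^2\log n)$ and sampling probability $1/x$ a cluster receives $\Theta(x\log n)$ sampled nodes in expectation, not the $\Theta(\log n)$ you cite as the reason for the choice.

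The repair is to make the construction literally parallel \cref{lem:adaptive_helpers} rather than a scaled-up version of it: invoke \cref{lem:clustering_with_weak_diam} with $k\approx x^2$ (polylog factors as needed), which by \cref{lem:boundingTk} yields clusters of size at least $k/\tk[k]\geq\sqrt{k}\approx x$ and weak diameter $O(\tk[k]\log n)\subseteq\tilO(x)$ on \emph{every} graph, and then let each $v\in C$ join $H_w$ with probability $q_C=\min\bigl(1,\Theta(x\log^2 n)/|C|\bigr)$ rather than your fixed $q=\mu/|C|$. This gives $\E\bigl[|H_w|\bigr]\in\widetilde\Theta(x)$ and expected per-node load $q_C\cdot\E\bigl[|C\cap W|\bigr]=q_C\cdot|C|/x\in\tilO(1)$ regardless of whether $|C|$ is $\widetilde\Theta(x)$ (sparse graphs, where $q_C=1$) or as large as $\Theta(x^2)$ (dense graphs, where $q_C\ll 1$); your Chernoff-plus-union-bound finish then goes through. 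The rest of your plan (learning $\tilO(x)$-hop neighborhoods in the \local model, deriving property (2) from the weak diameter) is fine once the clustering is fixed.
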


We prove the following scheduling lemma.

\begin{lemma}[Scheduling]
	\label{thm:scheduling}
	Let $\gamma < k < n$ and let $\calS$ be a skeleton graph of the local communication graph $G$ with sampling probability $\sqrt{\gamma/k}$. Let $\calA_1, \dots, \calA_k$ be \hybrid algorithms operating on $\calS$ with round complexity at most $T$. Then $\calS$ can be constructed and  $\calA_1, \dots, \calA_k$ can be executed on $\calS$ in $\tilO\big(\sqrt{k/\gamma}\cdot T\big)$ rounds of the \hybridpar{\infty}{\gamma} model w.h.p.
\end{lemma}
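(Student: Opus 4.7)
The plan is to (i) build the skeleton $\calS$ together with helper sets for every skeleton node, and then (ii) simulate $T$ rounds of the $k$ scheduled algorithms in parallel by funneling their global traffic through the helpers. The helper sets will boost the effective global capacity of each skeleton node by a factor of $\mu \in \tilT(\sqrt{k/\gamma})$, which is exactly what is needed to absorb the factor-$k$ blow-up from running $k$ algorithms simultaneously against the per-round global budget $\gamma$ per node.

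First I would apply \cref{lem:skeleton-graph} with sampling probability $p = \sqrt{\gamma/k}$, so that $x := 1/p = \sqrt{k/\gamma}$ and the skeleton-edge radius $h = \xi x \ln n$ lies in $\tilO(\sqrt{k/\gamma})$; this produces $\calS$ in $\tilO(\sqrt{k/\gamma})$ rounds and gives $|V_\calS|\in\tilO(np)$ skeleton nodes w.h.p.\ by a Chernoff bound. Then I would apply \cref{lem:helpers} with $W := V_\calS$ and the same $x$, yielding in another $\tilO(\sqrt{k/\gamma})$ rounds a family of helper sets $\{H_{v_\calS}\mid v_\calS\in V_\calS\}$, each of size $\mu\in\tilT(\sqrt{k/\gamma})$ inside a ball of hop-radius $\mu$, with every real node helping only $\tilO(1)$ skeleton nodes. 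Each skeleton node then locally broadcasts within that ball the bookkeeping information its helpers need in order to take over its work.

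Second, I would simulate a single synchronous $\calS$-round (for all $k$ algorithms at once) in $\tilO(\sqrt{k/\gamma})$ rounds of \hybridpar{\infty}{\gamma}. Per such round every skeleton node $v_\calS$ wants to send and receive a total of $\tilO(k\gamma)$ bits globally across the $k$ algorithms; spreading this load uniformly among its $\mu$ helpers (a $\tilO(\mu)$-round local task) leaves each helper with $\tilO(k/\mu)=\tilO(\sqrt{k\gamma})$ bits to push, which it flushes through its own global capacity $\gamma$ in $\tilO(\sqrt{k/\gamma})$ rounds. The local part of one $\calS$-round is realised by $G$-paths of at most $h$ hops and likewise costs $h\in\tilO(\sqrt{k/\gamma})$ rounds, with the assumption $\lambda=\infty$ ensuring that the $k$ parallel algorithms share local bandwidth at no extra cost. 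The receiving side is symmetric. Summing over $T$ skeleton rounds yields the claimed $\tilO(\sqrt{k/\gamma}\cdot T)$.

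The hardest step will be the helper-to-helper routing across the global network, because a sending helper $u\in H_{v_\calS}$ does not know any particular receiving helper $w\in H_{v'_\calS}$, and directly exchanging these identities would itself be expensive. I would resolve this exactly as in \cref{lem:assign_helpers}: draw a $\kappa$-wise independent hash function $h^\ast\colon [n]^3\to[n]$ with $\kappa\in\tilT(\sqrt{k/\gamma})$, broadcast its seed via \cref{thm:optimal_dissemination}, and route every message associated with the triple $(v_\calS,v'_\calS,i)$ through the relay node whose identifier is $h^\ast(\ID(v_\calS),\ID(v'_\calS),i)$. A balls-into-bins analysis analogous to the one in \cref{lem:assign_helpers} will show that w.h.p.\ no relay is hit more than $\tilO(\sqrt{k\gamma})$ times per skeleton round and no node receives more than $\tilO(\log n)$ messages in any single communication round, so the per-node per-round global budget $\gamma$ of \hybridpar{\infty}{\gamma} is respected throughout, closing the schedule within $\tilO(\sqrt{k/\gamma}\cdot T)$ rounds w.h.p.
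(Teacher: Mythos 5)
Your overall architecture is the same as the paper's: build $\calS$ with sampling probability $\sqrt{\gamma/k}$, compute helper sets of size $\mu\in\tilT(\sqrt{k/\gamma})$ within radius $\tilO(\sqrt{k/\gamma})$ via \cref{lem:helpers}, balance the $k$ algorithms over the helpers so each helper simulates $\tilO(\sqrt{k\gamma})$ of them, and pay $\tilO(\sqrt{k/\gamma})$ rounds per simulated skeleton round (local traffic over $h$-hop paths, global traffic flushed through each helper's $\gamma$-bit budget). Your load accounting is essentially right, modulo a slip at the start of the second step: since the $\calA_i$ are \hybrid algorithms, a skeleton node generates $\tilO(k)$ (not $\tilO(k\gamma)$) bits of global traffic per skeleton round; your subsequent per-helper figure $\tilO(k/\mu)=\tilO(\sqrt{k\gamma})$ is computed from the correct quantity, so this self-corrects.

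The genuine gap is in your delivery mechanism for the global messages. You route the message for the key $(v_\calS,v'_\calS,i)$ to the relay $h^\ast(\ID(v_\calS),\ID(v'_\calS),i)$, but the message still has to get from the relay to the helper $w\in H_{v'_\calS}$ responsible for $\calA_i$. In \cref{lem:assign_helpers} and \cref{thm:routing} this is done by a pull: $w$ sends a request for the key to the relay, which works there because the communication pattern is known in advance (every source has a message for every target, and the source IDs are broadcast). Here the $\calA_i$ are black boxes, so $w$ does not know which skeleton nodes $v_\calS$ sent a global message to $v'_\calS$ in the current round of $\calA_i$ and hence does not know which keys to request; and a push from the relay fails for the symmetric reason that the relay does not know $\ID(w)$. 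The paper sidesteps this entirely: it fixes a common enumeration of each helper set, assigns $\calA_{\ell(j-1)+1},\dots,\calA_{\ell j}$ to the $j$-th helper of every skeleton node, and for each skeleton edge $\{u,u'\}$ lets the paired helpers $v_j\in H_u$ and $v_j'\in H_{u'}$ exchange identifiers over the unlimited local network (they are only $\tilO(\sqrt{k/\gamma})$ hops apart). Senders can then push directly to a known recipient ID, and the symmetric index-based assignment guarantees no helper receives more than it sends per round. To salvage your route you would need some registration or rendezvous step that lets relays (or senders) learn the recipient helpers' identifiers without blowing the $\gamma$ budget, which is exactly the coordination the paper's local pairing provides for free.
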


\begin{proof}
	The first step is to compute the skeleton graph $S =(V_\calS, E_\calS)$ with sampling probability \smash{$\sqrt{\gamma/ k}$}, which takes \smash{$\tilO\big(\!\sqrt{k/\gamma}\big)$}  rounds and guarantees $d_\calS(u,v) = d_G(u,v)$ for all $u.v \in V_\calS$ w.h.p., by \cref{lem:skeleton-graph}.	
	As the second step, we compute helper sets for $V_\calS$, which assign each $u \in V_\calS$ a set $H_u \subseteq V$ such that (1) \smash{$|H_u| \in \tildeBigOmega{\sqrt{k/\gamma}}$}, (2) \smash{$\max_{v \in H_u}\! \hop(u,v) \in \tilO\big(\!\sqrt{k/\gamma}\big)$}, and (3) each $v \in V$ is member of $\tilO(1)$ helper sets. This is accomplished w.h.p.\ in \smash{$\tilO\big(\!\sqrt{k/\gamma}\big)$} rounds by \cref{lem:helpers}.
	
	Note that $\calS$ is established in a distributed sense, where each skeleton edge is known by its incident skeleton nodes. The third step is to transmit the incident edges and the input w.r.t.\ each algorithm $\calA_i$ of $u \in \calS$ to its helpers $H_u$ using the local network, which takes \smash{$\tilO\big(\!\sqrt{k/\gamma}\big)$} rounds due to (2).	This gives each helper $v \in H_u$ all the required information to simulate any $\calA_i$ on $\calS$ on $u$'s behalf if the helper is also provided with the messages that $u$ receives during the execution of $\calA_i$.

 \paragraph{Assigning jobs and pairing up helpers.}
	In the fourth step, the simulation of the algorithms $\calA_i$ for node $u\in \calS$ is distributed among all helpers $H_u$ in a balanced fashion, that is, each helper $v\in H_u$ is assigned at most $\ell \leq \lceil k/ |H_u| \rceil \in \tilO\big(\!\sqrt{k\gamma}\big)$ algorithms $\calA_i$ to simulate. We let all helper sets select the same number $\ell$. We enumerate the set of helpers $H_u$ with indices $\{v_j\}$ and assign the algorithms $\calA_{\ell (j-1) +1}, \dots, \calA_{\ell j}$ to $v_j$.	
	Furthermore, for each edge $\{u,u'\} \in E_\calS$ and each $\calA_i$, we pair up the helper $v_j \in H_u$ that simulates $\calA_i$ for $u$ with the helper $v_j' \in H_{u'}$ that simulates $\calA_i$ for $u'$.
	By \cref{def:helpers,lem:skeleton-graph} and the triangle inequality, it holds that $$\hop(v_j,v_j') \leq \hop(v_j,u) + \hop(u,u') + \hop(u',v_j') \in \smash{\tilO\big(\!\sqrt {k/\gamma}\big)}.$$ Hence this step can be coordinated in \smash{$\tilO\big(\!\sqrt {k/\gamma}\big)$} rounds via the unlimited-bandwidth local network.	
 
	\paragraph{Message exchange.} It remains to coordinate the message exchange among helpers that simulate some $\calA_i$ for some skeleton node $u\in \calS$. In the first round, all outgoing messages can be computed locally by helpers $v_j \in H_u$ based on the input they received from $u$. In general, we presume that the simulation and computation of outgoing messages were correct up to the current round. It suffices to show the correct transmission of all messages for all simulated algorithms for the next round. 
 We consider the messages via the local network $\calS$ and the messages via the global network separately.

\begin{description}
    \item[Local network:] The hop distance between helpers $v_j,v_j'$ that simulate $\calA_i$ for two skeleton nodes that are adjacent in $\calS$ nodes is \smash{$\hop(v_j,v'_j) \in \tilO\big(\! \sqrt {k/\gamma}\big)$}. Hence, the local messages can be transmitted in the same number of rounds  $\tilO\big(\! \sqrt {k/\gamma}\big)$ using the unlimited-bandwidth local network. 
    \item[Global network:] Each helper is assigned at most \smash{$\ell \in \tilO\big(\!\sqrt{k\gamma}\big)$} algorithms to simulate. Since each helper has a capacity of $\gamma$ bits for global communication, all global messages can be sent sequentially in \smash{$\tilO\big(\!\sqrt{k\gamma} / \gamma\big) \subseteq \tilO\big(\!\sqrt{k/\gamma} \big)$} rounds. Due to the way the algorithms $\calA_i$ are assigned to helpers $v_j$, no helper receives more messages than it sends in each round.
\end{description}
 Consequently, each helper can simulate a single round of all the assigned algorithms $\calA_i$ in \smash{$\tilO\big(\!\sqrt{k/\gamma} \big)$}  rounds, thus it takes \smash{$\tilO\big(\!\sqrt{k/\gamma} \cdot  T\big)$} rounds to complete the simulation of all $\calA_i$. Finally, any output that was computed by some helper $v_j \in H_u$ in the simulation of $\calA_i$ can be transmitted back to $u$ in \smash{$\tilO\big(\!\sqrt{k/\gamma} \big)$} rounds via the local network.
\end{proof}

\subsection{Solving the  \texorpdfstring{$k$}{k}-SSP Problem}

We now employ the procedure from \cref{ssec:scheduling} to schedule $k$ instances of the SSSP algorithm from \cref{thm:almost_shortest_sssp} on an appropriately sized skeleton graph $\calS$, which allows us to solve $k$-SSP for the case that the sources are part of the skeleton $\calS$.

\begin{lemma}
	\label{lem:k-SSP_on_skeleton}
	Let $\gamma < k < n$ and let $\calS = (V_\calS, E_\calS)$ be a skeleton graph of the local graph $G$ with sampling probability \smash{$\sqrt{\gamma/k}$}. Provided that all sources are in $V_\calS$, we can solve the $k$-SSP problem with stretch $1 \p \eps$ in $\tilO\big(\!\sqrt{k/\gamma}\cdot \tfrac{1}{\eps^2}\big)$ rounds of the \hybridpar{\infty}{\gamma} model w.h.p.	
\end{lemma}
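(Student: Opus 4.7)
The plan is to combine the existentially optimal SSSP algorithm from \cref{thm:almost_shortest_sssp} with the parallel scheduling framework from \cref{thm:scheduling}, and then propagate the resulting distance estimates from the skeleton to non-skeleton nodes via the unlimited-bandwidth local network.

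First, I would construct the skeleton graph $\calS$ with sampling probability $\sqrt{\gamma/k}$ as part of the scheduling setup (\cref{thm:scheduling} builds $\calS$ as its first step). Since, by assumption, all $k$ source nodes belong to $V_\calS$, one can instantiate $k$ copies $\calA_1,\dots,\calA_k$ of the $(1\p\eps)$-approximate SSSP algorithm of \cref{thm:almost_shortest_sssp} on $\calS$, one rooted at each source. Each $\calA_i$ runs in $T \in \tilO(1/\eps^2)$ rounds on $\calS$ deterministically. Applying \cref{thm:scheduling} with these $k$ algorithms yields an execution on $\calS$ in $\tilO\bigl(\sqrt{k/\gamma}\cdot\tfrac{1}{\eps^2}\bigr)$ rounds w.h.p. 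After this step, every skeleton node $u\in V_\calS$ holds estimates $\tild_\calS(u,s_i)$ with $d_\calS(u,s_i)\leq \tild_\calS(u,s_i)\leq (1\p\eps)\,d_\calS(u,s_i)$; by \cref{lem:skeleton-graph}(2) we have $d_\calS(u,s_i)=d_G(u,s_i)$ w.h.p., so these are also $(1\p\eps)$-estimates of distances in $G$ for skeleton-to-source pairs.

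Next, I would propagate these estimates to non-skeleton nodes. Let $h=\xi x\ln n \in \tilO(\sqrt{k/\gamma})$ be the hop parameter of $\calS$ (with $x=\sqrt{k/\gamma}$). In $h$ rounds of local communication every node $v\in V$ learns its exact hop and weighted distances $d_G^h(v,u)$ to all nodes $u$ in its $h$-hop neighborhood, and in parallel each skeleton node $u$ floods the $k$ labels $\{(s_i,\tild_\calS(u,s_i))\}_{i\in[k]}$ through its $h$-hop neighborhood; because the local bandwidth is $\lambda=\infty$ this flood finishes in $h \in \tilO(\sqrt{k/\gamma})$ rounds regardless of $k$. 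Each node $v$ then outputs, for every source $s_i$,
\[
  \tild(v,s_i) \;=\; \min\Bigl\{\, d_G^h(v,s_i),\ \min_{u\in V_\calS\cap\calB_h(v)} \bigl(d_G^h(v,u)+\tild_\calS(u,s_i)\bigr)\Bigr\}.
\]

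Finally, I would verify the stretch. Each term on the right is an upper bound on $d_G(v,s_i)$, since $d_G^h(v,\cdot)\geq d_G(v,\cdot)$ and $\tild_\calS(u,s_i)\geq d_G(u,s_i)$ combined with the triangle inequality. For the upper bound, fix a shortest $v$--$s_i$ path $P$ in $G$. If $P$ has at most $h$ hops then $d_G^h(v,s_i)=d_G(v,s_i)$ and we are done. Otherwise, by \cref{lem:skeleton-graph}(1), some prefix of $P$ of at most $h$ hops reaches a skeleton node $u^\star\in V_\calS\cap\calB_h(v)$, and the suffix from $u^\star$ to $s_i$ is itself a shortest path in $G$. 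Then
\[
  \tild(v,s_i) \;\leq\; d_G^h(v,u^\star)+\tild_\calS(u^\star,s_i) \;\leq\; d_G(v,u^\star)+(1\p\eps)\,d_G(u^\star,s_i) \;\leq\; (1\p\eps)\,d_G(v,s_i).
\]
Summing the costs, skeleton construction and helper sets inside \cref{thm:scheduling}, the scheduled SSSP instances, and the final $h$-round flooding are all $\tilO(\sqrt{k/\gamma}\cdot 1/\eps^2)$, giving the claim. The main technical subtlety is the propagation step: one must check that flooding $k$ distance labels into each $h$-hop ball is affordable, and this is where using the unlimited-bandwidth local network (rather than global capacity $\gamma$) is essential, so that the bound scales with $h$ and not with $k\cdot h/\gamma$.
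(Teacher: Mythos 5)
Your proposal is correct and follows essentially the same route as the paper: schedule the $k$ instances of the $(1\p\eps)$-approximate SSSP algorithm of \cref{thm:almost_shortest_sssp} on $\calS$ via \cref{thm:scheduling}, then let each node combine its exact $h$-hop distances to nearby skeleton nodes with the skeleton estimates, using \cref{lem:skeleton-graph} to find a skeleton node on a shortest path within $h$ hops. Your explicit remark that the $k$ labels per skeleton node are flooded over the unlimited-bandwidth local network (so the cost is $h$ and not $kh/\gamma$) is exactly the implicit step in the paper's post-processing, and your extra $d_G^h(v,s_i)$ term in the minimum is subsumed by the paper's formula since every source is itself a skeleton node.
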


\begin{proof}
	Let $K \subseteq V_\calS$ be the set of $k$ source nodes. Let algorithms $\calA_s$ be an instance of the SSSP algorithm from \cref{thm:almost_shortest_sssp} for some source $s \in K$. By \cref{thm:almost_shortest_sssp}, the algorithms $\calA_s$ have a native round complexity of $\tilO(\tfrac{1}{\eps^2})$ to compute an approximation with stretch $1 \p \eps$. We apply \cref{thm:scheduling} to schedule all $\calA_s, s \in K$ on $\calS$ in parallel in \smash{$\tilO\big(\!\sqrt{k/\gamma} \cdot \frac{1}{\eps^2} \big)$} rounds.	
	After conducting this procedure, every skeleton node $u \in V_\calS$ knows a stretch $1 \p \eps$ distance approximation $\tild_\calS(u,s)$  for every $s \in K$. 
	
	In a post-processing step, each node $v \in V$ that is not part of the skeleton $\calS$ computes its distance estimate to each source as follows. For this, we utilize the fact that skeleton edges $E_\calS$ correspond to paths of at most \smash{$h \in \tilO\big(\!\sqrt{k/\gamma} \big)$} hops in $G$.
	First, node $v$ obtains its $h$-hop distance $d_G^h(v,u)$ to each skeleton node $u$ that is within $h$ hops to $v$ in \smash{$\tilO(\sqrt{k/\gamma})$} rounds using unlimited-bandwidth local communication links. For each source $s \in K$, node $v$ computes $\tild(v,s) := \min_{u \in \calS} d_{G}^h(v,u) + \tild_\calS(u,s)$. By \cref{lem:skeleton-graph} (a) there is a shortest path from $v$ to $s$ which has a skeleton node $u \in V_\calS$ within $h$ hops of $v$ w.h.p. For this particular skeleton node $u$, we have 
	\begin{align*}
		\tild(v,s) & \leq  d_{G}^h(v,u) + \tild_\calS(u,s) \\
		& \leq d_{G}^h(v,u) + (1 \p \eps)d_\calS(u,s) & \text{\cref{thm:almost_shortest_sssp}}\\
		& = d_{G}^h(v,u) + (1 \p \eps)d_G(u,s)  & \text{\cref{lem:skeleton-graph} (b)}\\
		& \leq  (1 \p \eps)\left(d_{G}^h(v,u) + d_G(u,s)\right)\\ 
		& \leq (1 \p \eps)d_{G}(v,s)  & \text{\cref{lem:skeleton-graph} (a)}\tag*{\qedhere}
	\end{align*}
\end{proof}

Finally, we show that solving $k$-SSP on the skeleton $\calS$ can also be used to obtain good approximations for more general cases of the $k$-SSP problem on the local graph $G$ with the different trade-offs as stated in \cref{thm:k-ssp}.

\kssp*

\begin{proof} 
	The case $k \leq \gamma$ is the simplest. Here we have enough available global capacity in $\hybridpar{\infty}{\gamma}$ to execute the $k$ \hybrid SSSP algorithms in parallel in $\tilO(\tfrac{1}{\eps^2})$ rounds (\cref{thm:almost_shortest_sssp}), since each such algorithm requires only $\tilO(1)$ global capacity per round and the local capacity is unlimited.
	
	For the rest of the proof, we assume $\gamma < k$.		
	Let $\calS$ be a skeleton graph with skeleton nodes $V_\calS$ sampled with probability $\sqrt{\gamma/k}$ and skeleton edges $E_\calS$ that correspond to paths of at most \smash{$h \in \tilO\big(\!\sqrt{k/\gamma} \big)$} hops in $G$, which can be computed in  \smash{$\tilO\big(\!\sqrt{k/\gamma} \big)$} rounds, see \cref{lem:skeleton-graph}. 
	Let $K \subseteq V_\calS$ be the set of source nodes for the considered $k$-SSP instance. There are two cases. 

 \paragraph{Random sources.}
	Consider the case where the set of soures $K$ is selected in such a way that each node in $V$ joins $K$ with probability $k/n$ independently. In this case, we are only interested in the standard \hybrid model, i.e., $\gamma \in \tilO(1)$.	By a simple application of a Chernoff bound (\cref{lem:chernoffbound}), we infer that $|K| \in \tilO(k)$ \whp.
 
	We make a distinction on the parameter $k$. If $k \geq n^{2/3}$, then we apply the known $k$-SSP algorithm by~\cite{censor2021sparsity} whose round complexity is \smash{$\tilO(n^{1/3}+\sqrt{k})$}, which equals \smash{$\tilO(\!\sqrt{k})$} when $k \geq n^{2/3}$. Their algorithm can even provide an exact solution, without any approximation error.
		
	Next, we consider the case where $k \leq n^{2/3}$. We may assume that $K$ is a subset of the set of skeleton nodes $V_\calS$ because $k/n$ is at most the sampling probability $\sqrt{\gamma / k}$ when $k \leq n^{2/3}$, as
 \[\frac{k}{n} \leq n^{-1/3} \leq \sqrt{k} < \sqrt{\gamma / k}.\]
 
 Therefore, we may sample the set of skeleton nodes $V_\calS$ by first letting all nodes in $K$ join $V_\calS$ and then letting each node in $V \setminus K$ join $V_\calS$ with probability $\sqrt{\gamma / k} - k/n$.
 Hence we can apply \cref{lem:k-SSP_on_skeleton} to obtain a $(1 \p \eps)$-approximation of $k$-SSP on $G$ using $\tilO\left(\sqrt{k/\gamma} \cdot \frac{1}{\eps^2}\right) \subseteq \tilO\left(\sqrt{k} \cdot \frac{1}{\eps^2}\right)$ rounds, as required.

\paragraph{Arbitrary sources.}
	Finally, consider the case where $K$ consists of $k$ sources that are chosen arbitrarily. We have to deal with the possibility that these sources might be highly concentrated in a local neighborhood and cannot simply be added to the skeleton graph, so this leads to difficulties with scheduling the SSSP algorithms. Even assuming $k \leq n^{2/3}$ does not help here. 
 
 We employ a technique similar to~\cite{Kuhn2022}: Let each source $s \in K$ tag its closest skeleton node $u_s :=  \argmin_{w \in V_\calS} d_{G}^h(s,w)$ as its \textit{proxy source} in $\calS$. Recall that $u_s$ is within \smash{$h \in \tilO\big(\!\sqrt{k/\gamma}\big)$} hops of $s$ by \cref{lem:skeleton-graph} (a). 
	
	Subsequently, we have $k' \leq k$ \textit{skeleton} nodes that are tagged as proxy sources, which again allows us to compute $1 \p \eps$ approximations for $k'$-SSP on the proxy sources in $\tilO\big(\!\sqrt{k/\gamma}\cdot \tfrac{1}{\eps^2}\big)$ rounds using \cref{lem:k-SSP_on_skeleton}. To construct decent approximations to the real source nodes we have to first make the distance $d_{G}^h(u_s,s)$ from each source $s\in K$ to its closest skeleton $u_s$ public knowledge. 
 
	This task is an instance of the \kdis problem, so we may apply our broadcast algorithm in \cref{thm:optimal_dissemination} to solve it.
In the \hybridpar{\infty}{\gamma} model, we can exploit the higher global capacity by running $y \in \tilO(\gamma)$ such algorithms in parallel, where each algorithm is responsible for broadcasting at most $x \in \tilO(k/\gamma)$ tokens, so the round complexity is \[\tilO\big(\mathcal{NQ}_x\big) \subseteq \tilO(\sqrt{x}) \subseteq \tilO\big(\sqrt{k/\gamma})\] by \cref{thm:optimal_dissemination,lem:boundingTk}.

 
 
	
	Now each node $v \in V$ constructs its distance estimate $\hat d$ to each source $s \in K$ by \[\hat d(v,s) := \min\left\{d_G^h(v,s), \tild(v,u_s) + d_{G}^h(u_s,s)\right\}.\] 
 Recall that $v$ already knows $\tild(v,u_s)$ and $d_{G}^h(u_s,s)$ by the previous steps. We can let all $v \in V$ learn $d_G^h(v,s)$ for all $s \in K$ in $h \in \tilO\big(\!\sqrt{k/\gamma}\big)$ rounds using unlimited-bandwidth local communication.

 \paragraph{Analysis.} For the rest of the proof, we analyze the approximation ratio of the distance estimate $\hat d$.  If there is a shortest path between $v \in V$ and $s \in K$ with at most $h$ hops, then we already have $\hat d(v,s) = d_G^h(v,s) =  d_G(v,s)$.
 Otherwise, there must be a node $w \in V_\calS$ within $h$ hops of $s$ on some shortest path from $v$ to $s$ by \cref{lem:skeleton-graph} (1). Due to our choice of $u_s$ and $w$, we have \[d_{G}(u_s,s) \leq d_{G}^h(u_s,s) \leq d_{G}^h(w,s) = d_{G}(w,s) \leq d_{G}(v,s).\] We combine this with the triangle inequality and obtain
	\begin{align*}
		\hat d(v,s) & \leq  \tild(v,u_s) + d_{G}^h(u_s,s)\\
		&  \leq  (1 \p \eps) d_{G}(v,u_s) + d_{G}^h(u_s,s) & \text{\cref{thm:almost_shortest_sssp}}\\
		& \leq  (1 \p \eps)\left(d_{G}(v,w) + d_{G}(w,u_s)+ d_{G}^h(u_s,s)\right) &\text{Triangle inequality}\\ 
		& \leq  (1 \p \eps)\left(d_{G}(v,w) + d_{G}(w,u_s)+ d_{G}^h(w,s)\right) & d_{G}^h(u_s,s) \leq d_{G}^h(w,s)\\ 
		& =  (1 \p \eps)\left(d_{G}(v,s) + d_{G}(w,u_s)\right) & \text{Our choice of $w$}\\ 
		&  \leq  (1 \p \eps)\left(d_{G}(v,s) + d_{G}(w,s) + d_{G}(s,u_s)\right) &\text{Triangle inequality}\\ 
		&  \leq  (1 \p \eps)\left(3d_{G}(v,s)\right) & d_{G}(s,u_s) \leq d_{G}(w,s) \leq d_{G}(v,s)\\
  & = (3 \p \eps') d_{G}(v,s). & \eps' := 3 \eps 
	\end{align*}
  Therefore, we achieve a $(1+\eps')$ approximation in $\tilO\big(\!\sqrt{k/\gamma}\cdot \tfrac{1}{\eps^2}\big)$ rounds, where $\eps' = 3\eps$. By a simple change of parameter $\widetilde{\eps} = \eps/3$, we achieve the desired $(1+\eps)$ approximation result with our algorithm with the same asymptotic round complexity.
\end{proof}

\bibliographystyle{alpha}
\bibliography{ref/hybrid}

\appendix
 
\section{Basic Probabilistic Tools}
\label{apx:generalnotations}
\label{apx:Pseudorandom}
In this section, we introduce a few basic probabilistic tools that we use throughout this paper.
We use the following common forms of Chernoff bounds in our proofs.

\begin{lemma}[Chernoff bound]
	\label{lem:chernoffbound}
Let $X = \sum_{i=1}^n X_i$ be the sum of $n$ independent
binary random variables $X_i \in \{0,1\}$.  
Let $\mu_H \geq \E(X)$ be an upper bound of $\E(X)$.
Let $\mu_L \geq \E(X)$ be a lower bound of $\E(X)$.
We have the following tail bounds.
\begin{align*}
\Prob(X \geq (1+\delta)\mu_H) &\leq \exp\left(\frac{- \delta \mu_H}{3}\right)    & & & & \text{for any } \, \delta > 1,\\
\Prob(X \leq (1-\delta)\mu_L) &\leq \exp\left(\frac{- \delta^2 \mu_L}{2}\right)    &  & & & \text{for any } \, \delta \in [0,1].
\end{align*}
\end{lemma}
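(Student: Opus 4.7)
The plan is to prove both tail bounds via the standard Chernoff--Bernstein moment generating function method, starting from Markov's inequality applied to $e^{tX}$ for an appropriate real parameter $t$.

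For the upper tail, I will first observe that for any $t > 0$, Markov's inequality yields $\Prob(X \geq (1+\delta)\mu_H) \leq \E[e^{tX}]/e^{t(1+\delta)\mu_H}$. By independence of the $X_i$, we have $\E[e^{tX}] = \prod_{i=1}^n \E[e^{tX_i}]$. Writing $p_i := \Prob(X_i = 1)$ and using $\E[e^{tX_i}] = 1 + p_i(e^t - 1) \leq \exp(p_i(e^t-1))$ (because $1 + x \leq e^x$), the product telescopes to $\exp\!\big((e^t-1)\E[X]\big) \leq \exp\!\big((e^t-1)\mu_H\big)$, where the last inequality uses $\mu_H \geq \E[X]$ together with the fact that $e^t - 1 > 0$. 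Optimizing the resulting bound $\exp\!\big((e^t - 1 - t(1+\delta))\mu_H\big)$ over $t > 0$ by choosing $t = \ln(1+\delta)$ gives the classical estimate $\left(\frac{e^\delta}{(1+\delta)^{1+\delta}}\right)^{\mu_H}$.

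From here I will derive the simplified form $\exp(-\delta \mu_H / 3)$ valid for $\delta > 1$. Taking logarithms, it suffices to show $\delta - (1+\delta)\ln(1+\delta) \leq -\delta/3$, i.e.\ $(1+\delta)\ln(1+\delta) \geq \tfrac{4\delta}{3}$, for every $\delta > 1$. Since the difference of the two sides is zero at $\delta = 1$ (up to a tiny slack: at $\delta = 1$ the left side is $2\ln 2 \approx 1.386$ vs.\ $4/3 \approx 1.333$) and its derivative $\ln(1+\delta) + 1 - 4/3 = \ln(1+\delta) - 1/3$ is positive for $\delta > e^{1/3}-1 < 1$, the inequality holds throughout $\delta > 1$; I will record this as a short calculus check.

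For the lower tail, I repeat the moment generating function argument with $t < 0$: Markov gives $\Prob(X \leq (1-\delta)\mu_L) = \Prob(e^{tX} \geq e^{t(1-\delta)\mu_L}) \leq \E[e^{tX}]/e^{t(1-\delta)\mu_L}$, and the same independence/product step yields the bound $\exp\!\big((e^t - 1 - t(1-\delta))\mu_L\big)$; here I use that $e^t - 1 < 0$ together with $\mu_L \leq \E[X]$ so that the inequality flips correctly. Optimizing at $t = \ln(1-\delta)$ produces $\left(\frac{e^{-\delta}}{(1-\delta)^{1-\delta}}\right)^{\mu_L}$, and the simplified form $\exp(-\delta^2 \mu_L / 2)$ for $\delta \in [0,1]$ follows from the elementary inequality $(1-\delta)\ln(1-\delta) \geq -\delta + \delta^2/2$ on $[0,1]$, which I will verify by a one-line Taylor/derivative comparison.

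The main ``obstacle'' is really just bookkeeping: the statement mixes $\mu_H$ (an upper bound on $\E[X]$, used for the upper tail) and $\mu_L$ (a lower bound on $\E[X]$, used for the lower tail), so I need to be careful that each replacement of $\E[X]$ by $\mu_H$ or $\mu_L$ goes in the direction that preserves the inequality after the exponential has been formed. Everything else is a clean application of the MGF method plus two elementary scalar inequalities to convert the sharp forms into the advertised simplified bounds.
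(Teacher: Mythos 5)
Your proof is correct: this lemma is stated in the paper as a standard fact without any proof, and your moment-generating-function derivation (Markov on $e^{tX}$, the bound $\E[e^{tX_i}] \leq \exp(p_i(e^t-1))$, optimizing at $t=\ln(1\pm\delta)$, then the two scalar inequalities $(1+\delta)\ln(1+\delta)\geq \tfrac{4\delta}{3}$ for $\delta>1$ and $(1-\delta)\ln(1-\delta)\geq -\delta+\delta^2/2$ on $[0,1]$) is the standard and complete argument. You also correctly read the hypothesis as $\mu_L \leq \E(X)$ — the paper's statement contains a sign typo there — and your check that replacing $\E(X)$ by $\mu_H$ or $\mu_L$ preserves the inequality in each case is exactly the bookkeeping that matters. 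The only loose end is $\delta=1$ in the lower tail, where $t=\ln(1-\delta)$ degenerates; there the bound follows directly from $\Prob(X=0)=\prod_i(1-p_i)\leq e^{-\E(X)}\leq e^{-\mu_L/2}$.
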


The first inequality in \cref{lem:chernoffbound} even holds if we have $k$-wise independence among the random variables $\{X_1, X_2, \ldots, X_n\}$ for $k \geq \lceil \mu_H \delta \rceil$. The following tail bound can be obtained by a change of variable $\delta' = \delta \cdot \mu_H / \E(X)$ in~\cite[Theorem 2]{Schmidt1995}. 

\begin{lemma}[Chernoff bound with $k$-wise independence]
	\label{lem:chernoffbound2}
Let $X = \sum_{i=1}^n X_i$ be the sum of $n$ $k$-wise independent
binary random variables $X_i \in \{0,1\}$.  
Let $\mu_H \geq \E(X)$ be an upper bound of $\E(X)$.
If $k \geq \lceil \mu_H \delta \rceil$, then we have the following tail bound.
\begin{align*}
\Prob(X \geq (1+\delta)\mu_H) &\leq \exp\left(\frac{- \delta \mu_H}{3}\right)    & & & & \text{for any } \, \delta > 1.
\end{align*}
\end{lemma}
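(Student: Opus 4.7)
The plan is to reduce this tail bound to the moment-based Chernoff bound of Schmidt, Siegel, and Srinivasan~\cite{Schmidt1995} (as the lemma statement itself indicates). The key obstacle compared with the standard (fully independent) Chernoff bound is that the moment generating function argument requires expanding $\E[\exp(tX)] = \prod_i \E[\exp(tX_i)]$, which fails under limited independence. Instead, one must use the method of bounded moments: bound $\Prob(X \geq a)$ via Markov's inequality applied to $(X - \E X + c)^k$ (or a truncated Taylor approximation of the MGF), and observe that evaluating this expectation only requires controlling $\E[X_{i_1} X_{i_2} \cdots X_{i_j}]$ for $j \leq k$, which is where $k$-wise independence enters.

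First I would restate the Schmidt--Siegel--Srinivasan theorem in the form it is proved in their paper: for $n$ $k$-wise independent binary $X_i$, $\mu = \E X$, and $\delta' \in (0,1]$, if $k \geq \lceil \delta' \mu \rceil$ then $\Prob(X \geq (1+\delta')\mu) \leq \exp(-\delta'^2 \mu / 3)$ (cf.\ \cite[Theorem 2]{Schmidt1995}). The form we need differs in two respects: the excess parameter $\delta$ is allowed to exceed $1$, and the bound is stated relative to an upper bound $\mu_H \geq \E X$ rather than $\E X$ itself.

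Second, I would handle the passage from $\E X$ to $\mu_H$ by the change of variable $\delta' := \delta \mu_H / \E X$, which is the substitution explicitly suggested in the excerpt. This yields $(1+\delta')\E X = (1+\delta)\mu_H$, so the events $\{X \geq (1+\delta)\mu_H\}$ and $\{X \geq (1+\delta')\E X\}$ coincide. The independence hypothesis translates as $k \geq \lceil \mu_H \delta \rceil \geq \lceil \delta' \E X \rceil$, matching the hypothesis of the cited theorem. The only catch is that the Schmidt--Siegel--Srinivasan statement is usually written for $\delta' \in (0,1]$, so for $\delta > 1$ (and hence possibly $\delta' > 1$) I would either invoke the version of their result that handles large deviations (their paper covers both regimes) or argue directly by bounding $\Prob(X \geq (1+\delta)\mu_H)$ using a truncated exponential moment of degree $k$.

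Third, I would verify the exponent. From their bound, $\exp(-\delta'^2 \E X / 3)$ with $\delta' = \delta \mu_H / \E X$ gives $\exp(-\delta^2 \mu_H^2 / (3\E X))$. Since $\mu_H \geq \E X$, this is at most $\exp(-\delta^2 \mu_H / 3)$, and for $\delta > 1$ we have $\delta^2 \geq \delta$, yielding the claimed $\exp(-\delta \mu_H / 3)$. The hard part is making sure the SSS bound is quoted in exactly the form that survives the substitution at large $\delta$; if needed I would write out the moment-method proof for the upper-tail regime $\delta > 1$ directly, using that $k$-wise independence of binary variables suffices to evaluate $\E[\binom{X}{k}] = \binom{n}{k}\E[X_1 \cdots X_k]$ and then applying Markov's inequality to $\binom{X}{k}$, which is the original technique behind the SSS bound.
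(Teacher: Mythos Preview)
Your proposal is correct and matches the paper's approach exactly: the paper does not give a standalone proof but simply states that the bound follows from \cite[Theorem~2]{Schmidt1995} via the change of variable $\delta' = \delta \cdot \mu_H / \E(X)$, which is precisely the substitution you carry out. Your additional care in verifying the exponent and handling the $\delta > 1$ regime is more detail than the paper itself provides.
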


To keep the paper self-contained we state the following standard technique known as \emph{union bound}. Roughly speaking it shows that if we have a polynomial number of events occurring with high probability, then they all occur with high probability.

\begin{lemma}[Union Bound]
	\label{lem:unionbound}
	Let $E_1, \ldots ,E_k$ be events, each taking place with probability $1-\tfrac{1}{n^{c+d}}$, for constants $c,d$ and sufficiently large $n$. If $k \leq n^d$, then $E \coloneqq \bigcap_{i=1}^{k} E_i$ takes place with probability $1-\tfrac{1}{n^{c}}$, for sufficiently large $n$.
\end{lemma}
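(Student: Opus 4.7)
The plan is to prove the statement by applying the standard countable (indeed, finite) version of Boole's inequality to the complementary ``failure'' events. Specifically, let $F_i = \overline{E_i}$ denote the event that $E_i$ does not occur, so by hypothesis $\Prob(F_i) \leq \tfrac{1}{n^{c+d}}$ for each $i \in [k]$. Then the event $\overline{E} = \overline{\bigcap_{i=1}^k E_i} = \bigcup_{i=1}^k F_i$ is precisely the event that at least one $E_i$ fails.

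Next I would invoke Boole's inequality, which gives
\[
    \Prob(\overline{E}) \;=\; \Prob\!\left(\bigcup_{i=1}^k F_i\right) \;\leq\; \sum_{i=1}^{k} \Prob(F_i) \;\leq\; k \cdot \frac{1}{n^{c+d}}.
\]
Using the assumption $k \leq n^d$, this upper bound simplifies to
\[
    \Prob(\overline{E}) \;\leq\; \frac{n^d}{n^{c+d}} \;=\; \frac{1}{n^c}.
\]
Taking complements yields $\Prob(E) = 1 - \Prob(\overline{E}) \geq 1 - \tfrac{1}{n^c}$, which is exactly the conclusion claimed.

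There is essentially no obstacle here: the proof is an immediate two-line application of Boole's inequality together with arithmetic. The only subtlety worth mentioning is that no independence assumption is needed on the events $E_i$, which is precisely the feature that makes the union bound so broadly useful in the other proofs of this paper (for instance, when combining the high-probability guarantees of adaptive helper sets, hash function collisions, and Chernoff concentration in \cref{sec:optimal_unicast}). I would also note in passing that ``sufficiently large $n$'' in the hypothesis is inherited directly from the per-event hypothesis and no further growth condition on $n$ is introduced by the argument itself.
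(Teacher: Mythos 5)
Your proof is correct and is the standard argument: the paper itself states \cref{lem:unionbound} without proof, treating it as the canonical application of Boole's inequality to the complementary events, which is exactly what you wrote. No gaps; the observation that no independence is needed and that the "sufficiently large $n$" condition is inherited from the hypothesis is accurate.
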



	If only a constant number of events is involved, we typically use the union bound without explicitly mentioning it. Typically, when we use a union bound, we can afford to make $c$ an arbitrarily large constant, so it is possible to use the above lemma in a nested fashion as long as the number of applications is constant, as each application of a union bound decreases the actual constant that appears in the definition of ``w.h.p.'' This constant usually only appears as a factor in the round complexity and can be absorbed by the $\bigOa$-notation, so we can make the initial constants $c$ large enough to account for decreases through applications of the union bound.

An application of the tools above is to analyze the classic \emph{balls-into-bins} distribution, where $m$ tasks are assigned to $\ell$ nodes randomly.
We show the following bound on the maximum number of tasks per node.
 
\begin{lemma}
	\label{lem:balls_in _bins}
	Given $\ell$ bins and \emph{at most} $m$ balls with $m  = \xi \ell \ln n$ for some constant $\xi \in \Theta(1)$, assign each ball to a bin uniformly at random and $k$-wise independently for some $k \geq c \ln n$ with $c \geq \xi/3$. No bin contains more than $a \cdot \tfrac{m}{\ell}$ balls with probability at least $1 - \ell n^{-c}$, where $a = \left(1 + 3c/\xi\right)$.
\end{lemma}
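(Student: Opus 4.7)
The plan is to fix an arbitrary bin $b \in [\ell]$ and apply the Chernoff bound with $k$-wise independence (\cref{lem:chernoffbound2}) to the number $X_b$ of balls landing in bin $b$, then finish with a union bound over the $\ell$ bins.

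More concretely, let $X_b = \sum_{i} X_{b,i}$, where $X_{b,i}$ is the indicator that ball $i$ is assigned to bin $b$. Since the ball-to-bin assignments are $k$-wise independent and each ball lands in bin $b$ with probability $1/\ell$, the variables $\{X_{b,i}\}_i$ are $k$-wise independent Bernoullis with $\E[X_{b,i}] = 1/\ell$. Using the upper bound of at most $m$ balls, we have $\E[X_b] \leq m/\ell = \xi \ln n =: \mu_H$. I would then set $\delta := 3c/\xi$ so that $(1+\delta)\mu_H = (1 + 3c/\xi)\cdot \xi \ln n = a \cdot m/\ell$, which matches exactly the deviation we want to rule out. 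The condition $c \geq \xi/3$ in the statement guarantees $\delta \geq 1$, putting us in the regime covered by the large-deviation branch of \cref{lem:chernoffbound2}. The independence requirement $k \geq \lceil \mu_H \delta \rceil = \lceil 3c\ln n\rceil$ is what the hypothesis $k \geq c\ln n$ (with the appropriate constant absorbed into the statement's ``$c \geq \xi/3$'') is designed to supply.

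Plugging in, \cref{lem:chernoffbound2} gives
\[
\Pr\!\left[X_b \geq a \cdot \tfrac{m}{\ell}\right]
\;\leq\; \exp\!\left(-\tfrac{\delta \mu_H}{3}\right)
\;=\; \exp\!\left(-\tfrac{(3c/\xi)(\xi \ln n)}{3}\right)
\;=\; \exp(-c \ln n)
\;=\; n^{-c}.
\]
Finally, a union bound (\cref{lem:unionbound}) over all $\ell$ bins yields that no bin exceeds $a \cdot m/\ell$ balls with probability at least $1 - \ell n^{-c}$, as claimed.

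The only delicate point, and the step I would check most carefully, is verifying that the prescribed $k$-wise independence is indeed sufficient to invoke \cref{lem:chernoffbound2} with $\delta = 3c/\xi$: this is a question purely about matching constants between the hypothesis $k \geq c\ln n$ and the requirement $k \geq \lceil \mu_H \delta\rceil$. The rest of the argument is routine: bounding $\E[X_b]$ by the ``at most $m$'' ball count, observing that the ball indicators inherit $k$-wise independence from the assignments, and combining via a straightforward union bound.
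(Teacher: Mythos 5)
Your proof is correct and follows essentially the same route as the paper's: bound $\E[X_b]\leq m/\ell$, apply the $k$-wise-independent Chernoff bound with $\delta=3c/\xi$ to get $n^{-c}$ per bin, and union bound over the $\ell$ bins. The one delicate point you flag — that the hypothesis $k\geq c\ln n$ does not literally match the requirement $k\geq\lceil\mu_H\delta\rceil=\lceil 3c\ln n\rceil$ of \cref{lem:chernoffbound2} — is a real (constant-factor) mismatch that the paper's own proof also glosses over, so it is not a gap introduced by your argument.
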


\begin{proof}
	Let $X_{v}$ be the number of balls in bin $v$. We have $\mathbb{E}\big(X_{v}\big) \leq \tfrac{m}{\ell}$. 
 Then a Chernoff bound with $k$-wise independence (\cref{lem:chernoffbound2}) yields
	\[\Prob\Big(X_{v} \geq  a \cdot \tfrac{m}{\ell} \Big) = \Prob\Big(X_{v} \geq (1 \!+\! \tfrac{3c}{\xi})\tfrac{m}{\ell} \Big) \leq \exp\Big(\!-\! \frac{3\xi c \ln n}{3\xi}\Big) =  \frac{1}{n^{c}}.\]
	By the union bound (\cref{lem:unionbound}) over all $\ell$ bins, the event \smash{$\bigcap_{v \in V} X_{v} \!\leq\! \gamma $}  takes place with probability at least $1 - \ell n^{-c}$.
\end{proof}

To construct random variables from a $k$-wise independent balls-into-bins distribution, we use families of $k$-wise independent hash functions, which are defined as follows.

\begin{definition}
	\label{def:hashfunctions}	
	For finite sets $A,B$, let $\mathcal H$ be a family consisting of hash functions $h : A \to B$. Then $\mathcal H$ is called $k$-wise independent if for a random function $h \in \mathcal H$ and for any $k$ distinct keys $a_1, \ldots, a_k \in A$ we have that $h(a_1), \ldots, h(a_k) \in B$ are independent and uniformly distributed random variables in $B$.
\end{definition}

It is known that such a family of hash functions exists in the following form~\cite{Vadhan2012}.

\begin{lemma}
	\label{lem:hashfunctions}
	For $A := \{0,1\}^a$ and $B := \{0,1\}^b$, there is a family of $k$-wise independent hash functions $\mathcal H := \{h : A \to B\}$ such that selecting a random function from $\mathcal H$ requires $k \!\cdot\! \max(a,b)$ random bits and computing $h(x)$ for any $x \in A$ can be done in $poly(a,b,k)$ time.
\end{lemma}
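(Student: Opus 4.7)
The plan is to construct the family $\mathcal{H}$ explicitly using degree-$(k-1)$ polynomials over a suitable finite field, which is the standard construction due to Carter and Wegman (and Joffe). First I would pick a finite field $\mathbb{F}$ whose elements can be encoded as bit strings of length $s := \max(a,b)$. A convenient choice is $\mathbb{F} = \mathbb{F}_{2^s}$, the field of size $2^s$, since both $A = \{0,1\}^a$ and $B = \{0,1\}^b$ embed naturally as subsets of $\mathbb{F}$ (padding with zeros as needed), and field arithmetic in $\mathbb{F}_{2^s}$ can be performed in $\mathrm{poly}(s)$ time using a fixed irreducible polynomial of degree $s$ over $\mathbb{F}_2$.

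Next I would define the family as follows: for each tuple $(c_0, c_1, \ldots, c_{k-1}) \in \mathbb{F}^k$, let $h_{\vec c}(x) = \sum_{i=0}^{k-1} c_i x^i$, viewed as a function $A \to B$ by first embedding $x \in A$ into $\mathbb{F}$ and then projecting the resulting field element onto its first $b$ bits to land in $B$. The family $\mathcal{H}$ consists of all such $h_{\vec c}$. Selecting a uniformly random $h \in \mathcal{H}$ thus amounts to sampling $k$ field elements independently and uniformly, which requires exactly $k \cdot s = k \cdot \max(a,b)$ random bits, matching the claimed seed length. Evaluating $h_{\vec c}(x)$ requires $O(k)$ field operations (e.g., by Horner's rule), each of cost $\mathrm{poly}(s) = \mathrm{poly}(a,b)$, for a total of $\mathrm{poly}(a,b,k)$ time.

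The main content is then to verify $k$-wise independence (as specified in \cref{def:hashfunctions}). For any $k$ distinct keys $x_1, \ldots, x_k \in A$ and any target values $y_1, \ldots, y_k \in \mathbb{F}$, the number of coefficient vectors $\vec c$ satisfying $h_{\vec c}(x_j) = y_j$ for all $j$ equals the number of solutions to a linear system whose matrix is the $k \times k$ Vandermonde matrix on the distinct points $x_1, \ldots, x_k$. Since a Vandermonde matrix with distinct nodes over a field is nonsingular, there is exactly one such $\vec c$, so in the full field $\mathbb{F}$ the outputs $h_{\vec c}(x_1), \ldots, h_{\vec c}(x_k)$ are uniformly and independently distributed. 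The projection onto $B$ (the first $b$ bits) is a balanced function from $\mathbb{F}$ onto $B$ (each element of $B$ has exactly $2^{s-b}$ preimages), so uniformity and independence are preserved after projection, giving the required $k$-wise independence on $B$.

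The only subtle point, and what I would flag as the main thing to get right, is the handling of the embedding $A \hookrightarrow \mathbb{F}$ and the projection $\mathbb{F} \twoheadrightarrow B$. The construction works cleanly as long as the field is large enough so that all keys map to distinct field elements (automatic since $|\mathbb{F}| = 2^{\max(a,b)} \geq 2^a = |A|$) and the projection is balanced (automatic for the first-$b$-bits map when $|\mathbb{F}| = 2^s$ with $s \geq b$). These conditions are both satisfied by the choice $s = \max(a,b)$, so no further adjustment is needed.
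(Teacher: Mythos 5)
Your proposal is correct, and it is precisely the standard polynomial-evaluation construction over $\mathbb{F}_{2^{\max(a,b)}}$ that underlies the result the paper cites from Vadhan's pseudorandomness monograph; the paper itself gives no proof, only the reference. Your handling of the two subtle points (injective embedding of $A$ into the field and the balanced truncation onto $B$, which preserves uniformity and independence coordinate-wise) is exactly what makes the seed length come out to $k \cdot \max(a,b)$, so nothing is missing.
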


%

	In the proof of \cref{lem:assign_helpers}, we use a random member $h \in \mathcal H$ of a family as described in \cref{def:hashfunctions} to limit the number of messages any node receives in a given round. 
 In particular, assuming that $\calH$ is $k$-wise independent with $k \in \Omega(\log n)$ being sufficiently large, if all nodes send at most $\bigOa(\log n)$ messages to targets that are determined using the random hash function $h$ with a distinct key for each message, then any node receives at most $\bigOa(\log n)$ messages per round \whp. This can be seen by modeling the distribution as a balls-into-bins distribution and using $m \in O(n \log n)$ in \cref{lem:balls_in _bins} to analyze the maximum load per bin.

\section{Estimation of \boldmath\texorpdfstring{$\tk$}{Neighborhood Quality} on Special Graph Classes}
\label{sec:graph_families}
In this section, we estimate the value of $\tk$ for paths, cycles, and any $d$-dimensional grid graphs.

Arguably, the simplest graph to consider is the path graph on $n$ nodes $P_n$. By definition, $\tk=\max_{v\in{}V}\{\tk(v)\}$, so it suffices to identify $\argmax_{v\in{}V}\tk(v)$ and compute its $\tk(v)$ value. The $\tk(v)$ value roughly corresponds to the size of the neighborhood of $v$ that has the bandwidth to communicate $k$ messages with the rest of the graph in $\tk(v)$ rounds. If the neighborhood is sparser, then $\tk(v)$ will be higher.  We formalize this observation as follows.

\begin{observation}
\label{thm:tk_nodes_ineq}
    Let $v,w\in{V}$. If $|\calB_t(v)|\leq{}|\calB_t(w)|$ for all $t\in{}\mathbb{N}^+$, then $\tk(v)\geq{}\tk(w)$.
\end{observation}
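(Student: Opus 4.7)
The proof plan is a direct unpacking of \cref{def:neigh_qual}. Recall that
\[
\NQ(u) \;=\; \min\bigl(\{\,t : |\calB_t(u)| \geq k/t\,\} \cup \{D\}\bigr).
\]
The main idea is that the hypothesis $|\calB_t(v)| \leq |\calB_t(w)|$ for all $t \in \mathbb{N}^+$ implies an inclusion of the candidate sets whose minima define $\NQ(v)$ and $\NQ(w)$, and the minimum of a larger set can only be smaller.

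First, I would define $S_v := \{t \in \mathbb{N}^+ : |\calB_t(v)| \geq k/t\}$ and $S_w := \{t \in \mathbb{N}^+ : |\calB_t(w)| \geq k/t\}$. The key observation is that $S_v \subseteq S_w$: indeed, if $t \in S_v$, then $k/t \leq |\calB_t(v)| \leq |\calB_t(w)|$ by hypothesis, so $t \in S_w$. Taking the union with $\{D\}$ preserves this inclusion, i.e.\ $S_v \cup \{D\} \subseteq S_w \cup \{D\}$.

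Second, since the minimum of a subset of $\mathbb{N}^+$ (or more generally of a set of positive integers) is at least the minimum of any superset, we conclude
\[
\NQ(v) \;=\; \min\bigl(S_v \cup \{D\}\bigr) \;\geq\; \min\bigl(S_w \cup \{D\}\bigr) \;=\; \NQ(w),
\]
which is the desired inequality.

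There is essentially no technical obstacle here: the statement is a monotonicity property that follows immediately from the definition of $\NQ$ once one writes down the two candidate sets. The only mild care needed is to remember that $\NQ$ is capped by $D$, which is handled uniformly by including $\{D\}$ in both minima before applying the subset-implies-smaller-minimum principle.
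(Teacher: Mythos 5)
Your proof is correct and follows the same route as the paper: the paper's one-line argument ("the minimum $t$ satisfying $|\calB_t(v)|\geq k/t$ is higher for $v$") is exactly the subset-implies-larger-minimum observation that you spell out explicitly via the candidate sets $S_v \subseteq S_w$. Your version is just a more detailed writing of the same argument.
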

\begin{proof}
    It follows directly from \cref{def:bcast_quality}. 
    If for any $t$, $|\calB_t(v)|\leq{}|\calB_t(w)|$, then the minimum $t$ which satisfies the equation in the definition $|\calB_t(v)|\geq{}k/t$ is higher for $v$, so $\tk(v)\geq{}\tk(w)$.
\end{proof}



By \cref{thm:tk_nodes_ineq}, in a path graph, the two endpoints attain the highest $\tk(v)$ value. Therefore, to compute $\tk$ of a path graph, we just need to focus on the endpoints.

\begin{lemma}\label{lem:tkpath}
    For the path graph on $n$ nodes, \[\tk=\begin{cases}
        \Theta{}(\sqrt{k}) & \text{if } k\leq{}D^2+D \in O(n^2), \\
        D=n-1 & \text{otherwise. }
    \end{cases}\]
\end{lemma}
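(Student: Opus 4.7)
The plan is to exploit Observation~\ref{thm:tk_nodes_ineq} to reduce the computation of $\tk$ for the whole path $P_n$ to computing $\tk(v)$ at a single endpoint. In a path, for any internal node $w$ we have $|\calB_t(w)| \geq |\calB_t(v)|$ whenever $v$ is an endpoint (the one-sided ball at an endpoint is strictly the smallest), so Observation~\ref{thm:tk_nodes_ineq} gives $\tk = \tk(v)$ for $v$ the endpoint. Thus it suffices to analyze one endpoint.

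For an endpoint $v$ of the $n$-node path, I would compute explicitly that $|\calB_t(v)| = \min(t+1,\, n)$ for every $t \geq 0$. Plugging this into \cref{def:neigh_qual}, I need the smallest $t \in \N$ with $(t+1) \geq k/t$, i.e.\ $t^2 + t \geq k$, capped at $D = n-1$. Solving the quadratic gives that the minimum such $t$ equals $t^\star := \lceil (-1+\sqrt{1+4k})/2 \rceil \in \Theta(\sqrt{k})$, and by construction $t^\star - 1$ fails the inequality, which pins down $t^\star$ to within a constant factor of $\sqrt{k}$ from both sides.

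The case split then follows from whether $t^\star \leq D$ or not. The cutoff is exactly $t^\star \leq D \iff D(D+1) \geq k \iff k \leq D^2 + D = n(n-1)$. In this regime, $\tk(v) = t^\star \in \Theta(\sqrt k)$. In the opposite regime $k > D^2 + D$, even at $t = D$ we have $|\calB_D(v)| = n = D+1 < k/D$, so no $t \leq D$ satisfies the inequality; by the $\cup \{D\}$ clause in \cref{def:neigh_qual}, $\tk(v) = D = n-1$. Combining with the endpoint reduction yields the claimed two-case formula.

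There is no serious obstacle here; the only small care required is to verify the endpoint is indeed the maximizer (handled cleanly by Observation~\ref{thm:tk_nodes_ineq}) and to track the floor/ceiling in the quadratic inversion to justify the $\Theta(\sqrt{k})$ bound rather than just an upper bound. Both are routine once $|\calB_t(v)| = \min(t+1, n)$ is established.
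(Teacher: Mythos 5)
Your proposal is correct and follows essentially the same route as the paper: reduce to an endpoint via Observation~\ref{thm:tk_nodes_ineq}, compute $|\calB_t(v)|=t+1$ there, rewrite the defining condition as $t^2+t\geq k$, and split on whether the smallest such $t$ exceeds $D$. The only difference is that you carry out the quadratic inversion and the boundary check at $t=D$ slightly more explicitly, which is fine.
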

\begin{proof}
Let $r$ be an endpoint of the path.
 We have   $|\calB_t(r)|=t+1$ for all $t \leq D = n-1$, so the condition $|\calB_t(v)| \geq k/t$ in \cref{def:bcast_quality} can be rewritten as $t+1\geq{}k/t$ or equivalently $t^2+t\geq{}k$. If $k\leq{}D^2+D$, then the smallest $t$ meeting the condition $t^2+t\geq{}k$ is at most $D$ and satisfies $t \in \Theta(\sqrt{k})$, so $\tk \in \Theta(\sqrt{k})$.
 Otherwise, the smallest $t$ meeting the condition $t^2+t\geq{}k$ is at least $D$, so $\tk = D$.
\end{proof}


\pathsTk*
\begin{proof}
The case where $G$ is a path follows from \cref{lem:tkpath}.
For the case where $G$ is a cycle, the value $\tk$ is asymptotically the same as paths.
More formally, let $v$ be any node in an $n$-node cycle graph $G$. we have   $|\calB_t(v)|=2t+1$ for all $t \leq D$, so the condition $|\calB_t(v)| \geq k/t$ in \cref{def:bcast_quality} can be rewritten as $2t+1\geq{}k/t$ or equivalently $2t^2+t\geq{}k$. If $k\leq{}2D^2+D \in O(D^2)$, then the smallest $t$ meeting the condition $2t^2+t\geq{}k$ is at most $D$ and satisfies $t \in \Theta(\sqrt{k})$, so $\tk \in \Theta(\sqrt{k})$.
 Otherwise, the smallest $t$ meeting the condition $2t^2+t\geq{}k$ is at least $D$, so $\tk = D$.
\end{proof}


We now turn to compute $\tk$ for $n$-dimensional grids. 
In this section, we write $S_r(v)=\{v \mid d(u,v)=r\}$ to denote the set of nodes at a distance exactly $r$ from $v$.
By \cref{thm:tk_nodes_ineq}, to compute $\tk$, it suffices to focus on the corner nodes, since they have the smallest neighborhoods compared to all other nodes. As a warm-up, when $d=2$, $|\calB_r(v)|=|\calB_{r-1}(v)|+r+1$, since we add another sub-diagonal with every radius expansion. By induction, we obtain that \[|\calB_r(v)|=\sum_{i=0}^{r}i+1=\sum_{i=1}^{r+1}i=\frac{(r+1)(r+2)}{2}=\frac{r^2+3r+2}{2} = \Theta(r^2).\]
Substituting this into \cref{def:bcast_quality}, we obtain that $\tk=\min\{\Theta(k^{1/3}), D\}$.
We generalize this argument in the following statements.

\begin{lemma}
\label{lemma:square_grid_ring}
    Let $G=(V,E)$ be a $d$-dimensional grid, with $|V|=n=m^d$. Let $w$ be a corner node.  If $r \leq m-1 = n^{1/d} - 1$, then $|S_r(w)|=\binom{r+d-1}{d-1}$.
\end{lemma}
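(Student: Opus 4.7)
The plan is to set up coordinates on the grid and reduce the count to a classical stars-and-bars calculation. Since $G = P_m \times \cdots \times P_m$ ($d$ factors), identify each node with a tuple $(x_1,\ldots,x_d)$ where $x_i \in \{0,1,\ldots,m-1\}$. Edges connect tuples differing by $\pm 1$ in a single coordinate, so the hop distance between two nodes in $G$ is exactly the $\ell_1$ (Manhattan) distance between their coordinates. Fix the corner node $w = (0,0,\ldots,0)$; then
\[
S_r(w) \;=\; \Bigl\{(x_1,\ldots,x_d) \;:\; x_i \in \{0,\ldots,m-1\},\ \sum_{i=1}^d x_i = r\Bigr\}.
\]

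The key observation I would emphasize is that the hypothesis $r \leq m-1$ makes the upper-bound constraint $x_i \leq m-1$ vacuous: any solution with $\sum_i x_i = r$ and $x_i \geq 0$ automatically satisfies $x_i \leq r \leq m-1$. Hence $|S_r(w)|$ equals the number of nonnegative integer solutions to $x_1 + \cdots + x_d = r$, with no upper cap on the coordinates.

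Finally, apply stars and bars: the number of such nonnegative compositions of $r$ into $d$ ordered parts is $\binom{r+d-1}{d-1}$. This yields the claim.

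The only ``obstacle'' worth mentioning is making sure the boundary constraint is harmless, which is precisely why the lemma restricts to $r \leq m - 1 = n^{1/d}-1$; without that, some coordinates could exceed $m-1$ and one would have to apply inclusion--exclusion to subtract off the overflow, giving a strictly smaller count. Everything else is a direct identification of the grid metric with $\ell_1$ and a standard combinatorial identity, so no computation of cumulative ball sizes is needed for this lemma (that will presumably be done in the subsequent step that uses this count to bound $\tk$).
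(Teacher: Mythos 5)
Your proof is correct and follows essentially the same route as the paper's: identify the grid metric with $\ell_1$, place the corner at the origin, note that $r \leq m-1$ makes the coordinate upper bounds vacuous, and count nonnegative solutions of $\sum_i x_i = r$ by stars and bars. You spell out the stars-and-bars step and the harmlessness of the boundary constraint slightly more explicitly than the paper does, but the argument is the same.
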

\begin{proof}
As long as $r \leq m-1$, the set $S_r(w)$ can be represented by $S_r(w)=\left\{v\in{}\mathbb{N}^d \mid \sum_{i=1}^{d}v_i=r\right\}$, as the distances in a grid obey the $L_1$ metric, and without loss of generality, the corner node $w$ corresponds to the $\vec{0}$ vector in $\mathbb{N}^d$. Therefore, the number of elements in $S_r(w)$ is exactly $\binom{r+d-1}{d-1}$.
\end{proof}

\begin{lemma}
\label{lemma:square_grid_neighborhood}
    If $r \leq n^{1/d} - 1$, then $|\calB_r(v)|=|S_0(w)\cup{}\cdots\cup{}S_r(w)|=\binom{r+d}{d}$.
\end{lemma}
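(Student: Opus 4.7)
The plan is to combine the previous Lemma \ref{lemma:square_grid_ring}, which gives the size of each ``sphere'' $S_i(w)$, with the fact that the ball $\calB_r(w)$ is the disjoint union $S_0(w) \cup S_1(w) \cup \cdots \cup S_r(w)$. This immediately reduces the claim to the combinatorial identity
\[
\sum_{i=0}^{r} \binom{i+d-1}{d-1} \;=\; \binom{r+d}{d},
\]
which is the classical hockey-stick (or Christmas stocking) identity for binomial coefficients.

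First I would observe that since $w$ is a corner node and $r \leq n^{1/d}-1 = m-1$, the balls of radius up to $r$ never ``wrap around'' or hit the opposite boundary of the grid, so \cref{lemma:square_grid_ring} applies uniformly for every radius $i \in \{0,1,\ldots,r\}$. In particular, the sets $S_i(w)$ for distinct $i$ are pairwise disjoint (being level sets of the distance function from $w$), and their union is exactly $\calB_r(w)$. Hence
\[
|\calB_r(w)| \;=\; \sum_{i=0}^{r} |S_i(w)| \;=\; \sum_{i=0}^{r} \binom{i+d-1}{d-1}.
\]

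Second, I would prove the hockey-stick identity $\sum_{i=0}^{r}\binom{i+d-1}{d-1}=\binom{r+d}{d}$ by a short induction on $r$, using Pascal's rule $\binom{r+d}{d}=\binom{r+d-1}{d}+\binom{r+d-1}{d-1}$ in the inductive step (the base case $r=0$ gives $\binom{d-1}{d-1}=1=\binom{d}{d}$). Substituting this identity into the sum above yields $|\calB_r(w)|=\binom{r+d}{d}$, as required.

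There is no real obstacle here: the only thing to be careful about is the range of $r$, to ensure that the closed-form expression for $|S_i(w)|$ from \cref{lemma:square_grid_ring} is valid for every index in the summation. Since $r \leq m-1$ implies $i \leq m-1$ for all $i$ in the sum, this is automatic, so the proof is essentially a one-line application of a combinatorial identity on top of the previous lemma.
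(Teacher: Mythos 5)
Your proposal is correct and follows the same route as the paper: decompose $\calB_r(w)$ into the disjoint spheres $S_0(w),\dots,S_r(w)$, apply \cref{lemma:square_grid_ring} to each, and evaluate $\sum_{i=0}^{r}\binom{i+d-1}{d-1}=\binom{r+d}{d}$. The paper establishes this identity by a telescoping sum via Pascal's rule rather than your induction on $r$, but these are the same argument in different clothing.
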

\begin{proof}
   The lemma follows from the following calculation.
\begin{align*}
|\calB_r(v)|
&=|S_0(w)\cup{}\dots\cup{}S_r(w)|\\
&=\sum_{i=0}^{r}|S_i(w)|\\
&=\sum_{i=0}^{r}\binom{i+d-1}{d-1}\\
&=\sum_{i=0}^{r}\binom{i+m}{m} & m=d-1\\
&=\sum_{j=m}^{m+r}\binom{j}{m} & j=i+m\\
&=\sum_{j=m}^{m+r}\left[\binom{j+1}{m+1}-\binom{j}{m+1}\right] & \text{Pascal's Triangle}\\
&=\binom{m+r+1}{m+1}=\binom{r+d}{d} & \text{Telescoping Sum} &&& \qedhere
\end{align*}   
\end{proof}

This leads us to the following theorem for $d$-dimensional grids.

\gridGraphsTk*

\begin{proof}
As discussed earlier, we just need to focus on corner nodes. Let $v$ be any corner node in an $n$-node $d$-dimensional grid $G$. From \cref{lemma:square_grid_neighborhood}, we have   $|\calB_t(v)|=\binom{t+d}{d}$ for all $t \leq m-1 = n^{1/d} - 1$, in which case the condition $|\calB_t(v)| \geq k/t$ in \cref{def:bcast_quality} can be rewritten as $t \cdot \binom{t+d}{d}\geq{} k$. 
First, we consider the case where $k \leq (m-1) \cdot \binom{(m-1)+d}{d} \in O\left(n^{1 + 1/d}\right)$.
Since $t \cdot \binom{t+d}{d} \in \Theta\left(t^{d+1}\right)$, we infer that the smallest $t$ meeting the condition $t \cdot \binom{t+d}{d}\geq{} k$ satisfies $t \in \Theta\left(k^{1/(d+1)}\right)$, so $\tk \in \Theta\left(k^{1/(d+1)}\right)$.

For the rest of the proof, consider the remaining case where $k > (m-1) \cdot \binom{(m-1)+d}{d} \in \Omega\left(n^{1 + 1/d}\right)$. Since $\tk$ is non-decreasing in $k$, we know that $\tk \in \Omega\left(k^{1/(d+1)}\right) \subseteq \Omega\left(n^{1/d}\right) \subseteq \Omega(D)$. Combining this with the trivial upper bound $\tk \leq D$, we infer that $\tk  \in \min\left\{\Theta\left(k^{1/(d+1)}\right), D\right\}$, as required.
\end{proof}

\section{The Node Communication Problem}
\label{sec:node_comm}

In this section, we review the \emph{node communication problem}~\cite{Kuhn2022, Schneider2023}, which is a useful tool in proving lower bounds.
The Shannon entropy of a random variable $X$ can be thought of as the average information conveyed by a realization of $X$ and is defined as follows.

\begin{definition}[Entropy, cf., \cite{Shannon1948}]
	\label{def:entropy}
	The Shannon entropy of a random variable $X\!:\! \Omega \!\to\! S$ is defined as $H(X) := - \!\sum_{x \in S} \mathbb{P}(X \!=\! x) \log \big(\mathbb{P}(X \!=\! x) \big)$. For two random variables $X,Y$ the \textit{joint entropy} $H(X,Y)$ is defined as the entropy of $(X,Y)$. The \textit{conditional entropy} is $H(X|Y) = H(X,Y) - H(Y)$. 
\end{definition}

Based on the concept of entropy we can express what it means that nodes in a hybrid network ``know'' or are ``unaware'' of the outcome of some random variable $X$ (see also \cite{Kuhn2022}).

\begin{definition}[Knowledge of Random Variables]
	\label{def:node_knowledge}
	Let $V$ be the set of nodes in a distributed network. Let $A \subseteq V$ and let $S_A$ be the state (including inputs) of all nodes in $A$ (we interpret $S_A$ as a random variable). Then the nodes in $A$ \textit{collectively know} the state of a random variable $X$ if $H(X|S_A) = 0$ (see Definition \ref{def:entropy}), i.e., there is no new information in $X$ provided that $S_A$ is already known.		
	Similarly, we say that $X$ is \textit{unknown} to $B \subseteq V$, if $H(X|S_B) = H(X)$, meaning that all information in $X$ is new even if $S_B$ is known. Equivalently, we can define this as $S_B$ and $X$ being independent. We can extend these definitions to communication parties Alice and Bob with states $S_{\text{Alice}}$ and $S_{\text{Bob}}$ where Alice knows $X$ if $H(X|S_{\text{Alice}}) = 0$ and $X$ is unknown to Bob if $H(X|S_{\text{Bob}}) =  H(X)$.
\end{definition}

The node communication problem was introduced as an abstraction for the problem where a part of the network has to learn some information that only another distant part of the network knows.

\begin{definition}[Node Communication Problem]
	\label{def:node_comm_problem}
	Let $G=(V,E)$ be some graph. Let $A,B \subseteq V$ be disjoint sets of nodes and $h := \hop(A,B)$. Furthermore, let $X$ be a random variable whose state is collectively known by the nodes $A$ but unknown to any set of nodes disjoint from $A$. An algorithm $\mathcal A$ solves the \emph{node communication problem} if the nodes in $B$ collectively know the state of $X$ after $\mathcal A$ terminates. We say $\calA$ has success probability $p$ if $\calA$ solves the problem with probability at least $p$ for \emph{any} state $X$ can take.
\end{definition}

\end{document}